\definecolor{bleu_sombre}{rgb}{0,0,0.6}  \definecolor{rouge_sombre}{rgb}{0.8,0,0}\definecolor{vert_sombre}{rgb}{0,0.6,0}
\theoremstyle{plain}
\newtheorem{theorem}{{Theorem}}[section] %\sc{Théorème} pour avoir des petites capitales (mais ce n'est plus en gras)
\newtheorem*{theorem*}{{Theorem}}
\newtheorem{proposition}[theorem]{Proposition}
\newtheorem*{proposition*}{Proposition}
\newtheorem{corollary}[theorem]{Corollary}
\newtheorem*{corollary*}{Corollary}
\newtheorem{lemma}[theorem]{Lemma}
\newtheorem*{lemma*}{Lemma}
\theoremstyle{definition}
\newtheorem{definition}[theorem]{Definition}
\newtheorem*{definition*}{Definition}
\theoremstyle{remark}
\newtheorem{remark}[theorem]{Remark}
\newtheorem*{remark*}{Remark}
\renewcommand{\leq}{\leqslant}	\renewcommand{\geq}{\geqslant}
\renewcommand{\bar}[1]{\overline{#1}}
\renewcommand\over[2]{{\,\buildrel #1\over#2\,}}
\newcommand{\ie}{{\it{ie. }}}
\newcommand{\inv}{^{-1}}
\newcommand {\limt}[2]{\xrightarrow[#1 \to #2]{}}
\newcommand{\fonc}[4] { \left\{ \begin{array}{ccc} #1 & \to & #2 \\ #3 & \mapsto & #4 \end{array} \right. }
\newcommand{\abs}[1]{\left\vert #1\right\vert}        % valeur absolue
\newcommand{\nr}[1]{\left\Vert #1\right\Vert}         % norme
\newcommand{\innp}[2]{\left< #1 , #2 \right>}         % produit scalaire (inner product)  
\newcommand{\Dom}{\Dc}			% Domaine d'un opérateur
\newcommand{\Opwm}{{\mathop{\rm{Op}}}_{h_m}^w}
\newcommand{\pppg}[1] {\left< #1 \right>} 	% <x> = \sqrt{1+x^2}
\newcommand{\symb} {\Sc}		% Espace de symboles
\newcommand{\bigo}[2]{\mathop{O}\limits_{#1 \to #2}}
\newcommand{\littleo}[2]{\mathop{o}\limits_{#1 \to #2}}
\newcommand{\simm}[2]{\mathop{\mbox{$\sim$}}\limits_{#1 \to #2}}
\newcommand{\singl}[1]{\left\{ #1 \right\}}		% Singleton --> en fait, n'importe quel ensemble
\newcommand{\Ii}[2] {\llbracket #1,#2 \rrbracket}	% intervalle d'entiers.
\newcommand{\R}{\mathbb{R}}		\newcommand{\C}{\mathbb{C}}
\newcommand{\N}{\mathbb{N}}
\newcommand{\1}{\mathds 1}
\newcommand{\st}{\,:\,}					% ``tel que'' dans la définition d'un ensemble 
\newcommand{\seq}[2]{\left({#1}_{#2}\right)_{#2 \in\N}} % suite
\newcommand{\restr}[2]{\left.#1\right|_{#2}}         % #1 restreint à #2
\renewcommand{\Re}{\mathop{\rm{Re}}\nolimits}        % partie réelle
\renewcommand{\Im}{\mathop{\rm{Im}}\nolimits}        % partie imaginaire, Image
\DeclareMathOperator{\Ran}{Ran}	
\DeclareMathOperator{\diag}{diag}                    % diag(l1,..,ln) 
\DeclareMathOperator{\Id}{Id}                        % identité
\DeclareMathOperator{\supp}{supp}                    % support
\renewcommand{\a}{\alpha}\renewcommand{\b}{\beta}\newcommand{\g}{\gamma}\newcommand{\G}{\Gamma}\renewcommand{\d}{\delta}\newcommand{\D}{\Delta}\newcommand{\e}{\varepsilon}\newcommand{\z}{\zeta} \newcommand{\y}{\eta}\renewcommand{\th}{\theta}\newcommand{\Th}{\Theta}\renewcommand{\k}{\kappa}\renewcommand{\l}{\lambda}\renewcommand{\L}{\Lambda}\newcommand{\m}{\mu}\newcommand{\n}{\nu}\newcommand{\x}{\xi}\newcommand{\s}{\sigma}\renewcommand{\t}{\tau}\newcommand{\f}{\varphi}\newcommand{\vf}{\phi}\newcommand{\h}{\chi}\newcommand{\p}{\psi}\renewcommand{\o}{\omega}\renewcommand{\O}{\Omega}
\newcommand{\Ac}{{\mathcal A}}\newcommand{\Bc}{{\mathcal B}}\newcommand{\Dc}{{\mathcal D}}\newcommand{\Ec}{{\mathcal E}}\newcommand{\Gc}{{\mathcal G}}\newcommand{\Hc}{{\mathcal H}}\newcommand{\Kc}{{\mathcal K}}\newcommand{\Lc}{{\mathcal L}}\newcommand{\Pc}{{\mathcal P}}\newcommand{\Rc}{{\mathcal R}}\newcommand{\Sc}{{\mathcal S}}\newcommand{\Tc}{{\mathcal T}}\newcommand{\Uc}{{\mathcal U}}\newcommand{\Vc}{{\mathcal V}}\newcommand{\Wc}{{\mathcal W}}\newcommand{\Xc}{{\mathcal X}}
\newcommand{\loc}{{\rm{loc}}}
\newcommand{\qandq}{\quad \text{and} \quad}
\newcommand{\qqandqq}{\qquad \text{and} \qquad}
\newcounter{stepproof}
\newcommand{\stepp}{\stepcounter{stepproof} \noindent {\bf $\bullet$}\quad }
\newcommand{\detail}[1]
{
% \noindent \begin{quotation} \noindent \scriptsize {\bf Pour m\'emoire :}\\ #1 \end{quotation}  %ligne a enlever pour cacher ces details
}
\title{Local energy decay and diffusive phenomenon in a dissipative wave guide}
\author{Julien Royer}
\address{Institut de Math\'ematiques de Toulouse \\ 118, route de Narbonne \\ 31062 Toulouse C\'edex 09 \\ France}
\email{julien.royer@math.univ-toulouse.fr}
\begin{document}

\subjclass[2010]{35L05, 35J10, 35J25, 35B40, 47A10, 47B44, 35P15}
\keywords{Wave guides, dissipative wave equation, local energy decay, diffusive phenomenon, resolvent estimates, semiclassical analysis}

\begin{abstract}
We prove the local energy decay for the wave equation in a wave guide with dissipation at the boundary. It appears that for large times the dissipated wave behaves like a solution of a heat equation in the unbounded directions. The proof is based on resolvent estimates. Since the eigenvectors for the transverse operator do not form a Riesz basis, the spectral analysis does not trivially reduce to separate analyses on compact and Euclidean domains.
\end{abstract}

\maketitle

\tableofcontents

\section{Introduction and statement of the main results}

\newcommand{\EE}{\mathscr{E}}
\newcommand{\HH}{\mathcal{H}}

\newcommand{\aaa}{\a}
\newcommand{\Ha}{H_\aaa}
\newcommand{\hatHa}{\hat H_\a}
\newcommand{\Ho}{H_0}

\newcommand{\HuO}{H^1(\O)}
\newcommand{\HuOp}{H^1(\O)'}
\newcommand{\Huo}{H^1(\o)}
\newcommand{\Huop}{H^1(\o)'}

\newcommand{\tRaz}{\tilde R_a(z)}
\newcommand{\tRat}{\tilde R_a(\t)}
\newcommand{\tRa}{\tilde R_a}
\newcommand{\Raz}{R_a(z)}
\newcommand{\Rat}{R_a(\t)}
\newcommand{\Taz} {T_{az}}
\newcommand{\Tat} {T_{a\t}}
\newcommand{\TatO} {T_{a\t}^\O}
\newcommand{\Haz} {H_{az}}
\renewcommand{\Hat} {H_{a\t}}
\newcommand{\tHaz} {\tilde H_{az}}
\newcommand{\Hah} {H_{h}}
\newcommand{\Ta} {T_{\a}} \newcommand{\TaO} {T_{\a}}  \newcommand{\tTa} {\tilde T_{\a}}
\newcommand{\To} {T_{0}}
\newcommand{\Tam} {T_{a_m}}

\newcommand{\Nder}{m} % Nombre de derivees pour les resolventes
\newcommand{\Ndev}{M} % Indice pour le developpement de la resolvente a basses frequences
\newcommand{\Nfact}{\s}

\newcommand{\Ups}{\Upsilon}
\newcommand{\Rest}{\tilde R_{a,\Ndev}}
\newcommand{\Restu}{\tilde R_{a,1}}

\newcommand{\nn}{n} % nombre de directions transverses
\newcommand{\NN}{N} % nombre total de dimensions
\newcommand{\dd}{d} % nombre de dimensions longitudiales

\newcommand{\Lo}{\D}
\newcommand{\LD}{\L}
\newcommand{\LDh}{\L_h}

\newcommand{\sO}{\O}
\newcommand{\OO}{\O}
\newcommand{\OP}{\O_0}

\newcommand{\Csr}{\tilde \C}
\newcommand{\Ucont}{\Uc}

\newcommand{\heat}{{\rm{heat}}}
\newcommand{\Rheatzk}[2]{(\LD-ia\Ups #1)^{-#2}}\newcommand{\Rheat}{\Rheatzk{z}{1}}
\newcommand{\Rcheat}{\Rc_\heat}
\newcommand{\rest}{{\rm{rest}}}
\newcommand{\Rcrest}{\Rc_\rest}

% Hautes frequences

\newcommand{\Rh}{R_{h}}
\newcommand{\Rcha}{\Rc_{h}}
\newcommand{\zoneS}{\mathcal Z}
\newcommand{\Opwx}{{\mathop{\rm{Op}}}_{x,h}^w}

\newcommand{\Tthh}{T_{\th,h}}
\newcommand{\Tah}{T_{\a,h}}
\newcommand{\Tahm}{T_{\a_m,h_m}}
\newcommand{\tTthh}{\tilde T_{\th,h}}
\newcommand{\tTah}{\tilde T_{\a,h}}
\newcommand{\tvm}{\widetilde{v_m}}
\newcommand{\tvmj}[1]{\widetilde{v_m^{#1}}}

\newcommand{\Dd}{D_\nn}
\newcommand{\Ddd}{D_m}
\newcommand{\Dddp}{D_m}
\newcommand{\Dp}{D_m'}
\newcommand{\Dpp}{(D_m')}

Let $\dd,\nn \in \N^*$. We consider a smooth, connected, open and bounded subset $\o$ of $\R^{\nn}$ and denote by $\O$ the straight wave guide $\R^{\dd} \times \o \subset \R^{\dd+\nn}$. Let $a > 0$. For $(u_0,u_1) \in \HuO \times L^2(\O)$ we consider the wave equation with dissipative boundary condition
\begin{equation} \label{wave}
\begin{cases}
\partial_t ^2 u  - \D u  = 0 & \text{on } \R_+ \times  \O,\\
\partial_\n u + a \partial_t u = 0 & \text{on } \R_+ \times  \partial \O,\\
\restr{(u,\partial_t u)}{t=0} = (u_0,u_1) & \text{on } \O.
\end{cases}
\end{equation}

There is already a huge litterature about wave guides, which are of great interest for physical applications. For the spectral point of view we refer for instance to \cite{duclose95,krejcirikk05,borisovk08,bonnetgh11,rabinovichcu13,krejcirikr14} and references therein.

Our purpose in this paper is to study some large time properties for the solution of \eqref{wave}. The analysis will be mostly based on resolvent estimates for the corresponding stationary problem.

\subsection{Local energy decay}

If $u$ is a solution of \eqref{wave} then its energy at time $t$ is defined by 
\begin{equation} \label{def-E}
E(t) =  \int_\OO  \abs{\nabla u (t)}^2  + \int_\OO \abs{\partial_t u (t)}^2.
\end{equation}
It is standard computation to check that this energy is non-increasing, and that the decay is due to the dissipation at the boundary:
% \[
% E'(t) = - 2a \int_{\partial \OO} \abs{\partial_t  u(t)}^2 \leq 0 .
% \] 
\detail
{
\begin{align*}
E'(t) 
& = 2 \int_\OO \partial_t^2 u(t,x) \partial_t \bar u(t,x) \, dx + 2 \int_\OO \nabla  u(t,x) \cdot \nabla \partial_t \bar u(t,x) \, dx \\
& = 2 \int_\OO \partial_t^2 u(t,x) \partial_t \bar u(t,x) \, dx - 2 \int_\OO \D  u(t,x) \partial_t \bar u(t,x) \, dx + 2 \int_{\partial \OO} \partial_\n u(t,x) \partial_t \bar u(t,x) \, d\s(x)  \\
& = - 2a \int_{\partial \OO} \abs{\partial_t  u(t,x)}^2 \, d\s(x) .
\end{align*}
}
\[
E(t_2) - E(t_1) = -2 \int_{t_1}^{t_2} \int_{\partial \O} a \abs {\partial_t u(t)}^2 \, d\s \, dt.
\]

There are many papers dealing with the energy decay for the damped wave equation in various settings. For the wave equation on a compact manifold (with dissipation by a potential or at the boundary), it is now well-known that we have uniform exponential decay under the so-called geometric control condition. See \cite{raucht74,bardoslr92}. Roughly speaking, the assumption is that any trajectory for the underlying classical problem should meet the damping region (for the free wave equation on a subset of $\R^\nn$, the spatial projections of these bicharacteristics are straight lines, reflected at the boundary according to the classical laws of geometrical optics). 
\\

For the undamped wave equation, the energy is conserved. However, on an unbounded domain it is useful to study the decay of the energy on any compact for localized initial conditions. This is equivalent to the fact that the energy escapes at infinity for large times.

The local energy decay for the undamped wave equation has been widely inverstigated on perturbations of the Euclidean space, under the assumption that all classical trajectories escape to infinity (this is the so-called non-trapping condition). For a compact perturbation of the model case we obtain an exponential decay for the energy on any compact in odd dimensions, and a decay at rate $t^{-2d}$ if the dimension $d$ is even. We refer to \cite{laxmp63} for the free wave equation outside some star-shapped obstacle, \cite{morawetzrs77} and \cite{melrose79} for a non-trapping obstacle, \cite{ralston69} for the necessity of the non-trapping condition and \cite{burq98} for a logarithmic decay with loss of regularity but without any geometric assumption. In \cite{bonyh12} and \cite{bouclet11} the problem is given by long-range perturbation of the free wave equation. The local energy (defined with a polynomially decaying weight) decays at rate $O(t^{-2d + \e})$ for any $\e > 0$.\\

Here we are interested in the local energy decay for the damped wave equation on an unbounded domain. Closely related results have been obtained in \cite{alouik02,khenissi03} for the dissipative wave equation outside a compact obstacle of the Euclidean space (with dissipation at the boundary or in the interior of the domain) and \cite{boucletr14,art-dld-energy-space} for the asymptotically free model. The decay rates are the same as for the corresponding undamped problems, but the non-trapping condition can be replaced by the geometric control condition: all the bounded classical trajectories go through the region where the damping is effective.\\

Under a stronger damping assumption (all the classical trajectories go through the damping region, and not only the bounded ones), it is possible to study the decay of the total energy \eqref{def-E}. We mention for instance \cite{burq-joly}, where exponential decay is proved for the total energy of the damped Klein-Gordon equation with periodic damping on $\R^\dd$. This stronger damping condition is not satisfied in our setting, since the classical trajectories parallel to the boundary never meet the damping region.\\

Compared to all these results, our domain $\O$ is neither bounded nor close to the Euclidean space at infinity. In particular the boundary $\partial \O$ itself is unbounded. Our main theorem gives local energy decay in this setting:

\begin{theorem}[Local energy decay] \label{th-loc-dec}
Let $\d > \frac \dd 2 + 1$. Then there exists $C \geq 0$ such that for $u_0 \in H^{1,\d}(\OO)$, $u_1 \in L^{2,\d}(\OO)$ and $t \geq 0$ we have
\[
\nr{ \pppg x^{-\d} \nabla u(t)}_{L^2(\O)} + \nr {\pppg x^{-\d} \partial_t u(t)}_{L^2(\O)} \leq  C \pppg t ^{-\frac {\dd} 2-1} \left( \nr{\pppg x^\d \nabla u_0}_{L^2(\O)} + \nr {\pppg x^\d u_1}_{L^2(\O)} \right),
\]
where $u$ is the solution of the problem \eqref{wave}.
\end{theorem}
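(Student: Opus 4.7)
The plan is to recast \eqref{wave} as a first-order Cauchy problem $\partial_t U = \Ac U$ on the energy space $\HH = \HuO \times L^2(\O)$, with $U = (u,\partial_t u)$ and $\Ac$ the generator of a contraction semigroup (the boundary dissipation at $\partial \O$ provides the $m$-dissipativity). The local energy of Theorem \ref{th-loc-dec} is then, up to constants, $\nr{\pppg x^{-\d} U(t)}_{\HH}$. Expressing $e^{t\Ac}$ as a contour integral of the resolvent $(\Ac - \t)^{-1}$ against $e^{t\t}$, the theorem reduces to uniform bounds on the sandwiched resolvent $\pppg x^{-\d}(\Ac-\t)^{-1}\pppg x^{-\d}$ and on its derivatives in $\t$, in a neighbourhood of the real axis.

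Three spectral regimes have to be treated separately. For large frequencies $|\t| \geq \t_1$ one expects bounds of order $|\t|^{-1}$ via semiclassical analysis with $h = |\t|^{-1}$: the billiard flow in $\O$ is transversally controlled by the boundary dissipation, and the only undamped trajectories are those parallel to $\R^\dd$, which are precisely removed by the weight $\pppg x^{-\d}$. For intermediate frequencies, uniform bounds should follow from absence of embedded real eigenvalues (unique continuation combined with the dissipative structure) and from a compactness argument in the weighted topology. The low-frequency regime $|\t| \leq \t_0$ is where the diffusive phenomenon emerges, and is the heart of the matter: one needs an asymptotic expansion of $(\Ac - \t)^{-1}$ whose leading term mimics the resolvent of a heat operator on $\R^\dd$.

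Granted such resolvent estimates, the time decay follows from the usual principle: inserted in the inverse-Fourier representation of $e^{t\Ac}$, the medium- and high-frequency contributions produce faster-than-polynomial decay (after a slight contour deformation into a region where $\Ac$ has no spectrum), while the sharp rate $\pppg t^{-\dd/2-1}$ is generated by the low-frequency heat resolvent via the well-known weighted $L^2(\R^\dd)$ estimates. The threshold $\d > \dd/2 + 1$ is dictated exactly by these heat-equation estimates and by the number of Taylor coefficients of the low-frequency expansion that one needs to control.

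The main difficulty is the low-frequency analysis. Fourier-transforming \eqref{wave} in time turns the boundary condition into $\partial_\n \hat u - ia\t \hat u = 0$, so the transverse operator on $\o$ becomes a non-self-adjoint perturbation of the Neumann Laplacian and, as stressed in the abstract, its eigenfunctions do not form a Riesz basis. The full resolvent therefore does \textbf{not} simply decouple into a direct sum of a problem on $\o$ and a problem on $\R^\dd$, which rules out the most naive approach. I would circumvent this obstacle by a Grushin / Feshbach reduction near $\t = 0$: decompose any $u$ on $\O$ as $u(x,y) = v(x) + w(x,y)$ with $\int_\o w(x,\cdot) = 0$, solve for $w$ using that the orthogonal block is elliptic and invertible for $\t$ small, and derive an effective scalar equation on $\R^\dd$ for $v$ whose principal part is the heat operator $-\D_x - ia\Ups\t$ with $\Ups = |\partial\o|/|\o|$. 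This prediction is consistent with the formal averaging $|\o|\partial_t^2 v - |\o|\D_x v + a|\partial \o|\partial_t v = 0$ obtained by integrating \eqref{wave} transversally for the constant transverse mode, the first two terms becoming negligible with respect to the damping term on the diffusive time scale. A careful expansion to the required order in $\t$, together with weighted estimates for the $\dd$-dimensional heat resolvent, then gives the low-frequency contribution, and plugging everything back into the contour integral closes the proof.
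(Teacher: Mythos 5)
Your overall strategy matches the paper's: first-order reformulation, Fourier--Laplace representation of $e^{-it\Ac}$, three spectral regimes, and a low-frequency reduction to the effective heat operator $-\D_x - ia\Ups z$ on $\R^\dd$. Your fixed Grushin decomposition $u = v(x) + w(x,y)$ with $\int_\o w(x,\cdot) = 0$ is a workable alternative to what the paper does, namely projecting onto the $z$-dependent Riesz eigenspace of the transverse operator $T_{az}$ corresponding to its lowest eigenvalue $\l_0(az)$; the $z$-dependent projection kills the off-diagonal coupling by construction at the cost of tracking a holomorphic family of projections, while your variant keeps the projection fixed but must then absorb an $O(z)$ off-diagonal block through a Schur complement. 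Both give the same leading heat resolvent and the same $\Ups = \abs{\partial\o}/\abs\o$, so this is a legitimate variation, not a gap.

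However, your mechanism for turning the high-frequency resolvent estimates into time decay would not close. You propose deforming the contour into a region where $\Ac$ has no spectrum, but no such region of uniform width exists at high frequency: the spectrum of $\Ac$ accumulates on the real axis as $\abs\t \to \infty$ (the paper notes this explicitly, and proves it for $\dim\o = 1$ in the appendix, where the imaginary parts of the transverse eigenvalues of $T_{a\t}$ are shown to tend to $0$ as $a\t \to +\infty$). The spectral gap therefore shrinks, and there is no uniformly bounded holomorphic extension of $(\Ac-z)^{-1}$ below the real axis that would support your deformation. The paper instead extracts decay by repeated integration by parts in $\t$ on the real axis, trading each factor of $t^{-1}$ for one derivative of the resolvent; this is precisely why Theorem~\ref{th-high-freq} asserts weighted bounds on the \emph{powers} $(\Ac-\t)^{-1-m}$ with weight $\d > m + \tfrac12$, rather than a spectral-gap estimate. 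Plancherel applied to a smoothly truncated, $m$-fold differentiated Fourier integral then produces decay of order $\pppg t^{1/2 - m}$, and interpolation tightens this to $\pppg t^{-\g}$ for all $\g < \d$. Swapping contour deformation for this differentiated-resolvent/Plancherel argument at high frequency is the one structural change needed to make your plan complete.
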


Everywhere in the paper we denote by $(x,y)$ a general point in $\O$, with $x \in \R^\dd$ and $y \in \o$. Moreover we have denoted by $L^{2,\d}(\O)$ the weighted space $L^2(\pppg x^{2\d} \, dx \, dy)$ and by $H^{1,\d}(\O)$ the corresponding Sobolev space, where $\pppg \cdot$ stands for $(1+\abs \cdot^2)^{\frac 12}$.\\

We first remark that the power of $t$ in the rate of decay only depends on $\dd$ and not on $\nn$. This is coherent with the fact that the energy has only $\dd$ directions to escape. Although the energy is dissipated in the bounded directions, the result does not depend on their number (nonetheless, we will see that the constant $C$ depends on the shape of the section $\o$).

However, we observe that the local energy does not decay as for a wave on $\R^\dd$. In fact, it appears that the rate of dacay is the same as for the heat equation on $\R^\dd$. This phenomenon will be discussed in Theorem \ref{th-heat} below.\\

As usual for a wave equation, we can rewrite \eqref{wave} as a first order equation on the so-called energy space. For $\d \in \R$ we denote by $\EE^\d$ the Hilbert completion of $C_0^\infty(\bar \O) \times C_0^\infty(\bar \O)$ for the norm
\[
\nr{(u,v)}_{\EE^\d} ^2 = \nr{\pppg x^\d \nabla u}^2_{L^2(\OO)} + \nr {\pppg x^\d v}_{L^2(\OO)}^2.
\]
When $\d = 0$ we simply write $\EE$ instead of $\EE^0$. We consider on $\EE$ the operator
\begin{equation} \label{def-Ac}
\Ac = \begin{pmatrix} 0 & 1 \\ -\D & 0 \end{pmatrix}
\end{equation}
with domain
\begin{equation} \label{dom-Ac}
\Dom(\Ac) = \singl{ (u,v) \in \EE \st (v,-\D u) \in \EE \text{ and } \partial_\n u = ia v \text { on } \partial \OO}.
\end{equation}
Let $u_0,u_1$ be such that $U_0 = (u_0,iu_1) \in \Dom(\Ac)$. Then $u$ is a solution of \eqref{wave} if and only if $U : t \mapsto \big(u(t),i\partial_t u(t)\big)$ is a solution for the problem
\begin{equation} \label{wave-Ac}
\begin{cases}
\partial_t U(t) + i\Ac U(t) = 0,\\
U(0) = U_0.
\end{cases}
\end{equation}
We are going to prove that $\Ac$ is a maximal dissipative operator on $\EE$ (see Proposition \ref{prop-Ac-diss}), which implies in particular that $-i\Ac$ generates a contractions semigroup. Thus the problem \eqref{wave-Ac} has a unique solution $U : t \mapsto e^{-it\Ac} U_0$ in $C^0(\R_+,\Dom(\Ac)) \cap C^1(\R_+,\EE)$. In this setting the estimate of Theorem \ref{th-loc-dec} simply reads
\begin{equation} \label{eq-loc-decay-Ac}
\forall t \geq 0, \quad \nr{e^{-it\Ac} U_0}_{\EE^{-\d}} \leq C \pppg t^{-\frac \dd 2-1} \nr{U_0}_{\EE^\d}.
\end{equation}

We will see that as usual for the local energy decay under the geometric control condition, the rate of decay is governed by the contribution of low frequencies. With a suitable weight, we obtain a polynomial decay at any order if we only consider the contribution of high frequencies. We refer for instance to the result of \cite{wang87} for the self-adjoint Schr\"odinger equation on the Euclidean space. The difficulty with the damped wave equation is that we do not have a functional calculus to localize on high frequencies. Here on a dissipative wave guide we can at least localize with respect to the Laplacian on $\R^\dd$.\\

We denote by $\LD$ the usual Laplacian $-\D_x$ on $\R^\dd$. We also denote by $\LD$ the operator $-\D_x \otimes \Id_{L^2(\o)}$ on $L^2(\O)$.
Let $\h_1 \in C_0^\infty(\R,[0,1])$ be equal to 1 on a neighborhood of 0. For $z \in \C\setminus \singl 0$ we set $\h_z = \h_1(\cdot / \abs{z}^2)$ and
\begin{equation} \label{def-Xc}
\Xc_z = \begin{pmatrix} \h_z(\LD) & 0 \\ 0 & \h_z(\LD) \end{pmatrix} \in \Lc(\EE)
\end{equation}
(where $\Lc(\EE)$ denotes the space of bounded operators on $\EE$).

\begin{theorem} [High frequency time decay] \label{th-high-freq-loc-decay}
Let $\g \geq 0$ and $\d > \g$. Then there exists $C \geq 0$ such that for $U_0 \in \EE^\d$ we have
\[
\forall t \geq 0, \quad \nr{(1-\Xc_1) e^{-it\Ac} U_0}_{\EE^{-\d}} \leq C \pppg t^{-\g} \nr{U_0}_{\EE^\d}.
\]
\end{theorem}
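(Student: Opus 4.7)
The plan is to use the commutation of $\Xc_1$ with the semigroup to reduce to data localized at high frequencies of $\LD$, and then gain arbitrary polynomial time decay by integration by parts in a Stone-type spectral representation against weighted resolvent estimates for $\Ac$.

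The first observation is that $\LD = -\D_x$ acts only in the unbounded directions, and the Robin boundary condition in \eqref{dom-Ac} only involves the transverse normal $\partial_\n$ on $\R^\dd \times \partial \o$. Hence $\h_z(\LD)$ preserves $\Dom(\Ac)$ and commutes with $\Ac$, so $\Xc_1$ commutes with $e^{-it\Ac}$, which reduces the problem to bounding $e^{-it\Ac} V_0$ with $V_0 = (1-\Xc_1) U_0$. Moreover $\Ac^2 = \diag(-\D,-\D)$ on $\Dom(\Ac^2)$ and $-\D \geq \LD$ in the quadratic form sense, so this cutoff morally keeps the spectral parameter of $\Ac$ away from the origin.

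Next, one represents the semigroup via
\[
e^{-it\Ac} V_0 = \frac{1}{2i\pi} \int_\R e^{-it\tau} \bigl[(\Ac - \tau + i0)^{-1} - (\Ac-\tau-i0)^{-1}\bigr] V_0 \, d\tau,
\]
justified on a suitable dense subspace by the limiting absorption principle for $\Ac$ established in the earlier sections. Since $V_0$ is high-frequency in $\LD$, the integrand is smooth in $\tau$ and effectively supported away from $\tau = 0$, so integrating by parts $N$ times produces a factor $(it)^{-N}$ and replaces the resolvent by its $(N+1)$-th power, with vanishing boundary contributions. Together with a weighted uniform bound of the form $\|(\Ac-\tau)^{-(N+1)}\|_{\EE^\d \to \EE^{-\d}} \leq C$ (which requires $\d$ of order $N$), this yields the $t^{-N}$ decay. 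Picking $N$ just above $\g$ then gives the statement with the matching condition $\d > \g$.

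The main obstacle is that $\Ac$ is only dissipative, so there is no functional calculus available and both the Stone-type formula and the successive integrations by parts have to be justified by hand from the weighted resolvent estimates. One also needs to control commutators between the Fourier multiplier $\h_z(\LD)$ and the polynomial weights $\pppg x^{\pm \d}$ in order to move the cutoff across the norm of $\EE^\d$, but this is routine.
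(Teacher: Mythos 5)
The Stone-type spectral representation
\[
e^{-it\Ac} V_0 = \frac{1}{2i\pi} \int_\R e^{-it\tau} \bigl[(\Ac - \tau + i0)^{-1} - (\Ac - \tau - i0)^{-1}\bigr] V_0 \, d\tau
\]
on which your plan rests is not valid here, and in fact would force the left-hand side to vanish. The generator $\Ac$ is not self-adjoint; it is maximal dissipative, with spectrum contained in the closed lower half-plane. Theorem~\ref{th-inter-freq} shows that every $\tau\in\R\setminus\{0\}$ lies in the resolvent set, so the resolvent is holomorphic across $\R\setminus\{0\}$ and the two boundary values in your bracket coincide there; after multiplying by $1-\Xc_1$, Proposition~\ref{prop-high-freq-low-freq} extends the holomorphy through $\tau=0$ as well, so the jump vanishes identically on the real axis and there is no limiting-absorption data for the formula to capture. (The paper indeed remarks that the limiting absorption principle is irrelevant precisely because of this spectral gap.) The representation that actually works for the dissipative semigroup is the inverse Laplace transform: for $\mu>0$ and $t>0$,
\[
e^{-it\Ac}U_0 = \frac{e^{t\mu}}{2\pi}\int_\R e^{-it\tau}\,(-i)\bigl(\Ac-(\tau+i\mu)\bigr)^{-1}U_0\,d\tau,
\]
which involves only the resolvent in the open upper half-plane, where a dissipative generator is under control; the limit $\mu\to 0^+$ is taken last, after all estimates have been made uniform in $\mu$.

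Even granting the correct contour, the remaining steps have gaps. The bounds from Theorems~\ref{th-inter-freq} and~\ref{th-high-freq} and Proposition~\ref{prop-high-freq-low-freq} are uniform in $\tau$ but do not decay as $|\tau|\to\infty$, so after $N$ integrations by parts the integrand is still not absolutely integrable, and ``vanishing boundary contributions'' cannot be taken for granted. The paper handles this by inserting compactly supported cut-offs $\theta_{0,R}$ and by producing genuine $\tau$-decay of the source: either through a smooth temporal cut-off, replacing $U_0$ by $V_1(z)=\int_1^2\theta'(t)e^{itz}U(t)\,dt$, or through the identity \eqref{eq-loss-derivative}, which trades one $\Ac$-derivative for a factor $\langle\tau\rangle^{-1}$. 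Plancherel then gives only $L^2$-in-time control, which is upgraded to pointwise-in-time control via a pigeonhole argument plus Duhamel's formula for $e^{-i(t-s)\Ac}$, at the cost of $\langle t\rangle^{1/2}$. Finally, this route produces the estimate only for $\delta>\gamma+1$; the sharp condition $\delta>\gamma$ is then recovered by interpolating against the trivial contraction bound $\|e^{-it\Ac}\|_{\Lc(\EE)}\le 1$, so ``picking $N$ just above $\gamma$'' is not by itself enough.
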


Notice that in the same spirit we could also state the same kind of result for the damped Klein-Gordon equation.

\subsection{Diffusive properties for the contribution of low frequencies}

In Theorem \ref{th-loc-dec} we have seen that the local energy of the damped wave on $\O = \R^\dd \times \o$ decays like a solution of a heat equation on $\R^d$. This is due to the fact that the damping is effective even at infinity. This phenomenon has already been observed for instance for the damped wave equation
\begin{equation} \label{wave-standard}
\partial_t^2 u -\D u + a(x) \partial_t u = 0
\end{equation}
on the Euclidean space $\R^\dd$ itself. For a constant absorption index ($a\equiv 1$), it has been proved that the solution of the damped wave equation \eqref{wave-standard} behaves like a solution of the heat equation
\[
- \D v + \partial_t v = 0.
\]
Roughly, this is due to the fact that for the contribution of low frequencies (which govern the rate of decay for the local energy decay under G.C.C.) the term $\partial_t^2 u $ becomes small compared to $\partial_t u$. See \cite{nishihara03,marcatin03, hosonoo04, narazaki04}. See also \cite{Ikehata-02,Aloui-Ib-Kh} for the damped wave equation on an exterior domain. 
For a slowly decaying absorption index ($a(x) \sim \pppg x^{-\rho}$ with $\rho \in ]0,1]$), we refer to \cite{ikehataty13, Wakasugi-14} (recall that if the absorption index is of short range ($\rho > 1$), then we recover the properties of the undamped wave equation, see \cite{boucletr14,art-dld-energy-space}). Finally, results on an abstract setting can be found in \cite{Chill-Ha-04,Radu-To-Yo-11,nishiyama,Radu-To-Yo-16}.\\ 

Compared to the results in all these papers, we have a damping which is not effective everywhere at infinity but only at the boundary. In particular, the heat equation to which our damped wave equation reduces for low frequencies is not so obvious.\\

For the next result we need more notation. The boundary $\partial \O$ ($\partial \o$, respectively) is a submanifold of $\R^{\dd+\nn}$ (of $\R^\nn$). It is endowed with the structure given by the restriction of the usual scalar product of $\R^{\dd+\nn}$ (of $\R^\nn$) and with the corresponding measure. This is the usual Lebesgue measure on $\partial \O$ (on $\partial \o$).

For $v \in L^2(\O)$ we define $P_\o  {v} \in L^2(\R^{\dd})$ by setting, for almost all $x \in \R^{\dd}$:
\begin{equation} \label{def-Po}
(P_\o {v}) (x) = \frac 1 {\abs \o} \int_\o v(x,\cdot), \quad \text{where } \abs{\o} = \int_\o 1.
\end{equation}
$P_\o v$ can also be viewed as a function in $L^2(\O)$ by setting $(P_\o v) (x,y) = (P_\o v)(x)$. If $v \in H^1(\O)$ we similarly define 
\begin{equation} \label{def-Pdo}
(P_{\partial \o} {v}) (x) = \frac 1 {\abs {\partial \o}} \int_{\partial \o} v(x,\cdot), \quad \text{where } \abs{\partial \o} = \int_{\partial \o} 1.
\end{equation}
We also set 
\begin{equation} \label{def-Ups}
\Ups =  \frac {\abs{\partial \o}} {\abs \o}.
\end{equation}

The purpose of the following theorem is to show that the solution $u$ of \eqref{wave} behaves like the solution of the heat equation 
\begin{equation} \label{heat}
\begin{cases}
a \Ups \partial_t v + \LD  v = 0 & \text{on } \R_+ \times \R^\dd,\\
\restr{v}{t = 0} =  i P_{\partial \o} u_0 + \frac i {a\Ups} P_\o u_1 & \text{on } \R^\dd.
\end{cases}
\end{equation}
We denote by $u_{\heat}$ or $u_{\heat,0}$ the solution of \eqref{heat}:
\begin{equation} \label{def-u-heat}
u_\heat(t) = u_{\heat,0} (t) = e^{-\frac {t\LD}{a\Ups}} \left(i P_{\partial \o} u_0 + \frac i {a\Ups} P_\o u_1 \right), \quad t \geq 0.
\end{equation}
Finally for $\b_x = (\b_{x,1},\dots,\b_{x,\dd}) \in \N^\dd$ we denote by $\partial_x^{\b_x}$ the differential operator $\partial_{x_1}^{\b_{x,1}} \dots \partial_{x_\dd}^{\b_{x,\dd}}$ on $\R^\dd$. The operator $\partial_y^{\b_y}$ is defined similarly on $\o$.

\begin{theorem}[Comparison with the heat equation] \label{th-heat}
Let $(u_0,iu_1) \in C_0^\infty(\bar{\O})^2 \cap \Dom(\Ac)$.
\begin{enumerate}[(i)]
\item There exists $C \geq 0$ such that for $t \geq 0$ we have 
\begin{equation*}
\nr{ \pppg x^{-\d} \nabla ({u-u_{\heat}})(t)}_{L^2(\O)} + \nr {\pppg x^{-\d} \partial_t ({u-u_\heat})(t)}_{L^2(\O)}\\
 \leq  C {\pppg t ^{-\frac {d} 2- 2}} .%\left( \nr{\pppg x^\d \nabla u_0}_{L^2(\O)} + \nr {\pppg x^\d u_1}_{L^2(\O)} \right).
\end{equation*}
\item More precisely for $\Ndev \in \N$ there exist $u_{\heat,1},\dots,u_{\heat,\Ndev},\tilde u_{\Ndev +1}$ such that for $t \geq 0$ we have
\[
u(t) = u_{\heat,0}(t) + \sum_{k=1}^\Ndev u_{\heat,k}(t) + \tilde u _{\Ndev + 1}(t),
\]
and for $\e > 0$, $R > 0$, $t\geq 0$, $k \in \Ii 0 \Ndev$, $\b_t \in \{0,1\}$, $\b_x \in \N^\dd$ and $\b_y \in \N^\nn$ with $\b_t + \abs {\b_x} + \abs {\b_y} \leq 1$ there exists $C \geq 0$ such that if $(u_0,iu_1) \in C_0^\infty(\bar{\O_R})^2$ and $\d > 0$ is large enough we have
\[
\nr{\pppg x^{-\d} \partial_t^{\b_t} \partial_x^{\b_x} \partial_y^{\b_y} u_{\heat,k}(t)}_{L^2(\O)} \leq C \pppg t^{-\frac \dd 2 - k - \b_t - \abs{\b_x}} 
\]
and
\[
\nr{\pppg x^{-\d} \partial_t^{\b_t} \partial_x^{\b_x} \partial_y^{\b_y} \tilde u_{\Ndev + 1} (t)}_{L^2(\O)} \leq C \pppg t^{-\frac \dd 2 - \Ndev -1  - \b_t - \abs{\b_x} + \e}.
\]
\end{enumerate}
\end{theorem}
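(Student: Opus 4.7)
The strategy is a low-frequency resolvent expansion of $\Ac$ near zero, plugged into the spectral representation of the semigroup $e^{-it\Ac}$. For $U_0 \in \Dom(\Ac)$ and $t > 0$ one starts from a Stone-type formula
\[
U(t) = \frac{1}{2i\pi}\int_{\R} e^{-it\tau} \bigl( \Rat - \Rat^{*} \bigr) U_0 \, d\tau,
\]
where $\Rat = (\Ac - \tau)^{-1}$ is understood as the boundary value from the upper half plane (whose existence between weighted spaces is the limiting absorption principle established earlier in the paper). The high-frequency part of the integrand, obtained by inserting $1-\Xc_1$, decays faster than any polynomial by Theorem \ref{th-high-freq-loc-decay} and may be absorbed in $\tilde u_{\Ndev + 1}$; localizing further the $\tau$-variable by a compactly supported cutoff $\chi(\tau)$ near $0$ costs only $O(t^{-\infty})$ by non-stationary phase.

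All the relevant information is thus contained in the expansion of $\Rat$ as $\tau \to 0$. The key observation is that, integrating the stationary equation $(-\D - \tau^2)u = \text{data}$ over $\o$ at fixed $x$ and using the dissipative boundary condition $\partial_\n u = -ia\tau u$, the equation on the $y$-average $P_\o u$ is, to leading order in $\tau$, of the form $(\LD - ia\Ups\tau)(P_\o u) = \text{data}$, which is precisely the stationary symbol of the heat equation \eqref{heat}. I would therefore prove an expansion
\[
\Rat = \sum_{k=0}^{\Ndev} \tau^k\, Q_k\, \Rheatzk{\tau}{1}\, \tilde Q_k \;+\; \tau^{\Ndev + 1 - \e} E_{\Ndev + 1}(\tau)
\]
in weighted operator norms, where the $Q_k,\tilde Q_k$ are built from $P_\o$, $P_{\partial \o}$ and finitely many uniformly bounded $y$-correctors, and $E_{\Ndev + 1}(\tau)$ is uniformly bounded. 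Substituting this into the contour integral and using
\[
\frac{1}{2i\pi}\int_{\R} e^{-it\tau} \chi(\tau)\, \tau^k \Rheatzk{\tau}{1}\, d\tau = c_k\, \partial_t^k e^{-t\LD/(a\Ups)} + O(t^{-\infty})
\]
identifies each $u_{\heat,k}$ as a combination of heat semigroup contributions applied to $P_\o u_1$ and $P_{\partial\o} u_0$; the decay rate $\pppg t^{-\dd/2 - k - \b_t - \abs{\b_x}}$ then follows from classical weighted heat-semigroup bounds on $\R^\dd$, and the $\b_y$-derivatives cost nothing because $y$ enters each $u_{\heat,k}$ only through the bounded correctors. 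The $\e$-loss in the estimate for $\tilde u_{\Ndev + 1}$ reflects the sharpness with which one integrates by parts in $\tau$ against the remainder term $\tau^{\Ndev + 1 - \e}E_{\Ndev + 1}(\tau)$.

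The main obstacle, as signalled in the abstract, is that the transverse operator on $\o$ with the dissipative condition $\partial_\n + ia\tau$ does not admit a Riesz basis of eigenvectors, so the expansion of $\Rat$ at $\tau = 0$ cannot be obtained by a naive fibre-wise diagonalization along the unbounded direction $\R^\dd$. The constant-in-$y$ mode has to be isolated by hand through the projectors $P_\o$ and $P_{\partial \o}$, and the coupling with the transverse non-zero modes must be controlled perturbatively, using the weighted resolvent bounds uniform down to $\tau = 0$ developed in the earlier part of the paper. This decoupling is precisely what produces the explicit initial datum $iP_{\partial \o} u_0 + \frac{i}{a\Ups} P_\o u_1$ appearing in \eqref{def-u-heat} for the leading heat profile $u_{\heat,0}$.
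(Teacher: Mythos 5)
The overall strategy you propose — inverting a Fourier-Laplace representation of the semigroup, expanding the resolvent near $\tau=0$ with the heat resolvent $\big(\LD-ia\Ups\tau\big)^{-1}$ as the leading $x$-block, and identifying the projected initial datum $iP_{\partial\o}u_0+\tfrac i{a\Ups}P_\o u_1$ — is the right one and matches the paper. There are, however, two concrete gaps that would block the argument as written.

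First, the starting formula is not valid. A Stone-type formula
\[
U(t)=\frac1{2i\pi}\int_\R e^{-it\tau}\big(\Rat-\Rat^*\big)U_0\,d\tau
\]
relies on the spectral theorem, i.e.\ on self-adjointness: in the self-adjoint case the adjoint of the boundary value from above is the boundary value from below, so $\Rat-\Rat^*$ is the jump of the resolvent. For the maximal dissipative operator $\Ac$ the resolvent is only available in $\C_+$, $\Rat^*$ is the resolvent of the different operator $\Ac^*$, and $\Rat-\Rat^*$ is not the propagator of $\Ac$; also $0\in\sigma(\Ac)$, so the boundary value $\Rat$ does not even exist at $\tau=0$ without weights. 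What does work (and is what the paper uses) is to multiply the solution by $\1_{\R_+}(t)e^{-t\m}$, take a Fourier transform in $t$, invert, and only then let $\m\searrow0$; equivalently a Dunford/Laplace inversion, not a jump across the axis.

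Second, the expansion ansatz
\[
\Rat=\sum_{k=0}^{\Ndev}\tau^k\,Q_k\,\big(\LD-ia\Ups\tau\big)^{-1}\,\tilde Q_k+\tau^{\Ndev+1-\e}E_{\Ndev+1}(\tau),
\]
with $Q_k,\tilde Q_k$ uniformly bounded $y$-operators, is not achievable. Expanding $\big(\LD+\l_0(a\tau)-\tau^2\big)^{-1}$ around $\big(\LD-ia\Ups\tau\big)^{-1}$ by the resolvent identity necessarily produces \emph{higher powers} $\big(\LD-ia\Ups\tau\big)^{-1-j}$ with $j$ running up to $\Ndev$; these are unbounded $x$-operators near $\tau=0$ and cannot be absorbed into a bounded $y$-operator or into a $\tau$-power. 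Crucially they are not negligible: the term $\tau^{j+k}\big(\LD-ia\Ups\tau\big)^{-1-j}\Pc_{k,j}$ contributes to the Fourier integral at the rate $t^{-\dd/2-k}$ for every $j\in\Ii0k$, so it must be part of $u_{\heat,k}$ and cannot be pushed into the $\tilde u_{\Ndev+1}$ remainder. The paper's low-frequency expansion (its Theorem on $\tilde R_a(z)$) has exactly this double-indexed structure. Finally, the "constant-in-$y$ mode isolated by hand" picture is not quite right: for $\tau\neq0$ the transverse operator $T_{a\tau}$ does not have constants as eigenfunctions, and the separation is done through the Riesz spectral projection $P_z$ onto the lowest eigenvalue $\l_0(a\tau)$, whose eigenvector is only close to the constant; the projections $P_\o$ and $P_{\partial\o}$ then appear through $P_0$ and through the identity $P_\o\Th_a=a\Ups P_{\partial\o}$ when passing to the limit $\tau\to0$.
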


Notice that if we set $U_\heat (t) = (u_\heat(t) , i\partial_t u_\heat(t))$ then the first statement gives
\begin{equation} \label{estim-Uheat}
\nr{e^{-it\Ac} U_0 - U_\heat(t)}_{\EE^{-\d}} \lesssim \pppg t^{-\frac \dd 2 - 2}.
\end{equation}
Since $U_\heat(t)$ is given by the solution of the standard heat equation on $\R^\dd$, we know that it decays like $t^{-\frac \dd 2 - 1}$ in $\Ec^{-\d}$ (see Remark \ref{rem-heat-kernel}). With \eqref{estim-Uheat}, we deduce that the uniform estimate of Theorem \ref{th-loc-dec} is sharp and could not be improved even with a stronger weight.\\

We also observe that $u_{\heat}$ decays slowly if the coefficient $a\Ups$ is large (formally, $u_\heat$ even becomes constant at the limit $a\Ups = +\infty$). This confirms the general idea that a very strong damping weaken the energy decay. Notice that it is natural that the strength of the damping depends not only on the coefficient $a$ which describes how the wave is damped at the boundary but also on the coefficient $\Ups$ which measures how a general point of $\O$ sees the boundary $\partial \O$.
The expression of $u_\heat$ also confirms that the overdamping phenomenon concerns the contribution of low frequencies.

We notice that in Theorem \ref{th-heat} we not only estimate the derivatives of the solution but also the solution itself. To this purpose we introduce $\HH^\d = H^{1,\d} \times L^{2,\d}$, which can be defined as the Hilbert completion of $C_0^\infty(\bar \O)^2$ for the norm 
\[
\nr{(u,v)}_{\HH^\d}^2 = \nr{\pppg x^\d u}_{L^2}^2 + \nr{\pppg x^\d \nabla  u}_{L^2}^2 + \nr{\pppg x^\d v}_{L^2}^2.
\]
We also write $\HH$ for $\HH^0 = H^1(\O) \times L^2(\O)$.

\begin{remark*}
If $U_0 = (u_0,iu_1) \in \Dom(\Ac) \cap \EE^\d$ is such that 
\begin{equation} \label{cond-init-data}
i P_{\partial \o} u_0 + \frac i {a\Ups} P_\o u_1 = 0,
\end{equation}
then $e^{-it\Ac} U_0$ decays at least like $t^{-\frac \dd 2 - 2}$ in $\EE^{-\d}$. This is in particular (but not only) the case if $u_0 \in C_0^\infty(\O)$ and $u_1 = 0$. Because of the semi-group property, the large time asymptotics should not depend on what is considered as the initial time. And indeed, we can check that 
\[
\frac d {dt} \left( i P_{\partial \o} u(t) + \frac i {a\Ups} P_\o \partial_t u(t) \right) = 0,
\]
so \eqref{cond-init-data} holds at time $t = 0$ if and only if it holds with $(u_0,u_1)$ replaced by $(u(t),\partial_t u(t))$ for any $t \geq 0$.
\end{remark*}

      \detail{
      \begin{align*}
      \frac d {dt} \left( i P_{\partial \o} u(t) + \frac i {a \Ups} \partial_t u(t) \right)
      & = \frac i {\abs{\partial \o}} \int_{\partial \o} \partial_t u(t) + \frac {i \abs \o}{a \abs {\partial \o}} \frac 1 {\abs \o} \int_\o \partial_t^2 u(t) \\
      & = \frac i {\abs{\partial \o}} \int_{\partial \o} \partial_t u(t) + \frac { i}{a \abs {\partial \o}}  \int_\o \D u(t) \\
      & = \frac i {a \abs{\partial \o}} \int_{\partial \o} a \partial_t u(t) + \frac { i}{a \abs {\partial \o}}  \int_{\partial \o} \partial_\nu u(t) \\
      & = 0.
      \end{align*}
      }

\subsection{Resolvent estimates}

We are going to prove the estimates of Theorems \ref{th-loc-dec}, \ref{th-high-freq-loc-decay} and \ref{th-heat} from a spectral point of view. After a Fourier transform, we can write $e^{-it\Ac}$ as the integral over $\t = \Re(z)$ of the resolvent $(\Ac-z)\inv$ or, more precisely, of its limit when $\Im(z) \searrow 0$. As usual we will consider separately the contributions of intermediate frequencies ($\abs \t \sim 1$), high frequencies ($\abs \t \gg 1$) and low frequencies ($\abs \t \ll 1$). And as usual the main difficulties will come from low and high frequencies. We begin with the result about intermediate frequencies:

\begin{theorem}[Intermediate frequency estimates] \label{th-inter-freq}
For any $\t \in \R \setminus \singl 0$ the resolvent ${(\Ac-\t)\inv}$ is well defined in $\Lc(\EE)$. By restriction, it also defines a bounded operator on $\Hc$. 
\end{theorem}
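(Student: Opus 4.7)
The plan is to show that $\tau \in \R \setminus \singl 0$ belongs to the resolvent set of $\Ac$. Since $\Ac$ is maximal dissipative (Proposition \ref{prop-Ac-diss}), its spectrum is already contained in $\singl{z \in \C \st \Im z \leq 0}$, so it suffices to rule out the non-zero real axis.

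I would first prove injectivity of $\Ac - \tau$. If $(u,v) \in \Dom(\Ac)$ satisfies $\Ac(u,v) = \tau (u,v)$, then $v = \tau u$; since $v \in L^2(\O)$ and $\tau \neq 0$, one also has $u \in H^1(\O)$, with $-\D u = \tau^2 u$ in $\O$ and $\partial_\n u = i a \tau u$ on $\partial \O$. Testing against $\bar u$ and taking imaginary parts yields $a\tau \nr{u}_{L^2(\partial \O)}^2 = 0$, so both $u$ and $\partial_\n u$ vanish on $\partial \O$. Extending $u$ by zero across $\partial \O$ produces an $H^2$ solution of the Helmholtz equation in a neighbourhood of any boundary point; unique continuation combined with the connectedness of $\O$ then forces $u \equiv 0$.

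For existence and continuity of the inverse, I would rewrite the resolvent equation $(\Ac - \tau) U = F$ with $U = (u,v)$ and $F = (f,g)$ as $v = \tau u + f$ together with the stationary Helmholtz problem
\[
(-\D - \tau^2) u = g + \tau f \quad \text{in } \O, \qquad \partial_\n u - ia\tau u = ia f \quad \text{on } \partial \O,
\]
and solve it by taking the partial Fourier transform in $x \in \R^\dd$. This decouples the problem into a fibrewise family of boundary value problems on the compact section $\o$ involving the operator $T_{a\tau} + \abs \xi^2 - \tau^2$, where $T_{a\tau}$ denotes $-\D_y$ on $L^2(\o)$ with domain $\singl{\phi \in H^2(\o) \st \partial_\n \phi = ia\tau \phi}$. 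The same unique continuation argument carried out on $\o$ shows that $T_{a\tau}$ has no real eigenvalue, so each fibre operator is invertible; combining the $O(\abs \xi^{-2})$ control of the fibre resolvent for large $\abs \xi$ with a local continuity argument for bounded $\xi$ produces uniform fibrewise estimates which Plancherel transfers to a bounded inverse $(\Ac - \tau)^{-1} \in \Lc(\EE)$.

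The hard point here is the one flagged in the abstract: because the eigenfunctions of $T_{a\tau}$ do not form a Riesz basis, these uniform fibrewise estimates cannot be obtained by diagonalising the transverse operator and must instead be derived from direct elliptic estimates on $\o$, carefully tracking the distance from $\tau^2 - \abs \xi^2$ to the (complex) spectrum of $T_{a\tau}$. Once boundedness on $\EE$ is established, the restriction to $\Hc = H^1(\O) \times L^2(\O)$ follows routinely: the relation $v = \tau u + f$ with $\tau \neq 0$ upgrades the $L^2$-control of $v$ and $f$ to an $L^2$-control of $u$, supplying the only piece of information missing from the $\EE$ bound.
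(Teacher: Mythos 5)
Your injectivity argument (unique continuation after extending by zero) and the partial Fourier decomposition in $x$ are a genuinely different route from the paper's, which instead isolates the relevant part of the transverse spectrum of $T_{a\tau}$ by a Riesz contour projection $P_\Gc$ and writes $(\Ha-\z)^{-1}=R_\Gc(\z)+\Bc_\Gc(\z)$ (Propositions \ref{prop-res-uGamma}--\ref{prop-res-utotal}, Corollary \ref{cor-res-Ha-inter-freq}) before assembling the wave resolvent from \eqref{eq-res-Ac-tRaz}. Note also that the Riesz basis difficulty you flag as the ``hard point'' is in fact immaterial to your own strategy: in the direct integral over $\xi$ you never expand in a transverse eigenbasis, and uniform invertibility of $T_{a\tau}+\abs\xi^2-\tau^2$ follows from accretivity for large $\abs\xi$ and continuity plus compactness for bounded $\abs\xi$ without any basis property.

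There is, however, a genuine gap in the step from ``uniform fibrewise estimates'' to ``a bounded inverse $(\Ac-\tau)^{-1}\in\Lc(\EE)$.'' Your Helmholtz reformulation has sources $g+\tau f$ in $\O$ and $iaf$ on $\partial\O$, so a uniform bound on the fibre resolvent (in $\Lc(L^2(\o))$ or $\Lc(H^1(\o)',H^1(\o))$) controls $\hat u(\xi)$ by $\nr{\hat g(\xi)}_{L^2(\o)}+\nr{\hat f(\xi)}_{H^1(\o)}$, and Plancherel then only gives $\nr{U}_{\EE}\lesssim\nr{F}_{\HH}$. But the $\EE$-norm of $F=(f,g)$ controls $\nr{\nabla f}_{L^2}$ and $\nr{g}_{L^2}$, \emph{not} $\nr{f}_{L^2(\O)}$ — there is no Poincar\'e inequality on $\O=\R^\dd\times\o$ — so this is not a resolvent estimate on $\EE$; the offending term is $\hat v(\xi)=\tau\hat u(\xi)+\hat f(\xi)$ near $\xi=0$. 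The paper closes precisely this gap through the algebraic cancellation $\tRaz(i\Th_a+z)u=-\tfrac1z\big(u+\tRaz\tilde\D u\big)$: after applying $\nabla$, the right-hand side only involves $\nabla u$ and $\tilde\D u\in\HuOp$, both controlled by $\nr{\nabla u}_{L^2}$. A fibrewise version of the same identity (using $iaz\Th_\a=-\tilde\D_\o+\abs\xi^2-z^2-(\tTa+\abs\xi^2-z^2)$) would repair your argument, but it has to be exhibited explicitly; as written the proposal establishes boundedness from $\HH$ to $\EE$ (and hence on $\HH$ via $u=\tau^{-1}(v-f)$, as you say), not boundedness on $\EE$ as the theorem claims.
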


Since the resolvent set of $\Ac$ is open, this result implies that around a non-zero frequency (0 belongs to the spectrum of $\Ac$) we have a spectral gap. Thus the question of the limiting absorption principle is irrelevant, we do not even have to work in weighted spaces, and we have similar estimates for the powers of the resolvent. We also remark that, by continuity, the map $\t \mapsto (\Ac-\t)\inv$ is bounded as a function on $\Lc(\EE)$ or $\Lc(\HH)$ on any compact subset of $\R \setminus \singl 0$.\\

Even if any $\t \in \R \setminus \singl 0$ is in the resolvent set, the size of the resolvent and hence of the spectral gap are not necessarily uniform for high frequencies. 

It is known that for high frequencies the propagation of the wave is well approximated by the flow of the underlying classical problem. For the straight wave guide, the horizontal lines (\ie included in $\R^\dd \times \singl y$ for some $y \in \o$) correspond to (spatial projections of) classical trajectories which never see the damping. Thus, we expect that we neither have a spectral gap for high frequencies nor a uniform exponential decay for the energy of the time-dependant solution. However, the classical trajectories which never meet the boundary escape to infinity, so the damping condition is satisfied by all the {bounded} trajectories. In this setting we expect to recover the usual high-frequency estimates known for the undamped wave on the Euclidean space under the non-trapping condition.

\begin{theorem}[High frequency estimates] \label{th-high-freq}
Let $\Nder \in\N$ and $\d > \Nder + \frac 12$. Then there exist $\t_0 \geq 0$ and $C \geq 0$ such that for $\abs \t \geq \t_0$ we have 
\[
\nr{(\Ac-\t)^{-1- \Nder}}_{\Lc(\EE^{\d},\EE^{-\d})} \leq C.
\]
Moreover there exists $\g > 0$ such that if $\h_1$ is supported in $]-\g,\g[$ then for $\abs \t \geq \t_0$ we have
\[
\nr{\Xc_\t(\Ac-\t)^{-1- \Nder}}_{\Lc(\EE,\EE)} \leq C.
\]
We also have similar estimates in $\Lc(\HH^{\d},\HH^{-\d})$ and $\Lc(\HH,\HH)$, respectively.
\end{theorem}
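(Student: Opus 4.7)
The plan is to reduce the resolvent equation $(\Ac - \tau)(u,v) = (f,g)$ to a scalar semiclassical Helmholtz-type problem, and then to prove the estimate by a contradiction argument based on semiclassical measures. Solving $v = \tau u + f$ and substituting into the second equation gives
\[
(-\Delta - \tau^2) u = g + \tau f \quad \text{in } \O, \qquad \partial_\n u - ia\tau u = iaf \quad \text{on } \partial \O.
\]
Setting $h = 1/\abs \tau$ turns this into the semiclassical problem $(-h^2 \Delta - 1) u = O_{L^2}(h)$ with the dissipative Robin-type boundary condition $h \partial_\n u \mp iau = O(h)$. For the $(1+\Nder)$-th power of the resolvent one differentiates $\Nder$ times in $\tau$: each differentiation multiplies the source by a further factor of $(\Ac - \tau)\inv$, costing one extra half-power of spatial weight, and this accounts for the assumption $\d > \Nder + \tfrac 12$.

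Assume the first estimate fails. After normalization one obtains a sequence $\tau_n \to +\infty$ and associated scalar functions $u_n$ solving the semiclassical Helmholtz problem above, with source of size $o(1)$ in the weighted space and $\nr{\pppg x^{-\d} u_n}$ bounded away from zero. Extract the semiclassical defect measure $\m$ of $(\pppg x^{-\d} u_n)$, viewed as a nonnegative Radon measure on the compressed cotangent bundle of $\overline \O$. By standard semiclassical pseudodifferential calculus, together with boundary reflection analysis in the spirit of Lebeau and Burq adapted to the dissipative condition, the measure $\m$ is supported on the energy surface $\{\abs \xi^2 + \abs \eta ^2 = 1\}$, is invariant under the broken bicharacteristic flow, and undergoes a strict loss of mass at each reflection off $\partial \O$.

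Because $\o$ is bounded, any bicharacteristic with nonzero transverse momentum $\eta$ meets $\partial \O$ in uniformly bounded time; combined with the strict decay at each reflection and flow-invariance, this forces $\m \equiv 0$ on such trajectories. The remaining bicharacteristics are the horizontal ones, with $\eta \equiv 0$, which are unbounded in the $x$-direction, so the spatial weight $\pppg x^{-\d}$ kills their contribution as well; hence $\m = 0$, contradicting the normalization and proving the first estimate. For the refined estimate involving $\Xc_\t$, the symbol of $\Xc_\t$ is supported in $\{\abs \xi^2 \leq \g\}$ in semiclassical variables; on the energy surface this forces $\abs \eta^2 \geq 1 - \g$, so for $\g$ small every trajectory in $\supp \m$ meets $\partial \O$ in a time uniformly bounded in $x$, and no weight is needed to conclude. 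The versions in $\Hc^\d$ follow from the $\EE^\d$ versions combined with a standard elliptic bound for the Helmholtz equation, which recovers the $L^2$ norm of $u$ itself.

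The main obstacle is the semiclassical boundary analysis on the noncompact waveguide $\O$ with unbounded boundary $\partial \O$: the Lebeau--Burq propagation of semiclassical measures under a dissipative Robin-type boundary condition must be carried out with constants uniform in $x$, and combined with weighted Agmon-type arguments to discard trajectories escaping to infinity. The delicate quantitative point is the treatment of trajectories whose transverse momentum is small but nonzero (for which the return time to the boundary grows), where the weight $\pppg x^{-\d}$ and the loss of mass at each reflection must be balanced carefully to exclude a concentrating defect mass.
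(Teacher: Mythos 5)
Your plan takes a genuinely different route from the paper. The paper never runs a global semiclassical measure argument on the unbounded domain $\O$. Instead it separates variables: it proves a transverse spectral gap for $T_{\a,h}$ on the \emph{compact} section $\o$ (Theorem \ref{th-gap-Tah}, and that is the only place where the Lebeau-style contradiction argument with semiclassical measures appears), and then treats the Euclidean directions by completely different means — the resolvent-identity/Cauchy-integral decomposition of Proposition \ref{prop-small-long-freq} for the range of large transverse eigenvalues, an escape-function commutator argument in the $x$ variables (Proposition \ref{prop-high-long-freq}) for the range of small transverse eigenvalues, and Isozaki--Kitada-type incoming/outgoing estimates (Propositions \ref{prop-incoming-outgoing} and \ref{prop-Rcha}) for the powers of the resolvent. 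The dichotomy ``$\abs{\l_m(a\t)}\ll\t^2$ vs.\ $\abs{\l_m(a\t)}\simeq\t^2$'' in the paper is precisely designed to avoid ever having to propagate a measure along a trajectory whose transverse momentum is small but nonzero.

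As written, your proposal has several real gaps. First, the key assertion that ``the spatial weight $\pppg x^{-\d}$ kills the contribution of horizontal trajectories'' is unproven and is in fact the hard part: for a bicharacteristic with $\y = 0$ there is no damping and no boundary reflection, and ruling out defect mass along it is exactly what the Mourre/escape-function argument or outgoing-parametrix argument does; a weighted normalization alone does not do it, and the measure of $\pppg x^{-\d}u_n$ is not transport-invariant because $\pppg x^{-\d}$ does not commute with $-h^2\D$ to leading order. Second, you name ``the delicate quantitative point'' about trajectories with small nonzero transverse momentum without resolving it; this is the precise case (return time to $\partial\O$ growing without bound, damping weakening) that the paper's split between Propositions \ref{prop-small-long-freq} and \ref{prop-high-long-freq} is built to sidestep — a global measure argument must actually confront it, and it is not obvious the conclusion survives. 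Third, the treatment of the powers $(\Ac-\t)^{-1-\Nder}$ is not just ``an extra half-power of weight per derivative'': by \eqref{eq-der-tRaz} each differentiation inserts a boundary operator $\Th_a\in\Lc(\HuO,\HuOp)$ which is not $L^2$-bounded, and the paper explicitly notes (before Proposition \ref{prop-incoming-outgoing}) that for an abstract inserted operator of this class the resulting estimates are \emph{not} strong enough; one must exploit that $\Th_a$ is exactly the dissipative part of the equation, via the $q_a$-form bounds in \eqref{estim-Rcha-qa}. Your sketch does not address this.

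In short: the underlying physical intuition (damping at the boundary controls trajectories hitting $\partial\O$; escape at infinity controls the rest) is correct and shared with the paper, but the single global-measure scheme you propose would require a substantial propagation theory for defect measures on an unbounded waveguide with an unbounded Robin-dissipative boundary, uniform in $x$, plus a separate quantitative argument at infinity, and a genuinely new idea to handle the $\Th_a$ insertions; none of these is supplied. The paper's separation-of-variables architecture is precisely what avoids all three difficulties.
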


As already mentioned, the limitation in the rate of decay in Theorem \ref{th-loc-dec} is due to the contribution of low frequencies. From the spectral point of view, this comes from the fact that the derivatives of the resolvent are not uniformly bounded up to any order in a neighborhood of 0. The low frequency resolvent estimates will be given in $L^2(\O)$ in Theorem \ref{th-low-freq-bis} below.\\

Thus this paper is mainly devoted to the proofs of resolvent estimates. For this it is more convenient to go back to the physical space $L^2(\O)$. Therefore we first have to rewrite the resolvent $(\Ac-z)\inv$ in terms of the resolvent of a Laplace operator on $L^2(\O)$.\\

Given $z$ in 
\[
\C_+ := \singl{z \in \C \st \Im(z) > 0}
\]
and $\f$ in the dual space $\HuOp$ of $\HuO$ we denote by $u = \tRaz \f$ the unique solution in $\HuO$ for the variational problem 
\begin{equation} \label{variational-pb}
\forall v \in \HuO, \quad \innp{\nabla u}{\nabla v}_{L^2(\O)} - i z \int_{\partial \O} a u \bar v - z^2 \innp{u}{v}_{L^2(\O)} = \innp{\f}{v}_{\HuOp,\HuO}.
\end{equation}
We will check in Proposition \ref{prop-tRaz} that this defines a map $\tRaz \in \Lc(\HuOp,\HuO)$. Moreover, if $\f \in L^2(\O)$ then
\begin{equation} \label{eq-tRaz-Raz}
\tRaz \f = \big( \Haz - z^2 \big)\inv \f, 
\end{equation}
where for $\a \in \C$ we have set 
\begin{equation} \label{def-Ha}
H_\a = -\D
\end{equation} 
on the domain
\begin{equation} \label{dom-Ha}
\Dom(H_\a) = \singl{u \in H^2(\O) \st \partial_\n u = i \a u \text{ on } \partial \O}.
\end{equation}
In Proposition \ref{prop-tRaz-Raz} we will set for $z \in \C_+$
\[
\Raz = \big(\Haz - z^2 \big) \inv.
\]

We consider in $\Lc(\HuO,\HuOp)$ the operator $\Th_a$ defined as follows:
\begin{equation} \label{def-Th}
\forall \f ,\p \in \HuO, \quad \innp{\Th_a \f}{\p}_{\HuOp,\HuO} = \int_{\partial \O} a \f \bar \p.
\end{equation}
Then the link between $(\Ac-z)\inv$ and $\tRaz$ is the following: we will see in Proposition \ref{prop-Ac-diss} that for all $z \in \C_+$ we have on $\HH$
\begin{equation} \label{eq-res-Ac-tRaz}
(\Ac-z)\inv =
\begin{pmatrix}
\tRaz (i\Th_a + z) & \tRaz \\
1 + \tRaz (iz\Th_a + z^2) & z \tRaz
\end{pmatrix}.
\end{equation}
This is of course of the same form as the equality in \cite[Proposition 3.5]{boucletr14}, taking the limit $a(x) \to a \d_{\partial \O}$. However the damping is no longer a bounded operator on $L^2(\O)$ and can only be seen as a quadratic form on $H^1(\O)$.\\

Our purpose is then to estimate the derivatives of $\tRaz$. As in \cite{boucletr14,art-dld-energy-space}, we have to be careful with the dependance on the spectral parameter. And now the derivatives have to be computed in the sense of forms. For instance for the first derivative we have in $\Lc(\HuOp,\HuO)$
\begin{equation} \label{eq-der-tRaz}
\tRa'(z) = \tRaz (i\Th_a + 2z) \tRaz.
\end{equation}

Let us come back to the low frequency estimates and to the comparison with the heat equation. We first observe that for $z \in \C_+$ small, the absorption coefficient $az$ which appears in \eqref{variational-pb} or in the domain of $\Haz$ becomes small. This explains why there is no spectral gap around 0. More precisely, we said that the contribution of low frequencies for the solution of \eqref{wave} behaves like the solution of \eqref{heat}. In our spectral analysis, this comes from the fact that for $z \in \C_+$ small the resolvent $\tRaz $ is close to $\big( -\D - ia \Ups z \big)\inv P_\o$. More precisely, we will prove the following result:

\begin{theorem} \label{th-low-freq-bis}
Let $\Ndev \in \N$. Then there exists an open neighborhood $\Uc$ of 0 in $\C$ such that for $z \in \Uc \cap \C_+$ we can write 
\begin{equation} \label{dev-res-low-freq}
\tRaz  = \sum_{k=0}^\Ndev \sum_{j=0}^k   z^{j+k}  \big( -\D - ia \Ups z \big)^{-j-1} \Pc_{k,j} +  \Rest(z) 
\end{equation}
where the following properties are satisfied.
\begin{enumerate}[(i)]
\item For $k \in \Ii 0 \Ndev$ and $j \in \Ii 0 k$ the operator $\Pc_{k,j}$ belongs to $\Lc(\Huop,\Huo)$. In particular there exists $\sigma \in\C$ such that $\Pc_{k,k} = \s^{k} P_\o$.
\item Let $m \in \N$, $s \in \big[ 0 , \frac {\dd}2 \big[$, $\d > s$, $\b_x \in \N^d$ and $\b_y \in \N^\nn$ be such that $\abs{\b_x} + \abs{\b_y} \leq 1$. Then there exists $C \geq 0$ such that for $z \in \Uc \cap \C_+$ we have 
\[
\nr{\pppg x^{-\d} \partial^{\b_x}_x \partial^{\b_y}_y  \Rest^{(m)} (z)  \pppg x^{-\d}}_{L^2(\O)} \leq C \left( 1 + \abs z^{\Ndev -\Nder + s + \frac {\abs {\b_x} } 2} \right). 
\]
\end{enumerate}
\end{theorem}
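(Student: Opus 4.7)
The strategy is to isolate the singular behavior of $\tRaz$ as $z \to 0$, which is concentrated in the modes that are constant in the transverse variable $y$. Decompose $\HuO = P_\o \HuO \oplus (\Id - P_\o) \HuO$ and split $u = u_1 + u_2$, $\f = \f_1 + \f_2$ accordingly in \eqref{variational-pb}. For $u_1, v_1 \in P_\o \HuO$ (so $u_1 = u_1(x)$ and similarly for $v_1$), and using $\partial \O = \R^\dd \times \partial \o$, the form in \eqref{variational-pb} reduces to
\[
\abs \o \innp{\nabla_x u_1}{\nabla_x v_1}_{L^2(\R^\dd)} - iaz \abs{\partial \o} \innp{u_1}{v_1}_{L^2(\R^\dd)} - z^2 \abs \o \innp{u_1}{v_1}_{L^2(\R^\dd)},
\]
which is $\abs \o$ times the sesquilinear form of $\LD - ia\Ups z - z^2$ on $L^2(\R^\dd)$ (recall $\Ups = \abs{\partial \o}/\abs \o$). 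On $(\Id - P_\o)\HuO$, the transverse Poincar\'e inequality (the Neumann spectral gap on $\o$) makes the $z=0$ form coercive, so the $(2,2)$-block stays invertible on some neighborhood $\Uc$ of $0$.

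Writing \eqref{variational-pb} as a $2\times 2$ block system with blocks $B_z^{ij}$ in the decomposition above, a direct computation using $\int_\o (\Id-P_\o)u\,dy = 0$ shows that the off-diagonal blocks vanish at $z=0$, hence $B_z^{12}, B_z^{21} = O(z)$ polynomially in $z$. The Schur complement is therefore of the form $S(z) = \abs \o (\LD - ia\Ups z) + z^2 E(z)$, where $E(z)$ and $(B_z^{22})^{-1}$ are bounded and holomorphic in $z$ near $0$ (as operators between the appropriate Sobolev spaces). Expanding
\[
S(z)^{-1} = \sum_{n \geq 0} (-1)^n \big( \abs \o^{-1}(\LD - ia\Ups z)^{-1} z^2 E(z)\big)^n \abs \o^{-1}(\LD - ia\Ups z)^{-1},
\]
together with the Taylor expansions of $E(z)$ and $(B_z^{22})^{-1}$, and substituting back into the explicit block formula for $\tRaz$, yields an expansion of the stated shape: each term with $j$ factors of $(\LD - ia\Ups z)^{-1}$ carries at least $j$ extra powers of $z$ (one per factor of $z^2 E(z)$, beyond the $j$ factors of resolvent), which, combined with the $k-j$ further powers coming from the Taylor coefficients, produces the factor $z^{j+k}(\LD - ia\Ups z)^{-j-1}$ with $j \leq k$. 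The leading contribution $(\LD - ia\Ups z)^{-1} P_\o$ identifies $\Pc_{0,0} = P_\o$, and the purely diagonal $\Pc_{k,k}$ reduce to $\s^k P_\o$ for some scalar $\s$ coming from the constant part of $E(0)$ acting on $P_\o L^2(\R^\dd)$.

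For the remainder estimate (ii), I would truncate both Neumann series at order $\Ndev$ and differentiate $\Nder$ times in $z$. The key analytic input is the classical weighted low-frequency estimate on $\R^\dd$,
\[
\nr{\pppg x^{-\d} \partial_x^{\b_x}(\LD - ia\Ups z)^{-\m-1}\pppg x^{-\d}}_{\Lc(L^2(\R^\dd))} \lesssim 1 + \abs z^{s - \m - 1 + \abs{\b_x}/2} \quad (\d > s, \; s \in [0, \dd/2[),
\]
obtained for instance by Fourier multipliers or the explicit heat kernel. Each differentiation in $z$ either hits a resolvent factor $(\LD - ia\Ups z)^{-1}$, raising its power by one without costing any power of $z$, or hits one of the analytic coefficients $E(z)$, $(B_z^{22})^{-1}$, at a cost of at most one power of $z$ per differentiation; balancing this against the $\Nder$ derivatives recovers the stated exponent. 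A $\partial_y$ derivative acts non-trivially only on the $(\Id-P_\o)$-component $u_2 = -(B_z^{22})^{-1} B_z^{21} u_1$ and on the coefficients $\Pc_{k,j}$, both of which are holomorphic in $z$ and therefore produce no additional singularity. The main obstacle is this combinatorial bookkeeping — making sure that the tail of the Neumann series, the $u_2$-component, and the $(t,x,y)$-derivatives together combine into a remainder of exactly the order stated — together with verifying that each $\Pc_{k,j}$ really belongs to $\Lc(\Huop, \Huo)$, which requires controlling the boundary trace operator $\Th_a$ and the inverse of $B_0^{22}$ as maps between the right Sobolev spaces on $\o$.
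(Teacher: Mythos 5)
Your Schur complement strategy with the fixed projection $P_\o$ correctly identifies the leading heat resolvent $(\LD - ia\Ups z)^{-1}$ on the constant-in-$y$ mode, and several of your structural observations are right: the off-diagonal blocks vanish at $z=0$, and the $(2,2)$ block is coercive at $z=0$ via the Poincar\'e--Wirtinger (Neumann spectral gap) inequality. But the expansion you propose does not actually produce the stated form, and the obstruction is structural, not just ``combinatorial bookkeeping.'' The correction $E(z)$ in your Schur complement $S(z) = \abs\o(\LD - ia\Ups z) + z^2E(z)$ involves $(B_z^{22})^{-1}$, where $B_z^{22}$ is the restriction of $\LD + T_{az} - z^2$ to the zero-transverse-average block. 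After Fourier transform in $x$, this inverse at frequency $\x$ is $\big(\abs\x^2 + (\Id-P_\o)T_{az}(\Id-P_\o) - z^2\big)^{-1}$, which depends essentially on $\x$. So $E(z)$ commutes with $\LD$ but is a genuine, non-constant function of $\LD$ -- it is neither a scalar nor an operator acting only in $y$. The $n$-th term of your Neumann series, after commuting, is $(-z^2)^n E(z)^n\big(\abs\o(\LD-ia\Ups z)\big)^{-n-1}$, whose Taylor coefficients in $z$ are $\LD$-dependent operators and therefore do \emph{not} belong to $\Lc(\Huop,\Huo)$. In particular the claim that $\Pc_{k,k} = \s^k P_\o$ with scalar $\s$ coming from ``the constant part of $E(0)$'' is unfounded: $E(0)$ is not scalar, and $E(0)^k(\LD-ia\Ups z)^{-k-1}$ is not a heat-resolvent power times a transverse operator. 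This structure is used downstream (the cancellation $\nabla_y\Pc_{k,k}=0$ in the proof of Theorem~\ref{th-loc-dec}), so it cannot simply be relaxed.

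The paper sidesteps this by using the $z$-\emph{dependent} Riesz projection $P_z$ onto the one-dimensional eigenspace of $T_{az}$ near its first eigenvalue $\l_0(az)$, built from Proposition~\ref{prop-lambda0} and the spectral separation of Section~\ref{sec-separation}. With $P_z$, Proposition~\ref{prop-res-utotal} gives the block-diagonal decomposition $\tRaz = (\LD+\l_0(az)-z^2)^{-1}P_z + \Bc(z)$ in which $\Bc(z)$ is holomorphic and dumped wholesale into $\Rest(z)$, and -- crucially -- the transverse contribution $\l_0(az)$ to the singular block is a \emph{scalar}. Since $\l_0(az)+ia\Ups z-z^2 = -z^2\y(z)$ is then scalar, iterating the resolvent identity gives clean powers $\y(z)^k(\LD-ia\Ups z)^{-1-k}$, and the $\Pc_{k,j}$ arise from Taylor coefficients of $\y$ and of $P_z$, which genuinely lie in $\Lc(\Huop,\Huo)$; in particular $\Pc_{k,k}=\y(0)^k P_0 = \s^k P_\o$. (The identity $\l_0'(0)=-i\Ups$, which your $(1,1)$ block reproduces formally, is what makes $\y$ holomorphic at $0$.) If you insist on the $P_\o$-splitting, you would have to commute the $\LD$-dependence of $E(z)$ past the resolvent factors order by order using $\LD = (\LD - ia\Ups z) + ia\Ups z$ and control the resulting remainders; this is precisely the extra, nontrivial work the Riesz projection spares you.
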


The resolvent $\big( -\D - ia \Ups z \big)^{-1}$ which appears in \eqref{dev-res-low-freq} is the resolvent corresponding to the heat equation \eqref{heat}. Uniform estimates for the powers of this resolvent can be deduced from its explicit kernel for $z \notin (-i\R_+)$. 

\begin{proposition} \label{prop-chaleur-intro}
\begin{enumerate}[(i)] 
\item \label{estim-chaleur-diff}
Let $s_0 > 0$, $j \in \N$, $\d > \frac \dd 2 + j$ and $\b \in \N^\dd$ with $\abs \b \leq 1$. Then there exists $C \geq 0$ such that for $s \in ]0,s_0]$ we have 
\begin{equation*} %\label{estim-noyau-chaleur}
\nr{\lim_{\e \searrow 0} \pppg x^{-\d} \partial^\b \left( \big(\L- (s+i\e) \big)^{-1-j} -   \big(\L- (s-i\e) \big)^{-1-j} \right) \pppg x^{-\d}}_{\Lc(L^2(\R^\dd))} \leq C s^{\frac \dd 2 - j - 1 +  \abs \b}.
\end{equation*}
\item \label{estim-chaleur-zero}
Let $j \in \N$, $\abs \b \in \N^\dd$ and $\e > 0$. Let $\d > \frac \dd 2 - \e$. Then there exists $C \geq 0$ and a neighborhood $\Uc$ of 0 in $\C$ such that for $\z \in \Uc \setminus \R_+$ we have 
\[
\nr{\pppg x^{-\d} \partial^\b (-\LD - \z)^{-1-j} \pppg x^{-\d}} \leq C \left(1+\abs \z^{\dd-\e-1-j}\right).
\]
\end{enumerate}
\end{proposition}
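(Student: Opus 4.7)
The plan is to rely on explicit formulas for the free Laplacian $\LD = -\D_x$ on $\R^\dd$: the Fourier representation of its spectral measure for part (i), and the Bessel-function representation of its resolvent for part (ii).

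For part (i), Stone's formula for the $(1+j)$-th power of the resolvent reduces the question to bounding $\partial_x^\b \partial_s^j (dE_\L/ds)(s;x,y)$, where the spectral density admits the explicit kernel
\[
\frac{dE_\L}{ds}(s;x,y)=\frac{s^{\dd/2-1}}{2(2\pi)^{\dd}}\int_{\Sph^{\dd-1}}e^{i\sqrt{s}(x-y)\cdot\eta}\,d\s(\eta).
\]
Each spatial derivative acts via $\partial_{x_k}e^{i\sqrt{s}(x-y)\cdot\eta}=i\sqrt{s}\eta_k e^{i\sqrt{s}(x-y)\cdot\eta}$; by oddness on the sphere, the resulting oscillatory integral $\int_{\Sph^{\dd-1}} \eta_k e^{i\sqrt{s}(x-y)\cdot\eta}\,d\s(\eta)$ exhibits a further cancellation of order $\sqrt{s}|x-y|$, producing a full factor $s$ (not merely $s^{1/2}$) per spatial derivative. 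Differentiation in $s$ acts either on $s^{\dd/2-1}$ (giving a factor $s^{-1}$) or on the exponential (giving a factor $s^{-1/2}$ together with a $|x-y|$ factor from the chain rule). A Leibniz expansion then yields a pointwise bound of the form $s^{\dd/2-1-j+|\b|}\sum_{k=0}^j C_k\,(\sqrt{s}\,|x-y|)^{k}$, and the polynomial $|x-y|^k$ factors are absorbed by $\pppg x^{-\d}\pppg y^{-\d}$ provided $\d > \dd/2 + j$, after which a Schur test upgrades the pointwise estimate to the claimed weighted $L^2$-operator bound.

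For part (ii), I would use the explicit Schwartz kernel of the free resolvent,
\[
(\LD - w)^{-1}(x,y) = \frac{1}{(2\pi)^{\dd/2}}\Bigl(\frac{\sqrt{-w}}{|x-y|}\Bigr)^{\dd/2-1} K_{\dd/2-1}\bigl(\sqrt{-w}\,|x-y|\bigr),
\]
valid for $w$ in the resolvent set of $\LD$ with the principal branch chosen so that $\Re\sqrt{-w}>0$. The modified Bessel function $K_\nu(\rho)$ is polynomial in $\rho$ as $\rho\to 0$ (logarithmic at $\nu=0$) and decays exponentially as $\rho\to\infty$, so the kernel is $w$-independent in leading order near the diagonal and exponentially small on scale $|w|^{-1/2}$. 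Taking $j$ derivatives in $w$ shifts the effective Bessel index and introduces extra powers of $|x-y|$ and $|w|$, and any spatial derivative acts analogously. I would then estimate the weighted Hilbert--Schmidt (or Schur) norm by splitting into the near-diagonal region $|x-y|\leq|w|^{-1/2}$ and its complement: the near-diagonal part contributes the ``$1$'' in the bound (a $w$-independent local singularity integrated against the weights), while the exponentially localized far-diagonal contribution scales as $|w|^{\dd-1-j}$ from the volume of the effective support. The $\e$-loss arises at the marginal integrability threshold $\d = \dd/2$, which forces the condition $\d > \dd/2 - \e$.

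The main obstacle will be the careful bookkeeping of the sphere cancellations and of the $|x-y|$ factors produced by the chain rule in part (i), together with the separate treatment of the low-dimensional cases $\dd\in\{1,2\}$ in part (ii), where the Bessel kernel at $0$ has logarithmic or bounded rather than genuinely power-law behavior and the splitting of regions must be adapted accordingly.
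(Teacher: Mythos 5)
For part (i) your plan is essentially the paper's own: after parametrizing $\x = \s\th$, the contour-integral/residue computation in the paper is exactly your polar-coordinate Fourier representation of the spectral density $\frac{dE_\LD}{ds}$, and the paper implements your ``oddness on the sphere'' cancellation precisely by pairing each $\th\in\Sph^{\dd-1}$ with its $k$-th reflection $\hat\th_k$ to extract the extra factor $\sqrt s\,|x-y|$ when a spatial derivative is applied. One detail to correct: passing from the pointwise kernel bound $\lesssim s^{\dd/2-1-j+|\b|}\,\pppg{\sqrt s\,|x-y|}^{j}$ to an $\Lc(L^2)$ bound by the \emph{Schur test} (estimating $\sup_x\int|K(x,y)|\,dy$) would force $\d>\dd+j$; what lands on the stated threshold $\d>\dd/2+j$ is the \emph{Hilbert--Schmidt} norm, using $|x-y|^{j}\lesssim\pppg x^{j}\pppg y^{j}$ and $\int\pppg x^{2(j-\d)}\,dx<\infty$.

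For part (ii) you depart from the paper. The paper has no kernel computation at all here: it observes that the estimate is a direct corollary of Proposition \ref{prop-dilatation}, which conjugates by the unitary dilation $\Phi_\z u(x)=|\z|^{\dd/4}u(|\z|^{1/2}x)$ to write $(\LD-\z)^{-1}=|\z|^{-1}\Phi_\z(\LD-\hat\z)^{-1}\Phi_\z^{-1}$ with $\hat\z=\z/|\z|$, controls the rescaled resolvent on Sobolev scales uniformly, and then converts the weights $\pppg x^{-\d}$ into $|\z|$-powers via the Sobolev embeddings $H^{\s}\hookrightarrow L^{p}$ together with the $L^p$-operator norm of $\Phi_\z$. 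Your Bessel-kernel route (explicit $\bigl(\sqrt{-w}/|x-y|\bigr)^{\dd/2-1}K_{\dd/2-1}(\sqrt{-w}|x-y|)$ with a near/far diagonal splitting) is a legitimate and more concrete alternative, but as you anticipate it requires separate treatment of $\dd\in\{1,2\}$ and careful bookkeeping of the extra $|w|$- and $|x-y|$-powers generated by $\partial_w^{j}$ and $\partial_x^{\b}$ through the Bessel recursions, whereas the scaling argument handles all $\dd$, $j$, and $\b$ uniformly in one step. Both approaches produce what is actually used downstream in Proposition \ref{prop-time-decay-heat}, namely that the left-hand side is $o(|\z|^{-1-j})$ near $0$.
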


The first statement is sharp. It will be used in particular to obtain the sharp estimates for $u_\heat(t)$ and hence for Theorem \ref{th-loc-dec}. This is not the case for the second estimate. In fact we will only use in Proposition \ref{prop-time-decay-heat} the fact that the estimate is of size $o(\abs \z^{-1-j})$.\\

Theorem \ref{th-low-freq-bis} and Proposition \ref{prop-chaleur-intro} will be used to estimate the contribution of low frequencies in Theorems \ref{th-loc-dec} and \ref{th-heat}. In Theorem \ref{th-high-freq-loc-decay} we localize away from low frequencies with respect to the first $\dd$ variables. As expected, we will see that there is no problem with the contribution of low frequencies in this case.

\begin{proposition} \label{prop-high-freq-low-freq}
The map $z \mapsto (1-\Xc_1) (\Ac-z)\inv \in \Lc(\EE)$ extends to a holomorphic function on a neighborhood of 0. The same holds in $\Lc(\HH)$.
\end{proposition}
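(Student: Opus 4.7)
The starting observation is that $\chi_1(\LD)$ commutes with $\Ac$. Indeed $\chi_1(\LD)$ is a Fourier multiplier in the $x$-variables only, so it commutes with $-\D = -\D_x - \D_y$; moreover, the outward unit normal on $\partial \O = \R^{\dd} \times \partial \o$ depends only on $y$, so the boundary condition $\partial_\n u = i a v$ appearing in \eqref{dom-Ac} is preserved under $\chi_1(\LD)$. Thus $\Xc_1$ leaves $\Dom(\Ac)$ invariant, commutes with $\Ac$, and consequently with $(\Ac-z)\inv$ on the resolvent set of $\Ac$.

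Using this together with the representation \eqref{eq-res-Ac-tRaz}, and noting that $\Th_a$ and $\chi_1(\LD)$ also commute (they act in disjoint groups of variables), each entry of $(1-\Xc_1)(\Ac-z)\inv$ is a combination of $1-\chi_1(\LD)$, $\tRaz (1-\chi_1(\LD))$, $\Th_a$, and polynomials in $z$. It therefore suffices to extend $\tRaz (1-\chi_1(\LD))$ holomorphically to a full complex neighborhood of $0$, as an operator from $\HuOp$ to $\HuO$.

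I would achieve this by revisiting the variational formulation \eqref{variational-pb} restricted to the closed subspace $V = (1-\chi_1(\LD))\HuO$. Since $\chi_1 \equiv 1$ near $0$, there exists $c_0 > 0$ such that $1-\chi_1 \equiv 0$ on $[0,2c_0]$, which gives for $u \in V$
\[
\nr{\nabla u}_{L^2(\O)}^2 \geq \nr{\nabla_x u}_{L^2(\O)}^2 = \innp{\LD u}{u}_{L^2(\O)} \geq c_0 \nr{u}_{L^2(\O)}^2.
\]
Hence the principal part of the sesquilinear form in \eqref{variational-pb} is coercive on $V$, uniformly in $z$. The remaining terms $-iz \int_{\partial \O} a u \bar v - z^2 \innp{u}{v}_{L^2(\O)}$ are bounded by $C \abs z\, \nr{u}_{\HuO} \nr{v}_{\HuO}$ via the trace inequality, hence form a small perturbation for $\abs z$ small. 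A Neumann-series inversion (equivalently, Lax--Milgram with a small perturbation) then produces a solution $u = u(z) \in V$ depending holomorphically on $z$ in a full complex neighborhood of $0$, which by uniqueness of the variational solution agrees with the restriction of $\tRaz(1-\chi_1(\LD))$ on $\Uc \cap \C_+$.

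The main obstacle is the coercivity of the form below the real axis: for $z \in \C_+$ the boundary contribution $-iz\int_{\partial \O} a \abs u^2$ has a favorable real part by dissipativity, but for $z \in \C_-$ this sign is lost, and the term must instead be absorbed directly into the dominant principal part via the trace inequality, with constants uniform in $z$ near $0$. Once the holomorphic extension is in hand, substituting it into \eqref{eq-res-Ac-tRaz} yields the conclusion in $\Lc(\EE)$; the $\Lc(\HH)$ statement is obtained in the same way, using additionally that $\chi_1(\LD)$, being a Fourier multiplier in $x$ which commutes with $-\D_x$, is bounded on $\HuO$.
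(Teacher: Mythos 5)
Your plan is structurally sound, and the route is genuinely different from the paper's, but as written it has a concrete gap: the set $V = (1-\chi_1(\LD))\HuO$ is \emph{not} a closed subspace of $\HuO$, so Lax--Milgram cannot be run on $V$. Since $\chi_1 \in C_0^\infty(\R,[0,1])$ is smooth and not $\{0,1\}$-valued, $1-\chi_1(\LD)$ is not a projection; $1-\chi_1$ takes positive values tending to $0$ near the edge of the plateau $\{\chi_1=1\}$, so $0$ is an accumulation point of the spectrum of $1-\chi_1(\LD)$ and its range is dense in, but strictly smaller than, a spectral subspace. The correct substitute is the spectral projection $\Pi := \1_{[2c_0,\infty)}(\LD)$ (with $\chi_1\equiv 1$ on $[0,2c_0]$): $\Pi$ is a bounded idempotent on $\HuO$ (a Fourier multiplier in $x$, commuting with $\nabla$ and with the trace on $\partial\O$), its range $W=\Pi\HuO$ is closed, and your coercivity estimate $\nr{\nabla u}^2 \geq \innp{\LD u}{u}\geq 2c_0\nr u^2$ holds on $W$. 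Since $(1-\chi_1)(\LD)=\Pi(1-\chi_1)(\LD)$, this recovers exactly the operator you need. The identification of the Lax--Milgram/Neumann-series solution on $W$ with $\tRaz(1-\chi_1(\LD))$ on $\Uc\cap\C_+$, which you attribute to ``uniqueness,'' still requires two checks: (i) $\tRaz$ commutes with $\Pi$, so that the image of data in $W$ under $\tRaz$ stays in $W$; and (ii) a $W$-valued solution of \eqref{variational-pb} tested only against $v\in W$ already solves it against all $v\in\HuO$, i.e.\ the form in \eqref{variational-pb} vanishes when one argument is in $W$ and the other in $(1-\Pi)\HuO$. Both are spectral orthogonality (including on $\partial\O$, because $\Pi$ acts only in the $x$-variable), but must be stated. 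Finally, substituting the holomorphic extension into \eqref{eq-res-Ac-tRaz} controls the output in terms of $\nr{u}_{\HuO}$ and $\nr{v}_{L^2}$, so it gives the $\Lc(\HH)$ conclusion directly; to obtain $\Lc(\EE)$ one needs, as the paper does, to insert a second, slightly smaller cut-off $1-\tilde\Xc_1$ with $\chi_1\equiv 1$ near $\supp\tilde\chi_1$, using that $1-\tilde\Xc_1\in\Lc(\EE,\HH)$.

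Regarding the comparison: the paper reaches the same object $\tRaz(1-\chi_1(\LD))$ but proves holomorphy by invoking Proposition \ref{prop-res-utotal} with a box $\Gc = \{\Re(\z)<-r\}$ (admissible once $\LD$ is localized above $r$, via the shifted version of Lemma \ref{lem-contour-laplacien}), and then observing that $\SSGo(az)=\vvide$ for $z$ small, so the finite-rank piece $R_\Gc$ vanishes and only the holomorphic $\Bc_\Gc(z^2)$ survives. That argument reuses machinery the paper builds anyway for the low-frequency analysis; yours, once corrected as above, is more self-contained and elementary. Both ultimately exploit the same coercivity phenomenon after localizing $\LD$ above a positive threshold.
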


\subsection{Separation of variables}

In order to prove resolvent estimates on a straight wave guide, it is natural to write the functions of $L^2(\O) \simeq L^2(\R^\dd, L^2(\o))$ as a series of functions of the form $u_m(x) \otimes \f_m(y)$ where $u_m \in L^2(\R^\dd)$ and $\f_m \in L^2(\o)$ is an eigenfunction for the transverse problem.\\

Given $\aaa \in \C$, we consider on $L^2(\o)$ the operator 
\begin{equation} \label{def-Ta}
\Ta = -\D_\o
\end{equation}
on the domain 
\begin{equation} \label{dom-Ta}
\Dom(\Ta) = \singl{u \in H^2(\o) \st \partial_\n u = i \aaa u \text{ on } \partial \o}.
\end{equation}
We have denoted by $\D_\o$ the Laplace operator on $\o$. We also denote by $\Ta$ the operator $\Id_{L^2(\R^{\dd})} \otimes (-\D_\o)$ on $L^2(\O)$ with boundary condition $\partial_\n u = i\a u$ on $\partial \O$. With $\LD$ defined above, this defines operators on $L^2(\O)$ such that 
\begin{equation} \label{eq-Ha-Ta-L}
\Ha = \LD + \TaO.
\end{equation}

The spectrum of $\Ta$ is given by a sequence $(\l_m(\a))_{m\in \N}$ of isolated eigenvalues with finite multiplicities (see Proposition \ref{prop-Ta}). When $\a = 0$ the operators $H_0$ and $T_0$ are self-adjoint. Then there exists an orthonormal basis $\seq \f m$ of $L^2(\o)$ such that $T_0 \f_m = \l_m(0) \f_m$ for all $m \in \N$. For $u \in L^2(\O)$ and almost all $x \in \R^\dd$ we can write 
\[
u(x,\cdot) = \sum_{m \in \N} u_m(x) \f_m
\]
where $u_m \in L^2(\R^\dd)$ for all $m \in \N$. Then for $z \in \C_+$ we have 
\begin{equation} \label{eq-base}
R_0(z) u = \sum_{m \in \N} \big(\LD - z^2 + \l_m(0) \big)\inv u_m \otimes \f_m,
\end{equation}
and by the Parseval identity:
\begin{equation} \label{parseval}
\nr{R_0(z) u}_{L^2(\O)}^2 = \sum_{m \in \N} \nr{\big( \LD - z^2 + \l_m(0) \big)\inv u_m}_{L^2(\R^d)}^2.
\end{equation}
Thus the estimates on $R_0(z)$ follow from analogous estimates for the family of resolvents $\big(\LD - z^2 + \l_m(0) \big)\inv$ on the Euclidean space $\R^\dd$. The situation is not that simple in our non-selfadjoint setting.

The first remark is that we do not necessarily have a basis of eigenfunctions, since for multiple eigenvalues we may have Jordan blocks. Moreover, even when we have a basis of eigenfunctions, this is not an orthogonal family so \eqref{parseval} does not hold. For the dissipative Schr\"odinger equation on a wave guide with one-dimensional section, we proved in \cite{art-diss-schrodinger-guide} that the eigenvalues are simple and that the corresponding sequence of eigenfunctions forms a Riesz basis (which basically means that the equality in \eqref{parseval} can be replaced by inequalities up to multiplicative constants). Then it was possible to reduce the problem to proving estimates for a family of resolvents on $\R^\dd$ as in the self-adjoint case. Here there are two obstructions which prevent us from following the same strategy.

The Riesz basis property in \cite{art-diss-schrodinger-guide} (and more generally in one-dimensional problems) comes from the fact that eigenfunctions corresponding to large eigenvalues $\l_m(\a)$ are close to the orthonormal family of eigenfunctions for the undamped problem. In higher dimension we have ``more small eigenvalues''. More precisely, even if it does not appear in the litterature (to the best of our knowledge), we can expect that a Weyl law holds for the eigenvalues of an operator like $T_\a$ (we recall that for the Laplace operator on a compact manifold of dimension $\nn$ the number of eigenvalues smaller that $r$ grows like $r^{\nn/2}$, see for instance \cite{strauss,zworski}). Thus, when the dimension $\nn$ grows, there are more and more eigenvalues in a given compact and hence more and more eigenfunctions which are far from being orthogonal to each other. We expect that the Riesz basis property no longer holds when $\nn \geq 2$.

The second point is that even if $\dim(\o) = 1$ we have to be careful with the fact that for the wave equation the absorption coefficient grows with the spectral parameter. In \cite[Proposition 3.2]{art-diss-schrodinger-guide} we proved the Riesz basis property uniformly only for a bounded absorption coefficient. Thus, even when $\nn = 1$ we cannot use the Riesz basis property to prove the uniform high frequency estimates.\\

Here the strategy is the following: for low and intermediate frequencies ($\abs \t \lesssim 1$), we first show that we only have to take into account a finite number of eigenvalues $\l_m(a\t)$ (those for which $\Re(\l_m(a\t)) \lesssim \t^2$). For this we have to separate the contributions of different parts of the spectrum. Without writing a sum like \eqref{eq-base}. There are two common ways to localize a problem with respect to the spectrum of an operator. If the operator is self-adjoint, we can use its spectral projections (or, more generally, the functional calculus). If the spectrum has a bounded part $\Sigma$ separated from the rest of the spectrum, we can use the projection given by the Riesz integral on a curve which surrounds $\Sigma$. One of the keys of our proof is to find a way to use simultaneously the facts that $\LD$ is selfadjoint and that $\Ta$ has a discrete spectrum to obtain spectral localizations for $\Ha$.

Once we have reduced the analysis to a finite number of eigenvalues (each of which being of finite multiplicity), we can deduce properties of our resolvent $\Rat$ from analogous properties of ${\big( \LD - \t^2 +\l_m(a\t) \big)\inv} \in \Lc(L^2(\R^\dd))$ as explained above even without self-adjointness. 

However this strategy cannot give uniform estimates for high frequencies, since then we have to take more and more transverse eigenvalues into account. But we still use the same kind of ideas, together with the standard methods of semiclassical analysis (see for instance \cite{zworski} for a general overview). 
Moreover, we will have to separate again the contributions of the different transverse frenquencies $\l_m(a\t)$. If $\abs {\l_m(a\t)} \ll \abs \t^2$ then the spectral parameter $\t^2 - \l_m(a\t)$ in \eqref{parseval} is large. Even if we cannot use \eqref{parseval} in the dissipative case, this suggests that we should use the same kind of ideas as for high frequency resolvent estimates for the operator $\LD$ on $\R^d$. This is no longer the case for the contribution of large eigenvalues of $\Tat$, for which $\abs{\l_m(a\t)} \simeq \abs{\t}^2$. Then we will use the fact that we have a spectral gap at high frequencies for the transverse operator $\Tat$.\\

We state this result in the semiclassical setting. For $\a \in \C$ and $h \in ]0,1]$ we denote by $\Tah$ the operator $-h^2 \Lo$ we domain 
\begin{equation} \label{dom-Tah}
\Dom(\Tah) = \singl{u \in H^2(\o) \st h \partial _\nu u = i \a u \text{ on } \partial \o}.
\end{equation}
Then we have the following result:

\begin{theorem} \label{th-gap-Tah}
There exist $h_0 \in ]0,1]$, $\g > 0$ and $c \geq 0$ such that for $h \in ]0,h_0]$ and
\[
\a ,\z \in ]1-\g , 1 + \g[ + ih ]1-\g , 1 + \g[
\]
the resolvent $(\Tah -\z)\inv$ is well defined in $\Lc(L^2(\o))$ and we have  
\[
\nr{(\Tah -\z)\inv}_{\Lc(L^2(\o))} \leq \frac c h.
\]
\end{theorem}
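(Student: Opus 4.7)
I plan to derive the bound from a one-line energy identity that exploits the dissipative sign of the Robin boundary condition (guaranteed by $\Re(\a) > 0$) together with the positivity of $\Im(\z)$. Invertibility of $\Tah - \z$ will then follow from the compactness of the resolvent of $\Tah$.

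\textbf{A priori $L^2$ bound.} For $u \in \Dom(\Tah)$, Green's formula combined with the boundary condition $h\partial_\n u = i\a u$ gives
\[
\innp{\Tah u}{u}_{L^2(\o)} = h^2 \nr{\nabla u}_{L^2(\o)}^2 - i h \a \nr{u}_{L^2(\partial \o)}^2,
\]
whose imaginary part is $-h\Re(\a) \nr{u}_{L^2(\partial \o)}^2$. Setting $f = (\Tah - \z) u$, pairing with $u$ and taking imaginary parts yields
\[
\Im(\z) \nr{u}_{L^2(\o)}^2 + h \Re(\a) \nr{u}_{L^2(\partial \o)}^2 = -\Im \innp{f}{u} \leq \nr{f}_{L^2(\o)} \nr{u}_{L^2(\o)}.
\]
Fix $\g \in (0,1)$. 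For $\a,\z$ in the product of intervals in the statement, $\Re(\a) \geq 1-\g > 0$ and $\Im(\z) \geq h(1-\g)$; both summands on the left are nonnegative, and I deduce $\nr{u}_{L^2(\o)} \leq \big(h(1-\g)\big)\inv \nr{f}_{L^2(\o)}$.

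\textbf{Invertibility.} By Proposition \ref{prop-Ta}, $T_\a$ has discrete spectrum and hence compact resolvent on the bounded domain $\o$; the same holds for $\Tah$, which differs from $T_\a$ only by an inessential semiclassical rescaling. Hence $\Tah - \z$ is Fredholm of index zero. The a priori estimate applied with $f = 0$ shows $\Ker(\Tah - \z) = \{0\}$, so $\Tah - \z$ is bijective from $\Dom(\Tah)$ onto $L^2(\o)$. The previous bound then becomes $\nr{(\Tah - \z)\inv}_{\Lc(L^2(\o))} \leq c/h$ with $c = (1-\g)\inv$, valid for every $h_0 \in {]0,1]}$.

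\textbf{Main obstacle.} There is essentially none: at the $L^2 \to L^2$ level the statement is a robust energy estimate using only the favourable signs $\Re(\a) > 0$ and $\Im(\z) \gtrsim h$, together with the already-established compact-resolvent property of $\Tah$. The localisation of $\a$ and $\z$ near $1$ plays no role in the proof and merely reflects the setting in which the estimate will subsequently be used in the high-frequency analysis. The $h^{-1}$ loss is sharp, since as $h \to 0$ the spectrum of $\Tah$ accumulates on the real axis at scale $h$ and $\Im(\z) \sim h$ is the only obstruction separating $\z$ from it.
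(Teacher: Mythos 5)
Your energy identity is correct, but it only covers the regime $\Im(\z) > 0$: you have implicitly taken $\g < 1$ so that $ih\,]1-\g,1+\g[$ lies in the open upper half-plane, and the term $\Im(\z)\nr{u}_{L^2(\o)}^2$ in your estimate is then bounded below by $h(1-\g)\nr{u}^2$. That is nothing more than the universal resolvent bound $\nr{(T-\z)\inv}\leq 1/\Im(\z)$ for a dissipative operator; it holds for any $\o$ and any $\Re(\a)\geq 0$, and it cannot be the point of the theorem.

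The actual content of the theorem (signalled by the heading ``spectral gap'', by the references to Bardos--Lebeau--Rauch and to Lebeau's contradiction argument in the surrounding discussion, and by the way it is used in Proposition~\ref{prop-gap-Taz}) is the estimate for $\Im(\z)\leq 0$. Proposition~\ref{prop-gap-Taz} is invoked with $\Im(\z)\geq -\g\t$, which after the rescaling $\t=1/h$, $\z\mapsto\z\t^{-2}$ corresponds to $\Im(\z)\geq -\g h<0$ in the present semiclassical statement, and the proof of Proposition~\ref{prop-small-long-freq} integrates over a rectangular contour that dips below the real axis. For $\Im(\z)=0$ your identity degenerates to
\[
h\Re(\a)\,\nr{u}_{L^2(\partial\o)}^2 \leq \nr{f}_{L^2(\o)}\,\nr{u}_{L^2(\o)},
\]
which controls the boundary trace of $u$ but not $\nr{u}_{L^2(\o)}$ itself, and for $\Im(\z)<0$ the relevant term even has the wrong sign. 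Passing from a boundary bound to an interior $L^2$ bound is an observability-type estimate, and that is precisely what the appendix supplies: Lemma~\ref{lem-perturb-Th} reduces the whole theorem to real $\a$ and $\l$ (which is already the hard case), and then Lemmas~\ref{lem-traces-o}, \ref{lem-loc-energy} and Propositions~\ref{prop-mu-not-zero}, \ref{prop-prop-mesure}, \ref{prop-mu-zero} run a semiclassical-measure contradiction argument in the spirit of Lebeau to conclude. None of this is present in your proposal.

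Two of your closing remarks are also off the mark. The localisation near $1$ is not cosmetic: it encodes the geometric control condition at the semiclassical energy level, guaranteeing that classical trajectories travel at unit speed and reach $\partial\o$, where the damping acts --- this is exactly what makes the propagation of the semiclassical measure conclusive. And the assertion that ``the spectrum accumulates on the real axis at scale $h$'' is the conclusion one seeks to establish, not a known fact: your one-sided estimate, which lives entirely in the upper half-plane, gives no information on where the spectrum actually sits.
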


It seems that this theorem has never been written from the spectral point of view, but it is very closely related to the stabilisation result of \cite{bardoslr92} in a similar setting. We also refer to \cite{lebeau96} and \cite{lebeaur97} which give stabilisation for the wave equation with dissipation in the interior and at the boundary, respectively, but without the geometric control condition. Notice that we are going to use in this paper the contradiction argument of \cite{lebeau96}. We also refer to \cite{sjostrand00} and \cite{anantharaman10} for more precise results about the damped wave equation on a compact manifold without boundary.

Here we have stated our result with a damping effective everywhere at the boundary, but Theorem \ref{th-gap-Tah} should hold if GCC holds for generalized bicharacteristics (with the additionnal assumption that there is no contact of infinite order, see for instance \cite{burq98}). Our setting allows us to provide a less general but less technical proof.

More generally, for our main results we have only considered the simplest case of a damped wave equation on a wave guide with dissipation at the boundary, which already requires quite a long analysis. But many generalizations of this model case would be of great interest (perturbations of the domain $\O$, of the laplace operator $-\D$ on $\O$, of the absorption index, etc.). They are left as open problems in this work. On the other hand the case of a damping in the interior of the domain is easier than the damping at the boundary and could be added here. However it would make the notation heavier so we content ourselves with a free equation in the interior of the domain.\\

The paper is organized as follows. 
We prove in Section \ref{sec-general-properties} the general properties of the operators $\Ac$, $\Ha$ and $\Ta$ which will be used throughout the paper.
In Section \ref{sec-loc-decay} we use the resolvent estimates of Theorems \ref{th-inter-freq}, \ref{th-high-freq} and \ref{th-low-freq-bis} (and Propositions \ref{prop-chaleur-intro} and \ref{prop-high-freq-low-freq}) to prove Theorems \ref{th-loc-dec}, \ref{th-heat} and \ref{th-high-freq-loc-decay}.
Then the rest of the paper is devoted to the proofs of these spectral results.
In Section 3 we show how we can use the discreteness of the spectrum of $\Ta$ and the selfadjointness of $\LD$ to separate the contributions of the different parts of the spectrum of $\Ha$.
Then we deduce Theorem \ref{th-inter-freq} in Section \ref{sec-inter-freq}.
In Section \ref{sec-low-freq} we study the contribution of low frequencies, and in particular we prove Theorem \ref{th-low-freq-bis}. Section \ref{sec-high-freq} is devoted to Theorem \ref{th-high-freq} concerning high frequencies, and we give a proof of Theorem \ref{th-gap-Tah} in Appendix \ref{sec-gap-Taz}.
Finally we give a quick description of the spectum of $\Ta$ when $\nn = 1$ in Appendix \ref{sec-sec-dim1}.

\section{General properties} \label{sec-general-properties}

In this section we prove the general properties which we need for our analysis. In particular we prove all the basic facts about $\Ac$, $\tRaz$ and $\Taz$ which have been mentioned in the introduction.\\

We first recall that an operator $T$ on a Hilbert space $\Kc$ with domain $\Dom(T)$ is said to be accretive (respectively dissipative) if 
\[
\forall u \in \Dom(T), \quad \Re \innp{Tu}u \geq 0 \quad \big( \text{respectively } \Im \innp {Tu}u \leq 0 \big).
\]
Moreover $T$ is said to be maximal accretive (maximal dissipative) if it has no other accretive (dissipative) extension than itself on $\Kc$. With these conventions, $T$ is (maximal) dissipative if and only if $iT$ is (maximal) accretive. We recall that a dissipative operator $T$ is maximal dissipative if and only if $(T-z)$ has a bounded inverse on $\Kc$ for some (and hence any) $z \in \C_+$. In this case we have 
\[
\forall z \in \C_+, \quad \nr{(T-z)\inv} \leq \frac 1 {\Im(z)}
\]
and hence, by the Hille-Yosida theorem (see for instance \cite{engel2}), the operator $-iT$ generates a contractions semigroup $t \mapsto e^{-itT}$. Then, for $u_0 \in \Dom(T)$, the function $t \mapsto e^{-itT} u_0$ belongs to $C^0(\R_+,\Dom(T)) \cap C^1(\R_+,\Kc)$ and is the unique solution for the Cauchy problem 
\[
\begin{cases}
\partial_t u + iT u = 0, \quad \forall t \geq 0,\\
u(0) = u_0.
\end{cases}
\]

\subsection{General properties of \texorpdfstring{$\tRaz$}{Ra(z)}}

We begin with the general properties of the variational problem \eqref{variational-pb}. For $\a \in \C$ and $u,v \in \HuO$ we set 
\begin{equation} \label{def-qa}
q_\a (u,v) = \int_{\partial \O} \a u\bar v  \qandq Q_\a(u,v) = \int_\O \nabla u \cdot \nabla \bar v - i q_\a (u,v).
\end{equation}
We also denote by $q_\a$ and $Q_\a$ the corresponding quadratic forms on $\HuO$, and by $\tilde \D \in \Lc(\HuO,\HuOp)$ the operator corresponding to $-Q_0$: for $u,v \in \HuO$ we have 
\[
\big< {-\tilde \D u},{v}\big>_{\HuOp,\HuO}  = \innp{\nabla u}{\nabla v}_{L^2(\O)}.
\]

\begin{proposition} \label{prop-tRaz}
 Let $z \in \C_+$. Then for $\f \in \HuOp$ the variational problem \eqref{variational-pb} has a unique solution $\tRaz \f \in \HuO$. Moreover the norm of $\tRaz$ in ${\Lc(\HuOp,\HuO)}$ is bounded on any compact of $\C_+$.
\end{proposition}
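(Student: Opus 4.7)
The statement is a standard Lax--Milgram-type result for the sesquilinear form
\[
B_z(u,v) = \innp{\nabla u}{\nabla v}_{L^2(\O)} - i z \int_{\partial \O} a u \bar v - z^2 \innp{u}{v}_{L^2(\O)}
\]
on $\HuO \times \HuO$ associated with the variational problem \eqref{variational-pb}. The plan is to verify that $B_z$ is continuous and coercive (up to a rotation in $\C$) so that for every $\f \in \HuOp$ there is a unique $u \in \HuO$ with $B_z(u,v) = \innp{\f}{v}_{\HuOp,\HuO}$, and then to read off the resolvent bound from the coercivity constant.

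\textbf{Continuity.} Since $\o$ is smooth and bounded, the standard trace theorem applied fiberwise over $\R^\dd$ yields $\nr{u}_{L^2(\partial \O)} \lesssim \nr{u}_{\HuO}$. Combined with Cauchy--Schwarz this immediately gives
\[
\abs{B_z(u,v)} \leq C(1+\abs z^2)\, \nr{u}_{\HuO}\nr{v}_{\HuO},
\]
uniformly on bounded subsets of $\C$.

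\textbf{Coercivity.} The form $B_z$ is \emph{not} accretive in general, since $\Re B_z(u,u) = \nr{\nabla u}^2 + \Im(z)\!\int_{\partial \O}a\abs u^2 + (\Im(z)^2 - \Re(z)^2)\nr u^2$ can be negative. The key observation is that after multiplication by $\bar z$ one finds
\[
-\Im\!\big(\bar z B_z(u,u)\big) = \Im(z)\,\nr{\nabla u}_{L^2(\O)}^2 + \abs z^2 \!\int_{\partial \O} a\abs{u}^2 + \abs z^2 \Im(z)\, \nr{u}_{L^2(\O)}^2,
\]
which is nonnegative as soon as $\Im(z)>0$. Dropping the boundary term and using $a>0$, we deduce
\[
-\Im\!\big(\bar z B_z(u,u)\big) \geq \Im(z)\,\min(1,\abs z^2)\, \nr u _{\HuO}^2 .
\]
Applying Lax--Milgram to the form $i\bar z B_z$ (whose real part is exactly $-\Im(\bar z B_z)$ and which is still continuous on $\HuO \times \HuO$) yields, for every antilinear functional $\ell \in \HuOp$, a unique $u \in \HuO$ solving $i\bar z B_z(u,v) = \ell(v)$. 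Taking $\ell = i\bar z \f$ gives the unique solution $u =: \tRaz \f$ of \eqref{variational-pb}, and the bound
\[
\nr{\tRaz}_{\Lc(\HuOp,\HuO)} \leq \frac{\abs z}{\Im(z)\min(1,\abs z^2)}.
\]

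\textbf{Uniform bound on compact subsets.} On any compact subset $K \subset \C_+$ one has $\Im(z) \geq c_K > 0$ and $c_K' \leq \abs z \leq C_K$, so the previous estimate is bounded uniformly on $K$.

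The only genuinely non-routine step is the coercivity: I expect this to be the main obstacle, since the natural form is not accretive. The trick of multiplying by $\bar z$ (equivalently, rotating by $i\bar z/\abs z$) to collect all three contributions with the correct sign is what makes the argument go through, and is precisely what forces the dependence of the bound on both $\Im(z)$ and $\abs z$.
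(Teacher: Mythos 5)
Your proof is correct and is essentially the paper's proof: the paper rotates the form by $e^{i\th}$ with $\th = \frac{\pi}{2} - \arg(z)$, i.e.\ multiplies by $i\bar z/\abs z$ and takes the real part, whereas you multiply by $\bar z$ and take $-\Im$, i.e.\ take $\Re(i\bar z\,\cdot\,)$; these differ only by the positive scalar $\abs z$. Consequently the coercivity constants and the resulting Lax--Milgram bound $\nr{\tRaz}_{\Lc(\HuOp,\HuO)} \leq \abs z / \big(\Im(z)\min(1,\abs z^2)\big)$ coincide with the paper's $1/\big(\sin(\arg z)\min(1,\abs z^2)\big)$, since $\sin(\arg z) = \Im(z)/\abs z$.
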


\begin{proof}
Let $\th = \frac \pi 2 - \arg(z) \in \big]-\frac \pi 2 , \frac \pi 2 \big[$. Then $u \in \HuO$ is a solution of \eqref{variational-pb} if and only if it is a solution of the problem
\begin{equation} \label{variational-pb-theta}
\forall v \in \HuO, \quad Q_{a,z}^\th (u,v) =  \innp{e^{i\th}\f} v,
\end{equation}
where we have set $Q_{a,z}^\th = e^{i\th} (Q_{az}-z^2)$.
This defines a quadratic form on $\HuO$ and for $v \in \HuO$ we have 
\begin{align*}
\Re \big(Q_{a,z}^\th (v,v) \big)
& = \cos(\th)  \nr{\nabla v}_{L^2(\O)}^2  + \abs z \int_{\partial \O} a \abs v^2 - \cos \big(\th+ \arg(z^2) \big) \abs z^2 \nr{v}_{L^2(\O)}^2\\
& \geq \sin \big(\arg(z) \big) \min \big( 1,\abs z^2 \big) \nr{v}_{\HuO}^2.
\end{align*}
According to the Lax-Milgram Theorem, the problems \eqref{variational-pb-theta} and hence \eqref{variational-pb} have a unique solution $u$. Moreover
\[
\nr {u} _{\HuO} \leq \frac {\nr {\f}_{\HuOp}} {\sin \big(\arg(z) \big) \min \big( 1,\abs z^2 \big)},
\]
and the conclusion follows.
\end{proof}

\begin{remark} \label{rem-tRaz}
For $z \in \C_+$ the operator $\tRaz \in \Lc(\HuOp,\HuO)$ is the inverse of ${(-\tilde \D -iz\Th_a - z^2)} \in \Lc(\HuO,\HuOp)$. Its adjoint $\tRaz^* \in \Lc(\HuOp,\HuO)$ is then the inverse of ${(-\tilde \D + i \bar z \Th_a - \bar z^2)}$. For $\p \in \HuOp$ it gives the solution $v = \tRaz^* \p$ of the variational problem 
\[
\forall u \in \HuO, \quad \innp{\nabla v}{\nabla u}_{L^2(\O)} + i \bar z \int_{\partial \O} a v \bar u - \bar z^2 \innp{v}{u}_{L^2(\O)} = \innp{\p}{u}_{\HuOp,\HuO}.
\]
In particular for $\f,\p \in \HuOp$ and $z \in \C_+$ we have
\begin{equation} \label{eq-tRaz-adjoint}
\innp{\tRaz \f}{\p} = \innp{ \f}{\tilde R_a(-\bar z) \p}.
\end{equation}

\end{remark}

The next result concerns the derivatives of $\tRaz$.

\begin{proposition} \label{prop-der-tRaz}
The map $z \mapsto \tRaz \in \Lc \big( \HuOp,\HuO \big)$ is holomorphic on $\C_+$ and its derivative is given by \eqref{eq-der-tRaz}.
More generally, if we set $\Th_a^1 = \Th_a$ and $\Th_a^0 = \Id_{L^2(\O)}$ then for any $\Nder \in \N$ the derivative $\tilde R_a^{(\Nder)}(z)$ is a linear combination of terms of the form 
\begin{equation} \label{terme-dec-RN}
z^{q} \tRaz \Th_a^{\nu_1} \tRaz  \Th_a^{\nu_2} \dots  \Th_a^{\nu_\Nfact} \tRaz,
\end{equation}
where $\Nfact \in \Ii 0 \Nder$ (there are $\Nfact+1$ factors $\tRaz$), $q \in \N$ and $\nu_1,\dots,\nu_\Nfact \in \{ 0,1\}$ are such that 
\begin{equation} \label{eq-dec-RN}
\Nder = 2\Nfact - q - ( \nu_1 + \dots + \nu_\Nfact).
\end{equation} 
\end{proposition}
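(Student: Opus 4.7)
The natural starting point is the identity $\tRaz = M(z)^{-1}$ in $\Lc(\HuOp,\HuO)$, where
\[
M(z) := -\tilde\D - iz\Th_a - z^2 \ \in \ \Lc(\HuO,\HuOp),
\]
as noted in Remark~\ref{rem-tRaz}. Given $z_0 \in \C_+$, for $z$ near $z_0$ in $\C_+$ I would compute the ``second resolvent identity'' directly from the variational formulation, using that $\Th_a$ is a bounded operator from $\HuO$ to $\HuOp$. Explicitly,
\[
M(z_0) - M(z) \ = \ i(z-z_0)\Th_a + (z^2-z_0^2) = (z-z_0)\bigl( i\Th_a + (z+z_0)\bigr),
\]
so composing with $\tRaz$ on the left and $\tilde R_a(z_0)$ on the right gives
\[
\tRaz - \tilde R_a(z_0) \ = \ (z-z_0)\, \tRaz\,\bigl(i\Th_a + z + z_0\bigr)\,\tilde R_a(z_0)\qquad \text{in } \Lc(\HuOp,\HuO).
\]

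Since Proposition~\ref{prop-tRaz} gives $\tRaz$ bounded locally in $z$ and $\Th_a$ is bounded, the right-hand side is $O(z-z_0)$ in $\Lc(\HuOp,\HuO)$. Dividing by $z-z_0$ and letting $z\to z_0$ yields both the holomorphy of $z\mapsto \tRaz$ and the formula \eqref{eq-der-tRaz}. Thus the case $\Nder=1$ of the general statement is established: the two summands read $\tRaz(i\Th_a)\tRaz$, which corresponds to $(\Nfact,q,\nu_1)=(1,0,1)$, and $2z\,\tRaz\,\tRaz$, which corresponds to $(\Nfact,q,\nu_1)=(1,1,0)$; in both cases $2\Nfact - q - \sum \nu_k = 1$.

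The general statement follows by induction on $\Nder$. Assume $\tilde R_a^{(\Nder)}(z)$ is a linear combination of terms of the form \eqref{terme-dec-RN} with the arithmetic constraint \eqref{eq-dec-RN}. Differentiating such a term
\[
z^q\, \tRaz\, \Th_a^{\nu_1}\, \tRaz\, \Th_a^{\nu_2}\cdots \Th_a^{\nu_\Nfact}\, \tRaz
\]
term by term produces three types of contributions: (a) differentiating $z^q$ (which multiplies by $q$ and decreases $q$ by one); (b) differentiating one of the $\Nfact+1$ factors $\tRaz$ and inserting $\tRaz (i\Th_a) \tRaz$ in its place (which adds one factor $\tRaz$ and one $\nu = 1$, leaving $q$ fixed); (c) the analogue with $\tRaz (2z) \tRaz$ (which adds one factor $\tRaz$ and one $\nu = 0$, and increases $q$ by one). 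In each case a direct check confirms that the new quantity $2\Nfact' - q' - \sum_k \nu'_k$ equals the old one plus $1$, so the constraint \eqref{eq-dec-RN} is preserved from $\Nder$ to $\Nder+1$. This closes the induction.

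There is no real obstacle: the only point requiring care is that all products must be interpreted as compositions in the correct spaces, using alternately $\tRaz \in \Lc(\HuOp,\HuO)$ and $\Th_a \in \Lc(\HuO,\HuOp)$ (with $\Th_a^0 = \Id$ interpreted accordingly on $L^2(\O)$ or on $\HuO$, per the convention set in the statement). Once this bookkeeping is fixed, all compositions are well defined and the combinatorics above is straightforward.
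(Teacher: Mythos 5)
Your proof is correct and takes essentially the same route as the paper. Where the paper derives the difference identity directly from the variational formulation \eqref{variational-pb}, you invoke the operator identity $\tRaz = M(z)^{-1}$ from Remark~\ref{rem-tRaz} and apply the second resolvent identity; these are two ways of writing the same computation. The paper then simply asserts that the general case follows by induction, while you spell out the three-case bookkeeping (differentiating $z^q$, inserting $\tRaz(i\Th_a)\tRaz$, inserting $\tRaz(2z)\tRaz$) and verify that $2\Nfact - q - \sum\nu_k$ increases by one in each case, which is a welcome completion of that elided step.
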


\begin{proof}
Let $z \in \C_{+}$. For $\z \in \C_{+}$ we set $T_z(\z) = \tRa (\z) - \tRa(z) \in \Lc(\HuOp,\HuO)$. We can check that for $\f \in \HuOp$ and $v \in \HuO$ we have
\begin{multline*}
\innp{\nabla T_z(\z) \f}{\nabla v} - iz q_a(T_z(\z) \f , v) - z^2 \innp{T_z(\z) \f}{v}\\
= i(\z-z) \innp{\Th_a \tRa(\z) \f}{v}_{\HuOp,\HuO} + (\z^2 - z^2) \innp{\tRa(\z) \f} v.
\end{multline*}
Therefore in $\Lc(\HuOp,\HuO)$ we have
\[
\nr{T_z(\z)} = \nr{\tRaz \left(i(\z-z) \Th_a \tRa(\z) + (\z^2 - z^2) \tRa(\z)   \right)} \limt \z z 0,
\]
and then
\[
\nr{\frac {T_z(\z)}{\z - z} - \tRaz (i\Th_a + 2z) \tRaz } \limt \z z 0.
\]
This proves \eqref{eq-der-tRaz}. The general case follows by induction on $\Nder$.
\end{proof}

In the following proposition we explicit the link between the variational problem \eqref{variational-pb} and the operator $\Ha$ defined by \eqref{def-Ha}-\eqref{dom-Ha}. We first need a lemma about the traces on $\partial \O$.

\begin{lemma} \label{lem-trace-O}
Let $\e > 0$. Then there exists $C \geq 0$ such that for all $u \in C_0^\infty(\bar \O)$ we have 
\[
\nr{u}_{L^2(\partial \O)} \leq \e \nr{u}_{H^1(\O)} + C_\e \nr{u}_{L^2(\O)}.
\]
\end{lemma}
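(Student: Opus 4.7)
The plan is to reduce to a standard multiplier identity. Since $\partial\O=\R^\dd\times\partial\o$ and the outward unit normal $\nu$ depends only on the $y$-variable, I will build a smooth vector field $N$ on $\bar\O$ that depends only on $y\in\bar\o$, extends $\nu$ from $\partial\o$ to $\bar\o$, and is therefore bounded on $\bar\O$ together with its derivatives (this is where the boundedness of $\o$ is used, and it is the only place where anything specific to the wave-guide geometry enters).

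Given such an $N$, for $u\in C_0^\infty(\bar\O)$ the divergence theorem (applicable because $u$ has compact support, even though $\O$ is unbounded) yields
\[
\int_{\partial\O}\abs{u}^2\,d\sigma=\int_{\partial\O}\abs{u}^2\,N\cdot\nu\,d\sigma=\int_\O\divg\bigl(\abs{u}^2 N\bigr)=\int_\O\abs{u}^2\divg(N)+2\Re\int_\O\bar u\,N\cdot\nabla u.
\]
Since $N$ and $\divg(N)$ are bounded on $\bar\O$, the first integral is controlled by $C\nr{u}_{L^2(\O)}^2$, and for the second I apply the Young inequality $2\abs{\bar u\,N\cdot\nabla u}\leq\eta^2\abs{\nabla u}^2+\nr{N}_\infty^2\eta^{-2}\abs{u}^2$ with a free parameter $\eta>0$. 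Choosing $\eta=\e$ gives
\[
\nr{u}_{L^2(\partial\O)}^2\leq\e^2\nr{\nabla u}_{L^2(\O)}^2+C_\e\nr{u}_{L^2(\O)}^2,
\]
and taking square roots and using $\sqrt{a+b}\leq\sqrt{a}+\sqrt{b}$ gives the stated bound (with $\e$ and $C_\e$ renamed).

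There is no real obstacle here; the only point worth flagging is that one must not invoke the usual compact-domain trace inequality directly on $\O$, because $\O$ is unbounded. The product structure rescues us: the relevant multiplier $N(y)$ is smooth and uniformly bounded on $\bar\o$, so the divergence-theorem identity above is genuinely global on $\O$ and produces a uniform constant $C_\e$ that does not depend on the support of $u$ in the $x$-direction.
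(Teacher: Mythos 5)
Your argument is correct, and it takes a genuinely different route from the paper. The paper slices: it applies the fractional trace theorem on the bounded section $\o$ to get $\nr{u(x,\cdot)}_{L^2(\partial\o)}\lesssim\nr{u(x,\cdot)}_{H^s(\o)}$ for some $s\in\,]1/2,1[$, interpolates $H^s$ between $L^2$ and $H^1$ to produce the $\e$-$C_\e$ split, and then integrates over $x\in\R^\dd$ using the product structure $L^2(\O)=L^2(\R^\dd;L^2(\o))$. You instead give a direct multiplier (Rellich/Pohozaev-type) identity on $\O$: extend the outward normal of $\o$ to a bounded smooth vector field $N(y)$ on $\bar\o$, view it as a $y$-directed field on $\bar\O$, and apply the divergence theorem to $\abs{u}^2 N$; compact support of $u$ makes the global integration by parts licit, and boundedness of $N$, $\divg N$ (which is where the boundedness of $\o$ enters) gives constants independent of the $x$-support. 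After Young's inequality this yields the squared form $\nr{u}_{L^2(\partial\O)}^2\leq\e^2\nr{u}_{H^1(\O)}^2+C_\e\nr{u}_{L^2(\O)}^2$, from which the stated bound follows. Both approaches exploit the cylindrical geometry to get a uniform constant; yours avoids fractional Sobolev spaces and interpolation entirely, at the price of the (standard, but worth at least one sentence) construction of the extension $N$ of the normal field, while the paper's version is shorter because it cites the trace and interpolation theorems off the shelf. Two small points: the term $N\cdot\nabla u$ in your display is really $N\cdot\nabla_y u$ since $N$ has no $x$-component (harmless for the estimate), and the Young inequality you use with the free parameter already absorbs the $\divg(N)$ term into $C_\e$, which you do correctly.
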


This estimate easily follows from the standard trace and interpolation theorems on a bounded domain (see for instance Theorems 1.5.1.2 and 1.4.3.3 in \cite{grisvard}). The case of a wave guide easily follows:

\begin{proof}
Let $s \in \big] \frac 12 , 1 \big[$. By the trace theorem on the smooth bounded subset $\o$ of $\R^n$ there exists $C \geq 0$ such that for all $x \in \R^\dd$ we have 
\[
\int_{\partial \o} \abs{u(x,\cdot)}^2 \leq C \nr{u(x,\cdot)}_{H^s(\o)}^2.
\]
Then by interpolation there exists $C_\e$ such that
\[
\int_{\partial \o} \abs{u(x,\cdot)}^2 \leq \e \nr{u(x,\cdot)}_{H^1(\o)}^2 + C_\e \nr{u(c,\cdot)}_{L^2(\o)}^2.	
\]
The result follows after integration over $x \in \R^\dd$.
\end{proof}

\begin{proposition} \label{prop-tRaz-Raz}
For $z \in \C_{+}$ the operator $\big( \Haz -z^2 \big)$ has a bounded inverse which we denote by 
\begin{equation} \label{def-Raz}
\Raz = \big( \Haz -z^2 \big) \inv \quad \in L^2(\O).
\end{equation}
Then for any $f \in L^2(\O)$ we have 
\[
\tRaz f = \Raz f.
\]
More generally, for $z \in \C_{+}$, $f \in L^2(\O)$, $g \in \HuO$ then $u = \tRaz (f + \Th_a g)$ is the unique solution in $H^2(\O)$ for the problem
\begin{equation} \label{Pb-H2-general}
\begin{cases}
(-\D - z^2) u = f, & \text{on }\O,\\
\partial_\n u = iaz u + ag , & \text{on } \partial \O.
\end{cases}
\end{equation}
\end{proposition}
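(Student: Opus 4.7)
The plan is to extract the boundary value problem \eqref{Pb-H2-general} from the variational characterization of $\tRaz$ and to read off both the coincidence $\tRaz f = \Raz f$ and the invertibility of $\Haz - z^2$ as direct consequences (by specializing to $g = 0$). Fix $z \in \C_+$, $f \in L^2(\O)$, $g \in \HuO$, and set $u = \tRaz(f + \Th_a g) \in \HuO$; this is well-defined since $L^2(\O) \hookrightarrow \HuOp$ continuously and $\Th_a g \in \HuOp$ by \eqref{def-Th} together with Lemma \ref{lem-trace-O}.

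Testing the variational identity defining $u$ against $v \in C_0^\infty(\O)$ immediately yields $(-\D - z^2) u = f$ in $\Dc'(\O)$. The next step is to upgrade $u$ from $\HuO$ to $H^2(\O)$: locally this is standard elliptic regularity for a Robin-type problem, since the Neumann datum $iazu + ag$ (taken on $\partial \O$) lies in $H^{1/2}(\partial \O)$ by the trace theorem applied to $u,g \in \HuO$. Global $H^2$-control on the unbounded domain $\O = \R^\dd \times \o$ is obtained by exploiting the product structure, e.g.\ via a partial Fourier transform in $x$, which reduces the problem to a family of elliptic problems on the bounded smooth section $\o$ indexed by $\xi \in \R^\dd$, with uniform enough bounds to yield the $L^2_\xi$-summability. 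Once $u \in H^2(\O)$, Green's formula $\innp{\nabla u}{\nabla v}_{L^2(\O)} = -\innp{\D u}{v}_{L^2(\O)} + \int_{\partial \O} (\partial_\n u) \bar v$ combined with $-\D u - z^2 u = f$ turns the variational identity into $\int_{\partial \O} (\partial_\n u - iazu - ag) \bar v = 0$ for every $v \in \HuO$; density of the traces of $\HuO$-functions in $L^2(\partial \O)$ then yields the boundary condition $\partial_\n u = iazu + ag$.

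For uniqueness in $H^2(\O)$, if $w \in H^2(\O)$ solves \eqref{Pb-H2-general} with $f = g = 0$, then pairing $-\D w - z^2 w = 0$ with $w$ and integrating by parts (using the boundary condition) gives
\[
\nr{\nabla w}^2_{L^2(\O)} - iz \int_{\partial \O} a|w|^2 - z^2 \nr{w}^2_{L^2(\O)} = 0.
\]
The imaginary part $-\Re(z) \bigl( \int_{\partial \O} a|w|^2 + 2\Im(z) \nr{w}^2_{L^2(\O)} \bigr) = 0$ forces $w = 0$ as soon as $\Re(z) \ne 0$ (since $\Im(z) > 0$ and $a > 0$), while for $\Re(z) = 0$ the real part $\nr{\nabla w}^2_{L^2(\O)} + \Im(z) \int_{\partial \O} a|w|^2 + \Im(z)^2 \nr{w}^2_{L^2(\O)} = 0$ does the same. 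Specializing to $g = 0$ in what precedes we obtain $\tRaz f \in \Dom(\Haz)$ with $(\Haz - z^2) \tRaz f = f$, so $\Haz - z^2$ is surjective, hence bijective by uniqueness; $\Raz := (\Haz - z^2)^{-1}$ therefore coincides with $\tRaz$ on $L^2(\O)$, and its boundedness in $\Lc(L^2(\O))$ follows from that of $\tRaz \in \Lc(\HuOp, \HuO)$ (Proposition \ref{prop-tRaz}) together with the continuous embeddings $L^2(\O) \hookrightarrow \HuOp$ and $\HuO \hookrightarrow L^2(\O)$. The main technical obstacle is the global $H^2$-regularity on the unbounded domain $\O$, where classical interior/boundary estimates only yield $H^2_\loc$; the translation-invariant product structure of $\O$ is what ultimately makes the global bound accessible.
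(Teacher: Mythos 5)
Your argument is correct, but it follows a noticeably different route than the paper's own proof, and it is worth comparing them. The paper first establishes that $\Ha$ (for $\a \in \C_+$) is a maximal accretive operator via Kato's representation theorem for sectorial forms; the crux there is to identify the form domain's ``operator core'' with the $H^2$ domain, i.e.\ the same $H^2$ regularity you need. Having that, the paper concludes abstractly that $\Raz$ exists from maximal dissipativity, and only then verifies $\Raz f = \tRaz f$ by uniqueness in the variational problem, before treating the general $(f,g)$ case. You instead attack \eqref{Pb-H2-general} head on: you show that $u = \tRaz(f + \Th_a g)$ solves the boundary value problem in $H^2$, prove uniqueness by the energy identity with a careful split into the $\Re(z) \neq 0$ and $\Re(z) = 0$ cases, and then obtain invertibility of $\Haz - z^2$ as a byproduct (surjectivity from the case $g=0$, injectivity from your uniqueness argument). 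Both approaches hinge entirely on the global $H^2$ regularity of weak solutions on the unbounded wave guide, which is the only genuinely nontrivial point; the paper delegates it to \cite{art-diss-schrodinger-guide}, while you sketch a partial Fourier transform in the $x$-variables reducing to a $\xi$-family of Robin problems on the compact section $\o$. That sketch is a plausible and essentially equivalent strategy, though carrying it out in full would require tracking the $\xi$-dependence of the elliptic constants carefully so that the $H^2(\o)$-, $|\xi| H^1(\o)$- and $|\xi|^2 L^2(\o)$-bounds are square-summable in $\xi$. Overall your route is more elementary and self-contained, dispensing with the abstract maximal accretive operator machinery at the modest cost of having to build the existence/uniqueness package by hand.
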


\begin{proof}
\stepp We first prove that for $\a \in \C_+$ the operator $\Ha$ is maximal accretive. For this we follow the same ideas as in the proof of Proposition 2.3 in \cite{art-diss-schrodinger-guide}. 
By Lemma \ref{lem-trace-O} and Theorem VI.3.4 in \cite{kato} the form $Q_\a$ is sectorial and closed. By the representation theorem (Theorem VI.2.1 in \cite{kato}), there exists a unique maximal accretive operator $\hatHa$ such that $\Dom(\hatHa) \subset \HuO$ and 
\[
\forall u \in \Dom(\hatHa), \forall v \in \HuO, \quad \big<{\hatHa u},{v}\big> = Q_\a (u,v).
\]
Moreover
\[
\Dom(\hatHa) = \singl{u \in \HuO \st \exists f \in L^2(\O), \forall v \in \HuO, Q_\a(u,v) = \innp{f}{v}},
\]
and for $u \in \Dom(\hatHa)$ the corresponding $f$ is unique and given by $f = \hatHa u$.
It is easy to check that the operator $\Ha$ is accretive and that for all $u \in \Dom(\Ha)$ and $v \in \HuO$ we have $\innp{\Ha u}{v} = Q_\a(u,v)$. Thus $\Dom(\Ha) \subset \Dom(\hatHa)$ and $\Ha = \hatHa$ on $\Dom(\Ha)$. Now let $u \in \Dom(\hatHa)$. There exists $f \in L^2(\O)$ such that for all $v \in \HuO$ we have 
\[
\int_\O \nabla u \cdot \nabla \bar v - i  \int_{\partial \O} \a u \bar v  = \int_\O f \bar v ,
\]
As in the proof of Proposition 2.3 in \cite{art-diss-schrodinger-guide}, we can check that $u \in H^2(\O)$ and $\partial_\nu u = i\a u$ on $\Dom(\Ha)$. We omit the details. This proves that $\Dom(\hatHa) \subset \Dom(\Ha)$. Thus $\Ha = \hatHa$ is maximal accretive. 

\stepp If moreover $\Re(\a) > 0$ then $\Ha$ is also dissipative and hence maximal dissipative. Let $z \in \C_+$. If $\Re(z) > 0$ then $\Haz$ is maximal dissipative and $\Im(z^2) > 0$, so the resolvent $\Raz$ is well defined. This is also the case if $\Re(z) < 0$, since then $\Haz^*$ is maximal dissipative and $\Im(z^2) < 0$. And finally $\Haz$ is non-negative and $z^2 > 0$ when $\Re(z) = 0$, so $\Raz$ is well defined for any $z \in \C_+$. Then it is clear that for $f \in L^2$ then $\Raz f$ satisfies \eqref{variational-pb} where $\innp{\f}{v}$ is replaced by $\int f \bar v$, so that $\Raz f = \tRaz f$.

\stepp Now let $z$, $f$, $g$ and $u$ as in the last statement. Then for all $v \in \HuO$ we have 
\begin{equation} \label{super-variational-pb}
\int_\O \nabla u \cdot \nabla \bar v - i z \int_{\partial \O} a u \bar v - z^2 \int_\O u \bar v = \int_\O f \bar v + \int_{\partial\O} a g \bar v.
\end{equation}
Again, we follow the proof of Proposition 2.3 in \cite{art-diss-schrodinger-guide} to prove that $u$ belongs to $H^2(\O)$. The only difference is that we have to take into account the term $-z^2 \innp u v$. For the boundary condition we have to replace \cite[(2.1)]{art-diss-schrodinger-guide} by $\partial_\n u = ia z u + a g$ (notice that the restriction of $g$ on $\partial \O$ belongs to $H^{1/2}(\partial \O)$). This concludes the proof.
\end{proof}

\subsection{General properties of the wave operator}

Now we turn to the properties of the wave operator $\Ac$ defined by \eqref{def-Ac}-\eqref{dom-Ac}. We have to prove that it is a maximal dissipative operator on $\EE$ (to ensure that the problem \eqref{wave-Ac} is well-posed) and to express its resolvent in terms of $\tRaz$.

      \detail {
      
      \begin{proposition}
      \begin{enumerate}[(i)]
      \item $\Dom(\Ac)$ is dense in $\EE$.
      \item $\Dom(\Ac)$ is complete for the graph norm.
      \end{enumerate}
      \end{proposition}

      \begin{lemma}
      $\Dom(\Ac)$ is complete.
      \end{lemma}

      \begin{proof}
      \stepp Let $\seq U n$ be a Cauchy sequence in $\Dom(\Ac)$. For $n \in \N$ we write $U_{m} = (u_n,v_n)$. In particular $\seq U n$ is a Cauchy sequence in $\EE$, so there exists $U = (u,v) \in \EE$ such that $U_{m} \to U$ in $\EE$. It remains to prove that $\D u \in L^2(\OO)$, $v \in H^1(\OO)$, $\partial _\n u = i av$ on $\partial \OO$ and $\Ac U_{m} \to \Ac U$ in $\EE$.

      \stepp For $n,m \in \N$ we have 
      \[
      \nr{v_n - v_m}_{H^1(\OO)} \leq \nr{U_{m}-U_{\tilde m}}_{\Dom(\Ac)} \limt {n,m} \infty 0,
      \]
      so $\seq v n$ is a Cauchy sequence in $H^1(\OO)$. This proves that it converges to some $v' \in H^1(\OO)$ in $H^1(\OO)$, and in particular in $L^2(\OO)$. Necessarily we have $v = v'$.

      \stepp According to the Poincar\'e inequality, $u_n$ goes to $u$ in $L^2_\loc(\OO)$. The sequence $-\D u_n$ is a Cauchy sequence in $L^2(\OO)$ and hence has a limit $w \in L^2(\OO)$. Then for all $\vf \in C_0^\infty(\OO)$ we have 
      \[
      \innp{u}{-\D \vf} = \lim_{n \to \infty} \innp {u_n}{-\D \vf} = \lim_{n\to\infty} {-\D u_n} {\vf} = \innp w \vf.
      \]
      This proves that $-\D u = w$ in the sense of distribution and hence in $L^2(\OO)$. 

      \stepp Finally it is easy to see that $U$ satisfies the boundary condition, so $U \in \Dom(\Ac)$ and $U_{m} \to U$ in $\Dom(\Ac)$.
      \end{proof}
      }

\begin{proposition} \label{prop-Ac-diss}
The operator $\Ac$ is maximal dissipative on $\EE$. Moreover for $z \in \C_+$ and $F \in \HH \subset \EE$ we have in $\HH$
\begin{equation} \label{expr-res-Ac}
(\Ac-z)\inv F = \begin{pmatrix}
\tRaz (i\Th_a + z) & \tRaz \\
1 + \tRaz (iz\Th_a + z^2) & z \tRaz
\end{pmatrix} F.
\end{equation}
\end{proposition}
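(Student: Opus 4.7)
The plan is to verify dissipativity of $\Ac$ by a direct Green's formula computation, build the inverse $(\Ac-z)^{-1}$ explicitly on the dense subset $\HH \subset \EE$ via Proposition~\ref{prop-tRaz-Raz}, and then extend to all of $\EE$ by continuity using the resolvent estimate implied by dissipativity.

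For dissipativity, given $U = (u,v) \in \Dom(\Ac)$, I compute $\innp{\Ac U}{U}_\EE = \innp{\nabla v}{\nabla u}_{L^2(\O)} + \innp{-\D u}{v}_{L^2(\O)}$. Applying Green's identity to the second inner product produces a boundary term $-\int_{\partial \O} (\partial_\n u) \bar v$, and substituting the condition $\partial_\n u = iav$ yields
\begin{equation*}
\innp{\Ac U}{U}_\EE = 2 \Re \innp{\nabla v}{\nabla u}_{L^2(\O)} - i \int_{\partial \O} a \abs v^2 \, d\sigma,
\end{equation*}
whose imaginary part is $-\int_{\partial \O} a\abs v^2 \leq 0$.

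For the explicit resolvent formula, fix $z \in \C_+$ and $F = (f,g) \in \HH$. The equation $(\Ac - z)(u,v) = F$ decouples into $v = f + zu$ coupled with $(-\D - z^2) u = g + zf$ on $\O$ and the boundary condition $\partial_\n u = iazu + iaf$. Since $g + zf \in L^2(\O)$ and $if \in H^1(\O)$, the last assertion of Proposition~\ref{prop-tRaz-Raz} produces a unique $u = \tRaz\bigl((g + zf) + \Th_a(if)\bigr) \in H^2(\O)$; expanding and setting $v = f + zu$ yields exactly the matrix formula \eqref{expr-res-Ac}, and $(u,v)$ lies in $\Dom(\Ac)$ by construction.

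To conclude, I note that dissipativity forces $\nr{(\Ac - z)^{-1} F}_\EE \leq \Im(z)^{-1} \nr{F}_\EE$ on the range of $(\Ac - z)$, so the formula defines a bounded operator $\HH \to \EE$ of norm at most $\Im(z)^{-1}$. Since $C_0^\infty(\bar\O)^2$ sits densely in both $\HH$ and $\EE$, this operator extends uniquely to a bounded $R(z) \in \Lc(\EE)$. Closedness of $\Ac$, verified in the usual way via the distributional interpretation of $\D$ together with continuity of the normal trace on $\{u \in H^1(\O) : \D u \in L^2(\O)\}$, then permits passage to the limit in $(\Ac - z) R(z) F_n = F_n$ for $F_n \in \HH$ with $F_n \to F$ in $\EE$, giving $R(z) F \in \Dom(\Ac)$ and $(\Ac - z) R(z) = \Id_\EE$. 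The principal obstacle is this extension step: for $F \in \EE \setminus \HH$ the first component need not lie in $L^2(\O)$, so the matrix formula \eqref{expr-res-Ac} cannot be interpreted literally, and it survives only as the continuous limit of its values on the dense subspace.
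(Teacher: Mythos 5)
Your proposal is correct and follows essentially the same approach as the paper: dissipativity by the same Green's formula computation, the explicit resolvent formula on $\HH$ via Proposition~\ref{prop-tRaz-Raz}, and then a density argument (relying ultimately on closedness of $\Ac$, equivalently completeness of $\Dom(\Ac)$) to upgrade to all of $\EE$. The only difference is cosmetic: the paper establishes surjectivity by showing $\Ran(\Ac-z)$ is both closed (via the graph-norm estimate \eqref{eq-minor-Ac}) and dense, while you construct the bounded right-inverse on $\HH$ and pass to the limit using the closedness of $\Ac$ — two presentations of the same argument.
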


\begin{proof}
\stepp For $U = (u,v) \in \Dom(\Ac)$ we have 
\begin{align*}
\innp{\Ac U}{U}_{\EE}
& =  \innp{ \nabla v}{\nabla u}_{L^2(\OO)} + \innp{-\D u}{v}_{L^2(\OO)} = 2 \Re \innp{ \nabla v}{\nabla u}_{L^2(\OO)} -i  \int_{\partial \OO} a \abs v^2 .
\end{align*}
In particular $\Im \innp{\Ac U}{U} \leq 0$, so $\Ac$ is dissipative on $\EE$.

\stepp Let $z \in \C_+$. We first check that $\Ran(\Ac - z)$ is closed in $\EE$. Let $\seq F m$ be a sequence in $\Ran(\Ac -z)$ which converges to some $F \in \EE$. For all $m\in\N$ we consider $U_{m} \in \Dom(\Ac)$ such that $(\Ac-z)U_{m} = F_{m}$. Then for all $m,\tilde m \in \N$ we have on the one hand
\begin{eqnarray} \label{eq-minor-Ac}
\lefteqn{\nr{(\Ac-z)(U_{m}-U_{\tilde m})}^2}\\
\nonumber
&& \geq \nr{\Ac(U_m - U_{\tilde m})}^2 + \abs z^2 \nr{U_m - U_{\tilde m}}^2 - 2 \Re(z) \innp{\Ac(U_m - U_{\tilde m})}{U_m - U_{\tilde m}}\\
\nonumber
&& \geq \y \left(\nr{\Ac(U_m - U_{\tilde m})}^2 + \abs z^2 \nr{U_m - U_{\tilde m}}^2 \right),
\end{eqnarray}
      \detail{
      \begin{align*}
      \nr{(\Ac-z)(U_{m}-U_{\tilde m})}^2
      & = \nr{\Ac(U_{m}-U_{\tilde m})}^2 + \abs z^2 \nr{U_{m}-U_{\tilde m}}^2 - 2\Re \left(\bar z \innp{\Ac(U_{m}-U_{\tilde m})}{U_{m}-U_{\tilde m}} \right).
      \end{align*}
      \begin{align*}
      2\Re \left(\bar z \innp{\Ac(U_{m}-U_{\tilde m})}{U_{m}-U_{\tilde m}} \right)
      & \geq - 2 \Re(z) \Re(\innp{\Ac(U_{m}-U_{\tilde m})}{U_{m}-U_{\tilde m}})\\
      & \geq - 2 \abs {\Re(z)} \nr{\Ac(U_{m}-U_{\tilde m})} \nr{U_{m}-U_{\tilde m}}\\
      & \geq - (1-\y) \nr{\Ac(U_{m}-U_{\tilde m})}^2 - \frac {\abs{\Re(z)}^2} {1-\y} \nr{U_{m}-U_{\tilde m}}^2\\
      & \geq - (1-\y) \nr{\Ac(U_{m}-U_{\tilde m})}^2 - (1-\y) \abs z^2 \nr{U_{m}-U_{\tilde m}}^2\\
      \end{align*}
      }
where
\[
\y = 1 - \frac {\abs{\Re(z)}}{\abs z} > 0.
\]
And on the other hand:
\[
\nr{(\Ac-z)(U_{m}-U_{\tilde m})}^2 = \nr{F_{m}-F_{\tilde m}}^2 \limt {n,m} {+\infty} 0.
\]
This proves that $\seq U n$ is a Cauchy sequence in $\Dom(\Ac)$, which is complete (as can be seen by routine argument). So this sequence converges in $\Dom(\Ac)$ to some $U$, which means that $(\Ac-z)U_{m} \to (\Ac-z)U$. Since we already know that $(\Ac-z)U_{m} = F_{m} \to F$, we have $F = (\Ac-z) U \in \Ran (\Ac-z)$, and hence $\Ran(\Ac-z)$ is closed. Moreover $(\Ac-z)$ is one-to-one according to \eqref{eq-minor-Ac}.

\stepp 
Now we prove that $\Ran(\Ac-z)$ is dense in $\EE$.
Let $F = (\tilde f, \tilde g)\in \HH$ and define $U = (u,v)$ as the right-hand side of \eqref{expr-res-Ac}. By Proposition \ref{prop-tRaz-Raz} we have $u \in H^2(\O)$ and $v \in \HuO$. Moreover, by the boundary condition in \eqref{Pb-H2-general} and the fact that $\tRaz \tilde g = \Raz \tilde g \in \Dom(\Haz)$ we have on $\partial \O$:
\[
\partial_\n u = iaz \big( \tRaz (i\Th_a + z) \tilde f + \tRaz \tilde g \big) + i a \tilde f = i a v.
\]
This proves that $U \in \Dom(\Ac)$. Then it is not difficult to check that $(\Ac-z) U = F$, which implies that $F \in \Ran(\Ac-z)$. Since $\HH$ is dense in $\EE$, this proves that $(\Ac-z)$ has a bounded inverse in $\Lc(\EE)$. And since we have already checked \eqref{expr-res-Ac}, the proof is complete.
\end{proof}

As already mentioned, Proposition \ref{prop-Ac-diss} implies in particular that $-i\Ac$ generates a contractions semigroup. Thus for $U_0 \in \Dom(\Ac)$ the problem \eqref{wave-Ac} has a unique solution $U : t \mapsto e^{-it\Ac} U_0$ in $C^0(\R_+,\Dom(\Ac)) \cap C^1(\R_+,\EE)$. \\

\subsection{General properties on the section \texorpdfstring{$\o$}{omega}.}

In this paragraph we describe in particular the transverse operator $\Ta$. It is not selfadjoint, but the discreteness of its spectrum will be crucial to localize spectrally with respect to $\Ha \simeq \L \otimes \Ta$.

\begin{proposition} \label{prop-Ta}
Let $\a \in \C$. The spectrum of $T_\a$ is given by a sequence $\big(\l_m(\a) \big)_{m\in\N}$ of eigenvalues with finite multiplicities. Moreover there exist $\g > 0$ and $\th \in \big[0,\frac \pi 2 \big[$ such that all these eigenvalues belong to the sector 
\begin{equation} \label{sector}
\singl{\l \in \C \st \abs{\arg (\l + \g)} \leq \th}.
\end{equation}
In particular $\Re(\l_m(\a)) \limt m \infty +\infty$. If moreover $\Im(\a) \geq 0$ then we can take $\g = 0$ (the eigenvalues have non-negative real parts).
\end{proposition}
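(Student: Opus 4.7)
The plan is to analyze $T_\a$ via its associated sesquilinear form, in close parallel with the treatment of $H_\a$ in Proposition \ref{prop-tRaz-Raz}, and to use the fact that $\o$ is bounded to obtain compactness of the resolvent. Consider on $\Huo$ the form
\[
Q_\a^\o(u,v) = \int_\o \nabla u \cdot \nabla \bar v - i\a \int_{\partial \o} u \bar v.
\]
The trace theorem on the smooth bounded domain $\o$, combined with interpolation between $L^2(\o)$ and $\Huo$, provides the exact analog of Lemma \ref{lem-trace-O}: for every $\e > 0$ there exists $C_\e \geq 0$ with $\nr{u}_{L^2(\partial \o)}^2 \leq \e \nr{u}_{\Huo}^2 + C_\e \nr{u}_{L^2(\o)}^2$. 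This makes the boundary term in $Q_\a^\o$ a relatively form-bounded perturbation of the Dirichlet integral with relative bound zero, so the form $Q_\a^\o$ is sectorial and closed (up to a real shift) on $\Huo$. Applying Theorem VI.3.4 and the representation Theorem VI.2.1 of \cite{kato} exactly as in the first step of Proposition \ref{prop-tRaz-Raz} produces a unique m-sectorial operator associated with $Q_\a^\o$, which is then identified with $T_\a$ from \eqref{def-Ta}--\eqref{dom-Ta} by the same $H^2$-regularity argument used in Proposition \ref{prop-tRaz-Raz}.

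Next, since $\o$ is bounded and smooth, the embedding $\Huo \hookrightarrow L^2(\o)$ is compact by the Rellich--Kondrachov theorem. As $\Dom(T_\a) \subset \Huo$ and $(T_\a - \z)^{-1} \colon L^2(\o) \to \Huo$ is bounded for any $\z$ in the resolvent set (which is non-empty by m-sectoriality), the resolvent is compact on $L^2(\o)$. Hence the spectrum of $T_\a$ is a discrete set of eigenvalues of finite algebraic multiplicity with no finite accumulation point; ordering them as $(\l_m(\a))_{m \in \N}$ then forces $\Re(\l_m(\a)) \to +\infty$.

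For the sector containment we use that the spectrum of an m-sectorial operator lies in the closure of its numerical range. Writing $t = \nr{\nabla u}_{L^2(\o)}^2$ and $s = \nr{u}_{L^2(\partial \o)}^2$ for $u \in \Huo$ with $\nr{u}_{L^2(\o)} = 1$, we have $Q_\a^\o(u,u) = t + \Im(\a) s - i\Re(\a) s$. The trace bound $s \leq \e(t+1) + C_\e$, with $\e$ chosen small enough in terms of $\a$, yields $|\Re(\a)| s \leq \tan(\th)(t + \Im(\a) s) + \tan(\th)\g$ for suitable $\g > 0$ and $\th \in [0,\pi/2)$, which exactly means that the numerical range lies in $\{\l : |\arg(\l + \g)| \leq \th\}$. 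If moreover $\Im(\a) \geq 0$, then $\Re(Q_\a^\o(u,u)) = t + \Im(\a) s \geq 0$, so $Q_\a^\o$ is accretive, $T_\a$ is m-accretive, and all eigenvalues lie in the closed right half-plane, so one can take $\g = 0$.

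The main obstacle is the sectoriality estimate of the last step: one needs to control the imaginary part $|\Re(\a)| s$ of $Q_\a^\o$ by a small multiple of the real part plus a constant, uniformly in $u$, and this is precisely where the version of the trace inequality with arbitrarily small leading constant (rather than the direct continuous trace bound) is essential; the compact-resolvent argument in the middle paragraph is routine once the form has been set up correctly.
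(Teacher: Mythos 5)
Your proof is correct and follows essentially the same route as the paper: the paper's proof is terse (it asserts compactness of the resolvent from boundedness of $\o$, cites maximal sectoriality "as for $\Ha$" to get the sector, and notes accretivity when $\Im(\a)\geq 0$), whereas you fill in exactly the missing steps — the form construction via the trace/interpolation inequality and Kato's representation theorem, Rellich–Kondrachov for compactness, and the explicit numerical-range computation for the sector. The only cosmetic remark is that the inference "discrete spectrum plus sector containment gives $\Re(\l_m(\a))\to+\infty$" deserves a word (boundedness of $\Re(\l_m)$ together with the sector bound $|\Im(\l_m)|\leq \tan(\th)(\Re(\l_m)+\g)$ would confine the spectrum to a compact set, contradicting discreteness with infinitely many eigenvalues), but the paper is equally silent on this point, so it is not a gap relative to the paper's own level of detail.
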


\begin{proof}
Since $\o$ is bounded the operator $\Ta$ has a compact resolvent. Therefore its spectrum is given by a discrete set of eigenvalues with finite multiplicities. Since the operator $T_\a$ is maximal sectorial (this is proved exactly as for $\Ha$), the spectrum of $T_\a$ is included is a sector of the form \eqref{sector}. If moreover $\Im(\a) \geq 0$ then it is easy to see that $\Ta$ is accretive, so that we can take $\g = 0$.
\end{proof}

As on $\O$ we can work in the sense of forms. The operator $\Ta$ corresponds to the quadratic form defined as $Q_\a$ in \eqref{def-qa} but on $\o$ instead of $\O$. We still denote by $\Th_\a$ the operator defined as in \eqref{def-Th} but on $\Lc(\Huo,\Huop)$. Then we set 
\begin{equation} \label{def-tTa}
\tTa = -\D_\o -i\Th_\a \in \Lc(\Huo,\Huop).
\end{equation}
At least if $\Re(\z) < -\g$ the operator $(\tTa - \z) \in \Lc(\Huo,\Huop)$ has an inverse $(\tTa - \z)\inv \in \Lc(\Huop,\Huo)$. For $\f \in \Huop$ then $u = (\tTa - \z)\inv\f$ is the unique solution of 
\begin{equation} \label{variational-pb-o}
\forall v \in \Huo, \quad \innp{\nabla u}{\nabla v}_{L^2(\o)} - i \int_{\partial\o} \a u \bar v - \z \innp{u}{v}_{L^2(\o)} = \innp{\f}{v}_{\Huop,\Huo}.
\end{equation}
And for $\f \in L^2(\o)$ we have 
\[
(\Ta-\z)\inv \f = (\tTa - \z)\inv \f.
\]
In the following proposition we denote by $\s(\cdot)$ the spectrum of an operator and write $H^0(\o)$ for $L^2(\o)$.

\begin{lemma} \label{lem-res-Ta-H1}
Let $\a \in \C$ and $\z \in \C \setminus \s(\Ta)$. Then the inverse $(\tTa - \z)\inv$ of $(\tTa - \z)$ is well defined in $\Lc(\Huop,\Huo)$. Moreover there exists $C \geq 0$ such that for $\Re(\z) \leq -C$ and $\b_1,\b_2 \in \{0,1\}$ we have 
\[
\nr{(\tTa - \z)\inv}_{\Lc(H^{\b_1}(\o)',H^{\b_2}(\o))} \leq C \abs{\Re(\z)}^{\frac {\b_1 + \b_2} 2 - 1}.
\]
\end{lemma}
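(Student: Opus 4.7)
My strategy is to prove the quantitative bound directly by Lax--Milgram on the sesquilinear form associated with $\tTa - \z$ for $\Re(\z)$ sufficiently negative, and then to reach an arbitrary $\z \in \C \setminus \s(\Ta)$ via the Fredholm alternative, using the compactness of the embedding $\Huo \hookrightarrow L^2(\o)$. Set
\[
a_\z(u,v) = \innp{\nabla u}{\nabla v}_{L^2(\o)} - i \int_{\partial \o} \a u \bar v - \z \innp u v_{L^2(\o)},
\]
so that $u \in \Huo$ solves \eqref{variational-pb-o} for a given $\f \in \Huop$ iff $a_\z(u,v) = \innp \f v$ for every $v \in \Huo$. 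Applying Lemma \ref{lem-trace-O} on $\o$ instead of $\O$, the form $a_\z$ is continuous on $\Huo \times \Huo$.

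\textbf{Coercivity.} I compute
\[
\Re a_\z(u,u) = \nr{\nabla u}_{L^2(\o)}^2 + \Im(\a) \nr u_{L^2(\partial \o)}^2 - \Re(\z) \nr u_{L^2(\o)}^2,
\]
and use Lemma \ref{lem-trace-O} on $\o$ with a small $\e > 0$ to absorb the boundary term, producing a constant $C' = C'(\a) \geq 0$ such that
\[
\Re a_\z(u,u) \geq \tfrac 12 \nr{\nabla u}_{L^2(\o)}^2 + \bigl(-\Re(\z) - C'\bigr) \nr u_{L^2(\o)}^2.
\]
Choose $C = 2C'+ 1$. For $\Re(\z) \leq -C$ the right-hand side dominates $\tfrac 12 \nr{\nabla u}^2 + \tfrac{\abs{\Re(\z)}}{2} \nr u^2$, so $a_\z$ is continuous and coercive on $\Huo$. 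Lax--Milgram then yields a unique $u \in \Huo$ with $a_\z(u,v) = \innp \f v$ for every $v \in \Huo$; this defines $(\tTa - \z)^{-1} \f = u$.

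\textbf{The four bounds.} Taking $v = u$ and estimating the right-hand side by $\nr \f_{H^{\b_1}(\o)'} \nr u_{H^{\b_1}(\o)}$ gives
\[
\tfrac 12 \nr{\nabla u}^2 + \tfrac{\abs{\Re(\z)}}{2} \nr u_{L^2(\o)}^2 \leq \nr \f_{H^{\b_1}(\o)'} \nr u_{H^{\b_1}(\o)}.
\]
For $\b_1 = 0$ (so $H^0(\o)' = L^2$), this gives at once $\nr u_{L^2} \leq 2 \abs{\Re(\z)}^{-1} \nr \f_{L^2}$, and re-injecting in the inequality on $\tfrac 12 \nr{\nabla u}^2$ yields $\nr u_{H^1} \leq C \abs{\Re(\z)}^{-1/2} \nr \f_{L^2}$ for $\abs{\Re(\z)} \geq 1$. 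For $\b_1 = 1$, the assumption $\abs{\Re(\z)} \geq 1$ gives
\[
\tfrac 12 \nr u_{H^1}^2 \leq \tfrac 12 \nr{\nabla u}^2 + \tfrac{\abs{\Re(\z)}}{2} \nr u^2 \leq \nr \f_{H^1(\o)'} \nr u_{H^1},
\]
hence $\nr u_{H^1} \leq 2 \nr \f_{H^1(\o)'}$, and re-injecting in the $L^2$ part yields $\nr u_{L^2} \leq C \abs{\Re(\z)}^{-1/2} \nr \f_{H^1(\o)'}$. These are exactly the four estimates corresponding to $\b_1,\b_2 \in \{0,1\}$.

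\textbf{General $\z \notin \s(\Ta)$.} Fix $\z_0$ with $\Re(\z_0) \leq -C$ and denote by $\iota : \Huo \to \Huop$ the canonical embedding $\iota(u)(v) = \innp u v_{L^2(\o)}$. In $\Lc(\Huo,\Huop)$ we have
\[
\tTa - \z = (\tTa - \z_0) \bigl( \Id_{\Huo} - (\z - \z_0) (\tTa - \z_0)^{-1} \iota \bigr).
\]
The operator $(\tTa - \z_0)^{-1} \iota$ factors through $\Huo \hookrightarrow L^2(\o)$, which is compact by Rellich's theorem; it is therefore compact on $\Huo$. Hence $\tTa - \z$ is Fredholm of index $0$ from $\Huo$ to $\Huop$, and invertibility reduces to injectivity. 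If $u \in \Huo$ satisfies $(\tTa - \z) u = 0$ in $\Huop$, then $u$ is a weak solution of $-\D u = \z u$ on $\o$ with $\partial_\n u = i \a u$ on $\partial \o$; by elliptic regularity, carried out as in the proof of Proposition \ref{prop-tRaz-Raz}, we get $u \in H^2(\o) \cap \Dom(\Ta)$ and $\Ta u = \z u$, which forces $u = 0$ since $\z \notin \s(\Ta)$. The main obstacle is the Lax--Milgram bookkeeping needed to extract all four exponents from a single coercivity estimate; the Fredholm extension is then routine.
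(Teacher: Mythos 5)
Your proof is correct, and it takes a genuinely different route from the paper's. The paper starts directly from the fact that, for any $\z \notin \s(\Ta)$, the $L^2$-resolvent $(\Ta-\z)^{-1}$ exists, and then passes to $\Lc(\Huop,\Huo)$ by a duality argument through the adjoint $\Ta^*$ together with a bootstrap using the variational identity with $v = u$. For the quantitative bounds it uses the standard numerical-range estimate for the maximal accretive operator $\Ta+\g$ to get the $(0,0)$ case, and then interpolates/dualizes for the others. You instead establish the operator $\Lc(\Huop,\Huo)$ directly via Lax--Milgram in the coercive region $\Re(\z) \leq -C$, extract all four exponents from a single coercivity inequality, and then reach general $\z \notin \s(\Ta)$ by Fredholm theory using compactness of $\Huo \hookrightarrow L^2(\o)$ together with the elliptic regularity argument of Proposition~\ref{prop-tRaz-Raz}. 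Your route has the merit of producing the $\Lc(\Huop,\Huo)$-bounds from a unified coercivity estimate without invoking $\Ta^*$; the price is the extra Fredholm step to handle $\z$ outside the coercive half-plane, which the paper avoids since it builds on the already known $L^2$-resolvent. One small remark: when writing $\Re a_\z(u,u) = \nr{\nabla u}^2 + \Im(\a)\nr{u}^2_{L^2(\partial\o)} - \Re(\z)\nr u^2$, for general $\a \in \C$ the boundary term can have either sign, so the absorption step should be phrased with $\abs{\Im(\a)}$; this changes nothing in the conclusion but is worth stating explicitly since the lemma is for arbitrary $\a$.
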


\begin{proof}
Let $\f \in L^2(\o)$ and $ u = (\Ta - \z)\inv\f$. 
We know that $\bar \z$ is not an eigenvalue of $\Ta^*$, so the resolvent $(\Ta^* - \bar \z)\inv$ exists and belongs in particular to $\Lc(L^2(\o),\Huo)$. By duality we obtain that $(\Ta-\z)\inv$ extends to a bounded operator from $\Huop$ to $L^2(\o)$, and hence 
\[
\nr{u}_{L^2(\o)} \lesssim \nr{\f}_{\Huop}.
\]
Let $\b_1 \in \{0,1\}$ and $s \in \big] \frac 12 , 1 \big[$. We can write \eqref{variational-pb-o} with $v = u$. By the trace and interpolation theorems (see the proof of Lemma \ref{lem-trace-O}) there exists $C \geq 0$ (which does not depend on $\f$ or $\z$ but depends on $\a$) such that
\begin{align*}
\nr{\nabla u}_{L^2(\o)}^2 
& \leq \abs {\a} \nr{u}_{L^2(\partial \o)}^2 + \abs \z \nr{u}_{L^2(\o)}^2 + \nr{u}_{H^{\b_1}(\o)} \nr{\f}_{H^{\b_1}(\o)'}\\
& \leq \frac 12 \nr{u}_{\Huo}^2 + (C + \abs \z) \nr{u}_{L^2(\o)}^2 + \nr{u}_{H^{\b_1}(\o)} \nr{\f}_{H^{\b_1}(\o)'},
\end{align*}
and hence
\[
\nr{u}_{\Huo}^2 \leq 2 (C + \abs \z + 1) \nr{u}_{L^2(\o)}^2 + \nr{u}_{H^{\b_1}(\o)} \nr{\f}_{H^{\b_1}(\o)'}.
\]
Applied with $\b_1 = 1$, this proves that $(\Ta-\z)\inv$ extends to a bounded operator in $\Lc(\Huop,\Huo)$. Then we can check that this defines an inverse for $(\tTa-\z)$, which proves the first statement.

When $\b_1 = \b_2 = 0$ the estimate of the lemma follows from the standard resolvent estimate applied to the maximal accretive operator $\Ta + \g$. 
From the above inequality applied with $\b_1 = 0$ we deduce the estimate in $\Lc(L^2(\o),\Huo)$. The estimate in $\Lc(\Huop,L^2(\o))$ follows by duality, and finally we use the above estimate with $\b_1 = 1$ to deduce the estimate in $\Lc(\Huop,\Huo)$.
\end{proof}

We finish this section by recording some basic properties of the projection $P_\o$ defined in \eqref{def-Po}:

\begin{lemma} \label{lem-Po}
\begin{enumerate}[(i)]
\item If $u \in \HuO$ then $P_\o u \in \HuO$. Moreover we have 
\[
\nabla_x P_\o u = P_\o \nabla_x u \qandq \nabla_y P_\o u = 0.
\]
\item For $u \in \HuO$ we have in $L^2(\O)$% (or in $L^2(\O)$ after integration over $\R^\dd$)
\[
P_\o \Th_a u = a\Ups P_{\partial \o} u
\]
\end{enumerate}
\end{lemma}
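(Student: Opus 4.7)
For part (i), the strategy is a density argument based on the fact that $P_\o$ is a contraction on $L^2(\O)$. More precisely, Cauchy-Schwarz applied to the integral over $\o$ gives
\[
\nr{P_\o u}_{L^2(\O)}^2 = \abs\o \int_{\R^\dd} \abs{P_\o u(x)}^2 \, dx \leq \nr{u}_{L^2(\O)}^2.
\]
For $u \in C_0^\infty(\bar\O)$ the formulas $\partial_{x_j}(P_\o u) = P_\o(\partial_{x_j} u)$ and $\partial_{y_k}(P_\o u) = 0$ follow immediately from differentiation under the integral sign, since $P_\o u$ depends only on $x$. For a general $u \in \HuO$ we pick an approximating sequence $u_n \in C_0^\infty(\bar\O)$ converging to $u$ in $\HuO$; then $P_\o u_n \to P_\o u$ and $P_\o \partial_{x_j} u_n \to P_\o \partial_{x_j} u$ in $L^2(\O)$, and passing to the distributional limit in the identities above yields $P_\o u \in \HuO$ with the claimed gradients.

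For part (ii), the natural way to make sense of $P_\o \Th_a u$ (where $\Th_a u$ lives in $\HuOp$, not $L^2(\O)$) is to extend $P_\o$ to $\HuOp$ by duality. Observe that $P_\o$ is self-adjoint on $L^2(\O)$: a direct computation with Fubini gives
\[
\innp{P_\o u}{v}_{L^2(\O)} = \abs\o \int_{\R^\dd} P_\o u(x) \, \overline{P_\o v(x)} \, dx = \innp{u}{P_\o v}_{L^2(\O)}.
\]
Combined with the boundedness of $P_\o$ on $\HuO$ from part (i), this allows us to define $P_\o$ on $\HuOp$ by $\innp{P_\o f}{\p}_{\HuOp,\HuO} = \innp{f}{P_\o \p}_{\HuOp,\HuO}$.

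The rest is an application of Fubini on $\partial \O = \R^\dd \times \partial \o$. For any $\p \in \HuO$,
\[
\innp{P_\o \Th_a u}{\p}_{\HuOp,\HuO} = \innp{\Th_a u}{P_\o \p}_{\HuOp,\HuO} = \int_{\partial \O} a \, u \, \overline{P_\o \p},
\]
and since $P_\o \p$ depends only on $x$, this equals
\[
a \int_{\R^\dd} \overline{P_\o \p(x)} \int_{\partial \o} u(x,y) \, d\s(y) \, dx = a \abs{\partial \o} \int_{\R^\dd} P_{\partial \o} u(x) \, \overline{P_\o \p(x)} \, dx.
\]
Writing $a \abs{\partial \o} = a \Ups \abs\o$ and reading the last integral as an $L^2(\O)$ inner product (viewing $P_{\partial \o} u$ as a function of $(x,y)$ constant in $y$) identifies this expression with $\innp{a\Ups P_{\partial \o} u}{\p}_{L^2(\O)}$, which proves the claim. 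Since the trace theorem guarantees $P_{\partial \o} u \in L^2(\R^\dd)$, the right-hand side $a\Ups P_{\partial \o} u$ indeed lies in $L^2(\O)$, so the equality holds in $L^2(\O)$ as stated.

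There is no real obstacle here; the only subtlety is the duality extension needed to interpret $P_\o \Th_a u$, after which the proof reduces to reorganizing the boundary integral with Fubini and the definitions of $P_\o$, $P_{\partial \o}$, and $\Ups$.
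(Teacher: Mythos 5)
Your proof is correct and follows the same approach as the paper: differentiation under the integral sign for part (i), and extending $P_\o$ to $H^1(\O)'$ by duality and then Fubini on $\partial\O = \R^\dd \times \partial\o$ for part (ii). You merely spell out details the paper leaves implicit — the $L^2$-contraction bound, the $L^2$ self-adjointness of $P_\o$ justifying the duality extension, and the density argument — which is fine but not a different route.
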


\begin{proof}
Let $u \in \HuO$. The first statement follows from the theorem of differentiation under the integral sign and the fact that $P_\o u (x,y)$ does not depend on $y \in \o$. 
By duality, $P_\o$ defines a bounded operator on $\Huop$. Then for all $v \in \HuO$ we have 
\begin{align*}
\innp{P_\o \Th_a u}{v}_{\HuOp,\HuO}
& = \innp{\Th_a u}{P_\o v}_{\HuOp,\HuO} = a \int_{\R^\dd} \int_{\partial \o} u \times \left( \frac 1 {\abs \o} \int_\o \bar v \right)\\
& =  a\Ups  \int_{\R^\dd}\int_\o (P_{\partial \o} u) \bar v = \innp{a\Ups P_{\partial \o} u}{v}_{L^2(\O)}.
\end{align*}
In particular $P_\o \Th_a u$ belongs to $L^2(\O)$. This concludes the proof of the lemma.
\end{proof}

\section{Local energy decay and comparison with the heat equation} \label{sec-loc-decay}

In this section we use the resolvent estimates of Theorems \ref{th-inter-freq}, \ref{th-high-freq}, \ref{th-low-freq-bis} and Propositions \ref{prop-chaleur-intro}, \ref{prop-high-freq-low-freq} to prove Theorems \ref{th-loc-dec}, \ref{th-high-freq-loc-decay} and \ref{th-heat}.\\

As in the Euclidean case, the proofs rely on the propagation at finite speed for the wave equation:

\begin{lemma} \label{lem-prop-tps-fini}
Let $\d \geq 0$ and $T > 0$. Then there exists $C_T \geq 0$ such that for $t \in [0,T]$ and $U_0 \in \EE^\d$ we have 
\[
\nr{e^{-it\Ac} U_0}_{\EE^\d} \leq C_T \nr{U_0}_{\EE^\d}.
\]
\end{lemma}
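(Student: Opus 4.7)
The plan is to derive a weighted energy identity. Write $U(t) = (u(t), i\partial_t u(t)) = e^{-it\Ac} U_0$ and set
\[
E_\d(t) = \nr{U(t)}_{\EE^\d}^2 = \int_\O \pppg x^{2\d} \left( \abs{\nabla u(t)}^2 + \abs{\partial_t u(t)}^2 \right).
\]
By density it is enough to treat $U_0$ in a subspace of $\EE^\d$ consisting of smooth, compactly supported data in $\Dom(\Ac)$, for which the corresponding solution $u$ is regular enough to justify the pointwise manipulations below.

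I differentiate $E_\d(t)$, substitute $\partial_t^2 u = \D u$ in the resulting expression, and integrate by parts once in space on the Laplacian term. The two resulting terms in which $\pppg x^{2\d}$ is left undifferentiated are complex conjugates of each other, so when one takes the real part they cancel identically. What survives is
\[
\frac{d}{dt} E_\d(t) = -2 \Re \int_\O \overline{\partial_t u} \, \nabla \pppg x^{2\d} \cdot \nabla u + 2 \Re \int_{\partial \O} \pppg x^{2\d} \, \overline{\partial_t u} \, \partial_\n u.
\]
The dissipative boundary condition $\partial_\n u = -a \partial_t u$ turns the boundary integral into $-2a \int_{\partial \O} \pppg x^{2\d} \abs{\partial_t u}^2 \leq 0$, so it only helps. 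Since $\pppg x$ depends only on the first $\dd$ variables and $\abs{\nabla \pppg x^{2\d}} \leq 2\d \pppg x^{2\d - 1} \leq 2\d \pppg x^{2\d}$, Cauchy--Schwarz together with $2 ab \leq a^2 + b^2$ gives
\[
\left| 2 \Re \int_\O \overline{\partial_t u} \, \nabla \pppg x^{2\d} \cdot \nabla u \right| \leq 2\d \, E_\d(t).
\]
Hence $\frac{d}{dt} E_\d(t) \leq 2\d E_\d(t)$, and Gr\"onwall's lemma yields $E_\d(t) \leq e^{2\d t} E_\d(0)$, so the statement holds with $C_T = e^{\d T}$.

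The main obstacle is the density/regularization step rather than the computation itself: since $\EE^\d$ is defined by completion of $C_0^\infty(\bar \O) \times C_0^\infty(\bar \O)$, one first has to establish the identity on data in $\Dom(\Ac)$ with smooth, compactly supported representatives --- using the regularity of $t \mapsto e^{-it\Ac} U_0$ on $\Dom(\Ac)$ from Proposition \ref{prop-Ac-diss} --- and then extend the bound to all of $\EE^\d$ by continuity, using the uniform estimate itself to pass to the limit.
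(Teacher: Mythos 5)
Your computation is a direct weighted-energy/Gr\"onwall estimate, which is genuinely different from the paper's argument: the paper instead establishes finite propagation speed by showing $\frac{d}{ds}\nr{U(t-s)}_{\EE^0(r_1-s,\,r_2+s)}^2 \geq 0$ on bounded annuli, deduces $\nr{U(t)}_{\EE^0(r_1,r_2)}\leq \nr{U_0}_{\EE^0(r_1-t,\,r_2+t)}$, and then sums over annuli with the weight $\pppg x^{2\d}$. Both routes give the lemma; the paper's is slightly longer but only ever integrates over bounded sets, so all quantities are a priori finite, and it also records finite propagation speed as a byproduct. Your route is shorter and more transparent, and the algebra is correct: after integration by parts the two $\pppg x^{2\d}\nabla u\cdot\nabla\overline{\partial_t u}$-type terms cancel under $\Re$, the boundary term is non-positive thanks to $\partial_\n u = -a\partial_t u$, and $|\nabla\pppg x^{2\d}|\leq 2\d\pppg x^{2\d}$ gives $\frac{d}{dt}E_\d \leq 2\d E_\d$.

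The one soft spot is the justification paragraph. Taking $U_0\in C_0^\infty(\bar\O)^2\cap\Dom(\Ac)$ gives enough time-regularity ($U\in C^0(\R_+,\Dom(\Ac))\cap C^1(\R_+,\EE)$), but it does not by itself give the spatial decay needed to know that $E_\d(t)<\infty$ at positive times, nor that the polynomially-weighted boundary integral over the unbounded set $\partial\O$ converges — without that, the Gr\"onwall inequality is applied to a quantity not yet known to be finite, which is circular. Saying ``the solution is regular enough to justify the pointwise manipulations'' quietly appeals to finite propagation speed (the solution stays compactly supported in $x$), which is precisely the fact the paper proves and which you are implicitly bypassing. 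The clean fix that keeps your route self-contained is to run the identity with a truncated weight $w_R$ (bounded, $w_R\nearrow\pppg x^{2\d}$, $|\nabla w_R|\lesssim w_R$ uniformly in $R$), for which every integral is finite since $U(t)\in\EE\cap\Dom(\Ac)$, obtain $\int w_R(|\nabla u|^2+|\partial_t u|^2)(t)\leq e^{2\d t}E_\d(0)$ uniformly in $R$, and let $R\to\infty$ by monotone convergence. With that addition the argument is complete and independent of the paper's propagation-speed lemma.
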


The proof of this lemma is the same as in the Euclidean space (see \cite{art-dld-energy-space}). We recall the idea:

\begin{proof}
For $\d \in \R_+$, $r_1,r_2 \in \R$ and $(u,v) \in \EE^\d$ we set 
\[
\nr{(u,v)}_{\EE^\d(r_1,r_2)}^2 = \int_{r_1 \leq \abs x \leq r_2} \int_{y \in \o} \pppg x^{2\d} \big( \abs{\nabla u (x,y)}^2 + \abs{v(x,y)}^2 \big) \, dy \, dx.
\]
Let $U_0 \in \Dom(\Ac)$ and let $U$ be the solution of \eqref{wave-Ac}. For $r_1,r_2$ with $r_1 \leq r_2$, $t \geq 0$ and $s \in [0,t]$ we can check that 
\[
\frac d {ds} \nr{U(t-s)}_{\EE^0(r_1-s,r_2+s)}^2 \geq 0,
\]
and hence 
\[
\nr{U(t)}_{\EE^0(r_1,r_2)} \leq \nr{U_0}_{\EE^0(r_1-t,r_2+t)}.
\]
Then if $U_0 \in \Dom(\Ac) \cap \EE^\d$ we have for $t \in [0,T]$ 
\begin{align*}
\nr{e^{-it\Ac}U_0}^2_{\EE^\d} 
& \leq  \pppg T ^ {2\d} \nr{e^{-it\Ac}U_0}^2_{\EE^0(0,T)} + \sum_{n \in \N} \pppg {n+T+1}^{2\d} \nr{e^{-it\Ac}U_0}^2_{\EE^0(T+n,T+n+1)}\\
% & \leq  \pppg T ^ {2\d} \nr{U_0}^2_{\EE(0,2T)} + \sum_{n \in \N} \pppg {n+T+1}^{2\d} \nr{U_0}^2_{\EE(n,2T+n+1)}\\
& \leq  \pppg T ^ {2\d} \nr{U_0}^2_{\EE^\d} + \sum_{n \in \N} \frac {\pppg {n+T+1}^{2\d}}{\pppg n^{2\d}} \nr{U_0}^2_{\EE^\d(n,2T+n+1)}\\
% & \leq  \pppg T ^ {2\d} \nr{U_0}^2_{\EE^\d} + (2T+2) \nr{U_0}^2_{\EE^\d} \sup_{n\in\N} \frac {\pppg {n+T+1}^{2\d}}{\pppg n^{2\d}} \\
& \lesssim \nr{U_0}_{\EE^\d}.
\end{align*}
We conclude the proof by density of $\Dom(\Ac) \cap \EE^\d$ in $\EE^\d$.
\end{proof}

Let $U_0 \in \Dom(\Ac)$. We assume that the two components of $U_0$ are compactly supported (we give the proofs for such initial conditions, and the results of Theorems \ref{th-loc-dec} and \ref{th-high-freq-loc-decay} will follow by density). We denote by $U(t)$ the solution of \eqref{wave-Ac}. Let $\th \in C^\infty(\R, [0,1])$ be equal to 0 on $]-\infty,1[$ and equal to 1 on $]2,+\infty[$. For $\m > 0$ and $t \in \R$ we set 
\[
U_{0,\m}(t) = \1 _{\R_+}(t) e^{-t\m} U(t) \qandq U_{1,\m}(t) = \th(t) e^{-t\m} U(t).
\]

Let $\t \in \R$ and $z = \t +i\m$. We multiply \eqref{wave-Ac} by $e^{itz} \1_{\R_+}(t)$ (or $e^{itz} \th(t)$, respectively) and take the integral over $\t \in \R$. After a partial integration we get for $j \in \{0,1\}$
\begin{equation} \label{eq-Fourier-U-V}
W_j(z) := \int_\R  e^{it\t} U_{j,\m}(t) \, dt =  -i (\Ac-z)\inv V_j(z), 
\end{equation}
where
\[
V_0(z) = U_0  \qandq V_1(z) = \int_1^2 \th'(t) e^{itz} U(t) \, dt.
\]

Notice that $U_{0,\m}(t)$ and $U_{1,\m}(t)$ coincide for $t \geq 2$. The interest of $U_{0,\m}(t)$ is that the source term $V_0(z)$ is exactly given by the initial data $U_0$. This is necessary to obtain the nice expression of $u_{\heat}$ in Theorem \ref{th-heat}. However we use a sharp cut-off in the definition, and the lack of smoothness implies a lack of decay for its Fourier transform. Therefore we will only obtain estimates with a loss of derivative. To obtain uniform estimates as required in Theorem \ref{th-loc-dec} we shall rather use $U_{1,\m}(t)$, defined with a smooth cut-off in time. The difference will appear clearly in Proposition \ref{prop-td-high-freq}.\\

Let $\th_0 \in C_0^\infty(\R,[0,1])$ be supported in [-3,3] and equal to 1 on a neighborhood of [-2,2], and $\th_\infty = 1 - \th_0$. For $R \geq 1$ and $\t \in \R$ we set $\th_{0,R}(\t) = \th_0(\t/ R)$.\\

Let $\m > 0$. The map $t \mapsto U_{j,\m}(t)$ belongs to $L^2(\R,\EE)$ for any $j \in \{0,1\}$. Since $\t \mapsto V_1(\t + i\m)$ decays at least like $\pppg \t \inv$ and $(\Ac-(\t+i\m))\inv$ is uniformly bounded in $\Lc(\EE)$ (its norm is not greater than $1/\m$), this is also the case for the map $\t \mapsto W_1(\t+i\m)$. For $W_0$ we can write for $z \in \C_+$
\begin{equation} \label{eq-loss-derivative}
(\Ac-z)\inv U_0 = \frac 1 {z-i} \left( (\Ac-z)\inv (\Ac-i) - 1 \right) U_0.
\end{equation}
This proves that the map $\t \mapsto W_0(\t+i\m)$ also belongs to $L^2(\R,\EE)$. Thus we can inverse the relations \eqref{eq-Fourier-U-V}: if for $j \in \{0,1\}$ and $R \geq 1$ we set 
\[
U_{j,\m,R}(t)  = \frac 1 {2\pi} \int_{\t \in \R} \th_{0,R}(\t)  e^{-it\t} W_j(\t + i\m) \, d\t,
\]
then we have 
\begin{equation} \label{eq-Uj-UjR}
\nr{U_{j,\m}-  U_{j,\m,R}}_{L^2(\R,\EE)} \limt R {+\infty} 0.
\end{equation}
The same applies in $L^2(\R,\EE^{-\d})$ for any $\d \geq 0$. Moreover these functions are continuous, so if we can prove that for some function $\rho$ and some $\d \geq 0$ we have $\nr{U_{j,\m,R}(t)}_{\EE^{-\d}} \leq \rho (t)$ uniformly in $R \geq 1$ and $\m > 0$, this will imply that $U_j(t)$ satisfies the same estimate for all $t > 0$.\\

We deal separately with the contributions of low and high-frequencies. For $j \in \{0,1\}$, $t \geq 0$ and $R \geq 1$ we write $U_{j,\m,R}(t)$ as the sum of 
\[
U_{j,\m,\infty,R}(t) =  \frac 1 {2\pi} \int_{\t \in \R} e^{-it\t} \th_{0,R}(\t) \th_\infty(\t) W_j(\t+i\m) \, d\t
\]
and 
\[
U_{j,\m,0}(t) =  \frac 1 {2\pi} \int_{\t \in \R} e^{-it\t} \th_0(\t) W_j(\t+i\m) \, d\t.
\]

\begin{proposition}[Contribution of high frequencies] \label{prop-td-high-freq}
Let $\g \geq 0$ and $\d > \g$. Then there exists $C \geq 0$ which does not depend on $U_0$ or $\m > 0$ and such that for $j \in \{ 0,1\}$, $t \geq 0$ and $R > 1$ we have 
\[
\nr{(1-\Xc_1) U_{j,\m}(t)}_{\EE^{-\d}} \leq C \pppg t^{-\g} \nr{(\Ac-i)^{1-j} U_0}_{\EE^\d}
\]
and
\[
\nr{\Xc_1 U_{j,\m,\infty,R}(t)}_{\EE} \leq C \pppg t^{-\g} \nr{(\Ac-i)^{1-j} U_0}_{\EE}.
\]
If moreover $\g \geq \frac 12$ and $\d > \g +1$ then the same estimates hold with $\EE$ replaced by $\HH$ everywhere.
\end{proposition}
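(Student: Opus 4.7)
The strategy is to combine the Fourier representation \eqref{eq-Fourier-U-V} with integration by parts in $\t$, splitting the spectral variable into high and low frequencies via $1 = \th_\infty + \th_0$, and then apply Theorem \ref{th-high-freq} and Proposition \ref{prop-high-freq-low-freq}. By \eqref{eq-Uj-UjR} it is enough to estimate $U_{j,\m,R}(t)$ uniformly in $\m > 0$ and $R \geq 1$, then pass to the limit. For the second bound of the proposition we must control $\Xc_1 U_{j,\m,\infty,R}(t)$ (the high-$\t$ piece only), while for the first we must control $(1-\Xc_1)[U_{j,\m,\infty,R}(t) + U_{j,\m,0}(t)]$.

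For the high-$\t$ piece $U_{j,\m,\infty,R}(t)$, I would integrate by parts in $\t$ using $\partial_\t (\Ac-z)\inv = (\Ac-z)^{-2}$. Each integration by parts produces a factor $(it)\inv$ and raises the power of the resolvent by one, yielding after $m$ iterations an integrand involving $(\Ac-z)^{-1-m}$ together with derivatives of the cut-offs $\th_{0,R}\th_\infty$ (which are bounded uniformly in $R$ since $\th_{0,R}' = R\inv \th_0'(\cdot/R)$) and, for $j=1$, derivatives of $V_1(z)$ in $\t$ that stay uniformly bounded in $\EE$ by Lemma \ref{lem-prop-tps-fini}. For the $(1-\Xc_1)$ version I would invoke the first bound of Theorem \ref{th-high-freq}, namely $\nr{(\Ac-\t)^{-1-m}}_{\Lc(\EE^\d,\EE^{-\d})} \leq C$ as soon as $\d > m + \frac12$. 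For the $\Xc_1$ version, I would observe that for $\t_0$ large enough and $|\t|\geq \t_0$ one has $\h_1 = \h_1 \h_\t$ on $\R$ (since $\h_\t = \h_1(\cdot/|\t|^2)$ equals $1$ on the support of $\h_1$), hence $\Xc_1 = \Xc_1 \Xc_\t$; this allows using the unweighted bound $\nr{\Xc_\t(\Ac-\t)^{-1-m}}_{\Lc(\EE)} \leq C$ also given by Theorem \ref{th-high-freq}. Non-integer values of $\g$ are handled by interpolating between the bounds for $m = \lceil \g \rceil$ and for $m=0$ (obtained directly from the contractivity of $e^{-it\Ac}$).

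For the low-$\t$ piece $U_{j,\m,0}(t)$, only the $(1-\Xc_1)$ estimate is required. Here I would use Proposition \ref{prop-high-freq-low-freq}: since $z \mapsto (1-\Xc_1)(\Ac-z)\inv$ extends holomorphically to a neighborhood of $0$, the contour $\{\Im z = \m\}$ on $\supp(\th_0)$ can be deformed into $\C_-$, producing exponential decay $e^{-ct}$, which is much stronger than $\pppg t^{-\g}$, uniformly in $\m > 0$. The source terms $V_j(z)$ need a separate treatment: for $j=1$ they are uniformly bounded in $\EE$ on $\{\Im z = \m\}$ by Lemma \ref{lem-prop-tps-fini}, while for $j=0$, where $V_0 = U_0$ does not decay in $\t$, I would use the resolvent identity \eqref{eq-loss-derivative} to factor out $1/(z-i)$; this absorbs the singular boundary term produced by the sharp cut-off $\1_{\R_+}$ at $t=0$, at the cost of one derivative $(\Ac-i) U_0$ and thus explains the factor $(\Ac-i)^{1-j}$ in the right-hand side.

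To upgrade the estimate from $\EE$ to $\HH$, the missing ingredient is a weighted $L^2$-control on $u$ itself. I would invoke a Hardy-type inequality on the wave guide, $\nr{\pppg x^{-\d} u}_{L^2(\O)} \lesssim \nr{\pppg x^{-\d+1} \nabla u}_{L^2(\O)}$, which trades one weight for one derivative and hence accounts for the strengthening $\d > \g+1$; the condition $\g \geq \frac12$ enters when closing the estimate through the corresponding intermediate scale (and through the trace estimate of Lemma \ref{lem-trace-O} used to control the boundary damping term in $\Th_a$). The main technical obstacle I expect lies in the combinatorics of the integration by parts: managing the simultaneous derivatives falling on $\th_{0,R}$, $\th_\infty$, $V_j(z)$, and the resolvent, keeping each resulting term in the scope of one of the two bounds of Theorem \ref{th-high-freq}, and ensuring all estimates are uniform in both $\m > 0$ and $R \geq 1$ before passing to the limit.
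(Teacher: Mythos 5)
Your outline captures the right ingredients (integration by parts, Theorem \ref{th-high-freq}, Proposition \ref{prop-high-freq-low-freq}, the loss-of-derivative trick for $j=0$, interpolation for non-integer $\g$), but it skips the one step the paper cannot do without, and this step is not a formality.

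After $m$ integrations by parts you are left with an integral of the form
\[
\int_\R e^{-it\t}\,\th_{0,R}^{(a)}(\t)\,\th_\infty^{(b)}(\t)\,(\Ac-z)^{-1-c_1}\,V_j^{(c_2)}(z)\,d\t ,
\]
and you propose to bound this pointwise using the resolvent bound $\nr{(\Ac-\t)^{-1-c_1}}_{\Lc(\EE^\d,\EE^{-\d})}\leq C$. This cannot close. The resolvent bound is uniform but gives no $\t$-decay; the source $V_1(\t+i\m)$ decays only like $\pppg\t^{-1}$ (one integration by parts in $t$ of $\th'(t)e^{-t\m}e^{it\t}U(t)$ using $U\in C^1$), and $V_0(z)\simeq (z-i)^{-1}(\Ac-i)U_0$ after \eqref{eq-loss-derivative} is no better. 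So the integrand is at best $O(\pppg\t^{-1})$ on the support of $\th_{0,R}$, which is $[-3R,3R]$, and the resulting bound is $O(\log R)$, not uniform in $R$; there is no way to pass to the limit $R\to\infty$. The paper resolves this by \emph{not} estimating the integral pointwise: it uses the Plancherel theorem to get an $L^2_t$ bound on each integrated-by-parts term (which only needs $V_j\in L^2_\t$, true since $\pppg\t^{-1}\in L^2$), and then upgrades this to a pointwise bound in $t$ by evaluating at one good time $t_0\in[0,1]$ and propagating with the semigroup, at the price of an extra factor $\pppg t^{1/2}$. This $\pppg t^{1/2}$ loss is what produces the half-integer scale $\pppg t^{1/2-\Nder}$ and forces the final interpolation between \eqref{estim-time-decay-perte} (with $\Nder$ large, $\d_\Nder\in]\Nder+\frac12,\Nder+1[$) and the trivial bound $\nr{U_{j,\m}(t)}_\EE\leq\nr{U_0}_\EE$; that interpolation is also what lowers the weight from $\d>\g+\frac12$ (what your fixed-$m=\lceil\g\rceil$ interpolation would give) down to the claimed $\d>\g$.

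Two smaller points. First, your contour deformation into $\C_-$ for the low-frequency piece is an overreach: Proposition \ref{prop-high-freq-low-freq} gives holomorphy of $(1-\Xc_1)(\Ac-z)\inv$ only on a neighborhood of $0$, and Theorem \ref{th-inter-freq} gives existence at real $\t\neq 0$ but no uniform strip below $\supp\th_0$; the paper simply feeds this piece into the same Plancherel estimate. Second, the $\HH$ upgrade does not come from a Hardy inequality. It comes from the fact that the whole Plancherel argument runs verbatim with $\EE$ replaced by $\HH$ (Theorem \ref{th-high-freq} has $\HH$ versions), but one \emph{cannot} interpolate with a trivial $\HH$ bound, since the semigroup is contractive on $\EE$, not on $\HH$; hence the $\HH$ conclusion is only what \eqref{estim-time-decay-perte} gives directly, i.e. $\g=\Nder-\frac12\geq\frac12$ and $\d>\Nder+\frac12=\g+1$. (The Hardy-type Lemma \ref{lem-hardy-gen} is used later, in the proof of Theorem \ref{th-loc-dec}, not here.)
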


We recall that $\Xc_1$ was defined in \eqref{def-Xc}.
Notice that the first statement applied with $j=1$ gives Theorem \ref{th-high-freq-loc-decay}.

\begin{proof}
Let $\Nder \in \N$ and $\d > \Nder + \frac 12$. With partial integrations we see that $(it)^\Nder {(1-\Xc_1)} U_{j,\m,R}(t)$ is a linear combination of terms of the form 
\[
U_{j,\m,R}^{\Nder_0,\Nder_1,\Nder_2}(t) := \int_{\t \in \R} e^{-it\t} \th_{0,R}^{(\Nder_0)}(\t) (1-\Xc_1)(\Ac-(\t+i\m))^{-1-\Nder_1} V_j^{(\Nder_2)}(\t+i\m) \, d\t,
\]
where $\Nder_0,\Nder_1,\Nder_2 \in \N$ are such that $\Nder_0 + \Nder_1 + \Nder_2  = \Nder$. By the Plancherel Theorem, Theorem \ref{th-inter-freq}, Theorem \ref{th-high-freq}, Proposition \ref{prop-high-freq-low-freq} and Lemma \ref{lem-prop-tps-fini} we obtain for $j = 1$:
\begin{align*}
\int_{\R} \nr{U_{1,\m,R}^{\Nder_0,\Nder_1,\Nder_2}(t)}^2_{\EE^{-\d}} \, dt
& \lesssim \int_{\R} \nr{(1-\Xc_1) (\Ac-(\t+i\m))^{-1-\Nder_1} V_1^{(\Nder_2)}(\t+i\m)}_{\EE^{-\d}}^2 \, d\t \\
& \lesssim  \int_{\R} \nr{ V_1^{(\Nder_2)}(\t+i\m)}_{\EE^{\d}}^2 \, d\t
 \lesssim \nr{U_0}_{\EE^{\d}}^2 \, d\t.
\end{align*}
For $j=0$ we use \eqref{eq-loss-derivative}. This costs a derivative but improves the decay of $W_0$, so that we similarly obtain
\[
\int_{\R} \nr{U_{0,\m,R}^{\Nder_0,\Nder_1,\Nder_2}(t)}^2_{\EE^{-\d}} \lesssim \nr{(\Ac-i) U_0}_{\EE^{\d}}^2.
\]
The end of the proof follows the usual strategy. There exists $C \geq 0$ (which does not depend on $U_0$, $\m > 0$ or $R \geq 1$) and $t_0 \in [0,1]$ (which depends on $U_0$) such that 
\[
\nr{U_{j,\m,R}^{\Nder_0,\Nder_1,\Nder_2}(t_0)}_{\EE^{-\d}} \leq C  \nr{(\Ac-i)^{1-j} U_0}_{\EE^{\d}}.
\]
Then we check that for $t \geq 1$ and $s \in [t_0,t]$ we have 
\[
\frac {\partial}{\partial s} \left(e^{-i(t-s)\Ac} U_{j,\m,R}^{\Nder_0,\Nder_1,\Nder_2}(s) \right) = - \m e^{-i(t-s)\Ac} U_{j,\m,R}^{\Nder_0,\Nder_1,\Nder_2}(s) + i e^{-i(t-s)\Ac} (\Ac-(\t+i\m)) U_{j,\m,R}^{\Nder_0,\Nder_1,\Nder_2}(s).
\]
As above we can check that 
\[
\int_0^t \nr{\frac {\partial}{\partial s} \left(e^{-i(t-s)\Ac} U_{j,\m,R}^{\Nder_0,\Nder_1,\Nder_2}(s) \right)}_{\EE^{-\d}}^2 \, dt \lesssim \nr{(\Ac-i)^{1-j} U_0}_{\EE^{\d}}^2 .
\]
Since for $t \geq 1$ we have 
\[
U_{j,\m,R}^{\Nder_0,\Nder_1,\Nder_2}(t) = e^{-i(t-t_0)\Ac} U_{j,\m,R}^{\Nder_0,\Nder_1,\Nder_2}(t_0) + \int_{t_0}^t \frac {\partial}{\partial s} \left( e^{-i(t-s)\Ac} U_{j,\m,R}^{\Nder_0,\Nder_1,\Nder_2}(s) \right) \, ds,
\]
we obtain 
\[
\nr{U_{j,\m,R}^{\Nder_0,\Nder_1,\Nder_2}(t)}_{\EE^{-\d}} \lesssim \pppg t^{\frac 12} \nr{(\Ac-i)^{1-j} U_0}_{\EE^{\d}}.
\]
This proves that for $\Nder \in \N$, $\d > \Nder + \frac 12$ and $R \geq 1$ we have 
\begin{equation} \label{estim-time-decay-perte}
\nr{(1-\Xc_1) U_{j,\m,R}(t)}_{\EE^{-\d}} \lesssim \pppg t^{\frac 12 - \Nder} \nr{(\Ac-i)^{1-j} U_0}_{\EE^{\d}}.
\end{equation}
Taking the limit $R \to \infty$ gives the first estimate when $\g \in \N + \frac 12$ and $\d > \g +1$. The case $\g \geq \frac 12$ follows by interpolation. Up to now, everything holds with $\EE$ replaced by $\HH$, so we have proved the last statement of the proposition for $(1-\Xc_1) U_{j,\m}$. In the (weighted) energy space(s), we obtain the estimate with $\g \geq 0$ and $\d > 0$ by interpolation between \eqref{estim-time-decay-perte} (applied with $\Nder$ large and $\d_\Nder \in \big] \Nder + \frac 12, \Nder + 1\big[$) and the trivial bound $\nr{U_{j,\m}(t)}_\EE \leq \nr{U_0}_{\EE}$.
The estimates on $U_{j,\m,\infty,R}$ are proved similarly, except that with the cut-off $\th_\infty$ we do not have to worry about low frequencies. Moreover we do not use any weight (see the second statement of Theorem \ref{th-high-freq}) so we have polynomial decay at any order.
\end{proof}

After Proposition \ref{prop-td-high-freq}, it remains to estimate $\Xc_1 U_{j,\m,0}(t)$. In fact we estimate $U_{j,\m,0}(t)$. For this we estimates separately the contributions of the different terms in the developpement of $\tRaz$ given by Theorem \ref{th-low-freq-bis}. Using Proposition \ref{prop-chaleur-intro}, we first estimate the terms involving the heat resolvent.

\begin{proposition} \label{prop-time-decay-heat}
Let $j,k \in \N$, $\b_x \in \N^\dd$ with $\abs {\b_x} \leq 1$ and $\d > \frac \dd 2 + j$. Then there exists $C \geq 0$ such that for $\m > 0$ and $t > 0$ we have 
\[
\nr{\int_\R e^{-it\t} \th_0(\t) z^{j+k} \pppg x^{-\d} \partial^{\b_x} \big(\LD - ia\Ups z \big)^{-1-j} \pppg x^{-\d} \, d\t  }_{\Lc(L^{2}(\O)))} \leq C t^{-\frac \dd 2 - k - \abs {\b_x}},
\]
where $z$ stands for $\t + i \m$.
\end{proposition}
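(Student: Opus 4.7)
The plan is to exploit the self-adjointness of $\LD$ on $L^2(\R^\dd)$ through its spectral resolution $\LD = \int_0^\infty s\, dE(s)$. Since $\LD$ acts only in the $x$-variable, the operator norm in $\Lc(L^2(\O))$ coincides with that in $\Lc(L^2(\R^\dd))$, and interchanging the $\t$-integral with the spectral integral reduces the problem to
\[
\int_\R e^{-it\t} \th_0(\t)\, z^{j+k}\, (\LD - ia\Ups z)^{-1-j}\, d\t = \int_0^\infty F_{j,k}(t,s)\, dE(s),
\]
where $F_{j,k}(t,s) := \int_\R e^{-it\t} \th_0(\t)\, z^{j+k}\, (s - ia\Ups z)^{-1-j}\, d\t$.

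Using the identity $ia\Ups z = s - (s - ia\Ups z)$ to trade positive powers of $z$ for powers of $s$, I would rewrite
\[
z^{j+k}\, (s - ia\Ups z)^{-1-j} = \frac{1}{(ia\Ups)^{j+k}} \sum_{m=0}^{j} (-1)^{j-m}\binom{j+k}{k+m} s^{k+m}\, (s-ia\Ups z)^{-1-m} + P_{j,k}(\t, s),
\]
where $P_{j,k}$ is a polynomial in $\t$ whose $\th_0$-integral is Schwartz in $t$ and hence yields a contribution that is $O_N(t^{-N})$ for any $N$ after the spectral integral. For each $m \in \{0,\dots,j\}$, the factor $(s - ia\Ups z)^{-1-m}$ has a pole of order $m+1$ at $\t_0 = -i\m - is/(a\Ups)$ in the lower half-plane. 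Decomposing $\th_0 = 1 - (1-\th_0)$, closing the contour of the $1$-piece in the lower half-plane for $t > 0$ gives by residue calculus
\[
\int_\R e^{-it\t}\, (s - ia\Ups z)^{-1-m}\, d\t = \frac{2\pi\, t^m}{(a\Ups)^{m+1}\, m!}\, e^{-\m t}\, e^{-ts/(a\Ups)},
\]
while the $(1-\th_0)$-piece is reduced, by iterated integration by parts in $\t$, to a remainder $O_N(t^{-N})$ for any $N$, uniformly in $\m \geq 0$ and in $s \in [0,+\infty)$.

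By the functional calculus for $\LD$, the main term in $\int F_{j,k}(t,s)\, dE(s)$ becomes a sum of operators of the form $t^m\, \LD^{k+m}\, e^{-t\LD/(a\Ups)}\, e^{-\m t}$ with $m \in \{0,\ldots,j\}$. The proof then reduces to the weighted heat-semigroup estimate
\[
\nr{\pppg x^{-\d}\, \partial^{\b_x}\, \LD^{k+m}\, e^{-t\LD/(a\Ups)}\, \pppg x^{-\d}}_{\Lc(L^2(\R^\dd))} \lesssim t^{-(k + m + \frac \dd 2 + \abs{\b_x})},
\]
since multiplication by the external factor $t^m$ yields the desired $t^{-\frac \dd 2 - k - \abs{\b_x}}$. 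To prove this, I would write the semigroup as $\int_0^\infty s^{k+m}\, e^{-ts/(a\Ups)}\, dE(s)$ and invoke Proposition \ref{prop-chaleur-intro}(i) (with $j = 0$) to bound the weighted spectral density $\nr{\pppg x^{-\d}\, \partial^{\b_x}(dE/ds)(s)\, \pppg x^{-\d}} \lesssim s^{\frac \dd 2 - 1 + \abs{\b_x}}$ on $]0,s_0]$, the tail $s \geq s_0$ being absorbed by $e^{-ts/(a\Ups)}$. The change of variable $u = ts/(a\Ups)$ then gives
\[
\int_0^\infty s^{k+m+\frac \dd 2 - 1 + \abs{\b_x}}\, e^{-ts/(a\Ups)}\, ds \;\lesssim\; t^{-(k + m + \frac \dd 2 + \abs{\b_x})}
\]
via the gamma function, as required.

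The main technical obstacle is the careful treatment of the cutoff $\th_0$, since $z^{j+k}(s - ia\Ups z)^{-1-j}$ grows polynomially in $\t$ when $k \geq 1$, which prevents a direct contour closure. This is resolved by the algebraic rearrangement above, which produces pieces $(s-ia\Ups z)^{-1-m}$ whose $(1-\th_0)$-contributions can be handled by integration by parts on an integrand whose $\t$-derivatives decay sufficiently, uniformly in $s$. A secondary issue is uniformity as $\m \searrow 0$ (when $\t_0$ approaches the real axis), which is ensured by $e^{-\m t} \leq 1$ and by the weight condition $\d > \frac \dd 2 + j$, which controls the singularity of $(\LD - ia\Ups z)^{-1-j}$ at the bottom of the spectrum.
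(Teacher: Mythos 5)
Your overall plan --- reduce to spectral calculus of $\LD$, compute $F_{j,k}(t,s)$ by residue calculus, and then estimate the resulting weighted spectral integral via Proposition~\ref{prop-chaleur-intro}(\ref{estim-chaleur-diff}) --- is genuinely different from the paper's proof, which instead deforms the $\t$-contour of integration into $\C \setminus (-i\R_+)$ around the branch cut of $(\LD - ia\Ups z)^{-1-j}$ (see the contour $\G_{\m,\e}$), so that the jump across $-i\R_+$ is controlled directly by Proposition~\ref{prop-chaleur-intro}. The residue computation, the final $\Gamma$-function estimate for the main term, and the use of Proposition~\ref{prop-chaleur-intro}(\ref{estim-chaleur-diff}) as the weighted spectral density bound are all correct ideas. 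However, there is a genuine gap in the treatment of the algebraic remainders when $k \geq 1$.

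The issue is that the splitting
\[
z^{j+k}(s - ia\Ups z)^{-1-j} = \frac{1}{(ia\Ups)^{j+k}}\sum_{m=0}^j (-1)^{j-m}\binom{j+k}{k+m}s^{k+m}(s - ia\Ups z)^{-1-m} + P_{j,k}(\t,s)
\]
destroys, term by term, the decay in the spectral parameter $s$ which makes the operator bounded. Indeed, for $\t \in \supp\th_0$ the left-hand side decays like $s^{-1-j}$ as $s \to \infty$, but each summand on the right (and the polynomial $P_{j,k}$, which has degree $k-1$ in $s$) behaves like $s^{k-1}$. After integrating $e^{-it\t}\th_0(\t)$ in $\t$, the ``$1$-piece'' of the residue contribution picks up the exponential $e^{-ts/(a\Ups)}$, which restores decay in $s$; but the ``$(1-\th_0)$-piece'' of the $m$-th term, multiplied by $s^{k+m}$, and the contribution of $P_{j,k}$, each give spectral multipliers $F(s)$ with $F(s) \sim s^{k-1}$ as $s \to \infty$ (Schwartz in $t$, but not decaying in $s$). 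For concreteness, with $j=0$, $k=2$, one has $P_{0,2}(\t,s) = -(s + ia\Ups z)/(ia\Ups)^2$ and $s^2\int e^{-it\t}(1-\th_0)(s-ia\Ups z)^{-1}\,d\t \sim -s\,\hat\th_0(t) + O(1)$ as $s \to \infty$; these two contributions cancel to leading order, and only their sum decays in $s$. Since $\pppg x^{-\d}\partial^{\b_x}\LD^{\ell}\pppg x^{-\d}$ is \emph{not} a bounded operator on $L^2(\R^\dd)$ for $\ell \geq 1$ (and $\pppg x^{-\d}\partial^{\b_x}F(\LD)\pppg x^{-\d}$ is unbounded whenever $\sup_s s^{|\b_x|/2}|F(s)| = \infty$ --- the weights tame the low-frequency singularity, not the high-frequency growth), your two remainder pieces are individually unbounded operators for $k \geq 2$, or for $k=1$ with $|\b_x|=1$. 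The step ``yields a contribution that is $O_N(t^{-N})$ for any $N$ after the spectral integral'' thus does not hold for each piece separately; you would have to make the cancellation between $P_{j,k}$ and the $(1-\th_0)$-remainders of the residue terms explicit before applying the functional calculus, which effectively forces you back to treating $\int e^{-it\t}\th_0(\t)z^{j+k}(s-ia\Ups z)^{-1-j}\,d\t$ as a single oscillatory integral --- precisely what the paper's contour deformation does in a clean way.
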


\begin{remark}
There exists a constant $\tilde C$ which does not depend on $a$ and $\Ups$ and such that the constant $C$ of the Proposition is of the form $C = \tilde C (a\Ups)^{\frac {\dd} 2 + k - 1 + \abs {\b_x}}$. This confirms the observation that the decay is slow when the absorption is strong.
\end{remark}

\begin{proof}[Proof of Proposition \ref{prop-time-decay-heat}]
Let $\m > 0$. For $t > 0$ we denote by $I_\m(t) \in \Lc(L^2(\R^\dd))$ the integral which appears in the statement of the proposition. For $z \in \C \setminus (-i\R_+)$ we set 
\[
F(z) =  \th_0 \big( \Re(z) \big)  z^{j+k} \pppg x^{-\d} \partial^{\b_x} \big(\LD-ia\Ups z\big)^{-1-j} \pppg x^{-\d}.
\]
This defines a function on $\C \setminus (-i\R_+)$ which vanishes outside $\big(]-3,3[ + i\R \big) \setminus (-i\R_+)$. Moreover $F$ is holomophic on $\big(]-2,2[ + i\R \big) \setminus (-i\R_+)$, so for $\e \in ]0,1[$ we have 
\[
e^{t\m} I_\m(t) = \int_{\G_{\m,\e}} e^{-itz}  F(z) \, dz,
\]
where $\G_{\m,\e}$ is the contour described by Figure \ref{fig-contour}. In particular, for $\abs{\Re(z)} > \e$ the curve is parametrized by a function $\z : s \mapsto s + i \vf(s)$, where $\vf \in C^\infty(\R)$ is equal to -1 on $[-1 , 1]$ and equal to $\m$ on $\R \setminus [-2,2]$. For $l \in \N$ we have
\begin{align*}
(it)^l \int_{s=1}^3 e^{-it\z(s)} F\big(\z(s)\big) \z'(s) \, ds
& = \sum _{q = 1}^l (it)^{l-q} e^{-it\z(1)} \restr{\left( \frac 1 {\z'(s)} \frac d{ds} \right)^{q-1} F(\z(s)) }{s = 1} \\
& \quad + \int_{s=1}^3 e^{-it \z(s)} \frac d {ds} \left(\frac 1 {\z'(s)} \frac d {ds}  \right)^{l-1}  F(\z(s)) \, ds.
\end{align*}
Since $\z(1) = 1 - i$, the sum decays exponentially in time. The integral on the right is bounded uniformly in $\m > 0$, so we obtain polynomial decay at any order and uniformly in $\m > 0$ for the integral of the left-hand side. We estimate similarly the contribution of $s \in[-3, -1]$. On the other hand we have
\[
\nr{\int_{\abs s = \e}^1 e^{-it\z(s)} F\big(\z(s)\big) \z'(s) \, ds} = O(e^{-t}),
\]
uniformly in $\m > 0$ (in fact this part does not depend on $\m$) and $\e \in ]0,1]$ (we can use Proposition \ref{prop-chaleur-intro}). It remains to consider the part of $\G_{\m,\g,\e}$ in $\singl{\Re(z) \leq \e}$. By the second statement in Proposition \ref{prop-chaleur-intro}, $F(z)$ is of size $o(\abs z\inv)$ in a neighborhood of 0 in $\C \setminus (-i\R_+^*)$, so the integral over the half circle of radius $\e$ goes to 0 as $\e$ goes to $\infty$. It remains to estimate 
\[
\int_{\s = 0}^1 e^{-t\s} \lim_{\e \to 0} \nr{F(\e -i\s) - F(-\e -i\s)}_{\Lc(L^2(\R^\dd))} \, d\s.
\]
By Proposition \ref{prop-chaleur-intro} we have 
\[
\lim_{\e \to 0} \nr{F(\e -i\s) - F(-\e -i\s)}_{\Lc(L^2(\R^\dd))} \lesssim (a \Ups \s)^{\frac \dd 2 + k - 1 + \abs {\b_x}},
\]
so the conclusion follows after integration.
\end{proof}

\begin{figure}
\includegraphics[width = 0.5 \linewidth]{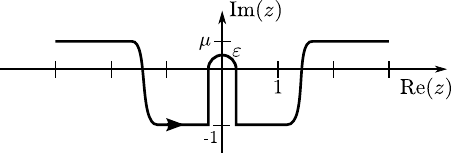}
\caption{Contour of integration for low frequencies.}
\label{fig-contour} 
\end{figure}

\begin{remark} \label{rem-heat-kernel}
When $j=k = 0$ we are dealing with the resolvent of the heat equation, and the proposition gives a decay at rate $O \big(t^{- \frac \dd 2 - \abs{\b_x}}\big)$. We recall that the kernel for the heat equation \eqref{heat} is given by 
\[
K_\heat(t,x) = \left(\frac {a\Ups} {4\pi t}\right)^{\frac \dd 2} e^{-\frac {a \Ups \abs {x}^2}{4t}}.
\]
We can check that even with a compactly supported weight $\h(x)$, the operator 
\[
\h(x) e^{-\frac {t \LD}{a\Ups}} \h(x)
\]
decays as $t^{-\frac \dd 2}$ (up to a multiplicative constant) in $\Lc(L^2(\R^\dd))$. Moreover
\[
\nabla K_\heat(t,x) = \frac {(a\Ups)^{\frac \dd 2 + 1}} {(4\pi t)^{\frac \dd 2}} \frac {x}{2t} e^{-\frac {a \Ups \abs {x}^2}{4t}},
\]
so the size of $\nabla e^{-\frac {t \LD}{a\Ups}}$ decays as $t^{-\frac {\dd + 1}2}$, but $\h(x) \nabla e^{-\frac {t \LD}{a\Ups}} \h(x)$ decays as $t^{-\frac \dd 2 -1}$. Thus, at least for $j=k=0$, the result of Proposition \ref{prop-time-decay-heat} is sharp. This implies in particular that the estimate of Theorem \ref{th-loc-dec} is sharp.
\end{remark}

Now we estimate the contribution of the rest $\Rest(z)$ given in Theorem \ref{th-low-freq-bis}.

\begin{proposition} \label{prop-time-decay-rest}
Let $\Ndev \in \N$, $\b_x \in \N^\dd$ with $\abs {\b_x} \leq 1$ and $\b_t \in \{0,1\}$. Let $\e > 0$ and $\d > \frac {\dd-\e} 2$. Then there exists $C \geq 0$ such that for $\m > 0$ and $t \geq 0$ we have 
\[
\nr{\int_\R e^{-it\t} \th_0(\t) z^{\b_t} \pppg x^{-\d} \partial^{\b_x} \Rest(z) \pppg x^{-\d} \, d\t  }_{\Lc(L^{2}(\O)))} \leq C t^{-\left(\Ndev +1+ \frac \dd 2 + \frac {\abs {\b_x}}2 + \b_t - \e \right) },
\]
where $z$ stands for $\t + i \m$.
\end{proposition}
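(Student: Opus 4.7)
The plan is to adapt the contour-deformation argument of Proposition~\ref{prop-time-decay-heat} to the remainder $\Rest$, using the derivative estimates of Theorem~\ref{th-low-freq-bis}(ii) in place of the explicit heat-kernel information.

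By the hypothesis $\d > (\dd - \e)/2$ one may fix $s \in ((\dd - \e)/2, \dd/2)$ with $s < \d$ and set $\alpha := \Ndev + \b_t + s + \tfrac{1}{2}\abs{\b_x}$, so that $\alpha + 1 > \Ndev + 1 + \b_t + \tfrac{\dd}{2} + \tfrac{1}{2}\abs{\b_x} - \e$. Denoting $F(z) := \th_0(\Re z)\, z^{\b_t}\, \pppg x^{-\d} \partial^{\b_x} \Rest(z) \pppg x^{-\d}$, the Leibniz rule combined with Theorem~\ref{th-low-freq-bis}(ii) yields, on the strip $\{\Re z \in (-2, 2)\}$ where $\th_0 \equiv 1$,
\[
\nr{\partial_z^l F(z)}_{\Lc(L^2(\O))} \leq C_l \bigl(1 + \abs z^{\alpha - l}\bigr) \qquad (l \in \N).
\]
Moreover, $\Rest$ is holomorphic on $\C_+$: both $\tRaz$ (by Proposition~\ref{prop-der-tRaz}) and each expansion term $z^{j+k}(-\D - ia\Ups z)^{-j-1}$ are holomorphic there, since the negative imaginary axis, which is the only obstruction to holomorphy of the heat resolvent, lies outside $\C_+$. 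Consequently $F$ is holomorphic on $\{\Re z \in (-2, 2)\} \cap \C_+$.

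Following the strategy of Proposition~\ref{prop-time-decay-heat}, I would deform the integration contour $\{\Im z = \mu\}$ into a contour $\G$ that coincides with $\{\Im z = \mu\}$ outside the strip $\abs{\Re z} < 2$ and, inside the strip, detours around $0$ via a semicircle $\abs z = 1/t$ in the upper half-plane, connected by two short vertical segments to the horizontal part. On the horizontal pieces $\abs{\Re z} \in [2, 3]$ the integrand is bounded away from $0$, so standard $L$-fold integration by parts in $\t$ (with any $L \in \N$) yields polynomial decay of arbitrary order. On the semicircle, at scale $\abs z \sim 1/t$ with $\abs{e^{-itz}} \leq e$ and arc length $\sim 1/t$, $L$-fold integration by parts ($L$ slightly exceeding $\alpha + 1$) combined with the bound $\nr{\partial_z^L F(z)} \lesssim \abs z^{\alpha - L}$ contributes $\sim t^{-L} \cdot (1/t)^{\alpha - L + 1} = t^{-(\alpha + 1)}$; the vertical connectors are bounded similarly. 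Choosing $\G$ independent of $\mu$ in the relevant regions makes the estimate uniform in $\mu > 0$, yielding the claimed decay rate.

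The main obstacle is the careful construction of $\G$ and the bookkeeping at the transitions between its pieces, ensuring that the deformation stays in the holomorphy domain of $F$ (hence inside the strip $\abs{\Re z} < 2$ where $\th_0 = 1$) while extracting the scaling $\abs z \sim 1/t$ that produces the $t^{-(\alpha + 1)}$ rate. A secondary technical point is that the bound of Theorem~\ref{th-low-freq-bis}(ii) has the form $1 + \abs z^{\alpha - l}$ rather than $\abs z^{\alpha - l}$; this is harmless since for $l$ slightly exceeding $\alpha + 1$ and at scale $\abs z \sim 1/t$, the $\abs z^{\alpha - l}$ term dominates the $1$, which is precisely what the contour-scaling argument requires.
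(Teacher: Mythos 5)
Your proposal is mathematically sound but takes a genuinely different route from the paper. Both proofs begin with the same ingredient (the derivative bounds of Theorem~\ref{th-low-freq-bis}~(ii) applied with a parameter $s$ slightly below $\d$ and below $\dd/2$), and both perform a number of integrations by parts comparable to $\Ndev + \b_t + \dd/2 + \abs{\b_x}/2$. The divergence is in how the remaining fractional decay is extracted. The paper integrates by parts exactly $\n=\Ndev+\b_t+\s$ times \emph{on the real line} $\Im z = \m$ (so there is no contour deformation at all), obtains for $f_\n$ and $f_\n'$ singular bounds of the form $\abs\t^{-\th-\e/2}$ and $\abs\t^{-1-\th-\e/2}$, and then invokes an abstract interpolation lemma (Lemma~6.3 of the cited reference of Khenissi--Royer) that converts such $C^1$-with-singularity bounds directly into Fourier decay $t^{\th+\e-1}$. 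You instead reuse the contour-deformation scheme of Proposition~\ref{prop-time-decay-heat}, scaling the near-zero portion of the contour to $\abs z \sim 1/t$ and integrating by parts $L>\a+1$ times on the whole contour. Your exponent bookkeeping $\a = \Ndev+\b_t+s+\tfrac12\abs{\b_x}$, $s\in\big(\tfrac{\dd-\e}{2},\min(\d,\tfrac \dd 2)\big)$, giving $\a+1>\Ndev+1+\tfrac \dd 2+\tfrac12\abs{\b_x}+\b_t-\e$, is correct and gives the stated rate. Your approach buys self-containedness (it does not invoke the external interpolation lemma, only Cauchy's theorem and Theorem~\ref{th-low-freq-bis}), and it parallels the treatment of the heat terms, which is aesthetically pleasing.

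However, two technical points that you flag in passing deserve more than a mention, because they are precisely what the paper's interpolation-on-the-line approach sidesteps. First, Theorem~\ref{th-low-freq-bis}~(ii) is stated only for $z\in\Uc\cap\C_+$, whereas your deformed contour (real-axis horizontals and a semicircle touching $\R$) lies on $\partial\C_+$; you need to record that $\Rest$ extends holomorphically to $\Uc\setminus(-i\R_+)$ (which follows from the decomposition $\Rest = \Bc(z) + \text{linear combination of heat-type terms}$ constructed in Section~\ref{sec-low-freq}) so that the bounds pass to the closure by continuity. Second, in the paper's proof of Proposition~\ref{prop-time-decay-heat} the contour dips into $\Im z<0$, so the boundary terms at $\z(\pm 1)=\pm1-i$ decay exponentially; your contour stays in $\overline{\C_+}$, so there is no exponential help and the cancellation of integration-by-parts boundary terms at the corners of $\G$ (which works because boundary terms depend only on the value of $\z$ and of the integrand at the junction, not on $\z'$) must be made explicit, as must the treatment of the piece $\abs{\Re z}\in[2,3]$ where $\th_0$ is not constant and $F$ is not holomorphic. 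These are fixable, but they are exactly the bookkeeping that the paper's Lemma~6.3 absorbs in one stroke.
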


\begin{proof}
We write $\dd + \abs {\b_x} = 2(\s - \th)$ where $\s \in \N^*$ and $\th \in \big\{ 0, \frac 12 \big\}$.
Let $\n = \Ndev + \b_t + \s$. We denote by $\tilde I_\m(t)$ the integral which appears in the statement of the proposition. After partial integrations as in the proof of Proposition \ref{prop-td-high-freq} we obtain 
\[
e^{t\m} (it)^\n \tilde I_\m(t) = \int_\R e^{-itz} f_{\n}(z)\,d\t
\]
where 
\[
f_\n(z) = \frac {d^\n}{d\t^\n} \left( \th_0(\t) z^{\b_t} \pppg x^{-\d} \partial^{\b_x} \Rest(z) \pppg x^{-\d} \right).
\]
As usual, $z$ stands for $\t + i\m$. By Theorem \ref{th-low-freq-bis} applied with $s =  \frac {\dd-\e}2$ we have 
\[
\nr{f_\m(\t)}_{\Lc(L^2(\R^\dd))} \lesssim \abs \t^{-\th - \frac \e 2} \qandq \nr{f_\n'(\t)}_{\Lc(L^2(\R^\dd))} \lesssim \abs{\t} ^{-1-\th - \frac \e 2}.
\]
By interpolation (see for instance Lemma 6.3 in \cite{khenissir}) we obtain
\[
e^{t\m} t^\n \nr{\tilde I_\m(t)}_{\Lc(L^2(\R^\dd))} \lesssim t^{\th + \e - 1}, 
\]
which concludes the proof.
\end{proof}

Now we can finish the proofs of Theorem \ref{th-loc-dec} and \ref{th-heat}.

\begin{proof}[End of the proof of Theorem \ref{th-heat}]
For the proof of Theorem \ref{th-heat} we estimate $U_{0,\m}(t)$. Since the weight is as strong as we wish, the contribution of high frequencies decays polynomially at any order and can be considered as a rest. We have to estimate $U_{0,\m,0}$. By Proposition \ref{prop-time-decay-rest}, the contribution of $\Rest$ for low frequencies is also a rest. Moreover, for the time derivative, the term $\Id$ which appears in the lower left coefficient of \eqref{eq-res-Ac-tRaz} is holomorphic so its contribution also decays polynomially at any order. It remains the first terms in the developpement given by Theorem \ref{th-low-freq-bis}. By Proposition \ref{prop-time-decay-heat}, these contributions satisfy the properties of the functions $u_{\heat,k}$ as given in Theorem \ref{th-heat}. We only focus on the first term
\[
\tilde u_{\heat,0} = \int_\R e^{itz} \th_0(\t) \big( \LD -ia\Ups z \big)\inv P_\o (i \Th_a u_0 + u_1).
\]
As in the proof of Proposition \ref{prop-td-high-freq}, we can check that 
\begin{equation} \label{tilde-u0-heat}
\tilde u_{\heat,0} = \int_\R e^{itz}  \big( \LD -ia\Ups z \big)\inv P_\o (i \Th_a u_0 + u_1) + O(t^{-\infty}).
\end{equation}
By Lemma \ref{lem-Po} we have $P_\o \Th_a = a \Ups P_{\partial \o}$, so the first term of \eqref{tilde-u0-heat} is the solution of \eqref{heat}, as given in \eqref{def-u-heat}. This concludes the proof of the theorem.
\end{proof}

In Theorem \ref{th-heat} we do not worry about the weight which defines the local energy, and we consider the solution $u$ itself and not only its derivatives. This is not the case in Theorem \ref{th-loc-dec} where we prove an estimate in the energy space and with a sharp weight. In \cite{art-dld-energy-space} we proved a result in the spirit of the Hardy inequality, which we now generalize for our wave guide.

\begin{lemma} \label{lem-hardy-gen}
Let $\d > \frac 12$ and $\s < \d - 1$. Then there exists $C \geq 0$ such that for $u \in C_0^\infty(\bar \O)$ we have 
\[
\nr{\pppg x^\s u}_{L^2(\O)} \leq C \nr{\pppg x^\d \nabla u}_{L^2(\O)}.
\]
\end{lemma}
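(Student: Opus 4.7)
The plan is to prove the inequality by the multiplier method using a vector field on $\O$ that is tangent to $\partial \O$, so that integration by parts produces no boundary term. Specifically, I would introduce
\[
X(x,y) = \bigl(\pppg x^{2\s} x, \, 0\bigr) \in \R^\dd \times \R^\nn,
\]
whose $y$-component vanishes; in particular $X \cdot \nu \equiv 0$ on $\partial \O = \R^\dd \times \partial \o$.

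A direct computation gives
\[
\nabla \cdot X = \dd \pppg x^{2\s} + 2\s \pppg x^{2\s-2} \abs x^2 = \pppg x^{2\s-2} \bigl(\dd + (2\s + \dd)\abs x^2\bigr),
\]
and provided $\s > -\dd/2$, this is bounded below by $c \pppg x^{2\s}$ for some $c > 0$ depending only on $\s$ and $\dd$. Starting from the identity $\nabla \cdot (X \abs u^2) = (\nabla \cdot X)\abs u^2 + 2\Re(X \cdot \nabla u \, \bar u)$, integration over $\O$ combined with the divergence theorem (the boundary contribution vanishes since $u$ is compactly supported in $\bar\O$ and $X$ is tangent to $\partial\O$) yields
\[
\int_\O (\nabla \cdot X) \abs u^2 \, dx \, dy = -2 \Re \int_\O X \cdot \nabla u \, \bar u \, dx \, dy.
\]
Bounding $\abs{X \cdot \nabla u} \leq \pppg x^{2\s+1} \abs{\nabla u}$ and applying Cauchy--Schwarz gives
\[
c \int_\O \pppg x^{2\s} \abs u^2 \leq 2 \left(\int_\O \pppg x^{2\s+2} \abs{\nabla u}^2\right)^{1/2} \left(\int_\O \pppg x^{2\s} \abs u^2\right)^{1/2},
\]
so that after cancellation one obtains the desired inequality in the borderline case $\d = \s + 1$. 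The general case $\d > \s + 1$ follows at once from $\pppg x^{2\s+2} \leq \pppg x^{2\d}$.

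The only subtle point is the constraint $\s > -\dd/2$ required for the positivity of $\nabla \cdot X$. When $\s \leq -\dd/2$, I would instead apply the preceding argument with an intermediate exponent $\s' \in \bigl(\max(-\dd/2, \s),\, \d-1\bigr)$; this interval is non-empty since the hypothesis $\d > \tfrac 12$ forces $\d - 1 > -\tfrac 12 \geq -\dd/2$ for every dimension $\dd \geq 1$. Since $\pppg x \geq 1$ gives $\pppg x^\s \leq \pppg x^{\s'}$ and $\pppg x^{\s'+1} \leq \pppg x^\d$, this closes the argument in the full range of parameters.
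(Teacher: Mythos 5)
Your proof is correct but takes a genuinely different route from the paper. The paper's proof is a reduction-and-citation: it invokes a weighted Hardy-type inequality on $\R^\dd$ from \cite{art-dld-energy-space} (Lemma 4.1 there, noted to extend to $\dd\geq 1$ and $\d>\frac12$), applies it slice-by-slice for each fixed $y\in\o$ using only $\nabla_x u$, and integrates over $y$. Your proof is self-contained and works directly on the wave guide: the vector field $X=(\pppg x^{2\s}x,0)$ is tangent to $\partial\O$ because its $y$-component vanishes while $\n$ on $\partial\O=\R^\dd\times\partial\o$ points purely in the $y$-directions, so the divergence theorem produces no boundary term, and the pointwise lower bound $\nabla\cdot X\geq c\pppg x^{2\s}$ for $\s>-\dd/2$ closes the estimate after Cauchy--Schwarz. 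Your reduction of the remaining range $\s\leq -\dd/2$ via an intermediate exponent $\s'\in\big(-\dd/2,\d-1\big)$ is sound; the interval is non-empty precisely because $\d>\frac12$ gives $\d-1>-\frac12\geq -\dd/2$ for all $\dd\geq 1$. In effect you reprove, via the standard multiplier argument, the very one-variable Hardy inequality the paper cites as a black box, and in doing so you obtain the slightly sharper borderline statement $\nr{\pppg x^\s u}\lesssim \nr{\pppg x^{\s+1}\nabla u}$ for $\s>-\dd/2$; the paper's route is shorter but externalizes the core estimate. Both exploit the same product structure of $\O$ (the weight depends only on $x$ and only $\nabla_x$ enters), made explicit in your argument by the choice of $X$.
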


The interest of this result is that the norm on the right is controlled by the weighted energy. This has a cost in terms of the weight, but we will use this result for the contributions of terms which have a better weight than needed.

\begin{proof}
We first observe that Lemma 4.1 in \cite{art-dld-energy-space} was proved for $\dd \geq 3$ and $\d \geq 0$, but the same result holds with the same proof if $\dd \geq 1$ and $\d > \frac 12$. Now let $u \in C_0^\infty(\bar \O)$. For $y \in \o$ we have
\[
\nr{\pppg x^\s u(\cdot,y)}_{L^2(\R^\dd)}^2 \lesssim \nr{\pppg x^\d \nabla_x u (\cdot,y)}_{L^2(\R^\dd)}^2 \leq \nr{\pppg x^\d \nabla u(\cdot,y)}_{L^2(\R^\dd)}^2.
\]
The result follows after integration over $y \in \o$.
\end{proof}

\begin{proof} [End of the proof of Theorem \ref{th-loc-dec}]
For the proof of Theorem \ref{th-loc-dec} we estimate $U_{1,\m}$. The contribution of high frequencies is given by Proposition \ref{prop-td-high-freq} applied with $\g = \frac \dd 2 + 1$. Let $\d_1 \in \big] \frac \dd 2 , \d-1\big[$. For the contribution of low frequencies, we apply Theorem \ref{th-low-freq-bis} and Propositions \ref{prop-time-decay-heat} and \ref{prop-time-decay-rest} with $\Ndev = 0$ and $\d_1$ instead of $\d$. Since we only estimate the derivatives of the solution, this gives a term whose derivatives with respect to $x$ and $t$ decay as $t^{-\frac \dd 2 - 1}$ and a rest which decays faster. For the derivatives with respect to $y$, we proceed similarly with $\Ndev = 1$. We have $\nabla_y \Pc_{0,0} = 0$ and $\nabla_y \Pc_{1,1} = 0$. The term corresponding to $(k,j) = (1,0)$ decays as $t^{-\frac \dd 2 - 1}$ and the rest decays faster. In the end we have an estimate of the form 
\[
\nr{U_{0,\m,0}}_{\EE^{-\d_1}} \lesssim \pppg t^{-\frac \dd 2 -1} \nr{U_0}_{\HH^{\d_1}}.
\]
We finally use Lemma \ref{lem-hardy-gen} to obtain 
\[
\nr{U_{0,\m,0}}_{\EE^{-\d_1}} \lesssim  \pppg t^{-\frac \dd 2 -1} \nr{U_0}_{\EE^{\d}}.
\]
This concludes the proof.
\end{proof}

The rest of the paper is devoted to the proofs of all the resolvent estimates which have been used in this section.

\section{Separation of the spectrum with respect to the transverse operator} \label{sec-separation}

\newcommand{\Rjslash}{R_n^\sslash(z)}
\newcommand{\Rjslashh}{R_{\h,n}^\sslash(z)}
\newcommand{\Roslash}{R_0^\sslash(z)}
\newcommand{\Roslashh}{R_{\h,0}^\sslash(z)}
\newcommand{\SSGO} {\mathfrak S _\Gc^\O}
\newcommand{\SSGo} {\mathfrak S _\Gc^\o}

In this section we begin our spectral analysis by studying the spectrum and the resolvent estimates for the operator $\Ha$ defined by \eqref{def-Ha}-\eqref{dom-Ha}. In \eqref{eq-Ha-Ta-L} we have written $\Ha$ as the sum of the usual selfadjoint Laplace operator $\LD$ on $\R^d$ and the dissipative operator $\Ta$ on the compact section $\o$. We could use abstract results (see for instance \S XIII.9 in \cite{rs4}) to show that the spectrum of $\Ha$ is 
\begin{equation} \label{spec-Ha}
\s(\Ha) = \s(\LD) + \s(\Ta) = \sum_{k\in \N} \l_k(\a) + \R_+.
\end{equation}
For instance when $\a > 0$ we obtain a sequence of half-lines in the lower half-plane.\\

However this does not give enough information on the resolvent outside the spectrum. Our purpose here is to show that for $\z$ outside $\s(\Ha)$ we can in some sense neglect the contributions of the transverse eigenvalues for which $\Re(\l_k(\a)) \gg \z$ (those for which $d(\z,\l_k(\a) + \R_+) \gg 1$). The idea is to control globally these contributions even if we do not control their number and the lack of self-adjointness. Then it will be possible to write a sum which looks like \eqref{eq-base} but with only a finite number of terms. With such an expression available, it will be easy to deduce precise properties for the resolvent. The problem is that for $\Re(\z) \gg 1$ there will be more and more terms in the sum, so this idea will be mostly used for intermediate and low frequencies. The main result of this section will be Proposition \ref{prop-res-utotal}.\\

Let $R_1,R_2>0$ and 
\begin{equation} \label{def-Gc}
\Gc = \singl{\z \in \C \st \Re(\z) < R_1, \abs{\Im(\z)} < R_2}.%]-\infty , R_1 ] + i [-R_2,R_2] \subset \C.% := \singl{z \in \C \st \Re (z) < R_1 ,\abs{\Im(z)} < R_2}.
\end{equation}
We assume that $\partial \Gc \cap \s(\Ta) = \emptyset$, which is the case for $R_1$ outside a countable subset of $\R$ and $R_2$ large enough. Let
\begin{equation} \label{def-mathfrak}
\SSGo(\a) = \s(\Ta) \cap \Gc \qqandqq \SSGO(\a) = \SSGo(\a) + \R_+.
\end{equation}
To simplify the notation, we will not always write explicitely the dependance on $\a$ for the quantities which appear in this section.\\

Since $\Ta$ has discrete spectrum it is possible to define a spectral localization on $\Gc$ by means of a Cauchy integral. We define
\begin{equation} \label{def-PGc}
P_\Gc = - \frac 1 {2i\pi} \int_{\partial \Gc}  (\Ta-\s)\inv \, d\s \quad \in \Lc(L^2(\o)),
\end{equation}
and $F_\Gc = \Ran(P_\Gc)$.

\begin{proposition}
The operator $P_\Gc$ is well defined and satisfies the following properties.
\begin{enumerate}[(i)]
\item $P_\Gc$ is a projection on $F_\Gc$.
\item $F_\Gc$ is invariant by $\Ta$.
\item The spectrum of $\restr{\Ta}{F_\Gc}$ is $\SSGo$.
\item $F_\Gc$ is of finite dimension.
\item $P_\Gc$ extends to a bounded operator from $\Huop$ to $\Huo$.
\end{enumerate}
\end{proposition}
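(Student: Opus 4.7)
The plan is to apply the classical Riesz projection machinery, using that $T_\a$ has compact resolvent (so discrete spectrum with finite multiplicities by Proposition \ref{prop-Ta}) and that $\partial \Gc$ is a compact curve disjoint from $\s(T_\a)$. First I would observe that the map $\s \mapsto (T_\a - \s)^{-1}$ is continuous on $\partial \Gc$ as a map into $\Lc(L^2(\o))$, hence uniformly bounded, so that the integral defining $P_\Gc$ converges in operator norm.

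For (i), I would use the first resolvent identity to compute $P_\Gc^2$ as a double integral: take two contours $\partial \Gc$ and $\partial \Gc'$, where $\Gc'$ is a small enlargement of $\Gc$ still satisfying $\partial \Gc' \cap \s(T_\a) = \emptyset$, and apply Fubini together with Cauchy's theorem to conclude $P_\Gc^2 = P_\Gc$. For (ii), the identity $T_\a (T_\a - \s)^{-1} = \Id + \s (T_\a - \s)^{-1}$ shows after integration that $T_\a$ commutes with $P_\Gc$ on $\Dom(T_\a)$, so $F_\Gc$ is invariant under $T_\a$. For (iii), the complementary projection $1 - P_\Gc$ plays a symmetric role on a complementary invariant subspace, and pushing the contour out to infinity (or rather applying Cauchy's theorem) shows that the spectrum of $\restr{T_\a}{F_\Gc}$ consists exactly of the eigenvalues enclosed by $\partial \Gc$, i.e.\ of $\SSGo$, while the spectrum on the complement is $\s(T_\a) \setminus \SSGo$.

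For (iv), since the eigenvalues of $T_\a$ have finite multiplicities and satisfy $\Re(\l_m(\a)) \to +\infty$, the bounded set $\Gc$ can contain only finitely many of them. Standard spectral theory for operators with compact resolvent then identifies $F_\Gc$ with the direct sum of the corresponding generalized eigenspaces, each of finite dimension.

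Property (v) is the only step specific to our functional setting. Here I would rely on Lemma \ref{lem-res-Ta-H1} to extend $(T_\a - \s)^{-1}$ to a bounded operator from $\Huop$ to $\Huo$ for every $\s \notin \s(T_\a)$, and to observe that this extension depends continuously on $\s$ in $\Lc(\Huop, \Huo)$ (this follows from the resolvent identity together with the $L^2$ continuity, or directly from revisiting the estimate in the proof of that lemma). Integrating over the compact curve $\partial \Gc$ then yields a bounded operator $\Huop \to \Huo$ which, by density of $L^2(\o)$ in $\Huop$ and continuity of both maps on $L^2(\o)$, agrees with $P_\Gc$ on $L^2(\o)$. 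The main obstacle I expect is just verifying this continuity of $(\tilde T_\a - \s)^{-1}$ in $\Lc(\Huop,\Huo)$ along $\partial \Gc$; the remaining properties are textbook Dunford calculus once the integral is shown to live in the right space.
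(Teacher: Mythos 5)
The central difficulty in this proposition is one your proposal misses: the domain $\Gc = \{\Re(\z) < R_1, |\Im(\z)| < R_2\}$ from \eqref{def-Gc} is an \emph{unbounded} half-strip, so $\partial\Gc$ is not a compact curve — it consists of one vertical segment together with two horizontal half-lines extending to $\Re(\z)=-\infty$. You assert that $\partial\Gc$ is compact, that the integrand is "hence uniformly bounded, so that the integral ... converges in operator norm," and you later integrate "over the compact curve $\partial\Gc$" for part (v). None of those steps is justified as written: on the horizontal half-lines the $L^2$ resolvent decays only like $|\Re(\s)|^{-1}$, which is not integrable, and in $\Lc(\Huop,\Huo)$ Lemma \ref{lem-res-Ta-H1} with $\b_1=\b_2=1$ gives a bound $|\Re(\s)|^{0}$ that does not decay at all. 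The Fubini/Cauchy computation of $P_\Gc^2$ and the push-to-infinity argument for part (iii) suffer the same problem, since they implicitly assume absolute convergence of (double) integrals over these unbounded contours.

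The paper's proof handles exactly this issue. It truncates to bounded domains $\Gc_R=\Gc\cap\{\Re(\z)\geq -R\}$, to which Kato's Theorem III.6.17 applies directly and yields (i)--(iv) for $P_\Gc^R$ as well as the $\Lc(\Huop,\Huo)$ extension by Lemma \ref{lem-res-Ta-H1} (now on a genuinely compact contour). Since $\s(\Ta)$ lies in the sector \eqref{sector}, $\Gc_R$ eventually encloses the same finite set of eigenvalues for all large $R$, so the $P_\Gc^R$ stabilize by Cauchy's theorem. Finally, to show $P_\Gc^R\to P_\Gc$ in $\Lc(L^2(\o))$ one cannot bound the two horizontal contributions separately; one must first pair them via the resolvent identity
\[
\big(\tTa-(s+iR_2)\big)\inv - \big(\tTa-(s-iR_2)\big)\inv = 2iR_2\,\big(\tTa-(s+iR_2)\big)\inv\big(\tTa-(s-iR_2)\big)\inv,
\]
whose right-hand side is $O(s^{-2})$ and hence integrable. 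Your proposal captures the correct standard Riesz-projection toolkit (Dunford calculus, Lemma \ref{lem-res-Ta-H1}, finiteness of $\SSGo$ for (iv)), but without the truncation-and-stabilization step and the paired resolvent identity, the argument does not establish that $P_\Gc$ is even well defined.
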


\begin{proof}
Let $\g > 0$ be given by Proposition \ref{prop-Ta}. For $R > \g$ we set $\Gc_R = \Gc \cap \singl{\Re(z) \geq -R}$ and define $P_\Gc^R$ as $P_\Gc$ with $\partial \Gc$ replaced by $\partial \Gc_R$. Then we set $F_\Gc^R = \Ran(P_\Gc^R)$. We apply Theorem III.6.17 in \cite{kato}. We obtain properties analogous to (i)-(iii) for $P_\Gc^R$. Moreover, since $\SSGo$ only contains a finite number of eigenvalues of finite multiplicities for $\Ta$, $F_\Gc^R$ is of finite dimension. And finally $P_\Gc^R$ extends to a bounded operator in $\Lc(\HuOp,\HuO)$ by Lemma \ref{lem-res-Ta-H1}.

It only remains to see that since $\SSGo$ is contained in the sector \eqref{sector}, the projection $P_\Gc^R$ does not depend on $R$ and goes to $P_\Gc$ in $\Lc(L^2(\O))$. For this last point, we use the resolvent identity
\[
\big(\tTa - (s + iR_2)\big)\inv - \big(\tTa - (s - iR_2)\big)\inv = 2iR_2 \big(\tTa - (s + iR_2)\big)\inv\big(\tTa - (s - iR_2)\big)\inv.
\]
By Lemma \ref{lem-res-Ta-H1} this is of size $O(s^{-2})$ in $\Lc(L^2(\O))$ when $s \to -\infty$. This concludes the proof.
\end{proof}

Since $F_\Gc$ is of finite dimension, it is quite easy to study the resolvent of $\Ta$ on $F_\Gc$. There exist $\l_1,\dots, \l_N \in \SSGo$ and a basis 
\[
\Bc_\Gc = \big(\f_{j,k} \big) _{\substack{1 \leq j \leq N \\ 0 \leq k \leq \n_j }}
\]
of $F_\Gc$ (with $N \in \N$ and $\n_j \in \N$ for all $j\in\Ii 1 N$) such that the matrix of $\restr{\Ta}{F_\Gc}$ reads $\diag \big( J_{\n_1}(\l_1),\dots, J_{\n_N}(\l_N) \big)$ where for $j \in \Ii 1 N$ the matrix $J_{\n_j}(\l_j)$ is a Jordan bloc of size $(\n_j+1)$ and associated to the eigenvalue $\l_j$.
Thus for $j \in \Ii 1 N$ we have 
\[
\big(\Ta -\l_j \big) \f_{j,0} = 0,
\]
and 
\[
\forall k \in \Ii 1 {\n_j}, \quad \big( \Ta -\l_j \big) \f_{j,k} = \f_{j,k-1}.
\]

Now we extend the operator $P_\Gc \in \Lc(L^2(\o))$ as an operator on $\Lc(L^2(\O))$ as we did for $P_\o$: given $u \in L^2(\O)$, we denote by $P_\Gc u \in L^2(\O)$ the function which satisfies $(P_\Gc u)(x,\cdot) = P_\Gc \big(u(x,\cdot)\big)$ for almost all $x \in \R^\dd$.

\begin{lemma} \label{lem-base-riesz}
Let $u \in L^2(\O)$. Then there exist unique functions $u_{j,k} \in L^2(\R^{\dd})$ for $j \in \Ii 1 N$ and $k \in \Ii 0 {\n_j}$ such that 
\[
P_\Gc  u = \sum_{j=1}^N \sum_{k=0}^{\n_j} u_{j,k} \otimes  \f_{j,k}.
\]
Moreover there exists a constant $C_\Gc$ which does not depend on $u$ such that 
\[
C_\Gc \inv \sum_{j=1}^N \sum_{k=0}^{\n_j} \nr{u_{j,k}}^2_{L^2(\R^{\dd})} \leq \nr{P_\Gc u}^2_{L^2(\O)} \leq C_\Gc \sum_{j=1}^N \sum_{k=0}^{\n_j} \nr{u_{j,k}}^2_{L^2(\R^d)}.
\]
\end{lemma}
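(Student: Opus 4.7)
\medskip

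The plan is to reduce the statement to elementary linear algebra on the finite-dimensional space $F_\Gc$, applied slice-by-slice in $x$, and then to integrate.

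First I would identify what ``slice'' means. Under the identification $L^2(\O) \simeq L^2(\R^\dd, L^2(\o))$, the operator $P_\Gc$ acts only in the variable $y$, so for $u \in L^2(\O)$ we have $(P_\Gc u)(x, \cdot) = P_\Gc\bigl(u(x,\cdot)\bigr) \in F_\Gc$ for almost every $x \in \R^\dd$. Since $\Bc_\Gc = (\f_{j,k})$ is a basis of $F_\Gc$, there exist unique scalars $u_{j,k}(x) \in \C$ such that
\[
(P_\Gc u)(x,\cdot) = \sum_{j=1}^N \sum_{k=0}^{\n_j} u_{j,k}(x)\, \f_{j,k}
\qquad \text{for a.e. } x \in \R^\dd.
\]
This gives pointwise existence and uniqueness of the functions $u_{j,k}$; it remains to check that each $u_{j,k}$ is measurable and square integrable, and to establish the norm equivalence.

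Next I would extract the coefficients using a dual family. Since $F_\Gc$ is finite dimensional, I pick $\p_{j,k} \in F_\Gc \subset L^2(\o)$ with $\innp{\f_{j',k'}}{\p_{j,k}}_{L^2(\o)} = \d_{jj'} \d_{kk'}$ (a standard Gram matrix inversion inside the finite-dimensional subspace $F_\Gc$). Then
\[
u_{j,k}(x) = \innp{(P_\Gc u)(x,\cdot)}{\p_{j,k}}_{L^2(\o)} = \int_\o (P_\Gc u)(x,y) \overline{\p_{j,k}(y)} \, dy,
\]
which by Fubini is a measurable function of $x$, and Cauchy--Schwarz in $y$ gives $\abs{u_{j,k}(x)}^2 \leq \nr{\p_{j,k}}_{L^2(\o)}^2 \nr{(P_\Gc u)(x,\cdot)}_{L^2(\o)}^2$, so $u_{j,k} \in L^2(\R^\dd)$ because $P_\Gc u \in L^2(\O)$.

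For the two-sided inequality I would use that on the finite-dimensional space $F_\Gc$ the map $(c_{j,k}) \mapsto \sum c_{j,k} \f_{j,k}$ is a linear isomorphism between $\C^{\dim F_\Gc}$ (with its Euclidean norm) and $F_\Gc$ (with the $L^2(\o)$ norm). Equivalence of norms in finite dimension yields a constant $C_\Gc \geq 1$, independent of the coefficients, such that
\[
C_\Gc^{-1} \sum_{j,k} \abs{c_{j,k}}^2 \;\leq\; \Bigl\lVert \sum_{j,k} c_{j,k} \f_{j,k} \Bigr\rVert_{L^2(\o)}^2 \;\leq\; C_\Gc \sum_{j,k} \abs{c_{j,k}}^2.
\]
Applying this pointwise to $c_{j,k} = u_{j,k}(x)$ gives
\[
C_\Gc^{-1} \sum_{j,k} \abs{u_{j,k}(x)}^2 \;\leq\; \nr{(P_\Gc u)(x,\cdot)}_{L^2(\o)}^2 \;\leq\; C_\Gc \sum_{j,k} \abs{u_{j,k}(x)}^2,
\]
and integrating over $x \in \R^\dd$ together with Fubini (to identify $\int_{\R^\dd} \nr{(P_\Gc u)(x,\cdot)}_{L^2(\o)}^2 \,dx = \nr{P_\Gc u}_{L^2(\O)}^2$) delivers the required inequality.

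There is no real obstacle here; the only subtle point is the measurability of the coefficients $u_{j,k}$, which is handled cleanly by choosing a dual family inside the finite-dimensional invariant subspace and applying Fubini. Everything else is a direct consequence of the equivalence of norms in finite dimension.
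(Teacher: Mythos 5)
Your proof is correct and follows essentially the same route as the paper: decompose $(P_\Gc u)(x,\cdot)$ in the basis $\Bc_\Gc$ slice by slice, invoke equivalence of norms on the finite-dimensional space $F_\Gc$, and integrate over $x$. Your extra remarks on measurability via a dual family fill in a step the paper leaves implicit, but the argument is the same.
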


This statement can be seen as a partial Riesz basis property. This is in fact trivial since we are on a finite dimensional space. Our main purpose will then be to show that, as long as we are interested in low or intermediate frequencies, it is indeed enough to consider the projection on this finite dimensional space $F_\Gc$.

\begin{proof}[Proof of Lemma \ref{lem-base-riesz}]
For almost all $x \in \R^{\dd}$ we have $u(x,\cdot) \in L^2(\o)$. For such an $x$, $P_\Gc u(x,\cdot)$ belongs to $F_\Gc$ an can be decomposed with respect to the basis $\Bc_\Gc$, which defines almost everywhere on $\R^{\dd}$ the functions $u_{j,k}$ for $j \in \Ii 1 N$ and $k \in \Ii 0 {\n_j}$. Since $F_\Gc$ is of finite dimension, we can find a constant $C_\Gc \geq 1$ which does not depend on $u$ or $x$ and such that
\[
C_\Gc \inv \sum_{j=1}^N \sum_{k=0}^{\n_j} \abs{u_{j,k}(x)}^2 \leq \nr{P_\Gc u(x,\cdot)}^2_{L^2(\o)} \leq C_\Gc \sum_{j=1}^N \sum_{k=0}^{\n_j} \abs{u_{j,k}(x)}^2.
\]
The result follows after integration over $x \in \R^{\dd}$.
\end{proof}

For $\z \in \Gc \setminus \SSGO$ and $u \in L^2(\O)$ we set
\begin{equation} \label{eq-res-Ha-low-freq}
R_\Gc(\z) u = \sum_{j=1}^N \sum_{k=0}^{\n_j} \sum_{l=0}^{k} (-1)^l \big( \LD -\z + \l_j\big)^{-1 -k+l}  u_{j,k} \otimes \f_{j,l}.
\end{equation}

\begin{proposition} \label{prop-res-uGamma}
For $\z \in \Gc \setminus \SSGO$ we have 
\[
(\Ha -\z) R_\Gc(\z) = P_\Gc .
\]
Moreover $R_\Gc(\z)$ extends to an operator in $\Lc(\HuOp,\HuO)$ and if $K$ is a compact subset of $\Gc \setminus \SSGO$ then there exists $C \geq 0$ such that for $\z \in K$ we have 
\[
\nr{R_\Gc(\z)}_{\Lc(\HuOp,\HuO)}  \leq C.
\]
\end{proposition}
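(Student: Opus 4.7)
The first identity is established by a direct termwise computation. Writing $\Ha - \z = (\LD - \z) + \Ta$ and using the Jordan structure $\Ta \f_{j,l} = \l_j \f_{j,l} + \f_{j,l-1}$ (with the convention $\f_{j,-1} = 0$), I would apply $(\Ha - \z)$ to a single summand $(\LD - \z + \l_j)^{-1-k+l} u_{j,k} \otimes \f_{j,l}$. Decomposing $(\LD - \z) = (\LD - \z + \l_j) - \l_j$, the two $\l_j$-contributions cancel, leaving
\[
(\LD - \z + \l_j)^{-k+l} u_{j,k} \otimes \f_{j,l} + (\LD - \z + \l_j)^{-1-k+l} u_{j,k} \otimes \f_{j,l-1}.
\]
Weighting by the prescribed sign and summing over $l \in \Ii 0 k$, a telescoping identity kills all terms with $l < k$, and only $u_{j,k} \otimes \f_{j,k}$ survives. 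Summing over $j$ and $k$ yields $P_\Gc u$.

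For the boundedness on a compact $K \subset \Gc \setminus \SSGO$, the key observation is that for each eigenvalue $\l_j \in \SSGo$, the complex numbers $\z - \l_j$ with $\z \in K$ stay at a positive distance from the spectrum $\R_+$ of the self-adjoint operator $\LD$ on $\R^\dd$. Hence $(\LD - \z + \l_j)^{-1}$ exists in $\Lc(L^2(\R^\dd))$ and, by standard elliptic regularity for $-\D_x$, maps $H^s(\R^\dd) \to H^{s+2}(\R^\dd)$ for every $s \in \R$, with a bound uniform in $\z \in K$. By interpolation this gives, for each $m \geq 1$, a uniform bound
\[
\nr{(\LD - \z + \l_j)^{-m}}_{\Lc(H^{-1}(\R^\dd), H^1(\R^\dd))} \leq C_K.
\]
Since $1 + k - l \geq 1$ in every term of the definition of $R_\Gc(\z)$, this applies throughout.

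It remains to relate the $x$-profiles $u_{j,k}$ to the norm $\nr{u}_{\HuOp}$. Because $F_\Gc$ is finite-dimensional and spanned by smooth functions on $\bar \o$, I would fix the dual basis $(\psi_{j,k})$ of $(\f_{j,k})$ for the $L^2(\o)$-pairing; each $\psi_{j,k}$ lies in $C^\infty(\bar \o)$. The tensor map $v \mapsto v \otimes \psi_{j,k}$ is bounded $H^1(\R^\dd) \to \HuO$, and its adjoint realizes $u \mapsto u_{j,k}$ as a bounded operator $\HuOp \to H^{-1}(\R^\dd)$; on $L^2(\O)$ this recovers the decomposition of Lemma~\ref{lem-base-riesz}, and consistency with the extension $P_\Gc \in \Lc(\HuOp,\HuO)$ is ensured by the factorization $P_\Gc u = \sum_{j,k} u_{j,k} \otimes \f_{j,k}$. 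Dually, $v \mapsto v \otimes \f_{j,l}$ is bounded $H^1(\R^\dd) \to \HuO$ since $\f_{j,l} \in C^\infty(\bar \o)$. Composing these three bounded factors with the uniformly bounded resolvent power above, each of the finitely many terms in $R_\Gc(\z) u$ is controlled in $\HuO$ by $C_K \nr{u}_{\HuOp}$.

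The main technical point is the third paragraph: one must check that the decomposition coefficients $u_{j,k}$ make sense for $u \in \HuOp$ (not just $L^2(\O)$) and that the resulting map is bounded into $H^{-1}(\R^\dd)$, which is needed so that the resolvent power brings us back to $H^1(\R^\dd)$. Once this functional set-up is in place, the telescoping identity and the distance-to-spectrum estimate are routine.
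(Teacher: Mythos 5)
Your proof is correct and follows essentially the same route as the paper: the telescoping computation based on the decomposition $\Ha - \z = (\LD - \z + \l_j) + (\Ta - \l_j)$ for the first identity, and for the $\Lc(\HuOp,\HuO)$ bound, the same ingredients the paper cites tersely (the Riesz-type decomposition on the finite-dimensional space $F_\Gc$, resolvent mapping properties of the self-adjoint $\LD$, and the factorization $R_\Gc(\z) = P_\Gc R_\Gc(\z) P_\Gc$). Your dual-basis chain $\HuOp \to H^{-1}(\R^\dd) \to H^1(\R^\dd) \to \HuO$, using the smoothness of the $\f_{j,l}$ and $\psi_{j,k}$ on $\bar\o$, simply makes explicit what the paper's one-sentence argument leaves implicit.
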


\begin{proof}
For $j \in \Ii 1 N$ we can write 
\[
(\Ha -z) = \big( \LD - \z + \l_j\big) + \big( \TaO - \l_j \big).
\]
Then the first statement follows from a straightforward computation. Then we use Lemma \ref{lem-base-riesz}, standard estimates for the self-adjoint operator $\LD$ and the fact that $R_\Gc(\z) = P_\Gc R_\Gc(\z) P_\Gc$ to obtain the required estimate.
\end{proof}

      \detail 
      {
      \begin{remark}
      \begin{equation} \label{eq-propagateur}
      e^{-it\Ha} P_\Gc u = \sum_{j=0}^N \sum_{k=0}^{\n_j} \sum_{l=0}^{k} \frac {t^{k-l}}{(k-l)!} e^{-it(-\D_x + \l_j(\aaa))} u_{j,k} \otimes \f_{j,l}(\aaa).
      \end{equation}
      \end{remark}
      }

The following lemma is quite standard and can be proved by using the spectral measure for the selfadjoint operator $\LD$:

\begin{lemma} \label{lem-contour-laplacien}
Let $\G$ be the boundary of a domain of the form
\[
\tilde \Gc = \singl{z \in \C \st \Re (z) > -R_0, -R_- < \Im(z) < R_+},
\]
with $R_0,R_-,R_+ > 0$. Then for $u \in L^2(\R^\dd)$ we have 
\begin{equation} \label{eq-contour-LD}
-\frac 1 {2i\pi} \int_{\G} (\LD -\z)\inv u \, d\z = u.
\end{equation}
\end{lemma}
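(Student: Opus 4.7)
The identity \eqref{eq-contour-LD} is the operator version of Cauchy's formula
\[
-\frac{1}{2i\pi}\int_\G \frac{d\z}{\l-\z} = \chi_{\tilde \Gc}(\l)
\]
applied to the self-adjoint non-negative operator $\LD = -\Delta_x$ on $L^2(\R^\dd)$, whose spectrum is $[0,+\infty)$, and combined with the observation that $[0,+\infty) \subset \tilde \Gc$ (since $-R_0 < 0 < +\infty$ and $-R_- < 0 < R_+$). My plan is therefore to use the spectral theorem for $\LD$: letting $dE_\l$ denote its spectral measure, one has, for any $\z$ in the resolvent set, $(\LD-\z)\inv = \int_0^{+\infty} (\l-\z)\inv \, dE_\l$.

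The contour $\G$ is unbounded (it contains two horizontal rays running to $+\infty$), so one cannot directly apply Fubini in operator norm. I would proceed by truncation and density. First, reduce to a dense subspace of $u \in L^2(\R^\dd)$ whose spectral measure $dE_\l u$ is compactly supported in $[0,+\infty)$, for instance functions with Fourier transform supported in a compact set. For such $u$, approximate $\G$ by the bounded contour $\G_R$ obtained by closing $\G$ on the right with the vertical segment $\{R\} + i[-R_-,R_+]$. On $\G_R$ the resolvent is continuous and bounded, so the spectral theorem and Fubini give
\[
-\frac{1}{2i\pi}\int_{\G_R}(\LD-\z)\inv u\, d\z = \int_0^{+\infty}\left(-\frac{1}{2i\pi}\int_{\G_R}\frac{d\z}{\l-\z}\right) dE_\l u.
\]
For $\l \geq 0$ and $R > \l$, the inner scalar integral equals $1$ by the residue theorem (the pole at $\z = \l$ lies in the interior of $\G_R$ with the counterclockwise orientation). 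Since $dE_\l u$ is supported in some interval $[0,M]$, for $R > M$ the right-hand side equals $\int_0^{+\infty} dE_\l u = u$.

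It then remains to check that the contribution of the closing segment $\{R\} + i[-R_-,R_+]$ vanishes as $R \to +\infty$, so that the integral over $\G_R$ and the integral over $\G$ coincide in the limit. In the spectral representation this contribution is $\int_{-R_-}^{R_+}(\LD - R - i\eta)\inv u \; i\, d\eta$, and applied to $u$ with $dE_\l u$ supported in $[0,M]$ and $R > M + 1$ it is of size $O(R\inv)$ in $L^2(\R^\dd)$, hence vanishes. To pass from the dense subspace to all of $L^2(\R^\dd)$, one observes that the truncated operators $-\frac{1}{2i\pi}\int_{\G_R}(\LD-\z)\inv\, d\z$ are uniformly bounded in $R$ on $L^2(\R^\dd)$, since by the spectral theorem their operator norm is bounded by $\sup_{\l\geq 0,\, R>0}|-\frac{1}{2i\pi}\int_{\G_R}(\l-\z)\inv d\z|$, which is finite.

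The only real subtlety is the unboundedness of $\G$; this is handled cleanly by the truncation-and-density argument just sketched, after which the proof is a routine combination of the spectral theorem and Cauchy's residue theorem.
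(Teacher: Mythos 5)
Your proof is correct and takes a genuinely different route from the paper's. The paper deforms the unbounded contour $\G$ inward to a small contour $\G_r$ hugging the spectrum (all three side parameters replaced by $r$), then lets $r\to 0$: the vertical part vanishes, and the combined boundary values on the two horizontal rays give an arctan-type integral against the spectral measure of $u$ that tends to the total mass by dominated convergence — a Stieltjes-inversion style computation. You instead truncate and close the contour on the right at $\Re\z = R$, apply the scalar residue theorem inside the spectral integral, show the closing segment vanishes, and finish by density; this is closer to the classical ``the Riesz integral over a contour enclosing the whole spectrum is the identity'' argument and is arguably cleaner, though it requires passing through a dense subspace, which the paper avoids. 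Two small points to tighten. First, the closing segment $\{R\}+i[-R_-,R_+]$ crosses the real axis at $\z = R \in \sigma(\LD)=[0,\infty)$, so $(\LD-\z)\inv$ is \emph{not} bounded on $\G_R$ as an operator; what is bounded and continuous there is $(\LD-\z)\inv u$ for $u$ with spectral support in $[0,M]$, $M<R$ — your argument uses exactly this restriction, but the sentence ``on $\G_R$ the resolvent is continuous and bounded'' overstates it. Second, for the density step you should also note that $\int_\G(\LD-\z)\inv\,d\z$ itself (not only the truncated $T_R$) is a well-defined bounded operator on $L^2(\R^\dd)$, since the closing-segment estimate is unavailable for general $u$; this does hold because the resolvent identity makes the combined integrand on the two horizontal rays decay like $|\Re\z|^{-2}$ in operator norm, and it deserves to be said explicitly before the $3\varepsilon$ argument.
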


      \detail 
      {
      \begin{proof}
      Let $u \in L^2(\R^{\dd})$, and let $E_u$ be the spectral measure of $u$ for the operator $\LD$. For $r \in ]0, \min(R_0,R_-,R_+)[$ we define $\tilde \Gc_r$ and $\G_r$ as $\tilde \Gc$ and $\G$ with $R_0$, $R_-$ and $R_+$ replaced by $r$. Then we set 
      \[
      u_r = -\frac 1 {2i\pi} \int_ {\G_r} (\LD -\z)\inv u \, d\z.
      \]

      \stepp For $R >0$ we set $\tilde \Gc^R = \tilde \Gc \cap \singl{\Re(z) < R}$ and $\tilde \Gc_r^R = \tilde \Gc_r \cap \singl{\Re(z) < R}$. Since the map $\z \mapsto (\LD - \z)\inv u$ is holomorphic on $\tilde \Gc^R \setminus \tilde \Gc_r^R$ we have 
      \[
      -\frac 1 {2i\pi} \int_{\partial (\tilde \Gc^R \setminus \tilde \Gc_r^R)} (\LD -\z)\inv u = 0 %-\frac 1 {2i\pi} \int_{\tilde \G_R} (\LD -\z)\inv u
      \]
      for all $R > 0$. Since the measure $E_u$ is finite, we have by the Lebesgue dominated convergence theorem
      \begin{multline*}
      \frac 1 {2i\pi}\int_{\t = R}^{+\infty}  \Big( \big(\LD - (\t+iR_+) \big)\inv - \big(\LD - (\t -iR_-) \big) \inv \Big) u \, d\t\\
      = \frac 1 {2\pi} \int_{\Xi = 0}^{+\infty} \int_{\t = R}^{+\infty} \frac {R_+ + R_-} {(\Xi-(\t+i R_+))(\Xi-(\t-iR_-))} \, d\t \, dE_u(\Xi) \limt R {+\infty} 0.
      \end{multline*}
	      \detail
	      {
	      Indeed, the measure $E_u$ is finite and 
	      \[
	      \sup_{\Xi \in \R_+} \int_0^{+\infty} \abs{\frac {R_+ + R_-} {(\Xi-(\t+iR_+))(\Xi-(\t-iR_-))}} \, d\t \leq \sup_{\Xi \in \R_+}\int_0^{+\infty} \frac {R_+ + R_-} {(\Xi-\t)^2 + R_-R_+} \, d\t < +\infty.
	      \]
	      }
      The same holds if we replace $R_+$ and $R_-$ by $r$ 
      % similarly have 
      % \[
      % \frac 1 {2i\pi}\int_{\t = R}^{+\infty}  \Big( \big(\LD - (\t+ir) \big)\inv - \big(\LD - (\t -ir) \big) \inv \Big)  \, d\t \limt R {+\infty} 0
      % \]
      and, on the other hand,
      \[
      \frac 1 {2i\pi} \int _{-R_-}^{-r}  \big(\LD - (R+i\m) \big)\inv  \, d\m + \frac 1 {2i\pi} \int _{r}^{R_+}  \big(\LD - (R+i\m) \big)\inv  \, d\m \limt {R} {+\infty} 0.
      \]
      This proves that for all $r \in ]0, \min(R_0,R_-,R_+)[$ the left-hand side of \eqref{eq-contour-LD} is equal to $u_r$.
      % we have
      % \[
      % -\frac 1 {2i\pi} \int_{\G} (\LD -\z)\inv u = u_r.
      % \]

      \stepp We can write $u_r = v_r+ w_r$ where 
      \begin{multline*}
      w_r =  \frac 1 {2i\pi} \int_{0}^{r} \Big( \big(\LD - (-r + i\m) \big)\inv - \big(\LD - (-r -i\m) \big) \inv \Big)  \, d\m \\
      = \frac 1 {2\pi} \int_{0}^{+\infty} \ln \left( \frac {(\Xi+r)^2 + r^2}{(\Xi+r)^2} \right)\, dE_{u}(\Xi) \limt r 0 0,
      \end{multline*}
	      \detail 
	      {
	      \begin{align*}
	      w_r
	      & =  \frac 1 {2i\pi} \int_{0}^{r} \Big( \big(\LD - (-r + i\m) \big)\inv - \big(\LD - (-r -i\m) \big) \inv \Big)  \, d\m\\
	      & = \frac 1 {\pi} \int_{0}^{+\infty}\int_{0}^{r}  \frac {\m}{(\Xi+r)^2 + \m ^2}\, d\m \, dE_{u}(\Xi)\\
	      & = \frac 1 {2\pi} \int_{0}^{+\infty} \ln \left( \frac {(\Xi+r)^2 + r^2}{(\Xi+r)^2} \right)\, dE_{u}(\Xi)\\
	      & \limt r 0 0.
	      \end{align*}
	      }
      and on the other hand:
      % \[
      % v_r = \frac 1 {2i\pi} \int_{-r}^{+\infty} \Big( \big(\LD - (\t+ir) \big)\inv - \big(\LD - (\t -ir) \big) \inv \Big)  \, d\t
      % \]
      % and
      % \[
      % w_r = \frac 1 {2i\pi} \int_{-r}^{r} \Big( \big(\LD - (-r + i\m) \big)\inv \, d\m.
      % \]
      %  We have
      \begin{align*}
      v_r
      & = \frac 1 {2i\pi} \int_{-r}^{+\infty} \Big( \big(\LD - (\t+ir) \big)\inv - \big(\LD - (\t -ir) \big) \inv \Big)  \, d\t\\
      & = \frac 1 {2i\pi} \int_{-r}^{+\infty}\int_{0}^{+\infty} \left( \frac 1 {\Xi - (\t+ir) } - \frac 1 {\Xi - (\t -ir)} \right) \, dE_{u}(\Xi) \, d\t\\
      %& = \frac 1 {2i\pi} \int_{-r}^{+\infty}\int_{0}^{+\infty}  \frac {2ir}{(\Xi - \t)^2 + r^2} \, dE_{u}(\Xi) \, d\t\\
      %& = \frac 1 {\pi} \int_{0}^{+\infty}\int_{-r}^{+\infty}  \frac {r}{(\t - \Xi)^2 + r^2}\, d\t \, dE_{u}(\Xi) \\
      %& = \frac 1 {\pi} \int_{0}^{+\infty}\int_{ -1-\Xi / r}^ {+\infty} \frac {1}{1+\th^2}\, d\th \, dE_{u}(\Xi)\\
      & = \frac 1 {\pi} \int_{0}^{+\infty} \left(\frac \pi 2  + \arctan\left( 1+\frac \Xi r \right)  \right) \, dE_{u}(\Xi).
      \end{align*}
      Since $E_{u}$ is a finite measure and $E_{u} (\singl 0) = 0$, we finally obtain by the dominated convergence theorem that $v_r$ and hence $u_r$ go to $u$ when $r$ goes to 0. 
      \end{proof}
      }

In order to convert the properties of the integrals of $\Ta$ and $\LD$ on some suitable contours into properties for the resolvent of the full operator $\Ha$, we will use the following resolvent identity:
\begin{equation} \label{res-identity}
(\Ha-\z)\inv (\TaO-\s)\inv =  (\TaO-\s)\inv (\LD -\z + \s)\inv  - (\Ha-\z)\inv (\LD -\z + \s)\inv.
\end{equation}
This equality relies on the fact that the operators $\Ha$, $\LD$ and $\Ta$ (all seen as operators on $L^2(\O)$) commute. We have already studied the integral over $\s \in \partial \Gc$ of the first and last terms. For the first term of the right-hand side we define for $\z \in \Gc$
\begin{equation} \label{def-BcGc}
\Bc_\Gc (\z) =  \frac 1 {2i\pi} \int_{\partial \Gc}  (\TaO-\s)\inv (\LD -\z + \s)\inv  \, d\s .
\end{equation}

\begin{proposition} \label{prop-BcGc}
The map $\z \mapsto \Bc_\Gc(\z) \in \Lc(\HuOp,\HuO)$ is well defined and holomorphic on $\Gc$. Moreover if $K$ is a compact subset of $\Gc$ then there exists $C_K$ such that for $\z \in K$ we have 
\[
\nr{\Bc_\Gc(\z)}_{\Lc(\HuOp,\HuO)} \leq C_K.
\]
\end{proposition}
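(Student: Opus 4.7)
The plan is to establish the closed-form identity
\[
\Bc_\Gc(\z) = (\Ha - \z)^{-1}(\Id - P_\Gc),
\]
where the right-hand side is understood as an operator on the invariant subspace $\Ran(\Id - P_\Gc)$, on which $\Ha - \z$ is invertible for every $\z \in \Gc$. From this identity the proposition follows easily: the right-hand side is manifestly holomorphic in $\z$, and standard resolvent estimates for $\Ha$ restricted to that subspace, combined with $P_\Gc \in \Lc(\HuOp, \HuO)$ (already proved), give the required bound in $\Lc(\HuOp, \HuO)$, continuous in $\z$.

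To verify invertibility of $\Ha - \z$ on $\Ran(\Id - P_\Gc)$, I would use the tensor structure: $\TaO$ leaves this subspace invariant with spectrum $\s(\Ta) \setminus \SSGo \subset \C \setminus \Gc$, and $\LD$ commutes with both $\TaO$ and $P_\Gc$. Hence $\s(\Ha|_{\Ran(\Id - P_\Gc)}) = [0,\infty) + (\s(\Ta) \setminus \Gc)$, and a direct check using the shape of $\Gc$ as a right-open horizontal strip shows this set stays outside $\Gc$ (an eigenvalue $\l$ of $\Ta$ outside $\Gc$ either has $\Re \l \geq R_1$, so $\l + [0, \infty)$ lies to the right of $\Gc$, or $|\Im \l| \geq R_2$, so the horizontal ray $\l + [0,\infty)$ lies outside the slab). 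The closed-form identity I would then derive from the partial fractions formula
\[
(\TaO - \s)^{-1}(\LD - \z + \s)^{-1} = (\Ha - \z)^{-1}\bigl[(\TaO - \s)^{-1} + (\LD - \z + \s)^{-1}\bigr],
\]
valid on $\Ran(\Id - P_\Gc)$ where $\Ha - \z$ is invertible. Integrating over $\partial \Gc$: the first piece contributes $\frac{1}{2i\pi}\int_{\partial \Gc}(\TaO - \s)^{-1}(\Id - P_\Gc)\,d\s = 0$ by Cauchy's theorem (the restricted integrand is holomorphic in $\s \in \Gc$), while the second contributes $\frac{1}{2i\pi}\int_{\partial \Gc}(\LD - \z + \s)^{-1}\,d\s = \Id$ by Lemma \ref{lem-contour-laplacien} applied to the reflected contour $\z - \partial \Gc$, a right-open strip enclosing $\s(\LD) = [0,\infty)$. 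A parallel computation in each Fourier fiber of $L^2(\R^\dd)$ (replacing $\LD$ by $|\xi|^2$) shows $\Bc_\Gc(\z) P_\Gc = 0$, completing the identity.

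The hard part will be making the contour integrals rigorous, since $\partial \Gc$ is unbounded to the left and the individual integrals converge only conditionally. Following the strategy already used for the convergence of $P_\Gc$, I would work with truncated rectangular contours $\partial \Gc_R$ and take $R \to \infty$, combining the top and bottom segments via the resolvent identity
\[
(\tTa - \s_+)^{-1} - (\tTa - \s_-)^{-1} = 2iR_2\, (\tTa - \s_+)^{-1}(\tTa - \s_-)^{-1}
\]
to produce combined integrands of size $O(|\Re \s|^{-2})$. The most delicate point will be upgrading this decay estimate from $\Lc(L^2(\O))$ to the stronger norm $\Lc(\HuOp, \HuO)$ required here, which I expect to carry out by combining the sharp mapping properties of Lemma \ref{lem-res-Ta-H1} with Fourier-multiplier bounds for $(\LD - \z + \s)^{-1}$ on the tensor decomposition of $\HuO$.
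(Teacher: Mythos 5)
Your proposal has a genuine gap, and it is hidden in the sentence ``standard resolvent estimates for $\Ha$ restricted to that subspace, combined with $P_\Gc \in \Lc(\HuOp, \HuO)$, give the required bound in $\Lc(\HuOp, \HuO)$.'' There is no off-the-shelf resolvent estimate for the non-self-adjoint operator $\Ha$ (or its restriction to $\Ran(\Id - P_\Gc)$) in the norm $\Lc(\HuOp,\HuO)$ uniformly over compact subsets of $\Gc$; that is precisely the control the paper is painstakingly assembling with this proposition and its neighbours. Assuming it is assuming the conclusion. Note moreover that the closed-form identity $\Bc_\Gc(\z) = (\Ha-\z)\inv(\Id - P_\Gc)$ is exactly the content of the later Proposition \ref{prop-res-utotal}, whose proof \emph{uses} the present proposition as an input, so you cannot invoke it here; and your planned independent derivation via partial fractions has its own unaddressed steps: the claim $\s(\Ha|_{\Ran(\Id-P_\Gc)}) = [0,\infty) + (\s(\Ta)\setminus\Gc)$ is not a ``direct check'' but requires an Ichinose-type theorem (the paper defers to \S XIII.9 of Reed--Simon precisely because this is non-trivial, and then remarks the abstract route gives insufficient resolvent information), and the asserted identity $\Bc_\Gc(\z)P_\Gc = 0$ is not obvious from the contour integral, since for $\z\in\Gc$ the factor $(\LD-\z+\s)\inv$ has singularities along the half-line $\s\in\z-\R_+$, which lies \emph{inside} $\Gc$, so the integrand restricted to $\Ran(P_\Gc)$ is not holomorphic there.

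The sound part of your proposal is the final paragraph: truncate the contour, combine the two horizontal half-lines via the resolvent identity so the integrand decays like $|\Re\s|^{-2}$, then interpolate between the mapping properties of $(\tTa-\s)\inv$ from Lemma \ref{lem-res-Ta-H1} and the standard Fourier-multiplier bounds for $(\LD-\z+\s)\inv$ to land in $\Lc(\HuOp,\HuO)$. That is, verbatim, the paper's proof — it shows convergence of the integral directly, which establishes well-definedness, holomorphy (by Morera), and uniform bounds on compacts in one stroke, with no need to first identify a closed form for $\Bc_\Gc(\z)$. The identity detour does not shorten the argument; it adds two unresolved preliminary steps in front of it.
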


\begin{proof}
It is clear that the contribution of the vertical segment in the integral \eqref{def-BcGc} satisfies the conclusion of the proposition. The contribution of the two horizontal half-lines can be written as follows:
\begin{multline*}
\int_{s = -\infty}^{R_1} 2iR_2 \big(\Ta-(s+iR_2) \big)\inv \big(\Ta-(s-iR_2) \big)\inv  \big(\LD-\z+(s+iR_2) \big)\inv \, ds \\
+
\int_{s = -\infty}^{R_1} 2iR_2 \big(\Ta-(s-iR_2) \big)\inv \big(\LD-\z+(s+iR_2) \big)\inv  \big(\LD-\z+(s-iR_2) \big)\inv \, ds.
\end{multline*}
With Lemma \ref{lem-res-Ta-H1} and the standard analogous estimates for $\LD$ we see that these integrals are well defined as operators in $\Lc(\HuOp,\HuO)$ and are uniformly bounded as long as $\z$ stays in a compact subset of $\Gc$.
\end{proof}

With all the results of this section we finally obtain the following proposition.

\begin{proposition} \label{prop-res-utotal}
We have 
\[
\s(\Ha) \cap \Gc = \SSGO \cap \Gc
\]
and for $\z \in \Gc \setminus \s(\Ha)$ we have 
\begin{equation} \label{decomp-res-Ha}
(\Ha-\z)\inv u = R_\Gc(\z) u + \Bc_\Gc(\z) u,
\end{equation}
where $R_\Gc$ and $\Bc_\Gc$ are defined by \eqref{eq-res-Ha-low-freq} and \eqref{def-BcGc}.
\end{proposition}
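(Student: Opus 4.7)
The plan is to verify that, for $\z \in \Gc \setminus \SSGO$, the operator $R_\Gc(\z) + \Bc_\Gc(\z)$ is a bounded two-sided inverse of $\Ha - \z$. That this sum is well defined and bounded from $\HuOp$ to $\HuO$ is already provided by Propositions \ref{prop-res-uGamma} and \ref{prop-BcGc}, so the heart of the proof is the inverse identity itself.

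For the right inverse, I combine Proposition \ref{prop-res-uGamma}, which gives $(\Ha-\z) R_\Gc(\z) = P_\Gc$, with a direct computation for $\Bc_\Gc(\z)$. Writing $\Ha - \z = (\LD - \z + \s) + (\Ta - \s)$ and using that the resolvents of $\LD$ and $\Ta$ commute (as functions of commuting operators), one obtains the algebraic identity
\[
(\Ha-\z)(\Ta-\s)\inv (\LD-\z+\s)\inv = (\Ta-\s)\inv + (\LD-\z+\s)\inv,
\]
valid for $\s \in \partial \Gc$. Dividing by $2i\pi$ and integrating over $\partial \Gc$, the left-hand side becomes $(\Ha-\z) \Bc_\Gc(\z)$ by \eqref{def-BcGc}; on the right, the first term yields $-P_\Gc$ by \eqref{def-PGc}, and for the second term I perform the substitution $\tilde z = \z - \s$. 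Since $\s \mapsto -\s$ is a rotation by $\pi$ and translation preserves orientation, the image of $\partial \Gc$ is the counterclockwise boundary of
\[
\tilde \Gc_\z = \{\tilde z \in \C : \Re(\tilde z) > \Re(\z) - R_1, \; |\Im(\tilde z) - \Im(\z)| < R_2\},
\]
which has the form required by Lemma \ref{lem-contour-laplacien} precisely because $\z \in \Gc$ (the three defining constants $R_1 - \Re(\z)$ and $R_2 \mp \Im(\z)$ are then positive) and which contains $\s(\LD) = \R_+$. Taking the extra sign from $d\s = -d\tilde z$ into account, the lemma yields $\Id$. Hence $(\Ha-\z)\Bc_\Gc(\z) = \Id - P_\Gc$ and, summing, $(\Ha-\z)(R_\Gc(\z) + \Bc_\Gc(\z)) = \Id$.

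The left inverse follows from the same algebraic identity: $R_\Gc(\z) = R_\Gc(\z) P_\Gc$ is clear from \eqref{eq-res-Ha-low-freq}, the operator $P_\Gc$ commutes with $\Ha$ on $\Dom(\Ha)$, and the finite-dimensional restriction of $(\Ha - \z)$ to $\Ran(P_\Gc) \cap \Dom(\Ha)$ is inverted by $R_\Gc(\z)$ as soon as $\z \notin \SSGO$; a direct computation then gives $(R_\Gc(\z) + \Bc_\Gc(\z))(\Ha-\z)u = u$ for $u \in \Dom(\Ha)$. This establishes both the decomposition \eqref{decomp-res-Ha} and the inclusion $\s(\Ha) \cap \Gc \subset \SSGO \cap \Gc$. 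The reverse inclusion is a standard Weyl sequence argument: if $\z = \l_j + s \in \Gc$ with $\l_j \in \SSGo$ and $s \geq 0$, fix an eigenfunction $\f_j$ of $\Ta$ at $\l_j$ and a Weyl sequence $(v_n) \subset H^2(\R^\dd)$ for $\LD$ at $s$; then $u_n = v_n \otimes \f_j$ lies in $\Dom(\Ha)$ (the boundary condition on $\partial \O$ reduces to that satisfied by $\f_j$ on $\partial \o$), has constant norm, and satisfies $(\Ha-\z) u_n = ((\LD-s)v_n) \otimes \f_j \to 0$, so $\z \in \s(\Ha)$. The main technical obstacle is the orientation bookkeeping under the substitution $\tilde z = \z - \s$: the image region must exactly match the shape of the domain in Lemma \ref{lem-contour-laplacien} and be traversed counterclockwise, so that the contour integral contributes $+\Id$ rather than $-\Id$ and the cancellation with $P_\Gc$ produces the correct decomposition.
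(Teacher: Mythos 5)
Your proposal is correct, but it is organized differently from the paper's argument, and it is worth spelling out the trade-offs. The paper works from inside the resolvent set: for $\z \in \Gc \setminus \s(\Ha)$ it writes $R_\Gc(\z)u = (\Ha-\z)^{-1}P_\Gc u$ (Proposition \ref{prop-res-uGamma}), replaces $P_\Gc$ by its contour integral, commutes the \emph{bounded} operator $(\Ha-\z)^{-1}$ through the integral, splits via \eqref{res-identity} and Lemma \ref{lem-contour-laplacien}, obtains $R_\Gc(\z)u = (\Ha-\z)^{-1}u - \Bc_\Gc(\z)u$, and then gets the inclusion $\s(\Ha)\cap\Gc \subset \SSGO$ for free by holomorphic continuation of the right-hand side of \eqref{decomp-res-Ha}. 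You instead verify directly, for all $\z\in\Gc\setminus\SSGO$, that $R_\Gc(\z)+\Bc_\Gc(\z)$ is a two-sided inverse of $\Ha-\z$; the underlying algebraic identity $(\Ha-\z)(\Ta-\s)^{-1}(\LD-\z+\s)^{-1} = (\Ta-\s)^{-1}+(\LD-\z+\s)^{-1}$ is just \eqref{res-identity} multiplied on the left by $(\Ha-\z)$, so the algebra is equivalent, but the logical arrangement is genuinely different and avoids the assume-then-extend step. What this costs you is: (a) you must push the \emph{unbounded} operator $\Ha-\z$ through a contour integral that converges only conditionally (neither $(\Ta-\s)^{-1}$ nor $(\LD-\z+\s)^{-1}$ is integrable along the horizontal half-lines of $\partial\Gc$), which should be justified by truncating $\Gc$ to $\Gc_R$ as the paper does when constructing $P_\Gc$; and (b) the left-inverse identity still has to be checked, which you only sketch. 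What it buys you is a proof that does not presuppose $\z\in\rho(\Ha)$, and you also supply the Weyl-sequence argument for $\SSGO\cap\Gc\subset\s(\Ha)$, which the paper's proof actually does \emph{not} give (it only establishes the reverse inclusion and relies on the abstract reference for \eqref{spec-Ha}), so that part of your argument fills a small gap. Your orientation accounting under $\tilde z = \z-\s$ is correct: the map is orientation-preserving, and the sign from $d\s = -d\tilde z$ combined with the leading minus sign in Lemma \ref{lem-contour-laplacien} produces $+\Id$, so the cancellation with $-P_\Gc$ gives exactly $(\Ha-\z)(R_\Gc(\z)+\Bc_\Gc(\z)) = \Id$.
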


Thus on $\Gc$ we have written the resolvent of $\Ha$ as the sum of the resolvent on a finite-dimensional subspace (with respect to $y$) and a holomorphic function (both depend on $\Gc$).

On the other hand, we notice that the first statement holds for $\Gc$ as large as we wish, so we have recovered \eqref{spec-Ha}.

\begin{proof}
Let $\z \in  \Gc \setminus \s(\Ha) \subset \Gc \setminus \SSGO$ and $\s \in \partial \Gc$. We have in particular $\s \notin \s(\TaO)$ and $\z - \s \notin \R_+ = \s(\LD)$. By Proposition \ref{prop-res-uGamma}, the resolvent identity \eqref{res-identity} and Lemma \ref{lem-contour-laplacien} we have
\begin{align*}
R_\Gc(\z) u
&  = (\Ha-\z)\inv P_\Gc u  = - \frac 1 {2i\pi} \int_{\partial \Gc} (\Ha-\z)\inv (\TaO-\s)\inv u \, d\s \\
& = \frac 1 {2i\pi} (\Ha-\z)\inv  \int_{\partial \Gc} (\LD -\z + \s)\inv u \, d\s  - \frac 1 {2i\pi} \int_{\partial \Gc}  (\TaO-\s)\inv (\LD -\z + \s)\inv u \, d\s \\
& = (\Ha-\z)\inv u - \Bc_\Gc(\z) u.
\end{align*}
This gives the second statement. Since the right-hand side of \eqref{decomp-res-Ha} is holomorphic on $\Gc \setminus \SSGO$, the left-hand side extends to a holomorphic function on $\Gc \setminus \SSGO$. This implies that $\Gc \setminus \SSGO \subset \Gc \setminus \s(\Ha) $, and concludes the proof.
\end{proof}

The family of operators $\a \mapsto \Ha$ is holomorphic of type B in the sense of Kato \cite{kato}. By continuity of the resolvent $(\Ha - \z)\inv$ with respect to $\a$ we obtain the following conclusion.

\begin{corollary} \label{cor-res-Ha-inter-freq}
Let $K_1$ and $K_2$ be compact subsets of $\C$ such that $K_2 \subset \Gc \setminus \SSGO(\a)$ for all $\a \in K_1$. Then there exists $C \geq 0$ such that for $\a \in K_1$ and $\z \in K_2$ we have 
\[
\nr{(\Ha-\z)\inv}_{\Lc(\HuOp,\HuO)} \leq C.
\]
\end{corollary}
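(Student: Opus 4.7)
The plan is to combine the pointwise resolvent control from Proposition~\ref{prop-res-utotal} with a form-level perturbation argument in the parameter $\a$, and then use compactness of $K_1$ to extract finite uniformity.

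First, fix $\a_0 \in K_1$. By hypothesis $K_2$ is a compact subset of $\Gc \setminus \SSGO(\a_0)$, and Propositions~\ref{prop-res-uGamma}, \ref{prop-BcGc}, \ref{prop-res-utotal} together imply that $\z \mapsto (H_{\a_0} - \z)\inv$ is holomorphic from $\Gc \setminus \SSGO(\a_0)$ into $\Lc(\HuOp, \HuO)$; in particular
\[
M(\a_0) := \sup_{\z \in K_2} \nr{(H_{\a_0} - \z)\inv}_{\Lc(\HuOp, \HuO)} < +\infty.
\]

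Next I would use the form framework of Section~\ref{sec-general-properties}. Reading $H_\a - \z$ as an element of $\Lc(\HuO, \HuOp)$ through the form $Q_\a$ of~\eqref{def-qa}, the dependence on $\a$ is affine: $H_\a - H_{\a_0} = -i(\a - \a_0) B$, where $B \in \Lc(\HuO, \HuOp)$ is the operator associated with the boundary form $(u,v) \mapsto \int_{\partial \O} u \bar v$, whose boundedness follows from Lemma~\ref{lem-trace-O}. One can then factor
\[
H_\a - \z = (H_{\a_0} - \z) \bigl( 1 - i(\a - \a_0) (H_{\a_0} - \z)\inv B \bigr)
\]
in $\Lc(\HuO, \HuOp)$. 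Since $(H_{\a_0} - \z)\inv B$ is bounded on $\HuO$ by $M(\a_0) \nr{B}_{\Lc(\HuO, \HuOp)}$, a Neumann series argument shows that whenever $|\a - \a_0| \leq \bigl( 2 M(\a_0) \nr{B} \bigr)\inv$ the right-hand factor is invertible with norm at most $2$, so
\[
\nr{(H_\a - \z)\inv}_{\Lc(\HuOp, \HuO)} \leq 2 M(\a_0), \qquad \z \in K_2.
\]

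Finally, the open disks $D(\a_0, r(\a_0))$ with $r(\a_0) := \bigl(2 M(\a_0) \nr{B}\bigr)\inv$ form an open cover of the compact set $K_1$; extracting a finite subcover $D(\a_1, r(\a_1)), \dots, D(\a_L, r(\a_L))$ yields the uniform constant $C := 2 \max_{1 \leq i \leq L} M(\a_i)$. The only mildly delicate point is that the perturbation identity must be read at the form level rather than for the $L^2$-realizations, since the domains $\Dom(H_\a)$ depend on $\a$ through the Robin-type boundary condition; this is exactly what the variational setup of Section~\ref{sec-general-properties} was designed to accommodate, so no genuine obstacle arises.
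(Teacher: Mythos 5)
Your proof is correct and follows essentially the same line as the paper: Proposition \ref{prop-res-utotal} gives the resolvent bound for fixed $\a$, and the uniform bound over $K_1$ is obtained by perturbing $\a$ and compactness. Where the paper simply cites Kato's theory of holomorphic families of type B for the continuity of $(\Ha-\z)^{-1}$ in $\a$, you unpack this into an explicit Neumann-series argument at the form level, which is exactly the mechanism underlying the abstract result.
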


\section{Contribution of intermediate frequencies} \label{sec-inter-freq}

In this section we prove Theorem \ref{th-inter-freq}. This is now a simple consequence of the preliminary work of Sections \ref{sec-general-properties} and \ref{sec-separation}.

\begin{proof} [Proof of Theorem \ref{th-inter-freq}]
Let $\t \in \R \setminus \singl 0$. For $\m \in ]0,1]$, $z = \t +i\m$ and $U = (u,v) \in \HH$ we have by \eqref{eq-res-Ac-tRaz}
\begin{equation*} %\label{res-Ac-U}
(\Ac-z)\inv U = 
\begin{pmatrix}
\tRaz (i\Th_a + z) u + \tRaz v \\
u + \tRaz (iz\Th_a + z^2) u + z \tRaz v
\end{pmatrix},
\end{equation*}
and hence 
\begin{equation} \label{nr-res-Ac-U}
\begin{aligned}
\nr{(\Ac-z)\inv U}_{\EE}
& \leq \nr{\nabla \tRaz (i\Th_a + z) u}_{L^2(\O)} + \nr{\nabla \tRaz v}_{L^2(\O)}\\
& \quad  + \nr{u + \tRaz (iz\Th_a + z^2) u}_{L^2(\O)} + \nr{z \tRaz v}_{L^2(\O)}.
\end{aligned}
\end{equation}
By Corollary \ref{cor-res-Ha-inter-freq} there exists $C \geq 0$ which depends on $\t$ but not on $\m\in ]0,1]$ or $U \in \HH$ such that 
\[
\nr{\nabla \tRaz v}_{L^2(\O)} + \nr{z \tRaz v}_{L^2(\O)} \leq C \nr{v}_{L^2(\O)}.
\]
For the first term in \eqref{nr-res-Ac-U} we write 
\begin{align*}
\nabla \tRaz (i\Th_a + z) u = \frac 1 z \nabla  u - \frac 1 z \nabla \tRaz \tilde \D u
\end{align*}
(we recall that $\tilde \D$ was defined after \eqref{def-qa}). Then by Corollary \ref{cor-res-Ha-inter-freq}
\[
\nr{\nabla \tRaz (i\Th_a + z) u}_{L^2(\O)} \lesssim \nr{\nabla u} + \nr{\tRaz}_{\Lc(\HuOp,\HuO)} \nr{\nabla u} \lesssim \nr{\nabla u}.
\]
Similarly 
\[
\nr{u + \tRaz (iz\Th_a + z^2) u}_{L^2(\O)} = \nr{\tRaz \tilde \D u}_{L^2(\O)} \lesssim \nr{\nabla u}_{L^2(\O)},
\]
and finally there exists $C \geq 0$ which does not depend on $\m \in ]0,1]$ or $U \in \HH$ and such that
\[
\nr{(\Ac-z)\inv U}_{\EE} \leq C \nr{U}_{\EE}.
\]
Since $\HH$ is dense in $\EE$, this proves that 
\[
\nr{(\Ac-z)\inv}_{\Lc(\EE)} \leq C.
\]
But the size of the resolvent blows up near the spectrum, so $\t$ belongs to the resolvent set of $\Ac$, which means that the resolvent $(\Ac-\t)\inv$ is well defined in $\Lc(\EE)$. It only remains to check as above that this resolvent also defines a bounded operator on $\HH$.
\end{proof}

\begin{remark} \label{rem-res-Ac-U}
The computation of the proof holds for $z$ replaced by $\t$, so for $\t \in \R \setminus \singl 0$ and $U = (u,v) \in \HH$ we have 
\begin{equation} \label{res-Ac-U-bis}
(\Ac-\t)\inv U = 
\begin{pmatrix}
\frac 1 \t u - \frac 1 \t \tRat \tilde \D u + \tRat v \\
- \tRat \tilde \D u + \t \tRat v
\end{pmatrix}.
\end{equation}
\end{remark}

\section{Contribution of low frequencies} \label{sec-low-freq}

We now consider the contribution of low frequencies. For this we have to study the first eigenvalue of the transverse operator.

\begin{proposition} \label{prop-lambda0}
There exist a neighborhood $\Vc$ of 0 in $\C$ and $r > 0$ such that for all $\a \in \Vc$ the set $\Gc$ defined as in \eqref{def-Gc} with $R_1 = R_2 = r$ contains exactly one eigenvalue $\l_0(\a)$ of $\Ta$. Moreover this eigenvalue is algebraically simple, depends holomorphically on $\a \in \Vc$, and we have
\[
{\frac {d\l_0} {d\a}(0)} = -i \Ups.
\]
\end{proposition}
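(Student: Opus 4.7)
The plan is to invoke Kato's analytic perturbation theory. The family $\{T_\a\}_{\a \in \C}$ is naturally associated to the sesquilinear forms $Q_\a$ restricted to $\o$, all with common form domain $\Huo$, and $\a \mapsto Q_\a$ depends linearly (hence holomorphically) on $\a$. By the analogue of Lemma \ref{lem-trace-O} on $\o$, each $Q_\a$ is sectorial and closed, so $\{T_\a\}$ is a holomorphic family of type (B) in the sense of Kato. Once this framework is in place, the existence of a holomorphic branch of simple eigenvalues near any isolated simple eigenvalue of $T_0$ is standard (see Kato, Ch.~VII, §1).

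First I would verify that $0$ is an algebraically simple isolated eigenvalue of $T_0$. The operator $T_0$ is the Neumann Laplacian on the bounded connected open set $\o$, whose spectrum is discrete and starts at $0$ with a one-dimensional kernel spanned by the constants. For algebraic simplicity, suppose $u \in \Dom(T_0)$ satisfies $T_0 u = c$ for some constant $c$; integrating over $\o$ and using the Neumann boundary condition gives
\[
c\abs{\o} = \int_\o (-\D_\o u) = -\int_{\partial \o} \partial_\n u = 0,
\]
so $c = 0$ and there is no Jordan block above $0$. Therefore $0$ is isolated from the rest of $\s(T_0)$ (which is contained in $]0,+\infty[$) and algebraically simple.

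By Kato's theorem on holomorphic families of type (B), there exist $r > 0$ and an open neighborhood $\Vc$ of $0$ in $\C$ such that for every $\a \in \Vc$, the set $\Gc = \singl{\z \in \C \st |\Re \z| < r, |\Im \z| < r}$ contains exactly one eigenvalue $\l_0(\a)$ of $T_\a$, this eigenvalue is algebraically simple, and $\a \mapsto \l_0(\a)$ is holomorphic on $\Vc$.

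To compute $\l_0'(0)$, let $\f_0 = \abs{\o}^{-1/2}$ be the $L^2(\o)$-normalized eigenfunction at $\a = 0$. The first-order perturbation formula for holomorphic families of type (B) reads
\[
\l_0'(0) = \restr{\frac{d}{d\a}}{\a = 0} Q_\a(\f_0,\f_0).
\]
Since $Q_\a(\f_0,\f_0) = -i \a \int_{\partial \o} |\f_0|^2 = -i \a \frac{\abs{\partial \o}}{\abs \o}$, we obtain $\l_0'(0) = -i \Ups$, as claimed. The main obstacle is simply checking the algebraic simplicity of $0$ for $T_0$ and the hypotheses of Kato's theorem; the derivative computation is then automatic.
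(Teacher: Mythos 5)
Your proof is correct and follows essentially the same route as the paper: both invoke Kato's theory of holomorphic families of type (B) to produce the simple holomorphic eigenvalue branch $\l_0(\a)$ near the simple isolated eigenvalue $0$ of the Neumann Laplacian $T_0$, and both evaluate the derivative from the form $Q_\a$. The only cosmetic difference is that you cite the first-order perturbation formula $\l_0'(0)=\partial_\a Q_\a(\f_0,\f_0)$ directly, whereas the paper differentiates the identity $Q_\a(\f_\a,\f_\a)=\l_0(\a)\nr{\f_\a}^2$ along $\f_\a=P_\Gc(\a)\f_0$; these are equivalent since $T_0$ is self-adjoint.
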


We recall that $\Ups$ was defined in \eqref{def-Ups}.

\begin{proof}
The first eigenvalue of $\To$ is 0 and this eigenvalue is algebraically simple, the eigenvectors being the non-zero constant functions. In particular there exists $r > 0$ such that 0 is the only eigenvalue of $\To$ in $\Gc$ defined as in \eqref{def-Gc} with $R_1 = R_2 = r$.
The family of operators $a \mapsto \Ta$ is a holomorphic family of operators of type B in the sense of \cite[\S VII.4.2]{kato}, so according to the perturabation results in \cite[\S VII.1.3]{kato}, there exist a neighborhood $\Vc$ of 0 and a holomorphic function $\l_0 : \Vc \to \Gc$ such that for all $\a \in \Vc$ the operator $\Ta$ has a unique eigenvalue $\l_0(\a)$ in $\Gc$ and this eigenvalue is simple. 
Moreover the application $\a \mapsto P_\Gc(\a)$ (see \eqref{def-PGc}) is holomorphic and is the projection on the line spanned by the eigenvectors corresponding to this eigenvalue. We denote by $\f_0$ the constant function equal to $\abs \o^{-1/2}$ everywhere on $\o$. Then $T_0 \f_0 = 0$ and $\nr{\f_0}_{L^2(\o)}=1$. Then, choosing $\Vc$ smaller if necessary, $\f_\a : = P_\Gc(\a) \f_0$ is not zero, depends holomorphically on $\a$ and satisfies $T_\a \f_\a = \l_0(\a) \f_\a$ for all $\a \in \Vc$. Thus for all $\a \in \Vc$ we have 
\[
\nr{\nabla_y \f_\a}^2_{L^2(\o)}  - i \a \int_{\partial \o} \abs{\f_\a}^2 = \l_0(\a) \nr{\f_\a}^2_{L^2(\o)}.
\]
We take the derivative of this equality with respect to $\a \in \R$ at point $\a = 0$. Since $\l_0(0) = 0$, $\nr{\f_0} = 1$, $\nabla_y \f_0 = 0$ and $\abs{\f_0}^2 =\abs \o\inv$ everywhere on $\o$ and hence on $\partial \o$, we obtain the expected value for $\l_0'(0)$.
\end{proof}

Let $\Vc$, $r$ and $\Gc$ be given by Proposition \ref{prop-lambda0}. Let $\Uc$ be a neighborhood of 0 such that $a z \in \Vc$ for all $z \in \Uc$. we denote by $P_z$ the projection defined as in \eqref{def-PGc} with $\Ta$ replaced by $\Taz$. We similarly denote by $\Bc(z)$ the operator defined as in \eqref{def-BcGc}. Choosing $\Uc$ smaller if necessary, we can assume that $\abs {\l_0(az)} \leq \frac r 2$ for all $z \in \Uc$. Then $P_z$ can also be written as 
\begin{equation} \label{def-Pz}
P_z = -\frac 1 {2i\pi} \int_{\abs \s = r} (\Taz - \s)\inv \, d\s.
\end{equation}
The application $z \mapsto P_z$ is holomorphic with values in $\Lc(\Huop,\Huo)$. We denote by $P^{(\Nder)}_0 \in \Lc(\Huop,\Huo)$, $\Nder \in \N$, the derivatives of $z \mapsto P_z$ at point 0.\\

By proposition \ref{prop-res-utotal} we have on $L^2(\O)$
\begin{equation} \label{eq-low-freq-Raz}
\Raz  = \big( \LD  + \l_0(az) - z^2 \big)\inv P_{z} + \Bc(z).
\end{equation}
We set
\[
\y(z) = -\frac{\l_0(az) + i a \Ups z - z^2}{z^2}.
\]
By Proposition \ref{prop-lambda0}, $\y$ extends to a holomorphic function on $\Uc$. Using the resolvent identity between $\big( \LD - ia\Ups z - z^2 \y(z)\big)\inv$ and $\big( \LD - ia\Ups z \big)\inv$ we can check by induction on $\Ndev \in \N$ that 
\begin{align} \label{id-res-low-freq}
\big( \LD  + \l_0(az) - z^2 \big)\inv
& = \sum_{k=0}^\Ndev z^{2k} \y(z)^k \big( \LD - ia\Ups z \big)^{-1-k}\\
\nonumber
& \quad  + z^{2(\Ndev+1)} \y(z)^{\Ndev+1} \big( \LD - ia\Ups z \big)^{-1-\Ndev} \big( \LD - ia\Ups z - z^2 \y(z)\big)\inv.
\end{align}
For $k \in \Ii 0 \Ndev$ we can write $\y(z)^{k} = \sum_{l=0}^{\Ndev -k} \y_{k,l}z^l + z^{\Ndev-k+1} \tilde \y_k(z)$ where $\y_{k,0},\dots,\y_{k,\Ndev-k}$ are complex numbers and $\tilde \y_k$ is holomorphic. We also have $P_z = \sum_{l=0}^\Ndev P_0^{(l)} z^l / l! + z^{\Ndev+1} \tilde P_\Ndev(z)$ where $\tilde P_\Ndev : \Uc \to \Lc(\Huop,\Huo)$ is holomorphic. Thus we obtain \eqref{dev-res-low-freq} where $\Rest(z)$ is the sum of the holomorphic function $\Bc(z)$ and a linear combination of terms of the form 
\[
z^l \big(\LD - ia\Upsilon z\big)^{-k_1} \big(\LD - ia\Upsilon z - z^2 \y(z) \big)^{-k_2} \tilde P(z),
\]
where $\tilde P : \Uc \to \Lc(\Huop,\Huo)$ is holomorphic and $l,k_1,k_2 \in \N$ are such that $k_1 + k_2 \geq 1$ and $l-k_1-k_2 \geq \Ndev$. Moreover for all $k \in \Ii 0 \Ndev$ we have 
\[
\Pc_{k,k} = \y_{k,0} P_0 = \y(0)^k P_0,
\]
so the statement about $\Pc_{k,k}$ in Theorem \ref{th-low-freq-bis} holds with $\s = \y(0)$.\\

The estimate of $\Rest(z)$ in Theorem \ref{th-low-freq-bis} is a consequence of the following proposition.

\begin{proposition} \label{prop-dilatation}
Let $k_1,k_2 \in \N$ with $k_1 + k_2 \geq 1$, $s \in \big[ 0 , \frac {\dd}2 \big[$, $\d > s$ and $\b_x \in \N^\dd$ be such that $\abs{\b_l} \leq 1$. For $z \in \C_+$ we set
\[
\Tc(z) = \pppg x^{-\d} \partial^{\b_x} \big(\LD - ia\Upsilon z\big)^{-k_1} \big(\LD - ia\Upsilon z - z^2 \y(z) \big)^{-k_2}  \pppg x^{-\d} \quad \in \Lc(\R^\dd).
\]
Then for $\Nder \in \N$ there exists $C \geq 0$ such that for $z \in \C_+ \cap \Uc$ we have
\[
\nr{\Tc^{(\Nder)}(z)}_{\Lc(\R^\dd)} \leq C \left( 1 + \abs z^{-k_1 -k_2 -\Nder + s + \frac {\abs{\b_x}} 2} \right).
\]
\end{proposition}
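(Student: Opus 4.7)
The proof proceeds in three main steps. First, I would reduce to the case $\Nder = 0$. Using $\partial_z(\LD - \z(z))^{-1} = \z'(z)(\LD - \z(z))^{-2}$ together with Leibniz --- noting that $\z_1(z) = ia\Ups z$ is affine in $z$ and $\z_2(z) = ia\Ups z + z^2 \y(z)$ is holomorphic with uniformly bounded derivatives on $\Uc$ --- one writes $\Tc^{(\Nder)}(z)$ as a finite linear combination of operators of the same form as $\Tc(z)$, but with $(k_1, k_2)$ replaced by $(k_1', k_2')$ satisfying $k_1' + k_2' = k_1 + k_2 + \Nder$, with coefficients uniformly bounded on $\Uc$. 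Thus it suffices to handle $\Nder = 0$ for arbitrary admissible $k_1, k_2$. Since $\Tc(z)$ depends continuously on $z$ and is bounded on any compact subset of $\Uc \cap \C_+$ kept away from $0$, we may further restrict to $\abs z \leq 1$.

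The second step is a scaling argument. Write $z = \lambda^2 e^{i\th}$ with $\lambda = \abs z^{1/2}$ and $\th \in [0,\pi]$, and introduce the unitary dilation $(U_\lambda f)(y) = \lambda^{\dd/2} f(\lambda y)$ on $L^2(\R^\dd)$. The conjugation identities
\[
U_\lambda^{-1} \LD U_\lambda = \lambda^2 \LD, \quad U_\lambda^{-1} \partial^{\b_x} U_\lambda = \lambda^{\abs{\b_x}} \partial^{\b_x}, \quad U_\lambda^{-1} \pppg x^{-\d} U_\lambda = \pppg{\lambda^{-1} y}^{-\d},
\]
yield
\[
\Tc(z) = \lambda^{\abs{\b_x} - 2(k_1+k_2)} \, U_\lambda \, \pppg{\lambda^{-1} y}^{-\d} B(\lambda,\th) \pppg{\lambda^{-1} y}^{-\d} \, U_\lambda^{-1},
\]
where $B(\lambda, \th) = \partial^{\b_x}(\LD - ia\Ups e^{i\th})^{-k_1}(\LD - ia\Ups e^{i\th} - \lambda^2 e^{2i\th} \y(z))^{-k_2}$. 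Since $U_\lambda$ is unitary, it suffices to bound the inner sandwich by $C \lambda^{2s}$ in $\Lc(L^2(\R^\dd))$.

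For this I would exploit the pointwise inequality
\[
\pppg{\lambda^{-1} y}^{-\d} \leq C \lambda^{s} \abs y^{-s}, \qquad y \in \R^\dd \setminus \singl 0, \ \lambda \in (0,1], \ \d > s,
\]
proved by separating the cases $\abs y \geq \lambda$ and $\abs y \leq \lambda$. Since $\pppg{\lambda^{-1} y}^{-\d} / (\lambda^s \abs y^{-s})$ is then bounded by $1$ pointwise, the factorization $\pppg{\lambda^{-1}y}^{-\d} f = \lambda^s \abs y^{-s} \f$ yields $\abs \f \leq \abs f$ (hence $\nr{\f}_{L^2} \leq \nr{f}_{L^2}$), and therefore
\[
\nr{\pppg{\lambda^{-1}y}^{-\d} B \pppg{\lambda^{-1}y}^{-\d}}_{\Lc(L^2)} \leq \lambda^{2s} \nr{\abs y^{-s} B \abs y^{-s}}_{\Lc(L^2)}.
\]
The last norm is handled uniformly in $\lambda, \th$ via Hardy's inequality $\nr{\abs y^{-s} g}_{L^2} \lesssim \nr{(-\D)^{s/2} g}_{L^2}$ (valid for $s < \dd/2$): this reduces matters to $L^2$-boundedness of the Fourier multiplier with symbol $\abs \xi^{2s}(i\xi)^{\b_x}$ divided by $(\abs \xi^2 - ia\Ups e^{i\th})^{k_1}(\abs \xi^2 - ia\Ups e^{i\th} - \lambda^2 e^{2i\th}\y(z))^{k_2}$. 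This symbol is uniformly bounded in $\lambda \in (0,1]$, $\th \in [0,\pi]$ because the two factors in the denominator stay away from $[0,+\infty)$, provided $2s + \abs{\b_x} \leq 2(k_1+k_2)$.

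When $s > k_1 + k_2 - \abs{\b_x}/2$ the Fourier-multiplier step fails, but the target exponent of $\abs z$ is then non-negative, so the conclusion reduces to uniform boundedness of $\Tc(z)$; this follows from the above scaling estimate applied with $s' = k_1+k_2-\abs{\b_x}/2$ in place of $s$ (and if $k_1+k_2-\abs{\b_x}/2 \geq \dd/2$ then the original $s < \dd/2$ already lies in the good regime). The delicate point is the combination of the pointwise weight inequality with Hardy's inequality to produce the operator sandwich bound, and the verification that the resulting Fourier multiplier is uniformly $L^2$-bounded in $\lambda$ and $\th$.
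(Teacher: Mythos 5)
Your argument is correct and reaches the same bound as the paper's, via the same underlying mechanism (rescale by $\abs z^{1/2}$, then control the dilated weight at cost $\abs z^{s}$), but the implementation of the weight step is genuinely different. The paper keeps the inhomogeneous weight $\pppg x^{-\d}$ intact, exploits the fact that the dilation $\Phi_z$ acts on $L^p$ with norm $\abs z^{\dd/4 - \dd/(2p)}$, and passes through the chain $\pppg x^{-\d}: L^{p_l}\to L^2$, $H^\s\hookrightarrow L^{p_l}$, $L^{p_r}\hookrightarrow H^{-\s}$, $\pppg x^{-\d}: L^2\to L^{p_r}$, plus the $H^{-\s}\to H^{\s}$ mapping of the rescaled operator. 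You instead dominate the dilated weight $\pppg{\lambda^{-1}y}^{-\d}$ pointwise by $\lambda^s\abs y^{-s}$, which is scale-homogeneous, and then invoke the Hardy inequality $\nr{\abs y^{-s}g}_{L^2}\lesssim\nr{(-\D)^{s/2}g}_{L^2}$ (for $s<\dd/2$) to land on an explicit $L^2$ Fourier-multiplier bound. These two routes are essentially dual: Hardy's inequality and the embedding chain encode the same information, and your version is arguably more transparent since the end result is visibly a scalar estimate on the symbol. Both require $\d > \s$ with $\s = \min(s, k_1+k_2-\abs{\b_x}/2)$, and both must check the denominator is bounded away from $[0,\infty)$ uniformly in $\th\in[0,\pi]$ and $\lambda\leq 1$ — which you do.

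One small imprecision: in the reduction to $\Nder = 0$, the Leibniz expansion produces terms with $k_1'+k_2' \leq k_1+k_2+\Nder$ rather than equality — differentiating a scalar prefactor instead of a resolvent factor does not raise the total power. This is harmless since $\abs z\leq 1$ makes the resulting exponent only more favorable, but you should state $\leq$ rather than $=$. Also note that you implicitly shrink $\Uc$ so that $\y$ and its derivatives are bounded and so that $\lambda^2 e^{2i\th}\y(z)$ cannot cancel $-ia\Ups e^{i\th}$; the paper does the same ("choosing $\Uc$ smaller if necessary").
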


\begin{proof}
The derivative $\Tc^{(\Nder)}(z)$ can be written as a sum of terms of the form 
\begin{equation} \label{term-der-Tc}
h(z) \pppg x^{-\d} \partial^{\b_x} \big(\LD - ia\Upsilon z\big)^{-k_1-\Nder_1} \big(\LD - ia\Upsilon z - z^2 \y(z) \big)^{-k_2-\Nder_2}  \pppg x^{-\d}
\end{equation}
where $\Nder_1,\Nder_2 \in \N$ are such that $\Nder_1 + \Nder_2 \leq \Nder$ and $h$ is a holomophic function.
We use the same scaling argument as in \cite{boucletr14, art-dld-energy-space} (in a much simpler version). For $z \in \C_+$ and a function $u$ on $\R^{\dd}$ we define $\Phi_z u$ by
\[
(\Phi_z u) (x) = \abs{z}^{\frac {\dd} 4} u\big( \abs z^{\frac 12} x \big).
\]
The dilation $\Phi_z$ is unitary as an operator on $L^2(\R^\dd)$, but for $p \in [1,+\infty]$ we have on $L^p(\R^\dd)$
\begin{equation} \label{dil-Lp}
\nr{\Phi_z}_{\Lc(L^p(\R^\dd))} = \abs{z}^{\frac \dd 4 - \frac \dd {2p} }.
\end{equation}
Let 
\[
\n = k_1 + \Nder_1 + k_2 + \Nder_2 - \frac {\abs{\b_x}}2 \quad \text{and} \quad \s = \min(s,\n).
\]
We have 
\begin{align*}
 \big(\LD - ia \Ups z \big)\inv  = \abs z \inv \Phi_z\big(\LD - ia \Ups \hat z  \big)\inv \Phi_z\inv
\end{align*}
(where $\hat z$ stands for $z / \abs z$) and 
\begin{align*}
\big(\LD - ia \Ups z - z^2 \y(z) \big)\inv= \abs z \inv  \Phi_z \big(\LD - ia \Ups \hat z - z\hat z \y (z)  \big)\inv \Phi_z\inv .
\end{align*}
For any $\th \in\R$ the two resolvents on the right are in $\Lc(H^{\th-1},H^{\th+1})$ uniformly for $z \in \C_+ \cap \Uc$ (we can choose $\Uc$ smaller if necessary). On the other hand we have
\[
\partial^{\b_x} = \abs{z}^{\frac {\abs {\b_x}} 2} \Phi_z \partial^{\b_x} \Phi_z\inv ,
\]
so \eqref{term-der-Tc} is equal to 
\[
z^{-\n} h(z) \pppg x^{-\d}\Phi_z \partial^{\b_x} \big(\LD - ia\Upsilon \hat z\big)^{-k_1- \Nder_1} \big(\LD - ia\Upsilon \hat z - z\hat z \y(z) \big)^{-k_2 - \Nder_2} \Phi_z \inv \pppg x^{-\d}.
\]
We have the Sobolev embeddings $L^{p_r} \subset H^{-\s}$ and $H^{\s} \subset L^{p_l}$ where $p_l = \frac {2\dd}{\dd-2\s}$ and $p_r = \frac {2\dd}{\dd + 2\s}$. Moreover $\pppg x^{-\d} \in \Lc(L^{p_l},L^2) \cap \Lc(L^2, L^{p_r})$, so with \eqref{dil-Lp} we get 
\[
\nr{\Tc(z)}_{\Lc(L^2(\R^\dd))} \lesssim \abs{z}^{\s-\n}.
\]
It only remains to recall that $\s$ is equal to $\n$ or $s$ to conclude.
\end{proof}

Now we estimate the terms which only contain powers of the heat resolvent. We first remark that the second statement of Proposition \ref{prop-chaleur-intro} is a consequence of Proposition \ref{prop-dilatation}. For the first estimate we use the the explicit kernel of the heat equation.

\begin{proof} [Proof of Proposition \ref{prop-chaleur-intro}.\eqref{estim-chaleur-diff}]
For $\ell \in \N^*$ and $\z \in \C_+$ we denote by $K_\ell(\z)$ the kernel of ${(\LD - \z^2)^{-\ell}}$:
\[
K_\ell(\z;x) = \frac 1 {(2\pi)^\dd} \int_{\R^\dd} \frac {e^{i \innp x \x}}{(\abs \x^2 - \z^2)^\ell}  d\x.
\]
Let $\k_0 > 0$, $\k \in ]0,\k_0]$ and $x \in \R^\dd$. By \cite[\S 1.5]{melrose} we have for $r > 0$ small enough %Then $K_k(\z,x)$ has limits when $\z\in\C_+$ goes to $\pm \k$. They are denoted by $K_k^{\pm}(\k,x)$. According to (1.31) in \cite{melrose} we have for $r > 0$ small enough
\begin{align*}
\tilde K_\ell(\k;x)
& := \lim_{\z \to \k} K_\ell(\z;x) - \lim_{\z \to -\k} K_\ell(\z;x) \\
& = \frac 1 {(2\pi)^\dd} \int_{\th \in S^{\dd -1}} \int_{\abs{\s - \k} = r } e^{i \s  \innp x \th}  \frac {\s^{\dd-1}} {(\s^2 - \k^2)^\ell} \, d\s \, d\th,
\end{align*}
where $S^{\dd-1}$ is the unit sphere in $\R^\dd$. For $\s$ in a neighborhood of $\k$, $\th \in S^{\dd-1}$ and $x \in \R^\dd$ we set $f(\s) = \frac {\s^{\dd-1}} {(\s + \k)^\ell}$ and $F(\s,\th,x) =  e^{i \s  \innp x \th} f(\s)$. Then by the residue theorem we obtain
\[
\tilde K_\ell(\k;x) = \frac {2i\pi} {(2\pi)^\dd} \int_{\th \in S^{\dd -1}} \frac {\partial_\s^{\ell-1} F(\k,\th,x)} {(\ell-1)!}  \, d\th.
\]
We have
\[
\abs{\partial_\s^{\ell-1} F(\k,\th,x)} \lesssim \abs x^{\ell-1} \k^{d-2\ell},
\]
and hence for $\d > \frac \dd 2 + \ell -1$
\begin{equation} \label{estim-noyau-1}
\nr{\pppg x^{-\d} \left(\big(\L - (\k^2+i0)\big)^{-\ell} - \big(\LD - (\k^2-i0)\big)^{-\ell}  \right) \pppg x^{-\d}}_{\Lc(L^2(\R^d))} \lesssim \k^{{\dd - 2\ell}}.
\end{equation}
Now let $j \in \Ii 1 \dd$. We can check that the derivative $\partial _{x_j} \partial_\s^{\ell-1} F(\k,\th,x)$ is a linear combination of terms of the form
\begin{equation*}
T_{j,\n}(\k,\th,x) := \th_j \innp x \th ^{\n-1} e^{i \k \innp x \th} f^{(\ell-1-\n)}(\k), \quad \text{for } \n \in \Ii 1 {\ell-1},
\end{equation*}
or
\begin{equation*}
\tilde T_{j,\tilde \n}(\k,\th,x) := \k \th_j \innp x \th ^{\tilde \n} e^{i \k \innp x \th} f^{(\ell-1-\tilde \n)}(\k), \quad \text{for } \tilde \n \in \Ii 0 {\ell-1}.
\end{equation*}
It is not difficult to see that for $\n \in \Ii 2 {\ell-1}$ and $\tilde \n \in \Ii 1 {\ell-1}$ we have
\[
\abs{T_{j,\n}(\k,\th,x)} + \abs{\tilde T_{j,\tilde \n}(\k,\th,x)}  \lesssim \abs x^{\ell-1} \k^{d-2\ell + 2}.
\]
For $\th \in S^{\dd-1}$ we set $\hat \th_j = (\th_1,\dots, \th_{j-1}, - \th_j, \th_{j+1}, \dots, \th_\dd) \in S^{\dd-1}$. We have
\[
\abs{\int_{\th \in S^{\dd-1}} T_{j,1}(\k,\th,x)\, d\th} \leq \frac 12 \int_{\th \in S^{\dd-1}}\abs{ T_{j,1}(\k,\th,x) + T_{j,1}(\k,\hat \th_j,x) }  \, d\th \lesssim \abs x \k^{d-2\ell+2}.
\]
We have a similar estimate for $\tilde T _{j,0}$, so finally
\begin{equation} \label{estim-noyau-2}
\nr{\pppg x^{-\d} \partial_{x_j} \Big(\big(\L - (\k^2+i0)\big)^{-\ell} - \big(\LD - (\k^2-i0)\big)^{-\ell}  \Big) \pppg x^{-\d}} \lesssim \k^{d-2\ell+2}.
\end{equation}
It only remains to apply \eqref{estim-noyau-1} and \eqref{estim-noyau-2} with $\ell = j + 1$ and $\k = \sqrt s$ to conclude the proof.
\end{proof}

We finish this section by checking that there is no problem with low frequency if we localize away from low frequencies with respect to the first $\dd$ variables. More precisely we prove Proposition \ref{prop-high-freq-low-freq}, which was used for the proof of Theorem \ref{th-high-freq-loc-decay}.

\begin{proof} [Proof of Proposition \ref{prop-high-freq-low-freq}]
Let $r > 0$ be such that $\h_1 = 1$ on $[0,r]$.
For $v \in L^2$ the result of Lemma \ref{lem-contour-laplacien} holds with $u = (1-\h_1)(\LD) v$ and $\tilde \Gc$ of the form 
\[
\tilde \Gc = \singl{z \in \C \st \Re(z) > r , \abs{\Im(z)} < r}.
\]
Thus we can apply Proposition \ref{prop-res-utotal} with a domain $\Gc$ of the form 
\[
\Gc = \singl{z \in \C \st \Re(z) < - r , \abs{\Im(z)} < r}.
\]
But $\SSGo (az) = \emptyset$ for $z \in \C$ small enough, so $z \mapsto \tRaz (1-\h_1)(\LD)$ is holomorphic on a neighborhood of 0. With Proposition \ref{prop-Ac-diss} this proves that $z \mapsto (\Ac-z)\inv \in \Lc(\HH,\EE) \subset \Lc(\HH,\HH)$ extends to a holomorphic function on a neighborhood of 0 (notice that $\h_1(\LD)$ commutes with $\Raz$ and $\Th_a$).

Let $\tilde \h_1 \in C_0^\infty(\R,[0,1])$ be equal to 1 on a neighborhood of 0 and such that $\h_1 = 1$ on a neighborhood of $\supp (\tilde \h_1)$. Then we define $\tilde \Xc_1$ as we did for $\Xc_1$ in \eqref{def-Xc}. Since $\tilde \Xc_1$ commutes with $\Ac$ we have for all $z \in \C_+$
\[
(1-\Xc_1) (\Ac-z)\inv = (1-\Xc_1) (\Ac-z)\inv \big(1-\tilde \Xc_1 \big).
\]
Since $ \big(1-\tilde \Xc_1 \big)$ belongs to $\Lc(\EE,\HH)$, this concludes the proof.
\end{proof}

\section{Contribution of high frequencies} \label{sec-high-freq}

\newcommand{\tum}{\widetilde {u_m}}
\newcommand{\tfh}{\widetilde {f_m}}

In this section we prove the high frequency resolvent estimates of Theorem \ref{th-high-freq}. By \eqref{spec-Ha}, if $\t^2$ is close to the spectrum of $\Hat$ there exists $\l \in \s(\Tat)$ and $r \geq 0$ such that $\t^2$ is close to $\l + r$. We deal separately with the contributions of the different pairs $(\l,r)$. Those for which $r$ is small compared to $\t^2$, and those for which $r$ is large itself.

\subsection{Contribution of large transverse eigenvalues}

If $\t^2$ is large and $r$ is small, then $\l$ has to be large. The good properties for the resolvent in this case come from the fact that the eigenvalues of $\Tat$ close to $\t^2$ are far from the real axis and, even if $\Tat$ is not self-adjoint, we have the expected corresponding estimate for the resolvent. The following result is a direct consequence of Theorem \ref{th-gap-Tah}:

\begin{proposition} \label{prop-gap-Taz}
There exist $\t_0 \geq 1$, $\g > 0$ and $c \geq 0$ such that for $\t \geq \t_0$ and $\z \in \C$ which satisfy
\[
\abs{\Re(\z-\t^2)} \leq \g \t^2 \qandq \Im(\z) \geq -  \g \t
\]
the resolvent $(\Tat-\z)\inv$ is well defined and we have
\[
\nr{(\Tat -\z)\inv}_{\Lc(L^2(\o))} \leq \frac c \t.
\]

\end{proposition}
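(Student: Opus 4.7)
The plan is to derive this estimate from Theorem~\ref{th-gap-Tah} by a semiclassical rescaling with $h := 1/\tau$. Comparing definitions, the boundary condition $h \partial_\nu u = i a u$ defining the domain of $T_{a,h}$ is the same as $\partial_\nu u = i a \tau u$ defining the domain of $T_{a\tau}$, so the two operators share the same domain; since $T_{a,h} = -h^2 \Delta_\o = \tau^{-2} T_{a\tau}$, we obtain the identity
\[
(T_{a\tau} - \zeta)^{-1} = \tau^{-2} \bigl(T_{a,h} - \tilde\zeta\bigr)^{-1}, \qquad \tilde\zeta := \zeta/\tau^2 = h^2 \zeta,
\]
so the proposition reduces to the bound $\|(T_{a,h} - \tilde\zeta)^{-1}\|_{\Lc(L^2(\o))} \leq c/h$.

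Next I would translate the hypotheses on $\zeta$ in terms of $\tilde\zeta$: the condition $|\Re(\zeta) - \tau^2| \leq \gamma \tau^2$ becomes $\Re\tilde\zeta \in {]1-\gamma, 1+\gamma[}$, while $\Im(\zeta) \geq -\gamma \tau$ becomes $\Im\tilde\zeta \geq -\gamma h$. A direct computation gives $\innp{T_{a,h} u}{u} = h^2 \|\nabla u\|^2 - i h a \|u\|_{L^2(\partial\o)}^2$, so the numerical range of $T_{a,h}$ lies in the closed sector $\{\Re w \geq 0,\ \Im w \leq 0\}$; therefore for $\Im \tilde\zeta \geq h$ the standard numerical-range bound already gives $\|(T_{a,h} - \tilde\zeta)^{-1}\| \leq 1/\Im \tilde\zeta \leq 1/h$, which is better than required. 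It thus remains to handle $\tilde\zeta$ with $\Im\tilde\zeta \in [-\gamma h, h]$, i.e., $\tilde\zeta$ in a bounded complex neighborhood of $1$ that dips only $O(h)$ below the real axis. This is precisely the regime covered by Theorem~\ref{th-gap-Tah} (after choosing $\gamma$ in the proposition no larger than the corresponding constant supplied by the theorem), which yields $\|(T_{a,h} - \tilde\zeta)^{-1}\| \leq c/h$ and hence the proposition upon inserting into the rescaling identity.

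The main point requiring care is that Theorem~\ref{th-gap-Tah} is stated with $\alpha$ varying in a small complex box near the normalized value $1$, whereas our application uses $\alpha = a$, a fixed positive real which need not equal $1$ and has zero imaginary part. The contradiction/compactness argument underlying Theorem~\ref{th-gap-Tah} (following \cite{lebeau96} as indicated in the appendix) uses only the qualitative positivity of the boundary damping, so it adapts to $\alpha = a$ with a constant $c$ depending on $a$; modulo this observation, the rescaling step and the numerical-range estimate in the large $\Im\tilde\zeta$ regime combine cleanly to give the stated uniform bound $\|(T_{a\tau} - \zeta)^{-1}\|_{\Lc(L^2(\o))} \leq c/\tau$.
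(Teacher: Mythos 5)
Correct, and this is what the paper means by ``a direct consequence of Theorem~\ref{th-gap-Tah}'': the rescaling $h=1/\tau$, $T_{a\tau}=\tau^2 T_{a,h}$, $(T_{a\tau}-\zeta)^{-1}=\tau^{-2}(T_{a,h}-\tilde\zeta)^{-1}$ is the same semiclassical convention the paper uses in Section~\ref{sec-high-freq}, and your split between the dissipativity bound $\|(T_{a,h}-\tilde\zeta)^{-1}\|\leq 1/\Im\tilde\zeta$ for $\Im\tilde\zeta\geq h$ and Theorem~\ref{th-gap-Tah} for $\tilde\zeta$ in the box near $1$ is the right way to combine the two inputs.

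Two details deserve emphasis. First, as you observe, the rescaling produces the fixed real value $\alpha=a$, not $\alpha$ near the normalized value $1$ appearing in Theorem~\ref{th-gap-Tah}; your fix is correct, since the contradiction argument of Appendix~\ref{sec-gap-Taz} only uses $\alpha_m\to\alpha_\infty>0$, and Lemma~\ref{lem-perturb-Th} likewise extends around any fixed real $a>0$. Second, be careful with the claim that $\Im\tilde\zeta\in[-\gamma h,h]$ is ``precisely the regime covered by Theorem~\ref{th-gap-Tah}'': taken literally, the theorem's range $\zeta\in\,]1-\gamma,1+\gamma[\,+\,ih\,]1-\gamma,1+\gamma[$ keeps $\Im\zeta$ strictly positive when $\gamma<1$, whereas the proposition (and the contour $\partial\Gc_\tau$ in Proposition~\ref{prop-small-long-freq}) require the bound down to $\Im\tilde\zeta=-\gamma h$. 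This is an imprecision in the statement of Theorem~\ref{th-gap-Tah} rather than a gap in your argument, since the perturbation step of Lemma~\ref{lem-perturb-Th} does yield a two-sided $O(h)$-strip around the real axis. But the lower half of that strip is exactly the part the proposition depends on, so it is worth stating the stronger form of the theorem that you are in fact using instead of attributing it to the theorem's literal wording.
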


As already explained, we cannot use the results of Section \ref{sec-separation} to obtain uniform estimates for high frequencies. However we use the same kind of idea in the proof of the following proposition.

\begin{proposition} \label{prop-small-long-freq}
Let $\t_0$ and $\g$ be given by Proposition \ref{prop-gap-Taz}. If $\h_1$ is supported in $]-\g,\g[$ then there exists $c \geq 0$ such that for $\t \geq \t_0$ we have 
\[
\nr{\h_{\t} (\LD) \Rat}_{\Lc(L^2(\O))} \leq \frac c {\t}.
\]
\end{proposition}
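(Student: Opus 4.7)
The strategy is to exploit the tensor-product structure $\Hat = \LD + \Tat$ by taking a partial Fourier transform $\Fc_x$ in the unbounded variable $x \in \R^\dd$. Under the unitary $\Fc_x \otimes \Id_{L^2(\o)}$ from $L^2(\O) = L^2(\R^\dd_x; L^2(\o))$ onto $L^2(\R^\dd_\x; L^2(\o))$, the operator $\LD$ becomes multiplication by $\abs \x^2$ while $\Tat$ (which does not depend on $x$) acts fiberwise. Consequently $\Rat = (\Hat - \t^2)\inv$ becomes the direct integral of the fiber resolvents $(\Tat + \abs \x^2 - \t^2)\inv$, and the cutoff $\h_\t(\LD)$ becomes multiplication by $\h_1(\abs \x^2 / \t^2)$. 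The problem thus reduces to a fiberwise bound.

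The second step is to check that each fiber operator is controlled by $c/\t$. When $\abs \x^2 / \t^2$ lies in the support of $\h_1 \subset {]-\g, \g[}$, set $\z = \t^2 - \abs \x^2$. Then $\abs{\Re(\z - \t^2)} = \abs \x^2 \leq \g\t^2$ (strictly less, by the support hypothesis) and $\Im(\z) = 0 \geq - \g \t$, so the hypotheses of Proposition \ref{prop-gap-Taz} are met and $\nr{(\Tat - \z)\inv}_{\Lc(L^2(\o))} \leq c/\t$. Combining with the cutoff $\h_1(\abs \x^2/\t^2)$ (which is bounded by $1$ and vanishes outside this regime), the fiber operator $\h_1(\abs \x^2/\t^2) (\Tat - \z)\inv$ has $\Lc(L^2(\o))$-norm at most $c/\t$ uniformly in $\x \in \R^\dd$ and $\t \geq \t_0$. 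Parseval's identity then gives the proposition.

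The main subtlety is that $\Rat$ for real $\t$ is defined only a posteriori (via the limit from $\C_+$ granted by Theorem \ref{th-inter-freq}), so the naive identification with a direct integral needs justification. The cleanest way to handle this is to first run the argument for $z = \t + i\m \in \C_+$ with $\m > 0$ small: then $R_a(z) = (H_{az}-z^2)\inv$ is unambiguous, the Fourier-side picture is clear, and the fiber operator reads $(T_{az} + \abs \x^2 - z^2)\inv$. Here $\a = az$ is complex but very close to $a\t$, and for $\m$ small the hypotheses of Proposition \ref{prop-gap-Taz} remain satisfied after perturbation — which amounts to a slight extension of the semiclassical estimate of Theorem \ref{th-gap-Tah} to $\a$ in a complex neighborhood of $a\t$, an extension that the statement of Theorem \ref{th-gap-Tah} itself already accommodates. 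Passing to the limit $\m \searrow 0$ in the uniform bound $\nr{\h_z(\LD) R_a(z)}_{\Lc(L^2(\O))} \leq c/\t$ then yields the proposition.

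I expect the main obstacle to be precisely this passage to the boundary of $\C_+$ together with the associated perturbation of the transverse operator: the fiberwise spectral gap for $\Tat$ must remain valid after a small complex perturbation $\a = a\t \mapsto a\t + ia\m$, which is exactly what the semiclassical formulation of Theorem \ref{th-gap-Tah} is designed to allow (the allowed range of $\a$ has width comparable to $h = 1/\t$, so $\m \leq c/\t$ is harmless).
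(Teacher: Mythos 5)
Your argument is correct and takes a genuinely different, more direct route than the paper. The paper starts from the resolvent identity \eqref{res-identity} integrated over $\s\in\partial\Gc_\t$: the Cauchy integral of $(\Tat-\s)^{-1}$ vanishes because $\Gc_\t$ avoids $\s(\Tat)$ (by Proposition \ref{prop-gap-Taz}), the integral involving the $\LD$-resolvent reproduces $\Rat\h_\t(\LD)$ via Lemma \ref{lem-contour-laplacien}, and the remaining operator $\Rc(\t)$ is controlled by decomposing through the spectral measure of $\LD$ and shrinking the $\s$-contour, for each spectral value $\Xi$, to a piece of length $O(\t)$ on which Proposition \ref{prop-gap-Taz} gives the $O(\t^{-1})$ bound. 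You replace this contour argument by the concrete partial Fourier transform in $x$, under which $\h_\t(\LD)\Rat$ becomes a direct integral of the fiber operators $\h_1(\abs{\x}^2/\t^2)(\Tat+\abs{\x}^2-\t^2)^{-1}$, and the uniform fiber bound is exactly Proposition \ref{prop-gap-Taz} with $\z=\t^2-\abs{\x}^2$. Both arguments therefore rest on the same transverse spectral-gap input; yours is shorter on the straight guide, while the paper's stays within the contour machinery developed in Section \ref{sec-separation} and is more robust to perturbations of $\LD$. A small remark: the obstacle you flag --- approximating $\t$ from $\C_+$ and controlling the complex perturbation $a\t\mapsto az$ of the transverse absorption --- can be avoided entirely, since Theorem \ref{th-inter-freq} already provides $\Rat\in\Lc(L^2(\O))$ for every real $\t\neq 0$; the paper's own proof works directly at real $\t$, and your fiberwise argument can too, with no perturbation of $\Tat$ needed.
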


We recall that $\h_\t$ was defined by $\h_1(\cdot / \t^2)$.

\begin{proof}
For $\t \geq \t_0$ we set
\[
\Gc_\t = \singl{\z \in \C \st \abs{\Re(\z) - \t^2} \leq \g \t^2 , \abs{\Im(\z)} \leq \g \t}.
\]
The proof is based on the resolvent identity \eqref{res-identity} applied with $\a = a \t$ and $\z = \t^2$, and integrated over $\s \in \partial \Gc_\t$. According to Proposition \ref{prop-gap-Taz} we have $\Gc_\t \cap \s(\Tat) = \emptyset$ so
\[
\h_{\t}(\LD) \Rat   \int_{\partial \Gc_\t} (\Tat - \s)\inv \, d\s = 0.
\]
In the spirit of Lemma \ref{lem-contour-laplacien} we can check that
\[
- \Rat  \frac 1 {2i\pi} \int_{\partial \Gc_\t} \h_{\t}(\LD) \big( \LD - \t^2 + \s \big) \inv  \, d\s = \Rat  \h_{\t}(\LD)  .
\]
Now let 
\[
\Rc(\t) = \int_{\partial \Gc_\t} (\Tat - \s)\inv \h_{\t} (\LD) \big( \LD - \t^2 + \s \big) \inv \,d\s.
\]
Let $E$ be the spectral measure associated to $\LD$. We have 
\begin{align*}
\Rc(\t)
& = \int_{\partial \Gc_\t} (\Tat - \s)\inv \left(\int_0^{+\infty} \frac {\h_{\t} (\Xi)}{\Xi - \t^2 + \s} \, dE(\Xi) \right) \,d\s\\
& = \int_0^{\g \t^2} \h_{\t} (\Xi)\left(\int_{\partial \Gc_\t}   \frac {(\Tat - \s)\inv}{\Xi - \t^2 + \s} \, d\s \right)  \, dE(\Xi).
\end{align*}
For $\Xi \in [0,\g \t^2]$ we set 
\[
\Gc_{\t,\Xi} = \singl{\z \in \Gc_\t \st \abs{\Re(\z) - \t^2 + \Xi} \leq \g \t}.
\]
Since the function $\s \mapsto \frac {(\Tat - \s)\inv}{\Xi - \t^2 + \s}$ is holomophic on $\Gc_{\t} \setminus \Gc_{\t,\Xi}$ and $\partial \Gc_{\t,\Xi}$ is of length $8 \g \t$ we have by Proposition \ref{prop-gap-Taz} 
\[
\nr{\int_{\partial \Gc_\t}   \frac {(\Tat - \s)\inv}{\Xi - \t^2 + \s} \, d\s}_{\Lc(L^2(\O))} = \nr{\int_{\partial \Gc_{\t,\Xi}}   \frac {(\Tat - \s)\inv}{\Xi - \t^2 + \s}\,d\s} _{\Lc(L^2(\O))} \lesssim \frac 1 {\t}.
\]
Therefore
\[
\nr{\Rc(\t)}_{\Lc(L^2(\O))} \lesssim \frac 1 {\t},
\]
and we conclude with \eqref{res-identity}.
\end{proof}

\subsection{Contribution of high longitudinal frequencies} \label{sec-high-freq-longitudinal}

If the section $\o$ is of dimension 1, we can prove that the first eigenvalues of $\Tat$ go back to the real axis when the absorption coefficient $a\t$ goes to infinity (see Appendix \ref{sec-sec-dim1}). In other words
\[
\sup _{\substack{\l \in \s(\Tat)\\ \Re(\l) \leq \t^2}} \Im(\l) \limt \t {+\infty} 0,
\]
and hence
\[
d \big(\t^2, \s(\Hat) \big) \limt \t{+\infty} 0.
\]
Thus we cannot expect a uniform bound for $R_a(\t)$ on $\Lc(L^2(\O))$ when $\t \gg 1$. This is only proved when $\dim(\o) = 1$ but we expect that the same phenomenon occurs when $\dim(\o) \geq 2$.

However, if $\l \in \s(\Tat)$ is such that $\Re(\l) < \t^2$ and $\abs{\Im (\l)} \ll 1$ then according to Proposition \ref{prop-gap-Taz} we have ${\t^2 - \Re(\l)} \gg 1$. By usual semiclassical technics we can prove estimates for the resolvent $(\LD - (\t^2 - \l))\inv$ in this case. We use the same kind of ideas for the following result.

\begin{proposition} \label{prop-high-long-freq}
Let $\t_0$ be given by Proposition \ref{prop-gap-Taz}. Let $\d > \frac 12$. Then there exists $c \geq 0$ such that for $\t \geq \t_0$ we have 
\[
\nr{\pppg x^{-\d} (1-\h_\t) (\LD) R_a(\t) \pppg x^{-\d}}_{\Lc(L^2(\O))} \leq \frac c \t.
\]
\end{proposition}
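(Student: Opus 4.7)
The proof follows the same strategy as Proposition~\ref{prop-small-long-freq}, adapting the contour so as to enclose \emph{all} eigenvalues of $T_{a\t}$ whose real parts are bounded above by $(1-\g')\t^2$ for some $0<\g'<\g$ (with $\g$ from Proposition~\ref{prop-gap-Taz}, and $\h_1$ supported in $(-\g',\g')$). Since $\LD$ and $T_{a\t}$ commute on $L^2(\O)$ we have $(1-\h_\t)(\LD)R_a(\t)=R_a(\t)(1-\h_\t)(\LD)$, and it is enough to bound $R_a(\t+i\mu)(1-\h_\t)(\LD)$ in the weighted topology uniformly in $\mu>0$ small and then pass to $\mu\searrow 0$. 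For $R>0$ I would use the box
\[
\Gc_{\t,R}=\{\s\in\C:-R<\Re\s<(1-\g')\t^2,\ |\Im\s|<R\},
\]
whose right vertical side lies in the spectral gap provided by Proposition~\ref{prop-gap-Taz}, and eventually let $R\to\infty$.

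\textbf{Key estimates on the contour.} On the right vertical side $\Re\s=(1-\g')\t^2$, Proposition~\ref{prop-gap-Taz} yields $\|(T_{a\t}-\s)^{-1}\|_{\Lc(L^2(\o))}\lesssim 1/\t$ for $|\Im\s|\leq\g\t$, while accretivity of $T_{a\t}$ gives $\|(T_{a\t}-\s)^{-1}\|\leq 1/|\Im\s|$ elsewhere. At the same time the spectral parameter $\t^2-\s$ appearing in $(\LD-\t^2+\s)^{-1}$ has real part $\g'\t^2$ along this side, so the classical weighted limiting absorption principle for $\LD$ on $\R^\dd$ provides $\|\pppg x^{-\d}(\LD-\t^2+\s)^{-1}\pppg x^{-\d}\|\lesssim 1/\t$ (for $|\Im\s|\leq\g\t$) and $\lesssim 1/|\Im\s|$ (for larger $|\Im\s|$). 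On the top, bottom and left sides of $\partial\Gc_{\t,R}$, both resolvent factors are of size $O(R^{-1})$ (up to a weighted LAP factor $O(R^{-1/2})$ on the left side), while the length of the contour grows at most polynomially in $R$, so those contributions vanish as $R\to\infty$.

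\textbf{Main identity and obstacle.} Multiplying the resolvent identity~\eqref{res-identity} (with $\z=(\t+i\mu)^2$) by $(1-\h_\t)(\LD)$ and integrating over $\partial\Gc_{\t,R}$, the projection $P_{\Gc_{\t,R}}=-(2i\pi)^{-1}\int_{\partial\Gc_{\t,R}}(T_{a\t}-\s)^{-1}\,d\s$ appears on the left, while on the right a variant of Lemma~\ref{lem-contour-laplacien} applied to the contour $\z-\partial\Gc_{\t,R}$ (which encloses the spectral support $\{\LD\geq\g'\t^2\}$ of $(1-\h_\t)(\LD)$ for $R$ large) recovers $(1-\h_\t)(\LD)R_a(\t+i\mu)$. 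Combining this with the pointwise bounds of the previous step, the right vertical segment contributes an $O(1/\t)$ integral in $\Lc(L^{2,\d}(\O),L^{2,-\d}(\O))$ after integrating $\min(\t^{-2},|\Im\s|^{-2})$ over $\Im\s\in\R$. The correction term $R_a(\t+i\mu)(I-P_{\Gc_{\t,R}})(1-\h_\t)(\LD)$ vanishes in the weighted norm as $R\to\infty$, since the eigenvalues of $T_{a\t}$ outside $\Gc_{\t,R}$ satisfy $|\Re\l|+|\Im\l|\gtrsim R$. The main technical hurdle is making this $R\to\infty$ limit rigorous in weighted spaces and tracking how the sectorial constants in Proposition~\ref{prop-Ta} depend on $\a=a\t$, so that the $R\to\infty$ and $\mu\searrow 0$ limits commute uniformly in $\t$.
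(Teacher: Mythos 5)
The paper's proof of Proposition~\ref{prop-high-long-freq} uses a positive commutator / escape function argument: it introduces the bounded symbol $g(x,\x)=(1-\h_1)^2(|\x|^2)\int_0^\infty\pppg{x-2\th\x}^{-2\d}\,d\th$, computes the Poisson bracket $\{|\x|^2,g\}=(1-\h_1)^2(|\x|^2)\pppg x^{-2\d}$, and uses the exact commutator $[\LDh,\Opwx(g)]$ together with Lemma~\ref{lem-traces} (control of traces on $\partial\O$ by the dissipation) and Proposition~\ref{prop-small-long-freq}. Your proposal tries instead a contour/spectral-projection argument in the spirit of Section~\ref{sec-separation}, and it has a genuine gap at the key step.

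The central problem is the claim that the correction term $R_a(\t+i\mu)(I-P_{\Gc_{\t,R}})(1-\h_\t)(\LD)$ vanishes as $R\to\infty$ ``since the eigenvalues of $T_{a\t}$ outside $\Gc_{\t,R}$ satisfy $|\Re\l|+|\Im\l|\gtrsim R$.'' This is false. By Proposition~\ref{prop-Ta} the operator $T_{a\t}$ is sectorial, so eigenvalues with $\Re\l<(1-\g')\t^2$ have $|\Im\l|$ controlled in terms of $\t$ only. Once $R$ is larger than some $C(1+\t^2)$, the box $\Gc_{\t,R}$ already contains all of them and the projection $P_{\Gc_{\t,R}}$ stabilizes; the set of eigenvalues outside $\Gc_{\t,R}$ is then precisely those with $\Re\l>(1-\g')\t^2$ (in fact $\Re\l>(1+\g)\t^2$ thanks to the spectral gap) and it does not recede to infinity with $R$. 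So $I-P_{\Gc_{\t,R}}$ is a fixed nontrivial projection in the limit, not a vanishing one.

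Even setting this aside, the strategy does not touch the actual difficulty of the proposition. After the contour manipulation you are led to the decomposition of Proposition~\ref{prop-res-utotal}, $R_a(\t)=R_\Gc(\t^2)+\Bc_\Gc(\t^2)$ with $R_\Gc(\t^2)=R_a(\t)P_\Gc$, where $\Gc$ now encloses \emph{all} transverse eigenvalues $\l$ with $\Re\l\lesssim\t^2$. The contour integral $\Bc_\Gc(\t^2)(1-\h_\t)(\LD)$ can be estimated as you outline, but the projection piece $R_a(\t)P_\Gc(1-\h_\t)(\LD)$ is exactly the high-longitudinal-frequency term the proposition is about: on each mode it is $(\LD-(\t^2-\l))^{-1}$ localized at $x$-frequency $\gtrsim\t$, i.e.\ a high-energy weighted LAP for $\LD$ on $\R^\dd$ that does give $O(1/\t)$ per mode, but the number of relevant modes grows like $\t^\nn$ and the Riesz-type constant $C_\Gc$ in Lemma~\ref{lem-base-riesz} depends on $\Gc$ and hence on $\t$, with no uniform control. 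The paper states explicitly in the introduction that the Riesz-basis approach cannot give uniform high-frequency estimates precisely for this reason; Section~\ref{sec-separation} is reserved for low and intermediate frequencies. Your approach also never uses the boundary dissipation, which is the key extra input (via Lemma~\ref{lem-traces}) in the escape-function proof. I would suggest going back to the statement and trying a positive-commutator argument with a bounded escape function adapted to the longitudinal variables $x$, exploiting that the damping controls the boundary terms that arise.
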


For the proof of this and the following propositions it is convenient to rewrite the problem in the semiclassical setting. We have defined $\Tah$ in \eqref{dom-Tah}. For $h \in ]0,1]$ we set $\LDh = h^2 \LD$, $\Hah = h^2 H_{a/h}$ and $\Rh = \big(\Hah -1 \big)\inv \in \Lc(L^2(\O))$. We also denote by $\Rh$ the operator ${(-h^2 \D - ih\Th_a -1)\inv} \in \Lc(\HuOp,\HuO)$. Then for $\t \geq 1$ and $h = \t \inv$ we have 
\begin{equation} \label{eq-res-semiclass}
\tRat = h^2 \Rh.
\end{equation}
For a suitable symbol $q$ on $\R^{2\dd}$, $h \in ]0,1]$ and $u \in L^2(\O)$ we define
\[
\Opwx(q) u (x,y) = \frac 1 {(2\pi h)^\dd} \int_{\tilde x \in \R^{\dd}} \int_{\x \in \R^{\dd}} e^{\frac ih \innp{x-\tilde x}{\x}} q\left( \frac {x + \tilde x}2 , \x \right) u(\tilde x,y) \, d\x \, d\tilde x.
\]
This is a pseudo-differential operator only in the $x$-directions, so there is no difficulty with the fact that $\O$ is bounded in the $y$-directions.

\begin{lemma} \label{lem-traces}
For $h \in ]0,1]$, $f \in C_0^\infty(\O)$ and $u = \Rh f$ we have
\[
a \int_{\partial \OO} \abs{u}^2  + \frac 1 a \int_{\partial \OO} \abs{h \partial_\n u}^2 \leq \frac 2 {h} \nr u_{L^{2,-\d}(\O)} \nr f _{L^{2,\d}(\O)}.
\]
\end{lemma}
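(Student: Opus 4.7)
The plan is to derive this from a Green's identity on $u=R_h f$, combined with the boundary condition. Since $f\in C_0^\infty(\O)\subset L^2(\O)$, the function $u$ lies in $\Dom(\Hah)\subset H^2(\O)$, so all the following integrations by parts are justified. The defining equation reads
\[
-h^2\D u - u = f \quad \text{on } \O, \qquad h\partial_\n u = iau \quad \text{on } \partial\O.
\]

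First, I would pair the equation with $\bar u$ in $L^2(\O)$ and integrate by parts once, using the boundary condition to replace the boundary trace of $h\partial_\n u$ by $iau$:
\[
h^2\nr{\nabla u}_{L^2(\O)}^2 - ih \int_{\partial\O} a\abs u^2 - \nr u^2_{L^2(\O)} = \innp f u_{L^2(\O)}.
\]
Taking imaginary parts eliminates the three manifestly real contributions on the left, leaving
\[
-h\int_{\partial\O} a\abs u^2 = \Im\innp f u_{L^2(\O)}.
\]
A weighted Cauchy--Schwarz inequality then yields
\[
h\int_{\partial\O} a\abs u^2 \leq \abs{\innp f u_{L^2(\O)}} \leq \nr f_{L^{2,\d}(\O)}\nr u_{L^{2,-\d}(\O)}.
\]

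The second boundary term is now essentially free: the boundary condition $h\partial_\n u=iau$ gives $\abs{h\partial_\n u}^2 = a^2\abs u^2$ pointwise on $\partial\O$, so
\[
\frac 1 a \int_{\partial\O} \abs{h\partial_\n u}^2 = \int_{\partial\O} a\abs u^2 = a\int_{\partial\O}\abs u^2.
\]
Summing the two boundary terms therefore contributes exactly twice $\int_{\partial\O} a\abs u^2$, and dividing the previous estimate by $h$ produces the constant $2/h$ of the statement.

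There is no real obstacle: the only delicate point is keeping track of the factor of $h$ between $\partial_\n u$ and $iau$, which is what forces the semiclassical weight $h^{-1}$ on the right-hand side (and in turn is what makes this trace estimate useful for the semiclassical propagation argument that follows).
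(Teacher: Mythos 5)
Your proof is correct and follows essentially the same route as the paper's: take the imaginary part of $\innp{(\Hah - 1)u}{u}$, use the boundary condition $h\partial_\n u = iau$ to identify the boundary term as $a\int_{\partial\O}\abs u^2$, bound by weighted Cauchy--Schwarz, then invoke the boundary condition a second time to see that the two boundary integrals in the statement are equal. The only superficial difference is that you start from the weak formulation while the paper writes the boundary pairing $\Im\int_{\partial\O} h\partial_\n u\,\bar u$ directly; the content is identical.
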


\begin{proof}
We have
\[
\abs{\Im \int_{\partial \OO}  h\partial_\nu  {u} \bar{u} \, dx} = -\frac 1 {h} \abs{\Im \innp{ (\Hah-1) u}{u}_{L^2(\O)}} \leq \frac 1 h \nr u_{L^{2,-\d}(\O)} \nr f _{L^{2,\d}(\O)}.
\]
Since $h \partial_\n u = i a u$ on $\partial \OO$ we have on the other hand
\[
 \int_{\partial \OO}  h\partial_\nu  {u} \bar{u}  =  i a \int_{\partial \OO}  \abs {u}^2  = \frac i {a} \int_{\partial \OO}  \abs {h \partial _\n u}^2.
\]
The conclusion follows.
\end{proof}

For the proof of Proposition \ref{prop-high-long-freq} we use an escape function as in \cite{jecko04,art-nondiss}:

\begin{proof} [Proof of Proposition \ref{prop-high-long-freq}]
For $(x,\x) \in \R^{2\dd}$ we set 
\[
g(x,\x) = (1-\h_1)^2 \big( \abs \x^2 \big) \int_0^{+\infty}  \pppg {x-2 \th \x}^{-2\d} \, d\th 
\]
(we recall that $(1-\h_1)$ vanishes on a neighborhood of 0). The symbol $g$ and all its derivatives are bounded on $\R^{2\dd}$. Moreover for $(x,\x) \in \R^{2\dd}$ we have 
\[
\big\{ \abs \x^2, g \big\} (x,\x) = \restr{\frac d {ds} g(x+2s\x,\x)}{s = 0} = (1-\h_1)^2 (\abs \x^2) \pppg x^{-2\d},
\]
where $\{p,q\}$ is the Poisson bracket $\nabla _\x p \cdot \nabla_x q - \nabla _x p \cdot \nabla_\x q$. Let $f \in C_0^\infty(\O)$ and $u_h = \Rh f$. We recall that $[\LDh,\Opwx(g)] = -2ih \Opwx(\x \cdot \partial_xg)$ (there is no rest) so
\begin{eqnarray*}
\lefteqn{\innp{\Opwx\big( \big\{ \abs \x^2 , g \big\} \big) u_h}{u_h}_{\O} = \frac ih \innp{[\LDh -1 , \Opwx(g)] u_h}{u_h}_{\O} }\\
&& = - \frac 2 {h}  \Im \innp{\Opwx(g) u_h}{(\LDh - 1)  u_h}_{\O}  \\
&& =  \frac 2 h \Im \innp{\Opwx(g) u_h} {- h^2 \D_y  u_h}_{\O}  + O \big( h\inv \nr f_{L^{2,\d}(\O)} \nr {u_h}_{L^{2,-\d}(\O)} \big)
\end{eqnarray*}
(we have used the fact that $\Opwx(g)$ defines a bounded operator on $L^{2,-\d}(\O)$). But 
\begin{align*}
\innp{\Opwx(g) u_h} {- h^2 \D_y  u_h}_{\O}
& = - h^2 \int_{\partial \O} \Opwx (g) u_h \partial_\nu \bar{u_h} +   h^2 \int_{\partial \O} \Opwx (g) \partial_\nu u_h  \bar{u_h} \\
& \quad + \innp{- h^2 \D_y u_h}{\Opwx(g) u_h}_{\O},
\end{align*}
so according to Lemma \ref{lem-traces}
\[
 \frac 2 h \Im \innp{\Opwx(g) u_h} {- h^2 \D_y  u_h}_{\O} =O \big( h\inv \nr f_{L^{2,\d}(\O)} \nr {u_h}_{L^{2,-\d}(\O)} \big).
\]
By Proposition \ref{prop-small-long-freq} we have 
\begin{align*}
\nr {u_h}_{L^{2,-\d}}^2
\lesssim \nr {(1-\h_1) (\LDh) u_h}_{L^{2,-\d}}^2 + \frac{\nr{f}_{L^2}^2} {h^2}
\lesssim \frac {\nr{f}_{L^{2,\d}} \nr {u_h} _{L^{2,-\d}}} h + h \nr{u_h}_{L^{2,-\d}}^2 + \frac{\nr{f}_{L^2}^2} {h^2},
\end{align*}
and the conclusion follows.
\end{proof}

With Propositions \ref{prop-small-long-freq} and \ref{prop-high-long-freq} we obtain the following result:

\begin{proposition} \label{prop-high-freq}
Let $\t_0$ be given by Proposition \ref{prop-gap-Taz}. Let $\d > \frac 12$. Then there exists $c \geq 0$ such that for $\t \geq \t_0$ we have 
\[
\nr{\pppg x^{-\d}  R_a(\t) \pppg x^{-\d}}_{\Lc(L^2(\O))} \leq \frac c \t.
\]
\end{proposition}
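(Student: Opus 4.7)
The plan is to observe that Proposition \ref{prop-high-freq} is obtained by combining Propositions \ref{prop-small-long-freq} and \ref{prop-high-long-freq} via a spectral cut-off with respect to the longitudinal Laplacian $\LD$.

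First I would fix a cut-off $\h_1 \in C_0^\infty(\R,[0,1])$ supported in $]-\g,\g[$ and equal to $1$ on a neighborhood of $0$, where $\g$ is given by Proposition \ref{prop-gap-Taz} (this is the same $\g$ which appears in the hypothesis of Proposition \ref{prop-small-long-freq}). Then for $\t \geq \t_0$ I split
\[
R_a(\t) = \h_\t(\LD) R_a(\t) + \bigl(1-\h_\t(\LD)\bigr) R_a(\t).
\]

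For the first piece, Proposition \ref{prop-small-long-freq} already gives $\nr{\h_\t(\LD) R_a(\t)}_{\Lc(L^2(\O))} \leq c/\t$; since the multiplication operator $\pppg x^{-\d}$ has norm at most $1$ on $L^2(\O)$, we simply bound
\[
\nr{\pppg x^{-\d} \h_\t(\LD) R_a(\t) \pppg x^{-\d}}_{\Lc(L^2(\O))} \leq \nr{\h_\t(\LD) R_a(\t)}_{\Lc(L^2(\O))} \leq \frac c \t.
\]
For the second piece, Proposition \ref{prop-high-long-freq} directly provides the bound $\nr{\pppg x^{-\d} (1-\h_\t)(\LD) R_a(\t) \pppg x^{-\d}}_{\Lc(L^2(\O))} \leq c/\t$. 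Adding the two estimates yields the proposition.

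No real obstacle here: the two preceding propositions were precisely designed to handle the two pieces of the spectral decomposition, so the proof is merely a juxtaposition once the compatible choice of $\h_1$ has been made.
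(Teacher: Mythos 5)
Your proposal is correct and matches the paper's own (implicit) argument: the paper states Proposition \ref{prop-high-freq} immediately after "With Propositions \ref{prop-small-long-freq} and \ref{prop-high-long-freq} we obtain the following result," leaving the two-piece splitting $R_a(\t) = \h_\t(\LD) R_a(\t) + (1-\h_\t)(\LD) R_a(\t)$ and the observation that $\nr{\pppg x^{-\d}}_{\Lc(L^2)} \leq 1$ to the reader. You have simply made this juxtaposition explicit, including the compatible choice of $\h_1$ supported in $]-\g,\g[$, which is exactly what the paper relies on.
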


\subsection{Estimates for the derivatives of the resolvent}

We have proved uniform estimates for the resolvent $\Rat$ on $L^2(\O)$. Now we have to deduce estimates for its derivatives. In order to prove high frequency estimates for the powers of the resolvent of a Schr\"odinger operator, we can use estimates in the incoming and outgoing region (see \cite{Isozaki-Ki-85-,jensen85}). Here we have to check that this strategy works on our wave guide if we consider incoming and outgoing region with respect to the first $\dd$ variables. More important, we will have to take into account the inserted factors $\Th_a$. We will see that if we insert an obstract operator $\Th \in \Lc(\HuO,\HuOp)$ (or even in $\Lc(H^s(\O), H^s(\O)')$ for some $s \in \big] \frac 12, 1\big[$), we obtain estimates which are not good enough to conclude. In order to prove sharp estimates, we will use the fact that the inserted operator $\Th_a$ is exactly (up to the factor $\t$) the dissipative part in the resolvent $\Rat$.\\

For $R \geq 0$, $d\geq 0$ and $\s \in ]-1,1[$ we denote by
\[
\zoneS_\pm(R,\n,\s)  = \singl{ (x,\x)\in \R^d \times \R^d \st \abs x \geq R,   \abs \x  \geq \n  \text{ and } \pm \innp x \x \geq \pm \s \abs x \abs \x} 
\]
the incoming and outgoing regions in $\R^{2d} \simeq T^*\R^d$. Then we denote by $\symb_\pm(R,\n,\s)$ the set of symbols $b \in C^\infty(\R^{2d})$ which are supported in $\zoneS_\pm(R,\n,\s)$ and such that 
\[
\abs{\partial_x^{\b_x} \partial_\x^{\b_\x} b(x,\x)} \lesssim \pppg x^{-\abs {\b_x}}.
\]

\begin{definition} \label{def-Rcha}
Let $p \in \N^*$, $k_1,\dots,k_p \in \N^*$ and $k = k_1+ \dots + k_p$. For $h \in ]0,1]$ we set 
\begin{equation} \label{def-Th-zeta}
\Psi_{h} = \Rh^{k_1}  \Th_a \Rh^{k_2} \Th_a \dots \Th_a \Rh^{k_p}.
\end{equation}
We say that the family $(\Rcha)_{h \in ]0,1]}$ of operators in $\Lc(L^2(\O))$ belongs to $\mathfrak R^{k_1,\dots,k_p}$ if it satisfies one of the following properties.
\begin{enumerate}[(i)]
\begin{subequations} \label{f-Rcha}

\item There exists $\tilde \h_1 \in C_0^\infty(\R,[0,1])$ supported in $]-\g,\g[$ ($\g$ being given by Proposition \ref{prop-gap-Taz}) such that
\begin{equation} \label{f-Rcha-0}
\Rcha = \tilde \h_1(\LDh) \Psi_{h}.
\end{equation}

\item There exists $\d > k - \frac 12$ such that 
\begin{equation} \label{f-Rcha-1}
\Rcha = \pppg x^{-\d}  \Psi_{h}  \pppg x^{-\d}.
\end{equation}

\item There exist $\d > k - \frac 12$, $\rho > 0$, $R > 0$, $\n > 0$, $\s_-\in]-1,1[$ and $b_- \in \symb_-(R,\n,\s_-)$ such that
\begin{equation} \label{f-Rcha-2}
\Rcha = \pppg x^{\d-k- \rho} \Opwx(b_-) \Psi_{h}  \pppg x^{-\d}.
\end{equation}

\item There exist $\d > k - \frac 12$, $\rho > 0$, $R > 0$, $\n > 0$, $\s_+\in]-1,1[$ and $b_+ \in \symb_+(R,\n,\s_+)$ such that
\begin{equation} \label{f-Rcha-3}
\Rcha  = \pppg x^{-\d}  \Psi_{h}   \Opwx(b_+)\pppg x^{\d-k-\rho}.
\end{equation}

\item There exist $\d_-,\d_+ \in \R$, $R > 0$, $\n > 0$, $\s_\pm\in]-1,1[$ and $b_\pm \in \symb_\pm(R,\n,\s_\pm)$ such that $\s_-<\s_+$ and
\begin{equation} \label{f-Rcha-4}
\Rcha =  \pppg x^{\d_-} \Opwx(b_-) \Psi_{h}  \Opwx(b_+) \pppg x^{\d_+}.
\end{equation}

\end{subequations}
\end{enumerate}
\end{definition}

\begin{proposition} \label{prop-incoming-outgoing}
Let $(\Rcha) \in \mathfrak R^1$. Then there exist $h_0 > 0$ and $c \geq 0$ such that for $h \in ]0,h_0]$ and $\b_1,\b_2 \in \{0,1\}$ we have
\[
\nr{\Rcha}_{\Lc(H^{\b_1}(\O)',H^{\b_2}(\O))} \leq \frac c {h^{1+\b_1+\b_2}}.
\]
\end{proposition}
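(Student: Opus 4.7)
The plan is to prove Proposition \ref{prop-incoming-outgoing} in two stages. First I would establish the base $L^2(\O) \to L^2(\O)$ estimate
\[
\nr{\Rcha}_{\Lc(L^2(\O))} \leq \frac{c}{h}
\]
for each of the five families in Definition \ref{def-Rcha}, then bootstrap to the full $H^{\b_1}(\O)' \to H^{\b_2}(\O)$ estimate using an energy identity on the output side and duality on the input side. Each of the two bootstrap steps costs one extra factor of $h^{-1}$, which together account for the exponent $1+\b_1+\b_2$ appearing in the statement. For the base $L^2$ estimates, case (i) is a direct transcription of Proposition \ref{prop-small-long-freq} via the identification $h = 1/\t$ and $\tRat = h^2 \Rh$, and case (ii) reduces similarly to Proposition \ref{prop-high-freq}. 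In cases (iii) and (iv), an incoming or outgoing symbol $\Opwx(b_\mp)$ sits on one side; since $b_\pm$ is supported in $\{|x| \geq R\}$ and the outer weight $\pppg x^{\delta-1-\rho}$ carries a clean surplus $\rho > 0$, the product $\Opwx(b_\pm) \pppg x^{\delta-1-\rho}$ is a pseudo-differential operator bounded from $L^2(\O)$ to $L^{2,-\delta}(\O)$ uniformly in $h$, so the estimate reduces to case (ii). Case (v) is where genuine microlocal structure enters: the hypothesis $\sigma_- < \sigma_+$ separates the supports of $b_-$ and $b_+$ by a positive microlocal distance, and the standard Isozaki--Kitada propagation argument adapted to our setting yields an $O(h^\infty)$ bound, a fortiori $O(h^{-1})$. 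No new difficulty arises from the transverse variable because all the operators $\Opwx$ act only in $x$ and commute with both the transverse operator $h^2 T_{a/h}$ and the boundary condition; the $y$-variable enters merely as a parameter.

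The bootstrap to $\HuO$ on the output side proceeds by the standard variational identity. Write $u_h = \Rh g$ where $g$ is the $L^2$-source appearing inside $\Rcha$. Testing the equation $(-h^2 \D -ih\Th_a -1) u_h = g$ against $\pppg x^{-2\delta} u_h$ (or against $u_h$ itself in case (i)) and taking the real part, the boundary term vanishes because it is purely imaginary, and after absorbing the weight-derivative cross term by Cauchy--Schwarz one is left with
\[
h^2 \nr{\pppg x^{-\delta} \nabla u_h}_{L^2(\O)}^2 \lesssim \nr{\pppg x^{-\delta} u_h}_{L^2(\O)}^2 + \nr{\pppg x^{\delta} g}_{L^2(\O)} \nr{\pppg x^{-\delta} u_h}_{L^2(\O)}.
\]
Combined with the $L^2$ bound of stage one this yields the desired extra factor $h^{-1}$, and both $x$- and $y$-derivatives are controlled simultaneously. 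For the input side ($\b_1 = 1$) I would argue by duality: the adjoint $(\Rcha)^*$ falls into one of the five families of Definition \ref{def-Rcha} (with $a$ replaced by $-a$, and in cases (iii)--(v) with the roles of incoming and outgoing interchanged after conjugating the symbols), and each ingredient used so far---Proposition \ref{prop-small-long-freq}, Proposition \ref{prop-high-freq}, Theorem \ref{th-gap-Tah}, and the variational identity above---transfers to the adjoint by complex conjugation symmetry.

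The main obstacle is case (v) of stage one, which requires a semiclassical propagation argument of Isozaki--Kitada type near the characteristic set $\{|\x|=1\}$, together with the commutation with the transverse operator used in forms (iii) and (iv). The adaptation to the wave guide is simplified by the fact that every microlocal cut-off $\Opwx$ acts on $T^*\R^\dd$ only; the propagation estimate then reduces to the corresponding Euclidean statement in $x$ with $y$ treated as a parameter, and the wave-guide geometry enters substantively only via the base $L^2$ estimate of case (ii), which in turn relies on the transverse spectral gap given by Theorem \ref{th-gap-Tah}.
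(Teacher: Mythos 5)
Your treatment of cases (i) and (ii), your variational-identity bootstrap to $H^1$ (taking $\Re\innp{\Hah u}{u}$, using that the boundary term is purely imaginary), and your duality argument for $\b_1=1$ are all in the spirit of the paper's proof. The genuine gap is in your claimed reduction of cases (iii) and (iv) to case (ii), and this is where the core of the proof lives.

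In case (iii) the outer weight on the side of $\Opwx(b_-)$ is $\pppg x^{\d-1-\rho}$. Rewriting $\pppg x^{\d-1-\rho}\Opwx(b_-)\Rh\pppg x^{-\d}$ as a bounded operator composed with $\pppg x^{-\d}\Rh\pppg x^{-\d}$ would require $\pppg x^{\d-1-\rho}\Opwx(b_-)\pppg x^{\d}$ --- essentially $\pppg x^{2\d-1-\rho}\Opwx(b_-)$ after commutators --- to be bounded on $L^2$. Since $\d>\frac12$ is given (not chosen to be close to $\frac12$) and $\rho>0$ can be small, the exponent $2\d-1-\rho$ is in general strictly positive, so that operator is unbounded, and your reduction fails. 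The whole point of forms (iii)--(v) in Definition \ref{def-Rcha} is that the weight on the microlocally incoming (resp.\ outgoing) side may be \emph{much weaker} than $\pppg x^{-\d}$, and even growing (case (v) allows arbitrary $\d_\pm\in\R$); this gain is not bookkeeping but a genuine semiclassical propagation estimate. What the paper does instead is write $(\Hah-\z)^{-1}$ as $\frac ih\int_0^{\infty} e^{-it(\Hah-\z)/h}\,dt$, factor out the transverse contraction semigroup $e^{-it\Tah/h}$ using commutativity, and reduce to the purely Euclidean estimate $\nr{\pppg x^{\d-1-\rho}\Opwx(b_-)e^{-it\LDh/h}\pppg x^{-\d}}_{\Lc(L^2(\R^\dd))}\lesssim \pppg t^{-1-\rho}$, which is Proposition 3.2 of \cite{wang88}. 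The same propagator-plus-commutation mechanism underlies case (v) via Proposition 3.5 there, so that case is not an isolated ``obstacle'' but the two-sided variant of the same tool you omit in (iii)--(iv). A secondary issue: in your $H^1$ bootstrap for (iii)--(iv), after differentiating, the commutator $[\LDh,\Opwx(b_\mp)]$ produces an incoming/outgoing symbol with one extra power of $\pppg x^{-1}$; handling a general $\d$ then requires an induction lowering $\d-1-\rho$ past $\frac12$, which your sketch does not account for.
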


\begin{proof}
\stepp We begin with the estimates in $\Lc(L^2(\O))$. If $(\Rcha)$ is of the form \eqref{f-Rcha-0} or \eqref{f-Rcha-1}, then this is just Proposition \ref{prop-small-long-freq} or \ref{prop-high-freq} rewritten with semiclassical notation. We consider the case \eqref{f-Rcha-2}. Let $\z \in \C_+$. The operator $\Tah$ commutes with $\LDh$ and any pseudo-differential operator with respect to the $x$ variable so we can write 
\begin{eqnarray} \label{estim-entrante}
\lefteqn{\nr{\pppg x^{\d-1-\rho} \Opwx(b_-) (\Hah - \z)\inv \pppg x^{-\d}}_{\Lc(L^2(\O))}}\\
\nonumber
&& \leq \nr{ \frac i h \int_{0}^{+\infty} \pppg x^{\d-1-\rho} \Opwx(b_-) e^{-\frac {it} h (\Hah - \z)} \pppg x^{-\d}\, dt}_{\Lc(L^2(\O))}\\
\nonumber
&& \leq \frac 1 h \int_{0}^{+\infty} \nr{\pppg x^{\d-1-\rho} \Opwx(b_-) e^{- \frac {it} h \LDh} \pppg x^{-\d}}_{\Lc(L^2(\R^\dd))} dt.
\end{eqnarray}
By Proposition 3.2 in \cite{wang88} we have 
\[
\nr{\pppg x^{\d-1-\rho} \Opwx(b_-) e^{- \frac {it} h \LDh} \pppg x^{-\d}}_{\Lc(L^2(\R^\dd))} \lesssim \pppg t^{-1-\rho}.
\]
It only remains to take the limit $\z \to 1$ to conclude after integration over $t \geq 0$. The proof for the cases \eqref{f-Rcha-3} and \eqref{f-Rcha-4} follow the same lines, using the second estimate of Proposition 3.2 and Proposition 3.5 in \cite{wang88}.

\stepp Now we consider the estimates in $\Lc(L^2(\O),H^1(\O))$. The domain $\Dom(\Hah)$ is invariant by pseudo-differential operators in the $x$-variable with bounded symbols, so for $\f \in L^2(\O)$ we have $\Rcha \f \in \Dom(\Hah)$ and hence 
\begin{align} \label{eq-nabla-Re}
\nr{\nabla \Rcha \f}_{L^2(\O)}^2 = \frac 1 {h^2} \Re \innp{\Hah \Rcha\f}{\Rcha \f}_{L^2(\O)}.
\end{align}
We consider the case \eqref{f-Rcha-1}. Then we have 
\begin{eqnarray*}
\lefteqn{\innp{\Hah \Rcha\f}{\Rcha \f}_{L^2(\O)}}\\
&& = \innp{[\LDh,\pppg x^{-\d}]\Rh \pppg x^{-\d} \f}{\Rcha \f} + \innp{\pppg x^{-\d} (\Hah-1) \Rh \pppg x^{-\d} \f}{\Rcha \f} + \nr{\Rcha \f}_{L^2(\O)}^2\\
&& \lesssim \frac {\nr{\f}_{L^2(\O)}^2}{h^2} + h \nr{\Rcha \f}_{H^1(\O)} \nr{\f}_{L^2(\O)}.
\end{eqnarray*}
For $h$ small enough we obtain
\begin{align*} 
\nr{\nabla \Rcha \f}^2 \lesssim \frac {\nr{\f}^2}{h^4}.
\end{align*}
We proceed similarly for the other cases. We only have to be careful with the commutators of the form $[\LDh,\Opwx(b_-)]$. For instance for the case \eqref{f-Rcha-2}, the commutator $[\LDh,\Opwx(b_-)]$ is a pseudo-differential operator whose symbol is supported in an incoming region and decays at least like $\pppg x\inv$. Thus we can use the case \eqref{f-Rcha-1} if $\d - 1 - \rho < \frac 12$. Then we can prove by induction on $N \in \N$ the estimate for the case \eqref{f-Rcha-2} when $\d - 1 - \rho < \frac 12 + N$. 

\stepp All the estimates which we have proved have analogs if we replace $\Rh$ by its adjoint and if we change the roles of the symbols $b_-$ and $b_+$. We also have to consider negative times in \eqref{estim-entrante} and write
\[
\big(\Hah^* - \bar \z \big)\inv = -\frac ih \int_0^{+\infty} e^{\frac {i\th}h (\Hah^* - \bar \z)} \, d\th.
\]
This gives for instance for $b_- \in \symb_-(R,\n,\s_-)$
\[
\nr{\pppg x^{-\d} \Rh^* \Opwx(b_-) \pppg x^{\d-1-\rho} }_{\Lc(L^2(\O))} \leq \frac c h.
\]
We also have estimates for $\Rcha^*$ in $\Lc(L^2(\O),\HuO)$. Taking the adjoints gives the required estimates for $\Rcha$ in $\Lc(\HuOp,L^2(\O))$. Finally for the estimates in $\Lc(\HuOp,\HuO)$ we proceed as above, estimating $\f$ in $\HuOp$.
\end{proof}

\begin{proposition} \label{prop-Rcha}
Let $p \in \N^*$, $k_1,\dots,k_p \in \N^*$ and $k = k_1+ \dots + k_p$. Let $(\Rcha)$ in $\mathfrak R^{k_1,\dots,k_p}$. Let $\b_1,\b_2 \in \{0,1\}$. Then there exist $h_0 > 0$ and $c \geq 0$ such that for $h \in ]0,h_0]$ we have
\begin{equation} \label{estim-Rcha}
\nr{\Rcha}_{\Lc(H^{\b_1}(\O)',H^{\b_2}(\O)))} \leq \frac c {h^{k + \b_1 + \b_2}}
\end{equation}
and for all $\f \in L^2(\O)$:
\begin{equation} \label{estim-Rcha-qa}
q_a \big( \Rcha \f \big) \leq \frac {c \nr \f^2}{h^{2k}}.
\end{equation}
\end{proposition}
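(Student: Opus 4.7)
I would prove the operator bound \eqref{estim-Rcha} and the boundary form bound \eqref{estim-Rcha-qa} together by induction on $p$, the number of resolvent blocks in $\Psi_h$. The structural observation that drives everything is that $\Th_a$ is---up to a factor $(ih)^{-1}$---the dissipative part of $\Rh^{-1}$, which yields
\[
q_a(\Rh v) = -\frac 1 h \Im\innp{v}{\Rh v}
\]
on $\HuO \times \HuOp$. A direct bound on a product $\tilde A\,\Th_a\,\tilde B$ using Proposition \ref{prop-incoming-outgoing} together with $\Th_a \in \Lc(\HuO,\HuOp)$ loses a factor $h^{-2}$ per inserted $\Th_a$ compared with what is claimed; the dissipative identity above, exploited via a Cauchy--Schwarz inequality for the positive boundary form $q_a$, is precisely what recovers this factor.

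\textbf{Base case $p = 1$.} Here $\Psi_h = \Rh^k$. The bound \eqref{estim-Rcha} follows by iterating Proposition \ref{prop-incoming-outgoing}: I would insert a pseudodifferential partition of unity in the $x$-variable between consecutive factors, decomposing each intermediate resolvent into incoming, outgoing and frequency-localized pieces of the forms \eqref{f-Rcha-0}--\eqref{f-Rcha-4}; each piece is covered by Proposition \ref{prop-incoming-outgoing} and contributes $c/h$, and the outer weights $\pppg{x}^{-\d}$ are split and transported through the symbols to yield $c/h^k$ in total. Estimate \eqref{estim-Rcha-qa} then follows from the dissipative identity applied to $w = \Rh v$ with $v = \Rh^{k-1}\pppg{x}^{-\d}\f$, controlling the resulting pairing by the operator bound just proved; the commutator $[\Hah,\pppg{x}^{-\d}]\Rh v$ produced when absorbing the outer weight is of strictly lower order in $h$ and is absorbed by the same operator bound with one fewer $\Rh$ factor.

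\textbf{Inductive step $p \geq 2$.} Pick any $\Th_a$ factor and split $\Psi_h = \tilde A\,\Th_a\,\tilde B$, so that $\tilde A$ and $\tilde B$ contain $k_{\tilde A} + k_{\tilde B} = k$ resolvents and strictly fewer $\Th_a$'s than $\Psi_h$. For $\f,\psi \in L^2(\O)$, the Cauchy--Schwarz inequality for the measure $a\,d\sigma$ on $\partial \O$ gives
\begin{multline*}
|\innp{\Rcha \f}{\psi}_{L^2}| = \left|\int_{\partial\O} a \,(\tilde B\,\pppg{x}^{-\d}\f)\,\overline{\tilde A^*\,\pppg{x}^{-\d}\psi}\,\right|\\
\leq q_a\big(\tilde B\,\pppg{x}^{-\d}\f\big)^{1/2}\, q_a\big(\tilde A^*\,\pppg{x}^{-\d}\psi\big)^{1/2}.
\end{multline*}
A suitable one-sided-weighted reformulation of the inductive hypothesis \eqref{estim-Rcha-qa}, applied to $\tilde B$ and $\tilde A^*$ (each with strictly fewer $\Th_a$ factors), bounds each factor by $c\, h^{-2k_{\tilde B}}\nr{\f}^2$ and $c\, h^{-2k_{\tilde A}}\nr{\psi}^2$ respectively, yielding \eqref{estim-Rcha} at the optimal rate $h^{-k}$. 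Estimate \eqref{estim-Rcha-qa} at step $p$ then follows exactly as in the base case, using the newly-established operator bound. Finally, the estimates in $\Lc(H^{\b_1}(\O)', H^{\b_2}(\O))$ for $\b_1,\b_2 \in \{0,1\}$ are obtained from the $L^2$--$L^2$ case via $\nr{\nabla \Rcha \f}^2_{L^2} = h^{-2}\Re\innp{\Hah\Rcha\f}{\Rcha\f}$, exactly as at the end of the proof of Proposition \ref{prop-incoming-outgoing}.

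\textbf{Main obstacle.} The key technical issue is that the natural statement \eqref{estim-Rcha-qa} carries the outer weight on $\Rcha$, whereas the Cauchy--Schwarz step above requires $q_a$ bounds on $\tilde B\,\pppg{x}^{-\d}\f$ and $\tilde A^*\,\pppg{x}^{-\d}\psi$, which have a weight on only one end of the product. Carrying through the induction therefore requires a reformulated version of the $q_a$ bound admitting an asymmetric (one-sided) weight, together with a case analysis over the five forms \eqref{f-Rcha-0}--\eqref{f-Rcha-4} ensuring that $\tilde A$, $\tilde B$, and the weight commutators generated when moving $\pppg{x}^{-\d}$ across them genuinely belong to classes covered by the hypothesis---with incoming/outgoing cut-offs staying on the correct outer ends of each sub-product, and each commutator gaining one power of $h$ so as to be absorbed by a further application of the induction with one fewer $\Rh$ factor.
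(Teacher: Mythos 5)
Your overall strategy matches the paper's: induction on the number $p$ of resolvent blocks, exploiting the dissipative identity $q_a(u)=-\frac1h\Im\innp{(\Hah-1)u}{u}$, and a Cauchy--Schwarz step for the positive boundary form $q_a$ at each inserted $\Th_a$. You have also correctly located the key obstruction: after Cauchy--Schwarz, the two factors $\tilde B\,\pppg x^{-\d}\f$ and $\tilde A^*\pppg x^{-\d}\p$ carry a decaying weight on only one end, so the inductive hypothesis \eqref{estim-Rcha-qa} (which is formulated for operators in the classes \eqref{f-Rcha-0}--\eqref{f-Rcha-4}, i.e.\ with compensating structure on \emph{both} outer ends) does not apply directly.

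Where your proposal is genuinely incomplete is in the resolution of that obstruction. You postulate a ``reformulated $q_a$ bound with a one-sided weight'' as an extra inductive ingredient, proved by a ``case analysis over the five forms'' and by moving $\pppg x^{-\d}$ across the sub-products. But that auxiliary lemma is not established in your proposal, and the obvious direct attack fails: for $k_1=1$, applying the dissipative identity to $\Psi_h\pppg x^{-\d}\f$ immediately regenerates $q_a(\Psi_h\pppg x^{-\d}\f)$ on the right-hand side after a single Cauchy--Schwarz, so the argument is circular unless one has additional input. The paper's proof does not need a one-sided-weight $q_a$ bound at all; instead, it inserts the pseudodifferential decomposition \eqref{dec-0+-} of the identity \emph{at the chosen $\Th_a$ factor} (this is legitimate because $\tilde\h_0(\LDh)$, the spatially localized piece, and the incoming/outgoing cut-offs $\Opwx(\b_\pm)$ all commute with $\Th_a$). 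This produces exactly four terms, and in each the sub-products on either side of $\Th_a$ fall again into one of the classes \eqref{f-Rcha-0}--\eqref{f-Rcha-4}: the low-frequency cut-off $\tilde\h_0(\LDh)$ plays the role of a ``free'' left end (form \eqref{f-Rcha-0}), the compactly $x$-supported symbol absorbs a factor $\pppg x^\d$, and the in/out cut-offs provide the missing polynomial decay. It is this insertion, not a strengthened inductive hypothesis, that closes the induction. Your discussion of the base case $p=1$ already uses the same pseudodifferential partition of unity between consecutive $\Rh$ factors; the missing step is to observe that the very same insertion, placed at the $\Th_a$ locations, is what allows Cauchy--Schwarz to be applied to sub-products that are genuinely in the covered classes.

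A second, smaller gap: you assert that \eqref{estim-Rcha-qa} ``then follows exactly as in the base case, using the newly-established operator bound.'' For $k_1\geq2$ this is fine, but for $k_1=1$ the bounds on $\nr{\nabla\Rcha\f}$ and on $q_a(\Rcha\f)$ feed into each other (each appears on the right of the other's a priori inequality), and they must be solved as a coupled system; the paper handles this explicitly in its final step. Once \eqref{dec-0+-} is inserted and the $k_1=1$ coupling is made explicit, your outline does become essentially the paper's proof.
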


\begin{proof}
\stepp We begin with the case $p=1$, which means that $\Psi_{h} = \Rh^k$. We first consider the estimates in $\Lc(L^2(\O))$. If $\Rcha$ is of the form \eqref{f-Rcha-0}, then we write $\tilde \h_1 = \tilde \h_1 \tilde \h_2 \dots \tilde \h_k$ where $\tilde \h_j \in C_0^ \infty(\R,[0,1])$ is supported in $]-\g,\g[$ and equal to 1 on a neighborhood of $\supp(\tilde \h_1)$ for all $j \in \Ii 2 k$. The operator $\tilde \h_j(\LDh)$ commutes with $\Rh$ for all $j \in \Ii 1 k$ so by Proposition \ref{prop-small-long-freq}
\[
\nr{\h_1(\LDh) \Rh^k}_{\Lc(L^2(\O))} \leq \prod _{j=1}^k \nr{\tilde \h_j (\LDh) \Rh}_{\Lc(L^2(\O))} \lesssim \frac 1 {h^k}.
\]
The cases \eqref{f-Rcha-1}-\eqref{f-Rcha-4} are proved by induction on $k$. The strategy is quite standard. We recall the idea, which will also be used to get the general result. By proposition \ref{prop-incoming-outgoing}, we already have the result when $k = 1$, so we assume that $k \geq 2$. Let $\h_0 \in C_0^\infty(\R^\dd)$ be equal to 1 on a neighborhood of 0. Let $\tilde \h_0 \in C_0^\infty(\R,[0,1])$ be equal to 1 on a neighborhood of 0. Let $\h_+ \in C_0^\infty([-1,1],[0,1])$ be equal to 0 on a neighborhood of -1 and equal to 1 on a neighborhood of 1. Let $\h_- = 1 - \h_+$ and, for $(x,\x) \in \R^{2\dd}$:
\[
\b_\pm (x,\x) = (1-\h_0)(x) (1-\tilde \h_0)(\x^2)  \h_\pm \left( \frac {\innp x \x}{\abs x \abs \x} \right).
\]
Then $\b_\pm$ belongs to $\Sc_\pm (R,\n,\s_\pm)$ for some $R > 0$, $\n > 0$ and $\s_\pm \in ]-1,1[$ and we have 
\begin{equation} \label{dec-0+-}
(1-\tilde \h_0)(\LDh) = \Opwx \big(\h_0(x) (1-\tilde \h_0(\x^2))\big) + \Opwx (\b_+) + \Opwx(\b_-).
\end{equation}
Let $\rho \in \big] 0 , \d - k + \frac 12 \big[$. We have
\begin{eqnarray*}
\lefteqn{\nr{\pppg x^{-\d} \Rh^k \pppg x^{-\d}}_{\Lc(L^2(\O))} \lesssim \nr{\pppg x^{-\d} \tilde \h_0(\LDh) \Rh^k \pppg x^{-\d}}}\\
&& \quad + \nr{\pppg x^{-\d} \Rh \pppg x^{-\d}} \nr{\pppg x^{-\d} \Rh^{k-1} \pppg x^{-\d}}\\
&& \quad + \nr{\pppg x^{-\d} \Rh \Opwx(\b_+) \pppg x^{\d-1-\rho}} \nr{\pppg x^{1+ \rho -\d} \Rh^{k-1} \pppg x^{-\d}}\\
&& \quad + \nr{\pppg x^{-\d} \Rh \pppg x^{-\d+k-1+\rho}} \nr{\pppg x^{\d-k+1-\rho} \Opwx(\b_-) \Rh^{k-1} \pppg x^{-\d}}.
\end{eqnarray*}
The last three terms are given by the product of the norm of an operator in $\mathfrak{R}^1$ and the norm of an operator in $\mathfrak R^{k-1}$, so by the case \eqref{f-Rcha-0} and the inductive assumption we get 
\[
\nr{\pppg x^{-\d} \Rh^k \pppg x^{-\d}}_{\Lc(L^2(\O))} \lesssim \frac 1 {h^k}.
\]
We prove the estimate in the other cases similarly. For instance for \eqref{f-Rcha-2} we write
\begin{eqnarray*}
\lefteqn{\nr{\pppg x^{\d-k-\rho} \Opwx(b_-) \Rh^k \pppg x^{-\d}}\lesssim \nr{\pppg x^{\d-k-\rho} \Opwx(b_-) \tilde \h_0 (\LDh)\Rh^k \pppg x^{-\d}}}\\
&& + \nr{\pppg x^{\d-k-\rho} \Opwx(b_-) \Rh \pppg x^{-\d}} \nr{\pppg x^{-\d} \Rh^{k-1} \pppg x^{-\d}}\\
&& + \nr{\pppg x^{\d-k-\rho} \Opwx(b_-) \Rh \Opwx(\b_+) \pppg x^{\d}} \nr{\pppg x^{-\d} \Rh^{k-1} \pppg x^{-\d}}\\
&& + \nr{\pppg x^{\d-k -\rho} \Opwx(b_-) \Rh \pppg x^{k-1-\d+\frac \rho 2}} \nr{\pppg x^{\d-k+1- \frac \rho 2} \Opwx(\b_-) \Rh^{k-1} \pppg x^{-\d}}.
\end{eqnarray*}
For the first term we observe that if $\tilde \h_0$ is supported close enough to 0 then $\Opwx(b_-) \tilde \h_0(\LDh)$ is a pseudo-differential operator whose symbol decays like any power of $h$ and any power of $\pppg x\inv$. If $\b_+$ was suitably chosen then we can conclude again by induction for the last three terms. We proceed similarly for \eqref{f-Rcha-3} and \eqref{f-Rcha-4}, which gives the estimates in $\Lc(L^2(\O))$. For the general estimates in $\Lc(H^{\b_1}(\O)',H^{\b_2}(\O))$ we proceed as in the proof of Proposition \ref{prop-incoming-outgoing}.

\stepp Now we prove \eqref{estim-Rcha-qa} for $p= 1$. Let $\f \in L^2(\O)$. As in \eqref{eq-nabla-Re} we write 
\begin{align} \label{eq-qa-Im}
q_a (\Rcha \f) = -\frac 1 h \Im \innp{(\Hah-1) \Rcha\f}{\Rcha \f}.
\end{align}
Then we proceed as in the proof of Proposition \ref{prop-incoming-outgoing}. For instance in the case \eqref{f-Rcha-1} we obtain
\begin{align} \label{estim-qa}
q_a (\Rcha \f) \lesssim \frac {\nr{\f}}{h^{k+1}} \left( \nr{\big[\LDh ,\pppg x^{-\d} \big] \Rh^k \pppg x^{-\d} \f} + \nr{\pppg x^{-\d} \Rh^{k-1} \pppg x^{-\d} \f} \right) \lesssim \frac {\nr{\f}^2}{h^{2k}}.
\end{align}
The other cases are similar, and this concludes the proof of the proposition for $p = 1$.

\stepp Then we proceed by induction on $p$. So let $p \geq 2$ and $k^\sharp = k_2 + \dots + k_p$. We consider the estimate in $\Lc(L^2(\O))$ for the case \eqref{f-Rcha-1}. We set $\Rh^\sharp = \Rh^{k_2}\Th_a \dots \Th_a \Rh^{k_p}$. We define $\tilde \h_2$ as at the beginning of the proof ($\tilde \h_0 \tilde \h_2 = \tilde \h_0$). Since $\tilde \h_0(\LDh)$ and the three operators in the right-hand side of \eqref{dec-0+-} commute with $\Th_a$ we can write for $\f,\p \in L^2(\O)$ and $\rho > 0$ small enough:
\begin{eqnarray*}
\lefteqn{\innp{\pppg x^{-\d} \Rh^{k_1} \Th_a \Rh^\sharp \pppg x^{-\d} \f} {\p} = \innp{   \Th_a \tilde \h_2(\LDh) \Rh^\sharp \pppg x^{-\d}\f} {\tilde \h_0(\LDh)(\Rh^*)^{k_1}\pppg x^{-\d}\p} }\\
&& + \innp{ \Th_a \pppg x^\d \Opwx \big(\h_0(x) (1-\tilde \h_0)(\abs\x^2) \big) \Rh^\sharp \pppg x^{-\d}\f} { \pppg x^{-\d} (\Rh^*) ^{k_1}\pppg x^{-\d}\p}\\
&& + \innp{\Th_a \pppg x^{k_1 + \rho -\d}\Rh^\sharp \pppg x^{-\d} \f} {\pppg x^{\d-k_1-\rho}  \Opwx(\b_+)  (\Rh^*)^{k_1}\pppg x^{-\d}\p}\\
&& + \innp{ \Th_a \pppg x^{\d-k^\sharp-\rho} \Opwx(\b_-) \Rh^\sharp \pppg x^{-\d} \f} { \pppg x^{k^\sharp + \rho -\d} (\Rh^*)^{k_1}\pppg x^{-\d}\p}.
\end{eqnarray*}
Since the form $q_a$ is non-negative we can apply the Cauchy-Schwarz inequality in each term. 
If $\rho$ is small enough, then \eqref{estim-Rcha-qa} applied to $\Rh^{k_1}$ and $\Rh^\sharp$ (and their adjoints) gives \eqref{estim-Rcha}. Again, the other cases are proved similarly.

\stepp Now we prove the estimates in $\Lc(L^2(\O),H^1(\O))$ as we did in the proof of Proposition \ref{prop-incoming-outgoing}. We first assume that $k_1 = 1$ and consider the case \eqref{f-Rcha-1}. We start from \eqref{eq-nabla-Re}. For $\f \in L^2(\O)$ we obtain 
\begin{align*}
\nr{\nabla \Rcha \f}^2
& \leq \frac 1 {h^2} \nr{\Rcha \f}^2 + \frac 1 {h^2} \nr{\big[\LDh, \pppg{x}^{-\d} \big] \Rh \Th_a \Rh^\sharp \pppg{x}^{-\d}\f} \nr{\Rcha\f}\\
& \quad + \frac 1 {h^2}\abs{\innp{ \Th_a \pppg{x}^{-\d} \Rh^\sharp \pppg{x}^{-\d}\f}{\Rcha \f}}. 
\end{align*}
By the Cauchy-Schwarz inequality and the already available estimates we get
\begin{align*}
\nr{\nabla \Rcha \f}^2 
& \lesssim \frac {\nr \f^2} {h^{2k+2}} + \frac {\nr{\Rcha \f}_{H^1(\O)} \nr{\f}}{h^{k+1}} + \frac 1 {h^2} q_a \left(\pppg{x}^{-\d}(\Rh^\sharp)^* \pppg{x}^{-\d}\f \right)^{\frac 12} q_a \left(\Rcha \f \right)^{\frac 12}\\
& \lesssim \frac {\nr \f^2} {h^{2k+2}} + \frac {\nr{\nabla \Rcha \f} \nr{\f}}{h^{k+1}} + \frac {\nr\f} {h^{k + 1}} q_a \left(\Rcha \f \right)^{\frac 12}.
\end{align*}
On the other hand, starting from \eqref{eq-qa-Im}, we similarly obtain
\[
q_a(\Rcha \f) \lesssim \frac {\nr{\Rcha \f}_{H^1(\O)} \nr \f}{h^{k-1}} + \frac {\nr\f} {h^{k}} q_a \left(\Rcha \f \right)^{\frac 12}.
\]
Together, these two inequalities yield 
\[
\nr{\nabla\Rcha \f} \lesssim \frac {\nr \f}{h^{k+1}} \qandq q_a(\Rcha \f) \lesssim \frac{\nr{\f}^2}{h^{2k}}.
\]
Then we finally obtain the estimates in $\Lc(\HuOp,L^2(\O))$ and $\Lc(\HuOp,\HuO)$ as we did in the proof of Proposition \ref{prop-incoming-outgoing}. This concludes the proof when $k_1 = 1$. Then we proceed by induction on $k_1$, following the same idea. Notice that for $k_1 \geq 2$ we no longer have to prove the estimate on $\nr{\nabla \Rcha \f}$ and $q_a(\Rcha \f)$ simultaneously.
\end{proof}

Now we can finish the proof of Theorem \ref{th-high-freq}:

\begin{proof} [Proof of Theorem \ref{th-high-freq}]
Let $\b_1,\b_2 \in \{0,1\}$. By Proposition \ref{prop-der-tRaz}, \eqref{eq-res-semiclass} and Proposition \ref{prop-Rcha} we have for any $\Nder \in \N$ and $\d > \Nder + \frac 12$
\begin{equation} \label{estim-der-tRat}
\nr{\h_\t(\LD)\tilde R_a^{(\Nder)}(\t)}_{\Lc(H^{\b_1}(\O)',H^{\b_2}(\O))} +  \nr{\pppg x^{-\d} \tilde R_a^{(\Nder)}(\t) \pppg x^{-\d}}_{\Lc(H^{\b_1}(\O)',H^{\b_2}(\O))} \lesssim  \t ^{\b_1 + \b_2 -1}.
\end{equation}
Let $\Nder \in \N$ and $\d > \Nder + \frac 12$. We take the derivative of order $\Nder$ in \eqref{res-Ac-U-bis}. With \eqref{estim-der-tRat} we obtain for $\abs \t \geq 1$ and $U = (u,v) \in \HH$
\[
\nr{(\Ac-\t)^{-\Nder-1}U}_{\EE^{-\d}} \lesssim \nr{\pppg x^{-\d} \nabla u} + \nr{\pppg x^\d \tilde \D u}_{\HuOp} + \nr{\pppg x^\d v}_{L^2(\O)}.
\]
But for $v \in C_0^\infty(\bar \O)$ we have 
\begin{align*}
\innp{\pppg x^\d \tilde \D u}{v}
& = \innp{ \nabla u}{\nabla \pppg x^\d v} \lesssim  \left(\innp{ \nabla u}{\pppg x^\d \nabla v} + \innp{ \nabla u}{\pppg x^{\d-1 } v} \right)\\
& \lesssim \nr{\pppg x^\d \nabla u}_{L^2(\O)} \nr{v}_{\HuO},
\end{align*}
so 
\[
\nr{\pppg x^\d \tilde \D u}_{\HuOp} \lesssim \nr{\pppg x^\d \nabla u}_{L^2(\O)}.
\]
This proves that 
\[
\nr{(\Ac-\t)^{-\Nder-1}}_{\EE^{-\d}} \lesssim \nr{\pppg x^{\d} \nabla u} + \nr{\pppg x^\d v}_{L^2(\O)} = \nr{U}_{\EE^\d},
\]
which gives the first estimate of Theorem \ref{th-high-freq}. The other estimates are proved similarly.
\end{proof}

\appendix

\section{Spectral gap for the transverse operator} \label{sec-gap-Taz}

In this appendix we give a proof of Theorem \ref{th-gap-Tah}. For this we will use semiclassical technics and in particular the contradiction argument of \cite{lebeau96}. Notice that in this section we only consider functions on $\o$ or $\R^\nn$, so without ambiguity we can simply denote by $\D$ the Laplacian with respect to the variable $y$.\\
\\

By unique continuation, it is not difficult to see that for $\a \in \R \setminus \singl 0$ and $h > 0$ the operator $\Tah$ has no real eigenvalue. Then, if we can prove that the resolvent $(\Tah-\l)\inv$ for $\l \in \R$ close to 1 is of size $O(h\inv)$, the standard perturbation argument proves that there is a spectral gap of size $O(h)$ and the resolvent is of size $O(h\inv)$ for $\l$ in this region. Thus it is enough to prove Theorem \ref{th-gap-Tah} for $\l$ real. It is also enough to prove the result for $\a$ real, but this is less clear:

\begin{lemma} \label{lem-perturb-Th}
Assume that there exist $h_0 \in ]0,1]$, $\g \in ]0,1[$ and $c \geq 0$ such that for $h \in ]0,h_0]$ and $\a ,\l \in ]1-\g , 1 + \g[$ we have 
\[
\nr{(\Tah -\l)\inv}_{\Lc(L^2(\o))} \leq \frac c h.
\]
Then the statement of Theorem \ref{th-gap-Tah} holds (maybe with different constants $h_0$, $\g$ and $c$).
\end{lemma}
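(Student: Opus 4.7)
The plan is to treat Theorem~\ref{th-gap-Tah} as an $O(h)$ form-perturbation of the hypothesis. Write $\a = \a_0 + ih\b$ and $\z = \l + ih\m$ with $\a_0,\l$ real close to $1$ and $\b,\m$ real and bounded. On $\Huo$, the sesquilinear form associated with $\Tah - \z$ is
\[
\mathfrak q_{\a,h,\z}(u,v) = h^2 \int_\o \nabla u \cdot \nabla \bar v - ih\a \int_{\partial \o} u \bar v - \z \int_\o u \bar v,
\]
so that $\mathfrak q_{\a,h,\z} - \mathfrak q_{\a_0,h,\l}$ equals $h^2\b \int_{\partial\o} u\bar v - ih\m \int_\o u\bar v$ and corresponds to an operator $P_h \in \Lc(\Huo,\Huop)$ of norm $O(h)$. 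If $u$ solves $(\Tah-\z)u = f$, setting $R_0 = (T_{\a_0,h}-\l)^{-1}$ (well defined and of norm $O(h^{-1})$ on $L^2$ by hypothesis) one obtains
\begin{equation} \label{eq-plan-ident}
u = R_0 f - h^2\b R_0 \Theta u + ih\m R_0 u,
\end{equation}
where $\Theta\in\Lc(\Huo,\Huop)$ is the boundary trace pairing $\innp{\Theta u}{v} = \int_{\partial\o} u\bar v$.

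To exploit \eqref{eq-plan-ident} I will need three bounds on $R_0$. The first, $\nr{R_0}_{L^2\to L^2}\leq c/h$, is the hypothesis itself. The second, $\nr{R_0}_{L^2\to\Huo}\lesssim h^{-2}$, follows from the real part of the form identity $\mathfrak q_{\a_0,h,\l}(u,u)=\innp{\varphi}{u}$ applied to $u = R_0\varphi$. The third, $\nr{R_0}_{\Huop\to L^2}\lesssim h^{-2}$, follows by duality after observing that complex conjugation is an antilinear isometry from $\Dom(T_{\a_0,h})$ onto $\Dom(T_{-\a_0,h})$, so that the hypothesis also applies to the $L^2$-adjoint $R_0^*$. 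The final and most delicate ingredient is control of $\nr u_{L^2(\partial\o)}$ directly from the full equation: taking imaginary parts in $\mathfrak q_{\a,h,\z}(u,u)=\innp{f}{u}$ gives, since $\a_0$ stays bounded away from $0$,
\[
\nr u_{L^2(\partial\o)}^2 \leq \frac C h \nr f\nr u + C\abs\m\nr u^2.
\]

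Plugging these three resolvent bounds together with this boundary estimate into \eqref{eq-plan-ident} yields
\[
\nr u \leq \frac c h\nr f + c\abs\b\bigl(h^{-1/2}\nr f^{1/2}\nr u^{1/2} + \abs\m^{1/2}\nr u\bigr) + c\abs\m\nr u.
\]
If $\g$ is chosen small enough, Young's inequality on the mixed term and absorption of the $\nr u$-linear terms produce $\nr u \leq (C/h)\nr f$ with $C$ independent of $h,\a,\z$ in the admissible polydisc. The step I expect to be the main obstacle is precisely this absorption: the naive operator-norm bound $h^2\nr{R_0\Theta}_{\Huo\to L^2}\lesssim 1$ gives no smallness at all, and it is only thanks to the boundary energy identity above — which uses crucially that $\a_0$ stays away from $0$ — that the term $h^2\b R_0\Theta u$ can be absorbed. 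The a priori estimate obtained so far is an injectivity statement; since $\o$ is bounded, $\Tah - \z$ has compact resolvent and is Fredholm of index $0$, hence invertible as soon as it is injective, and the a priori bound becomes the resolvent bound claimed by Theorem~\ref{th-gap-Tah}.
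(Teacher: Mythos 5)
Your proof is correct and rests on the same three ingredients as the paper's: the hypothesis $L^2$ resolvent bound, the extension to $\Lc(L^2,\Huo)$ and $\Lc(\Huop,L^2)$ bounds via the real part of the form identity (and duality through complex conjugation), and the control of the boundary trace via the imaginary part. What you do differently is the packaging. The paper factorizes $(-h^2\D - ih\Th_{\a-ihs}-\l) = (-h^2\D - ih\Th_\a - \l)\big(1 - h\tilde R_a\Th_{hs}\big)$ and inverts the second factor in $\Lc(\Huo)$ by a Neumann series, with the crucial smallness $\nr{h\tilde R_a\Th_{hs}}_{\Lc(\Huo)} \lesssim \sqrt{hs}$ coming from a Cauchy--Schwarz for the positive boundary form $q^\o_{hs}$ combined with the imaginary-part estimate applied to $\tilde R_a^*\f$ for arbitrary $\f\in\Huop$. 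You instead derive the a priori estimate $\nr u \lesssim h^{-1}\nr f$ directly for a putative solution $u$ from the fixed-point identity $u = R_0 f - h^2\b R_0\Theta u + ih\m R_0 u$, absorbing the perturbation by Young's inequality and the boundary-trace bound on $u$ itself, and then invoke compact resolvent and the Fredholm alternative for existence. Both routes work, but yours arrives at the $\Lc(L^2(\o))$ bound directly, which is what the theorem asks for; the paper's Neumann series a priori only produces an $\Lc(\Huop,\Huo)$ bound, and the final displayed norm $\nr{\cdot}_{\Lc(\Huo)} \leq 2c/h$ in the paper appears to be a misprint for $\Lc(L^2(\o))$ (deducing it from the $\Huo$ Neumann series would still need something like your a priori argument). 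One small remark: your estimate $\nr{R_0\Theta u}_{L^2} \lesssim \nr{R_0}_{\Huop\to L^2}\nr u_{L^2(\partial\o)} \lesssim h^{-2}\nr u_{L^2(\partial\o)}$ is a factor of $h$ weaker than what the paper's Cauchy--Schwarz for $q^\o_{hs}$ would give (namely $h^{-1}\nr u_{L^2(\partial\o)}$, via the boundary estimate on $R_0^*\p$); the loss is exactly compensated by the $h^{-1/2}\nr f^{1/2}\nr u^{1/2}$ in your boundary bound, so the absorption closes, but it is the marginal step you correctly flagged.

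Two points you should make explicit to be fully rigorous. First, the absorption requires that $\abs\b$, $\abs\m$ (and not merely $\abs{\a_0-1}$, $\abs{\l-1}$) be small compared to $1/c$ where $c$ is the hypothesis constant; this forces the $\g$ in the conclusion to be taken small, which the lemma allows but which is worth stating since the literal range $ih\,]1-\g,1+\g[$ in Theorem~\ref{th-gap-Tah} does not make $\abs\b,\abs\m$ small. Second, the Fredholm-of-index-zero claim for $\Tah-\z$ at all $\z$ uses that the resolvent set is nonempty (true since $\Re\a>0$ makes $\Tah$ maximal dissipative) and that the resolvent is compact (true since $\o$ is bounded); with these stated, the argument is complete.
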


\begin{proof}
As in the proof of Lemma \ref{lem-res-Ta-H1}, we can check that for $\a ,\l \in ]1-\g,1+\g[$ the resolvent $(\Tah - \l)\inv$ extends to an operator $(-h^2 \D -ih \Th_\a - \l)\inv \in \Lc(\Huop,\Huo)$ and for $\b_1,\b_2 \in \{ 0,1\}$ we have 
\begin{equation} \label{estim-res-Tah}
\nr{(-h^2 \D -ih \Th_\a - \l)\inv}_{\Lc(H^{\b_1}(\o)',H^{\b_2}(\o))} \lesssim \frac 1 {h^{1+\b_1+\b_2}}.
\end{equation}
Let $\a , \l \in ]1-\g,1+\g[$ and $s \in [0,\a]$. In $\Lc(\Huo,\Huop)$ we have
\[
(-h^2 \D -ih \Th_{\a-ihs} - \l) = (-h^2 \D -ih \Th_\a - \l) \left( 1 - h (-h^2 \D -ih \Th_\a - \l)\inv \Th_{hs} \right).
\]
For $v \in \Huo$ and $\f \in \Huop$ we have 
\[
\abs{\innp{(-h^2 \D -ih \Th_\a - \l)\inv \Th_{hs} v}{\f}} \leq q^\o_{hs} (v)^{\frac 12} q^\o_{hs} \big( (-h^2 \D +ih\Th_\a - \l)\inv \f \big)^{\frac 12},
\]
where the form $q^\o$ is defined as $q$ (see \eqref{def-qa}) with $\O$ replaced by $\o$ (we recall that $\Th_\a$ can be viewed as an operator in $\Lc(\Huo,\Huop)$). Since $s \leq \a$ we have $q^\o_{hs} \leq q^\o_{h\a}$. By \eqref{estim-res-Tah} and an equality analogous to \eqref{eq-qa-Im} we obtain 
\begin{eqnarray*}
\lefteqn{\abs{\innp{(-h^2 \D -ih \Th_\a - \l)\inv \Th_{hs} v}{\f}}}\\
&& \leq \sqrt {hs} \nr{v}_{\Huo}  \nr{(-h^2 \D + ih\Th_\a - \l)\inv \f}_{\Huo}^{\frac 12} \nr{\f}_{\Huop}^{\frac 12}\\
&& \lesssim \frac {\sqrt s} h \nr{v}_{\Huo} \nr{\f}_{\Huop}.
\end{eqnarray*}
This proves that for $s\geq 0$ small enough we have 
\[
\nr{h (-h^2 \D -ih \Th_\a - \l)\inv \Th_{hs}}_{\Lc(\Huo)} \leq \frac 12.
\]
Then $(-h^2 \D -ih \Th_{\a-ihs} - \l)$ has an inverse in $\Lc(\Huop,\Huo)$ and 
\[
\nr{(-h^2 \D -ih \Th_{\a-ihs} - \l)\inv}_{\Lc(\Huo)} \leq \frac {2c}h.
\]
We can similarly add an imaginary part of size $O(h)$ to the spectral parameter $\l$.
\end{proof}

By Lemma \ref{lem-perturb-Th} and by density of $C_0^\infty(\o)$ in $L^2(\o)$, it is enough to prove that there exists $\g > 0$, $h_0 \in ]0,1]$ and $c \geq 0$ such that for $h \in ]0,h_0]$, $\a,\l \in ]1-\g,1+\g[$ and $f \in C_0^\infty(\o)$ we have
\begin{equation} \label{estim-gap-Tah}
\nr{(\Tah -\l)\inv f}_{L^2(\o)} \leq  \frac c h \nr{f}_{L^2(\o)} .
\end{equation}
We prove \eqref{estim-gap-Tah} by contradiction. If the statement is wrong, then we can find sequences $\seq h m \in ]0,1]^\N$, $\seq \a m \in \R^\N$, $\seq \l m \in \R^\N$ and $\seq f m \in C_0^\infty(\o)^\N$ such that $h_m \to 0$, $\a_m \to 1$, $\l_m \to 1$ and, if we set $u_m = (\Tahm - \l_m)\inv f_m$, then $\nr{u_m}_{L^2(\o)} = 1$ and $\nr{f_m}_{L^2(\o)} = o(h_m)$. We first notice that by elliptic regularity we have $u \in C^\infty(\bar \o)$ for all $m \in \N$ (but we have no other uniform estimate on $u_m$ than the one in $L^2(\o)$).\\

For $m \in \N$ we consider the function $\tum \in L^2(\R^\nn)$ equal to $u_m$ on $\o$ and equal to 0 outside $\o$. We have $\nr{\tum}_{L^2(\R^\nn)} = 1$ for all $m$. We consider a semiclassical measure for this family: after extracting a subsequence if necessary, there exists a Radon measure $\m$ on $\R^{2\nn} \simeq T^* \R^\nn$ such that for all $q \in C_0^\infty(\R^{2\nn})$ we have 
\begin{equation} \label{lim-semiclass-measure}
\innp{\Opwm(q) \tum}{\tum}_{L^2(\R^\nn)} \limt m \infty \int_{\R^{2\nn}} q \, d\m.
\end{equation}
In order to obtain a contradiction and conclude the proof of Theorem \ref{th-gap-Tah}, we prove that $\m \neq 0$ and $\m = 0$ (see Propositions \ref{prop-mu-not-zero} and \ref{prop-mu-zero}). We first observe that since $\tum = 0$ outside $\o$, the measure $\m$ is supported in $\bar \o \times \R^\nn$.\\

\begin{lemma} \label{lem-traces-o}
We have 
\[
 \int_{\partial \o} \abs{u_m}^2  +  \int_{\partial \o} \abs{h_m \partial_\n u_m}^2 \limt m \infty 0.
\]
Moreover there exists $C \geq 0$ such that for all $m \in \N$
\[
\nr{h_m \nabla u_m}_{L^2(\o)} \leq C.
\]
\end{lemma}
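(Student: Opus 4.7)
The plan is to test the equation $(T_{\a_m,h_m} - \l_m) u_m = f_m$ against $u_m$ itself and integrate by parts, exactly as in the proof of Lemma \ref{lem-traces}, but now working on the section $\o$ and exploiting the smallness of $f_m$ together with the normalization $\nr{u_m}_{L^2(\o)} = 1$ and $\a_m,\l_m \to 1$.

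Concretely, since $u_m \in C^\infty(\bar \o)$, Green's identity yields
\[
\innp{T_{\a_m,h_m} u_m}{u_m}_{L^2(\o)} = h_m^2 \nr{\nabla u_m}_{L^2(\o)}^2 - h_m \int_{\partial \o} (h_m \partial_\nu u_m)\, \bar{u_m}\, d\sigma = \nr{h_m \nabla u_m}_{L^2(\o)}^2 - i h_m \a_m \int_{\partial \o} \abs{u_m}^2,
\]
where the second equality uses the boundary condition $h_m \partial_\nu u_m = i \a_m u_m$. Subtracting $\l_m \nr{u_m}^2$ and using $(T_{\a_m,h_m} - \l_m)u_m = f_m$, I take real and imaginary parts of the resulting scalar identity.

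The imaginary part gives $-h_m \a_m \int_{\partial \o} \abs{u_m}^2 = \Im \innp{f_m}{u_m}$. Since $\a_m \to 1$, $\nr{u_m} = 1$ and $\nr{f_m} = o(h_m)$, the Cauchy–Schwarz inequality then yields
\[
\int_{\partial \o} \abs{u_m}^2 \leq \frac{\nr{f_m}_{L^2(\o)} \nr{u_m}_{L^2(\o)}}{h_m \a_m} = o(1).
\]
The boundary condition $\abs{h_m \partial_\nu u_m}^2 = \a_m^2 \abs{u_m}^2$ on $\partial \o$ immediately gives the same bound for the second boundary term, which proves the first statement. The real part gives $\nr{h_m \nabla u_m}^2 = \l_m \nr{u_m}^2 + \Re \innp{f_m}{u_m} = \l_m + o(1)$, which is bounded (in fact tends to $1$), yielding the second statement.

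There is no serious obstacle here; everything reduces to Green's identity and a scalar product estimate. The only point one should be mildly careful about is the sign and power of $h_m$ coming out of the boundary condition, but this is identical to what was already done on $\O$ in Lemma \ref{lem-traces}.
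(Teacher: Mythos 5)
Your proof is correct and uses exactly the same approach as the paper: pair $(\Tahm-\l_m)u_m=f_m$ with $u_m$, integrate by parts using $h_m\partial_\nu u_m = i\a_m u_m$, and read off the real and imaginary parts of the resulting scalar identity. You in fact track the power of $h_m$ on the boundary term more carefully than the paper's displayed identity does — the paper writes $-i\a_m\int_{\partial\o}\abs{u_m}^2$ where the correct coefficient is $-ih_m\a_m$, a harmless slip since $\nr{f_m}_{L^2(\o)}=o(h_m)$ makes the conclusion the same either way.
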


\begin{proof}
Since $h_m \partial_\n u_m = i \a_m u_m$ on $\partial \o$ we have 
\[
\nr{h_m \nabla u_m}_{L^2(\o)}^2 - i \a_m \int_{\partial \o} \abs{u_m}^2 - \l_m \nr{u_m}_{L^2(\o)}^2 = \innp{f_m}{u_m} \limt m \infty 0.
\]
Taking the real and imaginary parts gives the two statements of the proposition.
\end{proof}

\begin{lemma} \label{lem-loc-energy}
Let $\h \in C_0^\infty(\R,[0,1])$ be equal to 1 on a neighborhood of 1. Then we have 
\[
\innp{(1-\h)(-h_m^2 \D) \tum}{\tum}_{L^2(\R^\nn)} \limt m \infty 0.
\]
\end{lemma}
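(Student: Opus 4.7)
The approach is to show that $\tum$ must concentrate, in the semiclassical sense, at the frequency scale where $-h_m^2 \D \sim 1$, matching $\l_m \to 1$. I will fix $\tilde \h \in C_0^\infty(\R, [0,1])$ equal to $1$ on an interval $[-R, R] \supset \supp(\h)$; then $(\tilde \h - \h) \in C_0^\infty(\R, [0,1])$ vanishes on a fixed neighborhood of $1$. Since $1-\h = (\tilde\h - \h) + (1-\tilde\h)$ and both summands are non-negative, the plan reduces to controlling $\innp{(\tilde\h - \h)(L)\tum}{\tum}$ and $\innp{(1-\tilde\h)(L)\tum}{\tum}$ separately, where $L := -h_m^2 \D$ on $L^2(\R^\nn)$.

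For the first piece, the key observation is that since $\l_m \to 1$ and $(\tilde\h - \h)$ vanishes near $1$, for $m$ large the function $\psi_m(\mu) := (\tilde\h - \h)(\mu)/(\mu-\l_m)$ lies in $C_0^\infty(\R)$ with uniformly bounded derivatives, so that $(\tilde\h - \h)(L) = \psi_m(L)(L-\l_m)$. I will compute $(L-\l_m)\tum$ as a tempered distribution on $\R^\nn$ via Green's formula applied to the zero-extension and the boundary condition $h_m \partial_\nu u_m = i \a_m u_m$, obtaining
\[
(L - \l_m)\tum = \tfh + i h_m \a_m\, (u_m|_{\partial \o}) \otimes \d_{\partial \o} - h_m\, (u_m|_{\partial \o}) \otimes h_m \partial_\nu \d_{\partial \o}.
\]
Because $\psi_m$ has fixed compact support, $\psi_m(L)$ acts as a Fourier multiplier by a symbol supported in $|\x| \lesssim 1/h_m$. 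Combining this localization with the standard regularities $g \otimes \d_{\partial \o} \in H^{-1/2-\e}(\R^\nn)$ and $g \otimes \partial_\nu \d_{\partial \o} \in H^{-3/2-\e}(\R^\nn)$, Plancherel will yield
\[
\nr{\psi_m(L)(g \otimes \d_{\partial \o})}_{L^2(\R^\nn)} \lesssim h_m^{-\frac 12 -\e} \nr g_{L^2(\partial \o)},
\]
and similarly $\nr{\psi_m(L)(g \otimes \partial_\nu \d_{\partial \o})}_{L^2(\R^\nn)} \lesssim h_m^{-\frac 32 -\e} \nr g_{L^2(\partial \o)}$. Combined with $\nr{f_m}_{L^2} = o(h_m)$ and $\nr{u_m|_{\partial \o}}_{L^2(\partial \o)} \to 0$ from Lemma~\ref{lem-traces-o}, the explicit $h_m$ factors absorb the frequency-localization cost and give $\nr{(\tilde\h - \h)(L)\tum}_{L^2(\R^\nn)} \to 0$.

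For the second piece, I will estimate directly by Plancherel,
\[
\innp{(1-\tilde\h)(L)\tum}{\tum} = \int_{\R^\nn} (1-\tilde\h)(h_m^2|\x|^2)\, |\hat\tum(\x)|^2\, d\x \leq \int_{h_m^2|\x|^2 \geq R} |\hat\tum(\x)|^2\, d\x.
\]
From Lemma~\ref{lem-traces-o} and $\nr{u_m}_{L^2(\o)} = 1$, I have $\nr{u_m}_{L^2(\o)} + \nr{h_m \nabla u_m}_{L^2(\o)} \leq C$, hence by interpolation $\nr{u_m}_{H^s(\o)} \leq C h_m^{-s}$ for $s \in [0,1]$. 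For $s \in (0, 1/2)$ the zero-extension theorem gives $\nr{(-\D)^{s/2} \tum}_{L^2(\R^\nn)} \lesssim h_m^{-s}$, and consequently $\int_{\R^\nn} (1 + h_m^2|\x|^2)^s |\hat\tum(\x)|^2\, d\x \leq C_s$ uniformly in $m$. This bounds the above integral by $C_s R^{-s}$ uniformly in $m$, so $\limsup_m \innp{(1-\h)(L)\tum}{\tum} \leq C_s R^{-s}$ for arbitrary $R$, yielding the claim. The main technical obstacle will be the bookkeeping of $h_m$-powers in the boundary-distribution estimates of the first step: one must ensure that the cost of Fourier localization by $\psi_m(L)$ applied to surface distributions ($h_m^{-1/2-\e}$ and $h_m^{-3/2-\e}$) is balanced exactly by the explicit $h_m$-factors from the Green's identity and the decay of the boundary trace supplied by Lemma~\ref{lem-traces-o}.
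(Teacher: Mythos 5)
Your proof is correct, but it follows a noticeably different route from the paper's, and it's worth comparing the two. The paper introduces a single auxiliary function
\[
v_m = (1-\h)(-h_m^2\D)\big(-h_m^2\D - \l_m\big)^{-1}\tum \in L^2(\R^\nn),
\]
which is well defined because $(1-\h)(\mu)/(\mu-\l_m)$ is a bounded multiplier for $m$ large (it vanishes near $1$ and decays like $1/\mu$ at infinity). This multiplier also gives the semiclassical Sobolev bounds $h_m^\theta\|v_m\|_{H^\theta(\R^\nn)}\leq C_\theta$ for $\theta\in\{0,1,2\}$. Then $(1-\h)(-h_m^2\D)\tum = (-h_m^2\D-\l_m)v_m$, and the paper pairs this with $u_m$ on $\o$ and applies the classical Green's formula there: the boundary contributions involve $\partial_\nu v_m$ and $v_m$ traced on $\partial\o$, which are controlled by the $H^2$ and $H^1$ norms of $v_m$, and the interior term is $\langle v_m,f_m\rangle$. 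Because the multiplier takes care of both the compactly supported piece and the high-frequency tail at once, there is no need to split $1-\h$ into two parts.

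Your version performs the same underlying maneuver but in a dual and more explicit form: rather than constructing a regularized test function $v_m\in H^2$, you compute $(L-\l_m)\tum$ as a tempered distribution (producing surface-layer terms $g\otimes\d_{\partial\o}$ and $g\otimes\partial_\nu\d_{\partial\o}$) and then pay the cost of localizing them by a compactly supported multiplier $\psi_m(L)$. Your $h_m$-bookkeeping checks out: the factor $h_m$ from the boundary condition cancels the Fourier-localization cost $h_m^{-1/2-\e}$ from $g\otimes\d_{\partial\o}$, and the factor $h_m^2$ cancels $h_m^{-3/2-\e}$ from $g\otimes\partial_\nu\d_{\partial\o}$, leaving $h_m^{1/2-\e}\|u_m\|_{L^2(\partial\o)}\to 0$ in each case. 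The price of insisting on a compactly supported $\psi_m$ is that you need the extra high-frequency piece $(1-\tilde\h)(L)$, which you handle separately via interpolation $\|u_m\|_{H^s(\o)}\lesssim h_m^{-s}$, the zero-extension bound in $H^s$ for $s<1/2$, and a Plancherel tail estimate in $R^{-s}$. That step is entirely absent from the paper because the multiplier $(1-\h)(\mu)/(\mu-\l_m)$ already decays at $\infty$; your approach is thus somewhat longer, but it is more elementary in the sense that every estimate is an explicit Fourier-side computation. Both proofs use the same essential inputs from Lemma \ref{lem-traces-o}, the equation $(L-\l_m)u_m=f_m$, and the boundary condition.
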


\begin{proof}
For $m \in \N$ large enough we can set 
\[
v_m = (1-\h)(-h_m^2\D) (-h_m^2 \D - {\l_m})\inv \tum \quad \in L^2(\R^n).
\]
Then for $\th \in \{0,1,2\}$ there exists $C_\th \geq 0$ such that
\begin{equation} \label{estim-vm}
h_m^\th \nr{v_m}_{H^\th(\R^\nn)} \leq C_\th.
\end{equation}
We have  
\begin{eqnarray*}
\lefteqn{\innp{(1-\h)(-h_m^2 \D) \tum}{\tum}_{L^2(\R^\nn)}  = \innp{\big(-h_m^2 \D - {\l_m} \big) v_m}{\tum}_{L^2(\R^\nn)}}\\
&& = \innp{\big(-h_m^2 \D - {\l_m} \big) v_m}{u_m}_{L^2(\o)}\\
&& = - h_m^2 \int_{\partial \o}\partial_\nu {v_m}  \,  \bar{u_m} + h_m^2 \int_{\partial \o} v_m   \,\partial_\nu \bar {u_m} + \innp{v_m}{f_m}_{L^2(\o)},
\end{eqnarray*}
so by the trace theorems
\begin{multline*}
\abs{\innp{(1-\h)(-h_m^2 \D) \tum}{\tum}_{\R^\nn}}\\
\leq \nr{u_m}_{L^2(\partial \o)} h_m^2 \nr{v_m}_{H^{2}(\R^\nn)} + \nr{h_m \partial_\n u_m}_{L^2(\partial \o)}  h_m \nr{v_m}_{H^{1}(\R^\nn)} + \nr{f_m}_{L^2(\o)} \nr{v_m}_{L^{2}(\R^\nn)}.
\end{multline*}
We conclude with \eqref{estim-vm} and Lemma \ref{lem-traces-o}.
\end{proof}

\begin{proposition} \label{prop-mu-not-zero}
We have $\m \neq 0$.
\end{proposition}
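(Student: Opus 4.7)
The plan is to test the semiclassical measure $\mu$ against a well-chosen symbol $q \in C_0^\infty(\R^{2\nn})$ built from the two pieces of information we have on the family $(\widetilde{u_m})$: it is normalized in $L^2(\R^\nn)$ and supported in the fixed bounded set $\bar\o$, and by Lemma \ref{lem-loc-energy} its frequency content concentrates on the characteristic set $\{|\xi|^2 = 1\}$. The goal is to prove $\int q\, d\mu = 1$.

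Concretely, pick $\chi_1 \in C_0^\infty(\R^\nn,[0,1])$ equal to $1$ on a neighborhood of $\bar\o$, and pick $\eta \in C_0^\infty(\R,[0,1])$ equal to $1$ on a neighborhood of $1$, so that Lemma \ref{lem-loc-energy} applies. Set
\[
q(y,\xi) = \chi_1(y)\,\eta(|\xi|^2) \quad\in\ C_0^\infty(\R^{2\nn}).
\]
By \eqref{lim-semiclass-measure} it suffices to show that $\innp{\Opwm(q)\tvm}{\tvm}_{L^2(\R^\nn)} \to 1$ as $m \to \infty$.

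The key step is the following reduction, using standard Weyl symbolic calculus: because $\chi_1(y)$ and $\eta(|\xi|^2)$ depend on disjoint sets of variables, we have in $\Lc(L^2(\R^\nn))$
\[
\Opwm\bigl(\chi_1(y)\eta(|\xi|^2)\bigr) \;=\; \tfrac{1}{2}\bigl(\chi_1\,\eta(-h_m^2\Delta) + \eta(-h_m^2\Delta)\,\chi_1\bigr) + O(h_m).
\]
Taking the inner product with $\tvm$, using that $\chi_1 \tvm = \tvm$ (since $\supp \tvm \subset \bar\o$) and that $\eta(-h_m^2\Delta)$ is self-adjoint on $L^2(\R^\nn)$, we obtain
\[
\innp{\Opwm(q)\tvm}{\tvm} \;=\; \innp{\eta(-h_m^2\Delta)\tvm}{\tvm} + O(h_m).
\]
Finally, since $\nr{\tvm}_{L^2(\R^\nn)} = 1$, we write
\[
\innp{\eta(-h_m^2\Delta)\tvm}{\tvm} \;=\; 1 - \innp{(1-\eta)(-h_m^2\Delta)\tvm}{\tvm},
\]
and the second term tends to $0$ by Lemma \ref{lem-loc-energy}. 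Thus $\int q\, d\mu = 1 \neq 0$, so $\mu \neq 0$. (As a byproduct, $\mu$ is concentrated in $\bar\o \times \{|\xi|^2 = 1\}$, a fact that will be useful later.)

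The main point requiring care is the symbolic-calculus identity replacing the Weyl quantization of $\chi_1(y)\eta(|\xi|^2)$ by $\chi_1\,\eta(-h_m^2\Delta)$ up to an $O(h_m)$ remainder: this is a standard consequence of the Calder\'on--Vaillancourt theorem and functional calculus on $\R^\nn$, but must be invoked carefully because the two quantizations do not coincide exactly. Everything else is a direct combination of the normalization of $\tvm$ and Lemma \ref{lem-loc-energy}.
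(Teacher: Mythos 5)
Your argument is correct and mirrors the paper's proof exactly: the same product test symbol, the same reduction of its Weyl quantization to the functional calculus $\eta(-h_m^2\Delta)$ modulo $O(h_m)$, and the same invocation of $\nr{\tum}_{L^2}=1$ together with Lemma \ref{lem-loc-energy}. A minor notational slip: you write $\tvm$ throughout where you mean $\tum = \widetilde{u_m}$ (the extension of $u_m$ by zero), whereas the paper reserves $\tvm$ for the straightened, cut-off functions introduced only in the proof of Proposition \ref{prop-prop-mesure}.
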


\begin{proof}
Let $\h \in C_0^\infty(\R,[0,1])$ be equal to 1 on a neighborhood of 1. Let $\h_\o \in C_0^\infty(\R^\nn,[0,1])$ be equal to 1 on a neighborhood of $\bar \o$. For $(y,\y) \in \R^{2\nn}$ we set 
\[
\tilde \h(y,\y) = \h_\o(y) \h \big(\abs \y^2 \big).
\]
By compactness of the suppport of $\tilde \h$ we have 
\begin{align*}
\int_{\R^{2\nn}} \tilde \h \, d\m
& = \lim_{m \to \infty} \innp{\Opwm(\tilde \h) \tum}{\tum}_{\R^\nn}  = \lim_{m \to \infty} \innp{\h(-h_m^2 \D) \tum}{\tum}.
\end{align*}
By Lemma \ref{lem-loc-energy} this last limit is equal to 1. This implies in particular that $\m \neq 0$.
\end{proof}

The main difficulty for the proof of Theorem \ref{th-gap-Tah} is the propagation of the measure $\m$. As already mentioned, this question is simplified by the fact that in our setting the damping is effective everywhere on the boundary. This explains why we do not have to consider generalized bicharacteristics on $T^* \o$. Here we simply have invariance of the measure by the flow on $T^* \R^{\nn}$.

\begin{proposition} \label{prop-prop-mesure}
Let $q \in C_0^\infty(\R^{2\nn})$ and $t \in \R$. Then we have 
\[
\int_{\R^{2\nn}} q(y,\y) \, d \m = \int_{\R^{2\nn}} q (y - 2 t \y , \y) \, d\m.
\]
\end{proposition}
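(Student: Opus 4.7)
The statement asserts invariance of $\mu$ under the Hamiltonian flow $\phi_t(y,\eta) = (y + 2t\eta, \eta)$ of $|\eta|^2$; the formula as written follows by sending $t\mapsto -t$. Setting $q_t := q \circ \phi_t$ (which remains in $C_0^\infty(\R^{2\nn})$ since the $\eta$--support is preserved), we have $\partial_t q_t = \{|\eta|^2, q_t\}$, so the proposition reduces to the infinitesimal identity
\[
\int_{\R^{2\nn}} \{|\eta|^2, q\}\, d\mu = 0 \qquad \forall\, q \in C_0^\infty(\R^{2\nn}).
\]

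\textbf{Commutator identity.} Writing $Q_m = \Opwm(q)$ and using that $|\eta|^2$ is quadratic (so the Weyl expansion terminates), one has the \emph{exact} identity
\[
\Opwm(\{|\eta|^2, q\}) = \frac{i}{h_m}\bigl[-h_m^2\Delta,\, Q_m\bigr].
\]
Pairing against $\tilde u_m$ and passing to the limit via \eqref{lim-semiclass-measure}, it suffices to prove
\[
\langle [-h_m^2 \Delta, Q_m]\,\tilde u_m, \tilde u_m\rangle_{L^2(\R^\nn)} = o(h_m).
\]
I compute this as $\langle -h_m^2 \Delta Q_m \tilde u_m, \tilde u_m\rangle - \langle -h_m^2 \Delta \tilde u_m, Q_m^*\tilde u_m\rangle$ by two Green's identities over $\omega$, using the interior equation $-h_m^2 \Delta u_m = \lambda_m u_m + f_m$ and the boundary condition $h_m \partial_\nu u_m = i\alpha_m u_m$. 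The diagonal $\lambda_m$ pairings cancel exactly; the source contributions are bounded by $\|f_m\|_{L^2(\omega)}\|Q_m\|_{\Lc(L^2)} = o(h_m)$; and four boundary terms remain, of the form
\[
h_m\!\int_{\partial\omega}\!(Q_m\tilde u_m)\bar u_m,\quad h_m\!\int_{\partial\omega}\! u_m\,\overline{Q_m^*\tilde u_m},\quad h_m^2\!\int_{\partial\omega}\! \partial_\nu(Q_m\tilde u_m)\bar u_m,\quad h_m^2\!\int_{\partial\omega}\! u_m\,\partial_\nu\overline{Q_m^*\tilde u_m}.
\]

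\textbf{Boundary estimates.} The key input is the sharp bound inside Lemma~\ref{lem-traces-o}: taking imaginary parts shows $\|u_m\|_{L^2(\partial\omega)}^2 \lesssim |\Im\innp{f_m}{u_m}| = o(h_m)$, hence $\|u_m\|_{L^2(\partial\omega)} = o(h_m^{1/2})$. Since $q$ is compactly supported in $\eta$, $Q_m$ commutes with the standard mapping $L^2 \to H^s$ at scale $h_m$: $\|Q_m\tilde u_m\|_{H^s(\R^\nn)} \lesssim h_m^{-s}\|\tilde u_m\|_{L^2}$. Combined with the trace inequality $\|v\|_{L^2(\partial\omega)}^2 \lesssim \|v\|_{L^2(\omega)}\|v\|_{H^1(\omega)}$ this yields $\|Q_m\tilde u_m\|_{L^2(\partial\omega)} = O(h_m^{-1/2})$ and $\|\partial_\nu Q_m\tilde u_m\|_{L^2(\partial\omega)} = O(h_m^{-3/2})$ (similarly for $Q_m^*$). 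By Cauchy--Schwarz, each of the four boundary terms is $o(h_m)$, which completes the identity.

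\textbf{Main obstacle.} The delicate point is that the individual boundary integrands blow up polynomially in $h_m^{-1}$; their smallness rests on the exact matching between the $o(h_m^{1/2})$ decay of $\|u_m\|_{L^2(\partial\omega)}$ (which uses both the source bound and the dissipative boundary condition) and the $O(h_m^{-1/2})$, resp.\ $O(h_m^{-3/2})$, blow-up of the traces of $Q_m\tilde u_m$ and its normal derivative. This is where the dissipation at $\partial\omega$ enters decisively to kill the boundary terms and leave only the free propagation.
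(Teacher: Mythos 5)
Your proof is correct and takes a genuinely more direct route than the paper's. The paper attacks the boundary case by flattening $\partial\o$ with a tubular-neighbourhood chart, reducing a general $q$ to symbols affine in the normal dual variable $\y_\nn$ through a Weierstrass-division argument on $p^{-1}(\{1\})$, and then controlling the commutation of extension-by-zero with the resulting differential operator in $y_\nn$ via $\pppg{D_\nn}^{s-1}$-weighted estimates before the final integration by parts. You instead exploit that $q \in C_0^\infty$ makes $Q_m = \Opwm(q)$ a smoothing operator, so $\innp{[-h_m^2\D,\,Q_m]\widetilde{u_m}}{\widetilde{u_m}}$ expands directly via Green's formula on $\o$ together with the distributional Green identity for $-\D\widetilde{u_m}$; the $\l_m$-terms cancel exactly, the $f_m$-terms are $o(h_m)$, and each of the four boundary terms is $o(h_m)$ by the balance you correctly isolate: the dissipative boundary condition gives $\nr{u_m}_{L^2(\partial\o)}$ and $\nr{h_m\partial_\nu u_m}_{L^2(\partial\o)}$ of size $o(h_m^{1/2})$, while $\nr{Q_m\widetilde{u_m}}_{L^2(\partial\o)} = O(h_m^{-1/2})$ and $\nr{\partial_\nu Q_m\widetilde{u_m}}_{L^2(\partial\o)} = O(h_m^{-3/2})$ follow from $\nr{Q_m\widetilde{u_m}}_{H^s} \lesssim h_m^{-s}$ (compact $\y$-support plus Calder\'on--Vaillancourt) and the interpolation trace $\nr{v}_{L^2(\partial\o)}^2 \lesssim \nr{v}_{L^2(\o)} \nr{v}_{H^1(\o)}$. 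What your argument buys is brevity: no local chart, no polynomial division, no $\th_r(D_\nn)$ cut-off (which the paper needs precisely because its reduced symbols are no longer compactly supported in $\y_\nn$), and the interior and boundary symbols are handled in a single computation rather than in separate steps. What it costs is flexibility: the paper's scheme, inspired by \cite{miller00}, is the more robust template and would be the natural starting point if the damping were effective only on part of $\partial\o$ (so that one must follow generalized bicharacteristics at the boundary) or if the Laplacian were replaced by a variable-coefficient operator; for the present model case, where dissipation holds on all of $\partial\o$ and kills the boundary traces outright, your argument is the shorter and cleaner one.
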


Many arguments used in the proof of this proposition are inspired by \cite{miller00}.

\begin{proof}
\stepp By differentiation under the integral sign we have 
\begin{equation*}%\label{eq-differentiation-bracket}
\frac d {ds} \int_{\R^{2\nn}} q (y - 2 s \y , \y) \, d\m = -2 \int_{\R^{2\nn}}  \y \cdot \nabla_y q (y - 2 s \y , \y) \, d\m.
\end{equation*}
So it is enough to prove that for all $q \in C_0^\infty(\R^{2\nn})$ we have 
\begin{equation} \label{eq-crochet-poisson}
\int_{\R^{2\nn}} \big\{\y^2,q \big\} \, d\m  = 0.
\end{equation}

\stepp This is clear if $q$ and hence $\y \cdot \nabla_y q$ are supported outside $\bar \o \times \R^\nn$. Now let $q$ be supported in $\o \times \R^\nn$. Let $\h \in C_0^\infty(\o)$ be such that $\big(\supp(1-\h) \times \R^n\big) \cap \supp (q) = \emptyset$. We can write 
\begin{align*}
\lim_{m \to \infty} \innp{\Opwm\big(\{ \y^2 , q\} \big)  \tum } {\tum}_{L^2(\R^\nn)}
& = \lim_{m \to \infty}  \frac i {h_m} \innp{ [-h_m^2 \D,\Opwm(q)]  \tum} { \tum}_{L^2(\R^\nn)} \\
& \hspace{-1cm}= - \lim_{m \to \infty}  \frac 2 {h_m} \Im \innp{\Opwm(q)  \tum} {(-h_m^2 \D - \l_m)  \tum}_{L^2(\R^\nn)}  \\
& \hspace{-1cm}= - \lim_{m \to \infty}  \frac 2 {h_m} \Im \innp{\Opwm(q)  \tum} {\h (-h_m^2 \D - \l_m)  \tum}_{L^2(\R^\nn)}  \\
& \hspace{-1cm}= - \lim_{m \to \infty}  \frac 2 {h_m} \Im \innp{\Opwm(q)  \tum} {\h f_m}_{L^2(\R^\nn)}  \\
& \hspace{-1cm}= 0.
\end{align*}
This proves \eqref{eq-crochet-poisson} for $q$ supported in $\o \times \R^\nn$. By linearity it remains to prove that for any $y \in \partial \o$ there exists a neighborhood $\Uc_y$ of $y$ in $\R^\nn$ such that \eqref{eq-crochet-poisson} holds for $q$ supported in $\Uc_y \times \R^\nn$.

\stepp So let $y_0 \in \partial \o$. We first make a change of variables to reduce to the case where $\o$ looks like the half space $\R^\nn_+$ around $y_0$. Notice that this is already the case if $\nn = 1$, so for this part of the proof we can assume that $\nn \geq 2$. For $r > 0$ we denote by $B'(r)$ the open ball of radius $r$ in $\R^{\nn-1}$. Since $\partial \o$ is a smooth manifold of dimension $\nn-1$, there exist a neighborhood $\Wc_\partial$ of $y_0$ in $\partial \o$, $\rho > 0$ and a diffeomorphism $\vf_\partial : \Wc_\partial \to B'(2\rho)$ such that $\vf_\partial (y_0) = 0$. For $y \in \partial \o$ we denote by $\n(y)$ the outward normal vector of $\o$ at $y$. Then by the tubular neighborhood theorem (see for instance Paragraph 2.7 in \cite{berger-gostiaux}), taking $\Wc_\partial$ and $\rho$ smaller if necessary, the map 
\[
\tilde \vf : \fonc{B'(2\rho) \times ]-2\rho , 2 \rho[} {\R^\nn} {(y',s)} {\vf_\partial \inv(y') - s \n \big(\vf_\partial \inv(y')\big)}
\]
defines a diffemorphism from $\Vc_{2\rho} := B'(2\rho) \times ]-2\rho , 2 \rho[$ to its image $\Wc_{2\rho} := \vf (\Vc_{2\rho})$. Thus $\vf = \tilde \vf \inv$ defines a diffeomorphism from a neighborhood $\Wc_{2\rho}$ of $y$ in $\R^n$ to $\Vc_{2\rho}$ such that $\vf(y) = 0$ and $\vf(\Wc_{2\rho} \cap \o) = \Vc_{2\rho}^+ := B'(2\rho) \times ]0,2\rho[$. We write $\vf = (\vf_1,\dots,\vf_\nn)$ where $\vf_j \in C^\infty(\Wc_{2\rho},\R)$ for all $j \in \Ii 1 \nn$. We set $\Vc_{\rho} = B'(\rho) \times ]-\rho,\rho[$, $\Vc_{\rho}^+ = B'(\rho)\times ]0,\rho[$, $\Wc_\rho = \vf\inv(\Vc_\rho)$ and consider $\h \in C_0^\infty(\R^\nn,[0,1])$ supported in $\Wc_{2\rho}$ and equal to 1 on a neighborhood of $\bar {\Wc_{\rho}}$. We prove \eqref{eq-crochet-poisson} for $q$ supported in $\Wc_\rho \times \R^\nn$.

\stepp For $m \in \N$ and $v \in C^\infty(\Vc_{2\rho})$ we have 
\[
 \big( -h_m^2 \D (v \circ \vf)\big)\circ \vf\inv = P_m v 
\]
where $P_m$ is of the form 
\begin{equation} \label{def-Pm}
P_m = A(y) \Dddp^2  + B (y,\Dp) \Ddd  + C(y,\Dp)  + h_m \tilde b(y) \Ddd  + h_m \tilde C(y,\Dp).
\end{equation}
Here $\Ddd$ stands for $-ih_m \partial_{y_\nn}$ and the operators $B(y,\Dp),C(y,\Dp), \tilde C(y,\Dp)$ are differential operators (of orders 1,2 and 1, respectively) in the first $(\nn-1)$ variables with smooth coefficients on $\Vc_{2\rho}$. We denote by $b,c,\tilde c \in C^\infty(\Vc_{2\rho} \times \R^{\nn -1})$ their symbols. We can check that with this choice for the diffeomorphism $\vf$ we have on $\Vc_{2\rho}$
\begin{equation} \label{A-equal-1}
A(y) = \nr{\nabla \vf_\nn(y)}^2 = 1.
\end{equation}
On the other hand the operator $P_m$ is symmetric on $L^2(\Vc_{2\rho})$. Thus the formal adjoint 
\[
P_m^* =  \Dddp^2  +  \Ddd B (y,\Dp)^*  + C(y,\Dp)^*  + h_m \Ddd \bar{\tilde b(y)}  + h_m \tilde C(y,\Dp)^* 
\]
satisfies $P_m^* = P_m$ for all $m \in \N$. We denote by $p$ the principal symbol of $P_m$:
\[
p(y,\y) = \y_\nn^2 + b(y,\y') \y_\nn + c(y,\y').
\]
Here we write $\y = (\y',\y_n)$ with $\y' \in \R^{\nn-1}$ and $\y_n \in \R$.
For $m \in \N$ we set $v_m = (\h u_m) \circ \vf\inv$. This defines a smooth function on $\Vc_{2\rho}^+$ which can be extended by 0 to a smooth function on $\R_\nn^+$. We denote by $\tvm$ its extension by 0 on $\R^\nn$. The choice of $\vf$ ensures that on $\partial \R_\nn^+\cap \Vc_\rho$ we have 
\[
h_m \partial_\nu v_m = i \a_m v_m.
\]
If we choose $\h$ such that $\partial_\nu \h = 0$ on $\partial \o$, then these equalities hold in fact everywhere on $\partial \R^\nn$. With Lemma \ref{lem-traces-o} we can check that 
\begin{equation} \label{estim-traces-vm}
\nr{v_m}_{L^2(\partial \R_\nn^+)} + \nr{\Ddd v_m} _{L^2(\partial \R_\nn^+)}   \limt m \infty 0,
\end{equation}
and
\begin{equation} \label{estim-nabla-vm}
\nr{h_m \nabla v_m} \lesssim 1.
\end{equation}

      \detail 
      {
      $u = v \circ \Psi$
      \[
      \partial_{y_j} u (y) = \sum_{k} \partial_{w_k} v (\Psi(y)) \partial_{y_j} \Psi_k (y)
      \]
      \[
      \partial_{y_j}^2 u (y) = \sum_{k} \partial_{w_k} v (\Psi(y)) \partial^2_{y_j} \Psi_k(y) + \sum_{k,l} \partial_{w_k,w_l} v (\Psi(y)) \partial_{y_j} \Psi_k(y) \partial_{y_j} \Psi_l(y),
      \]
      }

We set $g_m = (P_m-\l_m) v_m$. Then
\begin{equation} \label{expr-gm}
g_m = \h f_m \circ \vf \inv - h_m^2\big(2  \nabla \h \cdot \nabla u_m + u_m \D \h \big) \circ \vf\inv
\end{equation}
and
\begin{equation} \label{eq-Ddd2-vm}
\begin{aligned}
\Ddd^2 v_m
&  =  g_m +  \l_m v_m - B (y,\Dp) \Ddd v_m  - C(y,\Dp)v_m  \\
& \quad  -  h_m \tilde b(y) \Ddd v_m - h_m \tilde C(y,\Dp) v_m .
\end{aligned}
\end{equation}
With \eqref{estim-traces-vm} we obtain that if $\p \in C_0^\infty(\Vc_\rho \times \R^{\nn-1})$ and $\Psi_m = \Opwm (\p)$ then 
\begin{equation} \label{trace-Ddd-2}
\nr{\Psi_m \Ddd^2 v_m}_{L^2(\partial \R^\nn_+)} \limt m \infty 0.
\end{equation}

\stepp 
Given $\tilde q \in C_0^\infty(\Wc_\rho \times \R^n)$ there exists $q \in C_0^\infty(\Vc_\rho \times \R^n)$ such that 
\[
\frac i {h_m} \innp{ [-h_m^2 \D , \Opwm(\tilde q)] \tum}{\tum} = \frac i {h_m} \innp{ [P_m , \Opwm(q)] \tvm}{\tvm} + \bigo m \infty (h_m).
\]
See for instance Theorem 9.3 in \cite{zworski}. We deduce in particular that \eqref{lim-semiclass-measure} holds with $\tum$ and $\m$ replaced by $\tvm$ and some measure $\n$ on $\R^{2\nn}$, respectively. Now we have to prove that for all $q \in C_0^\infty(\Vc_\rho \times \R^n)$
\begin{equation} \label{eq-nu}
\int_{\R^{2\nn}} \{ p,q\} \, d\n = 0.
\end{equation}
As for Lemma \ref{lem-loc-energy} we can first check that $\n$ is supported in $p\inv(\singl 1)$.

\stepp 
Now assume that \eqref{eq-nu} holds if $q$ is replaced by any function $\f$ of the form $\f : (y,\y) \mapsto  \y_\nn \f_1(y,\y') + \f_0(y,\y')$ where $\f_0$ and $\f_1$ belong to $C_0^\infty(\Vc_\rho \times \R^{\nn-1})$. 
Let $q \in C_0^\infty(\Vc_\rho \times \R^n)$. Let $R' > 0$ be such that $\supp(q) \subset \Vc_{\rho} \times B'(0,R') \times \R$. We set $K' = \bar {\Vc_{2\rho}\times B'(0,2R')}$. Let $R_\nn > 0$ be such that $(K' \times \R) \cap p\inv (\singl 1) \subset K' \times ]-R_\nn,R_\nn[$. We set $K = K' \times [-R_\nn,R_\nn]$.

According to the Weierstrass density theorem, there exist sequences of polynomials $(q_j)$, $(b_j)$ and $(c_j)$ on $K$ which approach $q$, $b$ and $c$ in $C^1(K')$. Then in $C^1(K)$ we have
\[
p_j := \y_\nn^2 + b_j \y_\nn + c_j \limt j \infty p.
\]
Then for $j \in \N$ there exist polynomials $\tilde q_j$, $\f_{0,j}$ and $\f_{1,j}$ such that 
\[
q_j = (p_j-1) \tilde q_j + \f_{1,j} \y_\nn + \f_{0,j}.
\]
Let $\th \in C_0^\infty(\R^{2\nn-1})$ be supported in $\Vc_\rho \times \R^{\nn-1}$ and such that $q\th = q$. Then we have 
\begin{align*}
\lim_{j \to \infty } \int_{\R^{2\nn}}\{ p,  \th q_j \} \, d\nu 
& = \lim_{j \to \infty } \int_{\R^{2\nn}} \th \tilde q_j \{ p, (p_j-1) \} \, d\nu + \lim_{j \to \infty } \int_{\R^{2\nn}}  (p_j-1)  \{ p, \th \tilde q_j \} \, d\nu\\
& \quad   +\lim_{j \to \infty } \int_{\R^{2\nn}} \{ p,  \th(\y_\nn \f_{1,j} + \f_{0,j}) \} \, d\nu .
\end{align*}
Since $\{ p, p_j-1 \}$ goes to $\{p,p-1\} =0$ on $K$, $(p_j-1)$ goes to $(p-1)$, $p=1$ on the support of $\nu$ and according to the fact that \eqref{eq-crochet-poisson} holds for $ \th(\y_\nn \f_{1,j} + \f_{0,j})$ we obtain 
\begin{align*}
\int_{\R^{2\nn}}\{ p, \th  q_j \} \, d\nu \limt j \infty 0.
\end{align*}
On the other hand we have 
\[
\int_{\R^{2\nn}}\{ p, \th  q_j \} \, d\nu \limt j \infty \int_{\R^{2\nn}}\{ p, \th  q \} \, d\nu = \int_{\R^{2\nn}}  \{ p, q \} \, d\nu,
\]
so $q$ satisfies \eqref{eq-nu}. Thus it remains to prove \eqref{eq-nu} for a symbol like $\f$.

\stepp For the rest of the proof we fix two functions $\f_0,\f_1 \in C_0^\infty(\Vc_{\rho} \times \R^{\nn-1})$ and define $\f$ as above. For $m \in \N$ and $j\in\{0,1\}$ we set $\Phi_{j,m} = \Opwm(\f_j)$. This defines bounded operators on $L^2(\R^\nn)$. Since there symbols do not depend on $\y_\nn$ they can be seen as operators on $L^2(\R_+^\nn)$. Then we set $\Phi_m = \Phi_{1,m} \Ddd + \Phi_{0,m}$. We have 
\begin{equation} \label{Op-yphi}
\begin{aligned}
\Opwm(\y_\nn \f_1) = \Phi_{1,m} \Dddp - \frac {ih_m}2 \Phi'_{1,m} = \Dddp \Phi_{1,m} + \frac {ih_m}2 \Phi'_{1,m},
\end{aligned}
\end{equation}
where $\Phi'_{1,m} = \Opwm(\partial_{x_\nn} \f_1)$, and in particular $\Phi_m = \Opwm(\f) + O(h_m)$.
We consider $\th_1 \in C_0^\infty(\R,[0,1])$ equal to 1 on [-1,1]. For $r > 1$ we set $\th_r : \x \mapsto \th_1(\x/r)$. Since $\nu$ is supported in $p\inv(\singl 1)$ and $\abs{p(y,\y',\y_\nn)}$ goes to infinity when $\abs {\y_\nn}$ goes to infinity uniformly in $(y,\y')$ in the support of $\f_0$ or $\f_1$ we have for $r$ large enough
\begin{align*}
\int_{\R^{2\nn}} \{p,\f\} \, d\n
& = \int_{\R^{2\nn}} \th_r(\y_\nn) \{p,\f\} \, d\n\\
& = \lim_{m \to \infty} \frac i{h_m} \innp{\th_r(\Ddd) [P_m,\Opwm(\f)] \tvm}{\tvm}_{L^2(\R^\nn)}\\
& = \lim_{m \to \infty} \frac i{h_m} \innp{\th_r(\Ddd) [P_m,\Phi_m] \tvm}{\tvm}_{L^2(\R^\nn)}.
\end{align*}
Thus \eqref{eq-nu} will be a consequence of
\begin{equation} \label{eq-comm-Rplus}
\limsup_{m \to \infty} \nr{\frac i{h_m} \left( \th_r(\Ddd) [P_m,\Phi_m] \tvm -  \widetilde E \big([P_m,\Phi_m] v_m \big) \right)}_{L^2(\Vc_{2\rho})} \limt r \infty 0
\end{equation}
(where $\widetilde E \big([P_m,\Phi_m] v_m \big)$ is the extension by 0 of $[P_m,\Phi_m] v_m$ on $\R^\nn$) and
\begin{equation} \label{eq-comm-Rplus-suite}
\frac i{h_m} \innp{ [P_m,\Phi_m] v_m}{v_m}_{L^2(\Vc_{2\rho}^+)} \limt m \infty 0.
\end{equation}

\stepp
We begin with the proof of \eqref{eq-comm-Rplus}. We can write
\[
\frac i {h_m} [P_m, \Phi_m] = \sum_{j=0}^2 \Psi_{j,m} \Dddp^j
\]
where for $j \in \{0,1,2\}$ we have $\Psi_{j,m} = \Opwm(\p_{j,m})$ with $\p_{j,m} \in C_0^\infty(\Vc_{\rho} \times \R^{\nn-1})$ uniformly in $m$. In fact there is a term $\Psi_{3,m} \Dddp^3$ with $\Psi_{3,m} = h_m\inv [A(y),\Phi_{1,m}]$, but this term disappears by \eqref{A-equal-1}. This will be important to have terms of order at most 2 with respect to the last variable.

\stepp 
For $\k \in \N$ and $v \in C^\infty(\Vc_{2\rho}^+)$ we denote by $\widetilde {\Dddp^\k v}$ the function equal to $\Dddp^\k v$ on $\Vc_{2\rho}^+$ and equal to 0 on $\Vc_{2\rho} \setminus \Vc_{2\rho}^+$.
Let $\p \in C_0^\infty(\Vc_{2\rho} \times \R^{\nn-1})$. We set $\Psi_m = \Opwm(\p)$ and $\Psi'_m = \Opwm(\partial_{y_n} \p)$.
Let $s \in \big] 0 ,\frac 12 \big[$. For $k \in \{ 0, 1 \}$ and $\vf \in C_0^\infty(\Vc_{2\rho})$ we have by \eqref{estim-traces-vm}
\begin{eqnarray*}
\lefteqn{\abs{\innp{\pppg {\Ddd}^{s-1} \Psi_{m} \left( \Ddd  \widetilde{\Dddp^k v_m} -  \widetilde{\Dddp^{k+1} v_m} \right)} {\vf}_{L^2(\Vc_{2\rho})}}} \\
&& = \abs{\innp{ \widetilde{\Dddp^k v_m}} {\Dddp \Psi_{m}^* \pppg {\Ddd}^{s-1} \vf}_{L^2(\Vc_{2\rho})} - \innp{\widetilde{\Dddp^{k+1} v_m}} {\Psi_{m}^* \pppg {\Ddd}^{s-1} \vf}_{L^2(\Vc_{2\rho})} } \\
&& = \abs{\innp{\Dddp^k v_m} {\Dddp \Psi_{m}^* \pppg {\Ddd}^{s-1} \vf}_{L^2(\Vc_{2\rho}^+)} - \innp{\Dddp^{k+1} v_m} {\Psi_{m}^* \pppg {\Ddd}^{s-1} \vf}_{L^2(\Vc_{2\rho}^+)} } \\
&& \leq h_m\nr{\Dddp^k v_m}_{L^2(\partial \R_+^n)}\nr{\Psi_{m}^*\pppg {\Ddd}^{s-1} \vf}_{L^2(\partial \R_+^n)}\\
&& \lesssim h_m \nr{\Psi_{m}^*\pppg {\Ddd}^{s-1} \vf}_{H^{1-s}(\R^n)}\\
\end{eqnarray*}
For $\th \in \N$ (and hence for any $\th \geq 0$ by interpolation) we have 
\[
h^\th \nr{\pppg{\Ddd}^{-\th} \vf}_{H^\th(\R^\nn)} \lesssim \nr{\vf}_{L^2(\R^\nn)}.
\]
Applied with $\th = 1 - s$ we obtain
\begin{equation} \label{estim-hs}
\nr{\pppg {\Ddd}^{s-1} \left( \Psi_{m} \Ddd \widetilde{\Dddp^k v_m} - \Psi_{m} \widetilde{\Dddp^{k+1} v_m} \right)}_{L^2(\Vc_{2\rho})} = \bigo m \infty(h_m^s).
\end{equation}
For $j \in \Ii 0 2$ and $r > 0$ this yields in particular 
\begin{equation} \label{estim-Ddd}
\limsup _{m \to \infty} \nr{\th_r(\Ddd)\Psi_{j,m} \left(  \Dddp^j \widetilde{v_m} -  \widetilde{\Dddp^{j} v_m} \right)}_{L^2(\Vc_{2\rho})} = 0.
\end{equation}

\stepp 
For $j=\{0,1\}$ we use \eqref{Op-yphi} and \eqref{estim-hs} to write 
\begin{eqnarray*}
\lefteqn{\nr{\pppg {\Ddd}^s \Psi_{j,m} \widetilde{\Dddp^j v_m}}_{L^2(\R^\nn)}  \leq \nr{\pppg {\Ddd}^{s-2}  \Psi_{j,m}  \widetilde{\Dddp^j v_m}}_{L^2(\R^\nn)} + \nr{\pppg {\Ddd}^{s-2} \Dddp^2  \Psi_{j,m}   \widetilde{\Dddp^j v_m}}_{L^2(\R^\nn)}} \\
&& \lesssim \nr{\Dddp^j v_m}_{L^2(\Vc_{2\rho}^+)} + \nr{\pppg {\Ddd}^{s-1} \Psi_{j,m} \Ddd \widetilde{\Dddp^j v_m}}_{L^2(\R^\nn)} +  {h_m} \nr{\pppg {\Ddd}^{s-1} \Psi'_{j,m}  \widetilde{\Dddp^j v_m}}_{L^2(\R^\nn)}\\
&& \lesssim 1 + \nr{\pppg {\Ddd}^{s-1} \Psi_{j,m} \widetilde{\Dddp^{j+1} v_m}} + O(h_m).
\end{eqnarray*}
If $j = 0$ then with \eqref{estim-nabla-vm} this proves that $\pppg {\Ddd}^s \Psi_{0,m} \tvm$ is uniformly bounded in $L^2(\R^\nn)$. For $j = 1$ we also have to use \eqref{expr-gm}-\eqref{eq-Ddd2-vm} to conclude that $\pppg {\Ddd}^s \Psi_{1,m} \widetilde{\Ddd v_m}$ is uniformly bounded.
Thus for $j \in \{ 0,1\}$ we have by functional calculus
\begin{eqnarray} \label{estim-thr-nr}
\lefteqn{\limsup_{m \to \infty} \nr{\big( \th_r(\Ddd) - 1\big) \Psi_{j,m} \widetilde {\Dddp^j v_m}}}\\
\nonumber
&& \lesssim \limsup_{m \to \infty} \nr{\big( \th_r(\Ddd) - 1\big) \pppg {\Ddd}^{-s}}
\lesssim \sup_{\t \in \R} \abs{\frac {\th_r(\t)-1}{\t^s}} \limt r {+\infty} 0.
\end{eqnarray}
With \eqref{estim-Ddd} we deduce
\begin{equation} \label{mil00-lem12}
\begin{aligned}
\limsup_{m \to \infty} \nr{\th_r (\Ddd) \Psi_{j,m} \Dddp^j \tvm - \Psi_{j,m} \widetilde{ \Dddp^j v_m}}_{L^2(\R^\nn)} \limt r \infty 0.
\end{aligned}
\end{equation}
Now assume that $j = 2$. In order to prove \eqref{estim-thr-nr} we first apply \eqref{expr-gm}-\eqref{eq-Ddd2-vm} and then we use the cases $j=0$ and $j= 1$. Then \eqref{mil00-lem12} follows from \eqref{estim-Ddd} as before. Thus we have proved \eqref{mil00-lem12} for all $j \in \Ii 0 2$, and \eqref{eq-comm-Rplus} follows.

\stepp Now we turn to the proof of \eqref{eq-comm-Rplus-suite}. 
Assume that 
\begin{equation} \label{IPP-P}
- \frac i {h_m} \left( \innp{P_m \Phi_m v_m}{v_m} -  \innp{\Phi_m v_m}{P_m^* v_m} \right) \limt m \infty 0.
\end{equation} 
Then, since $P_m$ is formally self-adjoint, we have 
\begin{align*}
\limsup_{m\to \infty} \frac i{h_m} \innp{[P_m,\Phi_m] v_m}{v_m}_{\Vc_{2\rho}^+}
& = \limsup_{m\to \infty} \frac i{h_m} \left( \innp{\Phi v_m}{P_m^* v_m}_{\Vc_{2\rho}^+} -  \innp{\Phi P_m v_m}{ v_m}_{\Vc_{2\rho}^+}\right)\\
& = \limsup_{m\to \infty} \frac i{h_m} \left( \innp{\Phi v_m}{g_m}_{\Vc_{2\rho}^+} -  \innp{\Phi g_m}{ v_m}_{\Vc_{2\rho}^+}\right).
\end{align*}
We recall that $v_m$ is smooth on $\R_+^\nn$ so that $\Phi_m P_m v_m$ is well defined for all $m \in \N$.  Since $\f_0$ and $\f_1$ are supported in $\Vc_\rho$ and the derivatives of $\h$ are supported outside $\Vc_\rho$, we obtain \eqref{eq-comm-Rplus-suite} with \eqref{expr-gm}.
Thus it remains to prove \eqref{IPP-P}. We first observe that if $\seq {\tilde w} m$ and $\seq w m$ are sequences in $H^1(\Vc_{2\rho}^+)$ which go to 0 in $L^2(\partial \Vc_{2\rho}^+)$ then we have 
\begin{equation} \label{IPP-1}
-\frac i {h_m} \left( \innp{\Ddd \tilde w_m}{w_m}_{\Vc_{2\rho}^+} - \innp{ \tilde w_m}{\Ddd w_m}_{\Vc_{2\rho}^+} \right) =  \int_{\partial \Vc_{2\rho}^+} \tilde w_m \bar{w_m} \limt m \infty 0.
\end{equation}
If furthermore $\tilde w_m$ and $w_m$ are in $H^2(\Vc_{2\rho}^+)$ and are such that $\Ddd \tilde w_m$ and $\Ddd w_m$ go to 0 in $L^2(\partial \Vc_{2\rho}^+)$, then
\begin{equation*} %\label{IPP-2}
-\frac i {h_m} \left( \innp{\Ddd^2 \tilde w_m}{w_m}_{\Vc_{2\rho}^+} - \innp{ \tilde w_m}{\Ddd^2 w_m}_{\Vc_{2\rho}^+} \right)
% \\
% =  \int_{\partial \Vc_{2\rho}^+} \Ddd \tilde w_m \bar{w_m} + \int_{\partial \Vc_{2\rho}^+}  \tilde w_m \Ddd\bar{w_m}   
\limt m \infty 0.
\end{equation*}
With \eqref{estim-traces-vm} we directly obtain
\[
- \frac i {h_m} \left( \innp{P_m \Phi_{0,m} v_m}{v_m}_\OO -  \innp{\Phi_{0,m}v_m}{(P_m)^* v_m}_\OO \right) \limt m \infty 0 .
\]
By \eqref{Op-yphi}, \eqref{IPP-1}, \eqref{estim-traces-vm} and \eqref{trace-Ddd-2} we have 
\begin{eqnarray*}
\lefteqn{ -\lim_{m \to \infty} \frac i {h_m} \left(\innp{\Dddp^2 \Phi_{1,m} \Dddp v_m}{v_m} - \innp{\Phi_{1,m} \Dddp v_m}{ \Dddp^2  v_m} \right)}\\
&& =   - \lim_{m \to \infty} \left(\innp{  \Ddd \Phi'_{1,m} \Dddp v_m}{v_m} - \innp{\Phi'_{1,m} \Dddp v_m}{ \Ddd  v_m} \right) \\
&& \quad  - \lim_{m \to \infty}\frac i {h_m} \left(\innp{\Ddd \Phi_{1,m} \Dddp^2 v_m}{v_m} - \innp{\Phi_{1,m} \Dddp^2 v_m}{ \Ddd  v_m} \right)\\
&& = 0.
\end{eqnarray*}
This gives \eqref{IPP-P} with $P_m$ replaced by $\Dddp^2$. We proceed similarly for the other terms in $P_m$ (partial integrations with differential operators with respect to the first $\nn-1$ variables do not raise any problem). This concludes the proof of \eqref{IPP-P} and hence the proof of Proposition \ref{prop-prop-mesure}.
\end{proof}

Now we can conclude the proof of Theorem \ref{th-gap-Tah}:

\begin{proposition} \label{prop-mu-zero}
We have $\m = 0$.
\end{proposition}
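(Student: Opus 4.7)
The plan is to exploit the invariance of $\mu$ under the free geodesic flow (Proposition \ref{prop-prop-mesure}) together with the support properties of $\mu$ to conclude, by the geometric fact that $\omega$ is bounded, that $\mu$ must vanish. This will contradict Proposition \ref{prop-mu-not-zero} and thereby finish the argument by contradiction that was launched before Lemma \ref{lem-traces-o}.

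First I would record two support statements for $\mu$. Since $\tvm = 0$ outside $\bar\o$, a standard argument (test against $q(y,\y)$ supported in $(\R^\nn \setminus \bar \o) \times \R^\nn$) shows that $\supp(\mu) \subset \bar\o \times \R^\nn$. Next, Lemma \ref{lem-loc-energy} applied with an arbitrary $\h \in C_0^\infty(\R,[0,1])$ equal to $1$ near $1$ yields $\int (1-\h)(\abs \y^2)\, d\mu = 0$, so (after multiplying by a cutoff in $y$ as in the proof of Proposition \ref{prop-mu-not-zero}) the measure $\mu$ is concentrated on the characteristic set $\{|\y|^2 = 1\}$. Combining these, $\supp(\mu) \subset \bar\o \times \Sph^{\nn-1}$, which is a bounded subset of $\R^{2\nn}$ on which $\y$ never vanishes.

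Now fix any $q \in C_0^\infty(\R^{2\nn})$ and choose $R > 0$ so that $\supp(q) \subset B(0,R) \times \R^\nn$. By Proposition \ref{prop-prop-mesure}, for every $t \in \R$,
\[
\int_{\R^{2\nn}} q(y,\y) \, d\mu = \int_{\R^{2\nn}} q(y - 2t\y, \y) \, d\mu.
\]
On the support of $\mu$ we have $y \in \bar\o$ and $\abs\y = 1$, so whenever $\abs t > \frac{R + \sup_{z \in \bar\o} \abs z}{2}$ the point $y - 2t\y$ lies outside $B(0,R)$ and hence outside the $y$-projection of $\supp(q)$. For such $t$, the right-hand side vanishes, so $\int q \, d\mu = 0$. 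Since $q \in C_0^\infty(\R^{2\nn})$ was arbitrary, $\mu = 0$, contradicting Proposition \ref{prop-mu-not-zero}. The only genuinely delicate ingredients have already been established (the propagation lemma being the hard one), so this final step is essentially the geometric observation that the straight-line flow escapes any bounded set in finite time; no further obstacle is expected.
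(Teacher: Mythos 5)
Your proof is correct and takes the same route as the paper, which disposes of this proposition in a single sentence by invoking exactly the three ingredients you use (invariance of $\mu$ under the free flow, localization of $\mu$ in frequency via Lemma~\ref{lem-loc-energy}, and $\supp\mu \subset \bar\o\times\R^\nn$). You merely make the escape-to-infinity argument explicit and phrase the frequency localization as $\supp\mu\subset\bar\o\times\Sph^{\nn-1}$ rather than as vanishing near $\{\y=0\}$; both are consequences of Lemma~\ref{lem-loc-energy} and yield the same conclusion.
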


\begin{proof}
This follows from the facts that $\m$ vanishes on a neighborhood of $\{ \y = 0 \}$ (see Lemma \ref{lem-loc-energy}), is invariant by the classical flow (see Proposition \ref{prop-prop-mesure}) and vanishes outside $\bar \o \times \R^\nn$.
\end{proof}

\section{The case of a one-dimensional section} \label{sec-sec-dim1}

\newcommand{\oo}{\ell}
\newcommand{\srev}{\theta}

In this appendix we give more precise information about the spectrum of $\Taz$ (see \eqref{def-Ta}-\eqref{dom-Ta}) in the case where the section $\o$ is of dimension 1. This continues the analysis of \cite[Section 3]{art-diss-schrodinger-guide}.\\

We assume that $\o = ]0,\oo[ \subset \R$ for some $\oo > 0$, and we set $\n = \pi / \oo$. In this case the operator $\Ta$ is given by the second derivative $- \frac {d^2} {dy^2}$ with domain
\[
\Dom(\Ta) = \singl{u \in H^2(0,\oo) \st u'(0) = - i \a u(0), u'(\oo) = i \a u(\oo)}.
\]
We recall from Proposition 3.1 in \cite{art-diss-schrodinger-guide} that for $\t \geq 0$ the spectrum of $\Tat$ is given by a sequence of simple eigenvalues $\l_n(a \t) = \srev_n(a\t)^2$ where the functions $\srev_n$, $n \in \N$, satisfy the following properties:
\begin{enumerate}[(i)]
\item For all $n \in \N$ we have $\srev_n(0) = n\n$.
\item For $n \in \N$ and $\a \geq 0$ we have 
\begin{equation} \label{eq-z}
\big( \a - \srev_n(\a) \big)^2 e^{2i \oo \srev_n(\a)} = \big( \a  + \srev_n(\a) \big)^2.
\end{equation}
\item For $n \in \N$ and $\a > 0$ we have $\Re(\srev_n(\a)) \in ]n\n ,(n+1) \n[$.
\item For $n \in \N$ there exists $C_n >0$ such that for all $\a > 0$ we have $\Im(\srev_n(\a)) \in [-C_n,0[$.
\item For all $n \in \N^*$ the map $\a \mapsto \srev_n(\a)$ depends analytically on $\a \geq 0$ (for $n = 0$, it is continuous on $\R_+$ and analytic on $\R_+^*$).
\end{enumerate}

In the following proposition we describe more precisely the behavior of the eigenvalues $\srev_n(\a)$ when $\a$ goes to $+\infty$. In particular, \eqref{lim-im-srev} shows that the spectrum of $\Ta$ approches the real axis for high frequencies. This is why it was only possible to give uniform estimates for $\Raz$ and hence $(\Ac-\t)$ in weighted spaces (see Section \ref{sec-high-freq-longitudinal}). The other properties of the proposition were not used in the paper. They are given for their own interests.

\begin{proposition} \label{prop-srev-agrand}
\begin{enumerate} [(i)]
\item Let $n\in\N$. Then the map $\a \mapsto \Re (\srev_n(\a))$ is increasing from $n\n$ to $(n+1)\n$ when $\a$ goes from 0 to $+\infty$.
\item For all $n \in \N$ we have
\begin{equation} \label{lim-im-srev}
\Im (\srev_n(\a)) \limt \a {+\infty} 0.
\end{equation}
\item We have
\[
\sup_{\a \in \R_+} \abs{\Im (\srev_n(\a))} = \bigo n \infty \big(\ln (n) \big).
\]
\item \label{estim-alog}
For $\b \in \R$ we have
\[
\Re \big(\srev_n \left( n\n + \b \ln (n) \right) \big) - n\n  \limt n \infty \n \left( 1 - \frac {\arg\left(\b + \frac i {\oo}\right)}\pi \right)
\]
(where $\arg\big( \b + \frac i \oo \big)$ belongs to $]0,\pi[$) and
\[
-\Im\big(\srev_n \left(n\n + \b \ln(n) \right) \big) \simm n \infty \frac {\ln (n)} {\oo}.
\]
\item \label{estim-an}
Let $\g \in \R_+^* \setminus \singl 1$. We have
\[
\Im \left( \srev_n \left(\g n \n \right) \right) \limt n \infty \frac 1{\oo} \ln \abs{\frac {1+\g}{1-\g}} \quad \text{and} \quad  \Re   \srev_n \left(\g n \n \right) - n\n  \limt n \infty \begin{cases} 0 & \text{if } \g <1, \\ \n & \text{if } \g > 1 .\end{cases}
\]

\item \label{estim-anrho}
Let $\rho \in ]0,1[$ and $s \in \R \setminus \singl 0$. Then 
\[
- \Im \left( \srev_n \left(n \n + s n^\rho \right) \right) \simm n \infty \frac {1-\rho} {\oo} \ln (n) \quad \text{and} \quad   \Re   \srev_n \left(n \n + s n^\rho \right) - n\n  \limt n \infty \begin{cases} 0 & \text{if } s < 0, \\ \n & \text{if } s > 0 .\end{cases} 
\]
\end{enumerate}

\end{proposition}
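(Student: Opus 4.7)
The cornerstone of the proof is to extract from \eqref{eq-z} a cleaner implicit equation by choosing the correct square root. Combining the initial value $\srev_n(0) = n\nu$ with continuity of $\srev_n$ and the sign of $e^{in\pi}$, one checks that \eqref{eq-z} is equivalent, along the branch $\srev_n$, to
\[
e^{i\oo\srev_n(\a)} = (-1)^{n+1}\,\frac{\a + \srev_n(\a)}{\a - \srev_n(\a)}. \qquad (\star)
\]
All the statements of the proposition will be read off from this single identity by separating modulus and argument.

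For (i), I would differentiate \eqref{eq-z} implicitly and divide by the equation itself to eliminate the exponential, which gives the first-order ODE
\[
\srev_n'(\a) = \frac{2\srev_n(\a)}{2\a + i\oo\bigl(\srev_n(\a)^2 - \a^2\bigr)}.
\]
Writing $\srev_n = x_n + iy_n$ with $x_n \in (n\nu,(n+1)\nu)$ and $y_n \leq 0$, a direct expansion yields
\[
\Re\,\srev_n'(\a) = \frac{2\bigl(2\a x_n - \oo y_n(x_n^2 + y_n^2 + \a^2)\bigr)}{|2\a + i\oo(\srev_n^2 - \a^2)|^2},
\]
in which both terms of the numerator are non-negative and at least one is strictly positive for $\a>0$. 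This gives strict monotonicity of $\Re\srev_n$ from $n\nu$. The endpoint $(n+1)\nu$ then follows from (ii): once we know $\Im\srev_n(\a)\to 0$ at infinity, any cluster value of $\Re\srev_n(\a)$ must solve $e^{2i\oo \cdot}=1$ inside $[n\nu,(n+1)\nu]$, and strict monotonicity rules out $n\nu$.

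For (ii) and (iii), I would take modulus and argument of $(\star)$ to obtain the two real equations
\[
e^{-2\oo y_n} = \frac{(\a+x_n)^2 + y_n^2}{(\a-x_n)^2 + y_n^2}, \qquad \oo x_n \equiv (n+1)\pi + \arg\!\bigl((\a+\srev_n)(\a-\overline{\srev_n})\bigr) \pmod{2\pi}.
\]
Since $x_n$ stays in the bounded strip $(n\nu,(n+1)\nu)$, the modulus equation forces $y_n \to 0^-$ as $\a\to+\infty$, proving (ii). For (iii), I would fix $n$, view $y_n$ as a function of $\a$ along the curve, and locate the critical point of $|y_n|$ by differentiating the modulus equation; that critical point occurs near $\a \sim x_n \sim n\nu$, and plugging this back gives $e^{-2\oo y_n} \sim 4 x_n^2/y_n^2 \sim n^2/y_n^2$, yielding $|y_n| = O(\log n)$.

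For (iv)--(vi), I would substitute the ansatz $\srev_n(\a) = n\nu + t_n$ into $(\star)$, which becomes
\[
e^{i\oo t_n} = -\,\frac{\,(\a + n\nu) + t_n\,}{\,(\a - n\nu) - t_n\,}
\]
after absorbing $e^{in\pi}$. For the scalings $\a = n\nu + \b\log n$ (case (iv)), $\a = \gamma n\nu$ with $\gamma\neq1$ (case (v)), and $\a = n\nu + s n^\rho$ (case (vi)), I would pass to the limit in the right-hand side (after noting $t_n$ is bounded), giving an explicit limit $L$ for $e^{i\oo t_n}$. The imaginary part of $t_n$ is then $-\frac{1}{\oo}\log|L|$ and the real part is $0$ or $\nu$ according as $L>0$ or $L<0$; this determines whether $\Re\srev_n$ tends to $n\nu$ or $(n+1)\nu$ and produces the claimed logarithmic asymptotics. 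The only care required is to show that $t_n$ is indeed bounded a priori (so the limiting process is legitimate), which comes from (iii).

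The main obstacle will be (iii): extracting a uniform $O(\log n)$ bound on $|\Im \srev_n(\a)|$ over all $\a \geq 0$ is more delicate than the pointwise asymptotics in (iv)--(vi), because one must control $y_n$ along the entire parameter curve and not just at prescribed values of $\a$. The cleanest route appears to be the critical-point argument sketched above, using monotonicity from (i) to rule out the maximum occurring at an endpoint and using $x_n \in (n\nu,(n+1)\nu)$ to bound the critical value.
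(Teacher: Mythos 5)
Your cleaned-up identity $(\star)$, the ODE for $\srev_n'$, and the computation in (i) all coincide with the paper's, and your modulus-equation argument for (ii) is a correct way to fill in a step the paper leaves implicit. However, there are two genuine gaps.

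\textbf{Item (iii).} Locating the critical point of $|y_n|$ ``near $\a \sim x_n \sim n\nu$'' is circular: that approximation at the critical point relies on $|y_n| \ll x_n$, which is precisely part of what (iii) is trying to establish (the a priori bound recalled before the proposition gives only some constant $C_n$ with no control on its growth). The paper avoids this with a contradiction argument over pairs $(n_m,\a_m)$. A non-circular version of your idea does exist: for \emph{fixed} $(x,y)=(x_n(\a),y_n(\a))$, the auxiliary function $\a' \mapsto \frac{(\a'+x)^2+y^2}{(\a'-x)^2+y^2}$ attains its maximum at $\a'=\sqrt{x^2+y^2}$, with value $\left(\frac{\sqrt{x^2+y^2}+x}{|y|}\right)^2 \leq \frac{4(x^2+y^2)}{y^2}$; evaluating at $\a'=\a$ yields the pointwise bound $e^{2\oo|y_n(\a)|} \leq 4 + 4x_n(\a)^2/y_n(\a)^2 \lesssim n^2/y_n^2$ for all $\a$, from which $|y_n|=O(\ln n)$ follows. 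But that is a different argument from the one you sketched: it bounds the auxiliary function over $\a'$ with $(x,y)$ frozen, rather than differentiating $|y_n(\a)|$ along the curve.

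\textbf{Items (iv)--(vi).} The claim ``pass to the limit in the right-hand side (after noting $t_n$ is bounded), giving an explicit limit $L$ for $e^{i\oo t_n}$'' fails for (iv) and (vi). First, $t_n$ is \emph{not} bounded: by the very statement of (iv), $-\Im t_n = I_n \sim \ln n/\oo \to\infty$, and (iii) only gives $O(\ln n)$, not boundedness. More seriously, $|e^{i\oo t_n}| = e^{\oo I_n}$ diverges (like $n$ in case (iv), like $n^{1-\rho}$ in case (vi)), so there is no finite $L$; in particular ``$\Im t_n = -\frac1{\oo}\log|L|$'' is undefined, and the dichotomy ``$\Re t_n \to 0$ or $\nu$ according as $L\gtrless 0$'' is simply false in case (iv), where the stated limit $\nu\big(1-\frac{\arg(\b+i/\oo)}{\pi}\big)$ lies strictly inside $(0,\nu)$. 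The correct treatment, as in the paper, separates modulus and argument of the squared relation: the modulus identity, bootstrapped from the crude $O(\ln n)$ bound of (iii), determines $I_n$ asymptotically; one then feeds this into the argument identity (taken modulo $\pi$, using $R_n\in(0,\nu)$) to locate $R_n$. Only case (v) survives your recipe, and there you need only $|t_n| = o(n)$ (which \emph{does} follow from (iii)), not boundedness, to pass to the limit $L=-\frac{\g+1}{\g-1}$.
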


\begin{figure}[h]
\begin{minipage}[c]{0.45 \textwidth}
\includegraphics[width = \linewidth]{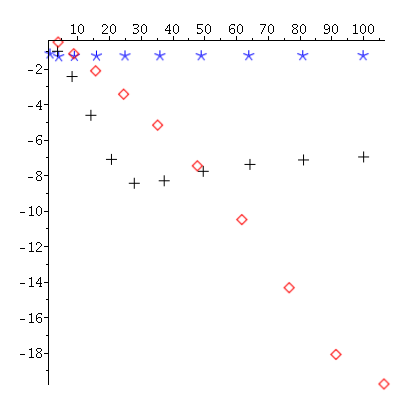}
\end{minipage}
\hfill 
\begin{minipage}[c]{0.45 \textwidth}
% \begin{figure}
\includegraphics[width = \linewidth]{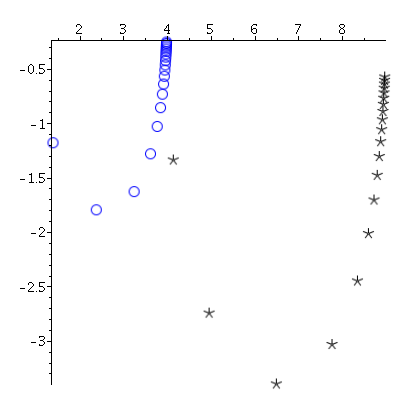}
% \end{figure}
\end{minipage}

\emph{On the left: the 20 first eigenvalues for $a = 1$ (asterisks), $a = 5$ (crosses) and $a = 10$ (diamonds). On the right: the first (circles) and second (asterisks) eigenvalues for $a$ going from 1 to 20 (from left to right).}

\caption{The eigenvalues of $\Ta$ when $\o = [0,1]$.} \label{fig-eigenvalues}

\end{figure}

\begin{proof}
\stepp Let $n \in \N$. Taking the derivative with respect to $\a$ in \eqref{eq-z} gives
\begin{align*}
2i\oo \srev '_n(\a) e^{2i \oo \srev _n(\a)}
= \frac d {d\a} \left(\frac {\a+\srev _n(\a)}{\a-\srev _n(\a)}\right)^2
= 4  \left(\frac {\a+\srev _n(\a)}{\a-\srev _n(\a)}\right)^2 \frac {-\srev _n(\a) + \a\srev '_n(\a)}{(a+\srev _n(\a))(\a-\srev _\n(a))},
\end{align*}
      \detail 
      {
      \begin{align*}
      2i\oo \srev '_n(a) e^{2i \srev _n(a)\oo}
      & = \frac d {da} \left(\frac {a+\srev _n(a)}{a-\srev _n(a)}\right)^2\\
      & = 2  \frac {a+\srev _n(a)}{a-\srev _n(a)} \frac {(1+\srev '_n(a)) (a-\srev _n(a)) - (1-\srev '_n(a)) (a+\srev _n(a))}{(a-\srev _n(a))^2} \\
      & = 2  \left(\frac {a+\srev _n(a)}{a-\srev _n(a)}\right)^2 \frac {(1+\srev '_n(a)) (a-\srev _n(a)) - (1-\srev '_n(a)) (a+\srev _n(a))}{(a+\srev _n(a))(a-\srev _n(a))} \\
      & 
      = 4  \left(\frac {a+\srev _n(a)}{a-\srev _n(a)}\right)^2 \frac {-\srev _n(a) + a\srev '_n(a)}{(a+\srev _n(a))(a-\srev _n(a))},
      \end{align*}
      }
and hence
\begin{equation*} %\label{srev-prime}
\srev '_n(\a) =  \frac {2\srev _n(\a)}{2\a - i\oo(\a^2 - \srev _n(\a)^2)}  = \frac {2\srev _n(\a) \left(2\a+i \oo \big(\a^2 - \bar{\srev _n(\a)} ^2\big)\right)}{\abs{2\a-i\oo(\a^2-\srev _n(\a) ^2)}^2}.
\end{equation*}
In particular for $\a > 0$:
\[
\Re (\srev'_n(\a)) = \frac {4 \a \Re (\srev _n(\a)) - 2 \oo \big( \a^2 + \abs {\srev _n(\a)}^2 \big) \Im(\srev _n(\a))} {\abs{2\a-i\oo(\a^2-\srev ^2)}^2} > 0.
\]

\stepp We have 
\[
\abs{ \frac {\a + \srev _n(\a)}{\a - \srev _n(\a)} }^2 = e^{- 2 \oo \Im(\srev _n(\a))} .
\]
Assume by contradiction that we can find sequences $\seq n m \in \N^\N$ and $\seq \a m \in (\R_+^*)^\N$ such that if we set $\srev_m = \srev_{n_m}(\a_m)$ we have 
\[
\frac{\abs{\Im (\srev_m)}}{\ln(n_m)} \limt m \infty +\infty.
\]
Necessarily, $n_m$ goes to infinity when $m \to \infty$. If for some subsequence we have
\[
\frac {\a_m}{\abs{\srev_m}} \limt m \infty  0  \text{ or } +\infty, 
\]
then 
\[
e^{2 \oo \Im(\srev_m)} \limt m \infty 1.
\]
This gives a contradiction, so there exists $C > 1$ such that for all $m \in \N$ we have
\[
C \inv \leq \frac {\a_m}{\abs{\srev_m}} \leq C.
\]
Since $\Re(\srev_m)$ grows like $n_m \n$, we have in particular $\a_m \lesssim n_m + \abs{\Im(\srev_m)}$. 
Then
\[
e^{2\oo \abs{\Im(\srev_m)}} = \abs{ \frac {\a_m + \srev_m}{\a_m - \srev _m} }^2 \lesssim \frac {n_m^2 + \abs{\Im(\srev_m)}^2} {\abs{\Im(\srev_m)}^2}, 
\]
from which we deduce that $\abs{\Im(\srev_m)}$ cannot grow faster that $O(\ln(n_m))$ and get a contradiction.

\stepp We now turn to the third statement. For $n \in \N^*$ we can write
\[
\srev_n \big( n\n + \b \ln(n) \big) = n\n + R_n - i I_n
\]
with $R_n \in ]0, \n[$ and $I_n \geq 0$. We have 
\[
e^{2 i\oo R_n} e^{2 \oo I_n} = \left( \frac{2n\n + R_n - iI_n + \b \ln (n)} {R_n -i I_n - \b \ln(n)}\right) ^2.
\]
Then
\begin{align*}
I_n 
= \frac 1 {2\oo} \ln \abs {\frac{2n\n + R_n - iI_n + \b \ln (n)} {R_n -i I_n - \b \ln(n)}} ^2 \simm n \infty \frac {\ln (n)} {\oo}.
\end{align*}
On the other hand we have modulo $\pi$
\[
\oo R_n \equiv \frac 12  \arg \left( e^{2i\oo R_n} e^{2\oo I_n} \right) \equiv  \arg \left( \frac {2n\n }{(\b +\frac i {\oo}) \ln(n)} + \littleo n \infty (1) \right) \equiv - \arg \left(\b +\frac i {\oo}\right)  + \littleo n \infty (1).
\]
If we choose $\arg\big(\b + \frac i \oo \big)$ in $]0,\pi[$ we obtain
\[
R_n \limt n \infty \frac 1 {\oo} \left( \pi - \arg \left(\b +\frac i {\oo}  \right) \right).
\]
Since the map $\a \mapsto \Re(\srev_n(\a))$ is increasing on $\R_+$, we obtain in particular for all $n \in \N$
\[
\Re (\srev_n(\a)) \limt \a {+\infty} (n+1)\n.
\]

\stepp Now let $\g \in \R_+^* \setminus \singl 1$. Again we consider $R_n \in ]0, \n [$ and $I_n \geq 0$ such that 
\[
\srev_n \left( \g n \n \right) = n\n + R_n - i I_n.
\]
Then 
\[
e^{2i\oo R_n} e^{2\oo I_n} \limt n \infty \left( \frac {\g+1}{\g-1} \right)^2.
\]
This proves that $I_n \limt n \infty \frac 1{\oo} \ln \abs{\frac {1+\g}{1-\g}} $ and $ d(R_n,\n\N) \limt n \infty 0$. Using the fact that the real part of $\srev_n(a)$ is increasing we see that $R_n$ has to go to 0 for $\g < 1$ and to $\n$ if $\a > 1$. Finally, the results concerning $\a = n \n + s \n^\rho$ are proved similarly.
\end{proof}

\bigskip 

\noindent  
{\bf Acknowledgements: } This work is partially supported by the French ANR Project NOSEVOL (ANR 2011 BS01019 01).

\bibliographystyle{alpha}
\bibliography{bibliotex}

\end{document}